\documentclass[11pt]{article}
\usepackage{amsthm,amssymb,amsmath}
\usepackage{graphicx,algorithmic,algorithm}
\usepackage{enumerate}
\usepackage{tikz,tikz-cd}
\usepackage{tabularx}
\usepackage{multirow}
\usepackage{makecell}
\usepackage{csquotes}

\tikzcdset{scale cd/.style={every label/.append style={scale=#1},
    cells={nodes={scale=#1}}}}
\usetikzlibrary{shapes,fit}
\usepackage[letterpaper, margin=1in]{geometry}
\usepackage{thmtools}
\usepackage{hyperref}
\usepackage[capitalize]{cleveref}

\usepackage{todonotes}
\usepackage[normalem]{ulem}
\hypersetup{
	pdftitle=   {},
	pdfauthor=  {}
}
\usepackage{authblk}

\usepackage{caption}
\usepackage{subcaption}
\usepackage{commath}

\crefname{figure}{Figure}{Figures}
\crefname{equation}{Equation}{Equations}

\renewcommand{\leq}{\leqslant}

\renewcommand{\le}{\leqslant}
\renewcommand{\ge}{\geqslant}

\newtheorem{theorem}{Theorem}[section]
\newtheorem{lemma}[theorem]{Lemma}
\newtheorem{proposition}[theorem]{Proposition}
\newtheorem{corollary}[theorem]{Corollary}

\newtheorem{claim}[theorem]{Claim}

\newtheorem*{thm:main1}{Theorem~\ref{thm:main1}}
\newtheorem*{thm:main2}{Theorem~\ref{thm:main2}}
\newenvironment{clproof}{\begin{list}{}{
			\setlength{\leftmargin}{3mm}
		} \item {\it Proof.} }{\hfill$\lozenge$\end{list}}

\newcommand\extrafootertext[1]{
    \bgroup
    \renewcommand\thefootnote{\fnsymbol{footnote}}
    \renewcommand\thempfootnote{\fnsymbol{mpfootnote}}
    \footnotetext[0]{#1}
    \egroup
}

      \newcommand{\tikzwall}[6]{

            \begin{scope}
                \pgfmathtruncatemacro{\c}{ #1 * 2 - 1 }
                \pgfmathtruncatemacro{\r}{ #2 - 1}
                \pgfmathtruncatemacro{\basex}{#3}
                \pgfmathtruncatemacro{\basey}{#4}
                \pgfmathtruncatemacro{\bm}{Mod(\basex+\basey,2)}
                \pgfmathtruncatemacro{\bmi}{Mod(\basex+\basey+1,2)}
                
                \ifthenelse{\c > 1 \and \r > 0}{
                    \foreach \x in {0, ..., \c}{
                        \ifthenelse{0 < \x \and \x < \c}{
                            \pgfmathtruncatemacro{\start}{0}
                            \pgfmathtruncatemacro{\t}{\r}
                        }{
                            \ifthenelse{\x = 0}{
                                \pgfmathtruncatemacro{\start}{\bmi}
                                \pgfmathtruncatemacro{\t}{\r - Mod(\r+\bm,2)}
                            }{
                                \pgfmathtruncatemacro{\start}{\bm}
                                \pgfmathtruncatemacro{\t}{\r - Mod(\r+\bmi,2)}
                            }
                        }
                        \foreach \y in {\start,...,\t} {
                            \ifthenelse{\start < \t}{
                                \node at (\basex+\x,\basey+\y) [#5] (v\x-\y) {};
                            }{}
                        }
                    }
                    \foreach \x in {0, ..., \c}{
                        \foreach \y in {0,...,\r} {
                            \pgfmathtruncatemacro{\nextx}{\x+1}
                            \pgfmathtruncatemacro{\nexty}{\y+1}
                            \pgfmathtruncatemacro{\sum}{\x+\y+\bm}
                            \pgfmathtruncatemacro{\p}{Mod(\r+\bmi,2)}
                            \pgfmathtruncatemacro{\q}{Mod(\r+\bm,2)}
                            
                            \ifthenelse{\isodd{\sum} \and \y < \r}{
                                \draw [#6] (v\x-\y) edge (v\x-\nexty);
                            }{}
                            
                            \ifthenelse{\x < \c}{
                                \ifthenelse{0 < \y \and \y < \r}{
                                    \draw [#6] (v\x-\y) edge (v\nextx-\y);
                                }{
                                    \ifthenelse{\y = 0 \and \the\numexpr \bmi - 1 \relax < \x \and \x < \the\numexpr \c - \bm \relax}{
                                        \draw [#6] (v\x-\y) edge (v\nextx-\y);
                                    }{}
                                    \ifthenelse{\y = \r \and \the\numexpr \q - 1 \relax < \x \and \x < \the\numexpr \c - \p \relax}{
                                        \draw [#6] (v\x-\y) edge (v\nextx-\y);
                                    }{}
                                    
                                }
                            }{
                            }
                        }
                    }
                }{}
            \end{scope}
        }
        
\newcommand{\Nn}{\mathbb{N}}
\newcommand{\Bc}{\mathcal{B}}

\newcommand{\Pc}{\mathcal{P}}
\DeclareMathOperator{\rk}{rk}

\newcommand{\redG}{\mathcal{R}}

\newcommand{\dist}{\mathsf{dist}}
\newcommand{\Ball}{\mathsf{Ball}}
\DeclareMathOperator{\diam}{diam}
\DeclareMathOperator{\stw}{stw}
\DeclareMathOperator{\str}{str}
\DeclareMathOperator{\conv}{conv}
\newcommand{\reduced}[1]{{#1}^{\downarrow}}

\DeclareMathOperator{\maxleaf}{ml}
\DeclareMathOperator{\compmaxleaf}{cml}
\DeclareMathOperator{\trace}{trace}
\newcommand{\pairs}{\mathsf{Endpair}}

\newcommand{\pow}[1][]{\mathsf{pow}^{#1}}
\newcommand{\blow}[1][]{\mathsf{blow}^{#1}}

\title{Moderately Beyond Clique-Width: \\ Reduced Component Max-Leaf and Related Parameters}
	\author[1]{\'Edouard Bonnet}
	\author[2]{Yeonsu Chang}
	\author[3]{Julien Duron}
	\author[4]{Colin Geniet}
	\author[5]{O-joung Kwon}

    \affil[1]{Univ Lyon, CNRS, ENS de Lyon, Universit\'{e} Claude Bernard Lyon 1, France}
	\affil[2,5]{Department of Mathematics and Research Institute for Natural Sciences, Hanyang University, Seoul, South Korea}
    \affil[3]{Institute of Informatics, University of Warsaw, Poland}
	\affil[4,5]{Discrete Mathematics Group, Institute for Basic Science (IBS), Daejeon, South Korea}

	\date\today

\begin{document}

\maketitle
\extrafootertext{Y. Chang and O. Kwon are supported by the National Research Foundation of Korea (NRF) grant funded by the Ministry of Science and ICT (No. RS-2023-00211670). C. Geniet and O. Kwon are supported by the Institute for Basic Science (IBS-R029-C1).}

	\extrafootertext{E-mail addresses: \texttt{edouard.bonnet@ens-lyon.fr} (\'E. Bonnet), 
    \texttt{yeonsu@hanyang.ac.kr} (Y. Chang),
    \texttt{j.duron@uw.edu.pl} (J. Duron),
    \texttt{colin@ibs.re.kr} (C. Geniet),
    \texttt{ojoungkwon@hanyang.ac.kr} (O. Kwon),
 }

\begin{abstract}
  Reduced parameters [BKW, JCTB '26; BKRT, SODA '22] are defined via contraction sequences.
  Based on this framework, we introduce the reduced component max-leaf, denoted by $\compmaxleaf^\downarrow$, where component max-leaf is the maximum number of leaves in any spanning tree of any connected component.
  Reduced component max-leaf is strictly sandwiched between clique-width and reduced bandwidth, it is bounded in unit interval graphs, and unbounded in planar graphs. 
  We design polynomial-time algorithms for problems such as \textsc{Maximum Independent Set}, \textsc{Maximum Clique}, \textsc{Maximum Induced $d$-Regular Subgraph}, and \textsc{Induced Disjoint Paths} in graphs given with a~contraction sequence witnessing low $\compmaxleaf^\downarrow$, unifying and extending tractability results for classes of bounded clique-width and unit interval graphs.

  We get the following collapses in sparse classes of bounded $\compmaxleaf^\downarrow$: bounded maximum degree implies bounded treewidth, whereas $K_{t,t}$-subgraph-freeness implies strongly sublinear treewidth; we show the latter, more generally, for classes of bounded reduced cutwidth.
  We establish the former result by showing that graphs with bounded $\reduced{\compmaxleaf}$ admit balanced separators dominated by a~bounded number of vertices.
  In contrast, there are graphs $G$ of arbitrarily large girth and treewidth $\Theta(|V(G)|^{1/2})$ such that $\compmaxleaf^\downarrow(G) \leqslant 3$.

  We then showcase an application of the reduced parameters to establishing non-transducibility results. 
  We prove that for most reduced parameters $p^\downarrow$ (including reduced bandwidth), the family of classes of bounded $p^\downarrow$ is closed under first-order transductions.
  We then answer a question of [BKW '26] by showing that the 3-dimensional grids have unbounded reduced bandwidth.
  As the class of planar graphs (or any class of bounded genus) has bounded reduced bandwidth [BKW '26], this reproves a recent result [GPP, LICS '25] that planar graphs do not first-order transduce the 3-dimensional grids.
\end{abstract}

\section{Introduction}
Contraction sequences\footnote{See~\cref{sec:prelim} for definitions and notation.} associate $n$-vertex graphs with sequences of~$n$ graphs of ever smaller vertex count, called \emph{red graphs}.
\emph{Twin-width}~\cite{twin-width1} is the first parameter to be defined via contraction sequences, as the largest maximum degree of the red graphs, minimized over all contraction sequences.
The so-called \emph{reduced parameters}, introduced in~\cite{reduced-bdw,twin-width6}, generalize the link between maximum degree $\Delta$ and twin-width:
For any graph parameter~$p$, \emph{reduced-$p$}, denoted by $p^\downarrow$, is the largest~$p$ over all red graphs of the contraction sequence, minimized among all contraction sequences.
Every graph admits a~contraction sequence where each red graph consists of a~star and isolated vertices.
So, $p^\downarrow$ is only relevant when bounded $p$ implies bounded maximum degree.
The interesting reduced parameters are thus less general than twin-width.\footnote{As an abuse of language, we say that $p$ is less general than $q$ if classes of bounded $p$ have bounded $q$.}
We denote it by $p \sqsubset q$ if every class of bounded $q$ has bounded $p$, but not vice versa.  

If $p = \star$ is the size of a~largest connected component, then $p^\downarrow$ is tied to clique-width~\cite{twin-width6,BarilCL26}.
Thus, by the Courcelle--Makowsky--Rotics theorem~\cite{CMR00} monadic second-order logic (with modular counting) can be efficiently model-checked in classes of bounded $\star^\downarrow$, whereas first-order model checking is tractable in classes of effectively\footnote{We say that a~class has \emph{effectively bounded $p^\downarrow$} if there is a~constant~$c$ and a~polynomial-time algorithm that, given a~graph~$G$ from the class, outputs a~contraction sequence of~$G$ such that for every red graph $H$ of the sequence, $p(H) \leqslant c$ holds.} bounded $\Delta^\downarrow$, i.e., twin-width~\cite{twin-width1}.
This suggests the following line of work:
\begin{quote}
  For parameters $p$ with $\Delta \sqsubset p \sqsubset \star$, design polynomial algorithms for a~strict superset of the problems definable in first-order logic on graphs of effectively bounded $p^\downarrow$.
\end{quote}
Arguably the easiest-to-formulate graph problem that \emph{cannot} be expressed in first-order logic is \textsc{Maximum Independent Set}, which on graph $G$ seeks $\max_{X \subseteq V(G)} |X|$ subject to the formula $\forall x, y \in X, \neg E(x,y)$.
It is thus reasonable to expect that the superset contains this problem. 

There are two main families of graphs with bounded twin-width on which \textsc{Maximum Independent Set} remains NP-hard: planar graphs~\cite{twin-width1,HlinenyJ25} and $f(n)$-subdivisions of $n$-vertex graphs with $2 \log n \leqslant f(n) \leqslant n^{O(1)}$~\cite{BergeBonnetDepres,twin-width2}.
In the work explicitly introducing the reduced parameters~\cite{reduced-bdw}, the focus is on reduced bandwidth.
The main result is that planar graphs have bounded reduced bandwidth; so do $(\geqslant n)$-subdivisions of $n$-vertex graphs.
Since many problems, like \textsc{Maximum Independent Set}, remain NP-hard on planar graphs, the parameter $p$ we consider cannot be as general as bandwidth.

An attempt is \emph{stretch-width}~\cite{BonnetDuron23}, which, while strictly speaking not a~reduced parameter, is strongly inspired by this framework.
Classes of bounded stretch-width generalize those of bounded clique-width, and fittingly do not contain the class of planar graphs or of polynomial subdivisions.
However it is currently open whether \textsc{Maximum Independent Set} can be solved in polynomial time in every class of bounded stretch-width (even with a~witness of low stretch-width).
Partial results comprise a~$2^{\Tilde O(n^{4/5})}$-time algorithm, and a~polynomial-time algorithm when the maximum degree is also bounded. 

In this paper, we introduce the \emph{component max-leaf} parameter, denoted by $\compmaxleaf$, the least integer~$k$ such that every spanning tree of every connected component has at~most~$k$ leaves, and its reduced counterpart $\compmaxleaf^\downarrow$.
Connected graphs have bounded component max-leaf if and only if they are subdivisions of graphs of bounded size (see for instance \cite[Theorem 4.7]{ZickfeldThesis}). 
We have $\text{bandwidth} \sqsubset \compmaxleaf \sqsubset \star$, which sets $\compmaxleaf^\downarrow$ in the ``right'' regime.

Since graphs of twin-width at~most~2 have bounded $\compmaxleaf^\downarrow$, unit interval graphs---which have twin-width at~most~2~\cite{twin-width3} and unbounded clique-width---testify that $\compmaxleaf^\downarrow$ is \emph{strictly} more general than clique-width.
We actually prove that unit interval graphs have unbounded stretch-width, which is consistent with the conjecture that classes of bounded stretch-width have at~most logarithmic clique-width~\cite{BonnetDuron23} (as there are $n$-vertex unit interval graphs with clique-width~$\Omega(\sqrt n)$).

We show in \cref{thm:cml-dominated-balanced-sep} that every graph $G$ with $\reduced{\compmaxleaf}(G)=k$ admits a balanced separator dominated by at most $k(k+2)$ vertices.
This implies that every graph~$G$ with maximum degree~$\Delta$ and $\reduced{\compmaxleaf}(G) = k$ has treewidth $O(\Delta k^2)$.  
Since planar graphs of arbitrarily large treewidth admit arbitrarily large subdivided walls or their line graphs as induced subgraphs~\cite[Theorem 1.1]{AboulkerAKST21}, this establishes the following fact.

\begin{theorem}\label{thm:in-planar-graphs}
  A class of planar graphs has bounded treewidth if and only if it has bounded $\compmaxleaf^\downarrow$.
\end{theorem}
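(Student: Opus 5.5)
We prove the two directions separately.

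\emph{Bounded treewidth implies bounded $\compmaxleaf^\downarrow$.} This direction does not use planarity. Classes of bounded treewidth have bounded clique-width. Bounded clique-width implies bounded $\star^\downarrow$, i.e., there are contraction sequences in which every red graph has connected components of bounded size~\cite{twin-width6,BarilCL26}. A connected graph on $s$ vertices has at most $s-1$ leaves in any spanning tree, so $\compmaxleaf \le \star$ holds pointwise. Applying this to every red graph of such a sequence gives $\compmaxleaf^\downarrow \le \star^\downarrow$, which is bounded.

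\emph{Bounded $\compmaxleaf^\downarrow$ implies bounded treewidth.} We argue by contraposition: suppose a class $\mathcal C$ of planar graphs has unbounded treewidth. By \cite[Theorem 1.1]{AboulkerAKST21}, for every $t$ some graph in $\mathcal C$ contains as an induced subgraph either a subdivided $t\times t$ wall or the line graph of one. Both families have maximum degree at most $4$, and their treewidth tends to infinity with $t$, since it is at least a linear function of $t$ after subdivision or taking line graphs.

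Next we use that $\compmaxleaf^\downarrow$ is monotone under induced subgraphs. Given a contraction sequence of $G$ and an induced subgraph $H$, restricting every part to $V(H)$, and discarding contractions that become trivial, yields a contraction sequence of $H$. Each red graph of this sequence is a subgraph of an induced subgraph of the corresponding red graph of $G$: red edges can only disappear or stay. Component max-leaf is monotone under taking subgraphs. Indeed, a spanning tree of a component $C'$ of a subgraph extends to a spanning tree of the component $C \supseteq C'$ of the larger graph, and leaves of the first tree that are not leaves of the extension gain neighbors in the extension. A cleaner way to see the bound is that the number of leaves is at most the maximum over spanning trees of $C$ of its leaf count, since any tree $T'$ on a vertex subset extends to a spanning tree whose leaf count is at least that of $T'$ minus nothing—extension by attaching paths never decreases the number of leaves. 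Hence $\compmaxleaf^\downarrow(H) \le \compmaxleaf^\downarrow(G)$.

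Now apply \cref{thm:cml-dominated-balanced-sep}, in the form of its consequence stated above: $\operatorname{tw}(H) = O(\Delta(H)\,k^2)$ whenever $\compmaxleaf^\downarrow(H) = k$. For our walls and line graphs of walls, $\Delta(H) \le 4$. Since their treewidth is unbounded, their $\compmaxleaf^\downarrow$ must be unbounded as well. By monotonicity, $\mathcal C$ therefore has unbounded $\compmaxleaf^\downarrow$.

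The main point to check carefully is the hereditary monotonicity of $\compmaxleaf^\downarrow$, and specifically the subgraph monotonicity of $\compmaxleaf$. The delicate case is when deleting edges splits a component. Each new component is still a connected subgraph of the old one, so its spanning trees extend to spanning trees of the old component with no fewer leaves. With that settled, the rest is a direct combination of the cited results.
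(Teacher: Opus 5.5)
Your proof is correct and follows the paper's argument. The easy direction goes through bounded clique-width and hence bounded $\star^\downarrow$ (with $\compmaxleaf \le \star$). The hard direction passes to an induced bounded-degree subdivided wall or its line graph via \cite[Theorem 1.1]{AboulkerAKST21} and then applies \cref{cor:cml-degree-tw}. You also spell out why $\compmaxleaf^\downarrow$ does not increase under taking induced subgraphs, namely restricting the contraction sequence and the subgraph-monotonicity of $\compmaxleaf$ by the tree-extension argument of \cref{lem:thenumberofredneighbor}; the paper leaves this step implicit.
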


We exhibit weakly sparse (i.e., without $K_{t,t}$ subgraph for some fixed integer~$t$) classes of bounded $\compmaxleaf^\downarrow$ and polynomially large treewidth.
More precisely, for every positive integer $g$, there is an infinite family of $n$-vertex graphs $G$ of girth at~least~$g$ and treewidth $\Theta_g(\sqrt n)$ such that $\compmaxleaf^\downarrow(G) \leqslant 3$.
On the other hand, we show that weakly sparse classes of bounded $\compmaxleaf^\downarrow$ (even of bounded reduced cutwidth) have polynomial expansion.
Thus, their $n$-vertex subgraphs have treewidth $O(n^\beta)$ for some constant $\beta < 1$. This answers a question in~\cite{reduced-bdw}.

\begin{theorem}\label{thm:weaklysparse}
  Weakly sparse classes of bounded reduced $\compmaxleaf^\downarrow$ (even of bounded reduced cutwidth) have polynomial expansion.
\end{theorem}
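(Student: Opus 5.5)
Plan: first reduce from component max-leaf to cutwidth. If a red graph has bounded $\compmaxleaf$, then each of its components is a subdivision of a graph of bounded size. Such a component has bounded maximum degree. Ordering its vertices along the subdivided paths gives bounded cutwidth, with the cut size bounded by the number of edges of the underlying bounded-size graph. Hence bounded $\compmaxleaf^\downarrow$ implies bounded reduced cutwidth, and it suffices to treat a weakly sparse class $\mathcal C$ of reduced cutwidth at most $k$.

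To get polynomial expansion, I would use the Dvořák--Norin characterisation: a hereditary class has polynomial expansion if and only if it has strongly sublinear separators, i.e.\ every $n$-vertex subgraph has a balanced separator of size $O(n^{1-\varepsilon})$. Bounded reduced cutwidth is preserved under taking induced subgraphs, by restricting the contraction sequence. Being $K_{t,t}$-free is also hereditary. Subgraphs, however, are the delicate point, and I plan to avoid them as follows. Weakly sparse classes of bounded twin-width have bounded degeneracy, hence so does $\mathcal C$. The known result that a weakly sparse class has polynomial expansion as soon as its induced subgraphs admit strongly sublinear balanced separators (via Dvořák's result on induced subgraphs / degenerate classes) then lets me work only with induced subgraphs.

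The core step is therefore: every $n$-vertex $G \in \mathcal C$ without $K_{t,t}$ has a balanced separator of size $O(n^{1-\varepsilon})$. Take a contraction sequence whose red graphs all have cutwidth at most $k$. Stop at the first red graph $H$ in which some part has size at least $n^{1-\delta}$. All parts then have size at most roughly $2n^{1-\delta}$, and there are at least $n^{\delta}/2$ parts. Lay out $H$ along a linear order witnessing cutwidth at most $k$, and cut that order at the point where half of the vertices of $G$ lie on each side. Across this cut there are at most $k$ red edges and possibly many black edges.

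Black edges between parts $P$ and $Q$ induce complete bipartite graphs in $G$. By $K_{t,t}$-freeness, one of $P,Q$ has fewer than $t$ vertices, or the black edge can be discarded. The separator consists of the $\le 2k$ parts incident to red edges crossing the cut, contributing $O(kn^{1-\delta})$ vertices, together with every part of size less than $t$ incident to a crossing black edge.

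The main obstacle is controlling those small parts that carry black edges across the cut, since there may be many of them. I would try to bound this using degeneracy. The $G$-edges crossing the cut between large parts are all absent by $K_{t,t}$-freeness. Each small part $S$ joined completely to a large part has all of its at most $t$ vertices adjacent to more than $t$ vertices of the big side. Counting complete joins with degeneracy $d$ gives $O(dn)$ edges, which is not yet sublinear. The missing ingredient, which I would attempt next, is to choose among the many candidate cut positions an averaging cut of the order on $H$ that crosses few such small-to-large black adjacencies, since summing over positions is bounded by total edge length. If that fails, I would fall back on a more careful bound in the spirit of the reduced-bandwidth separator argument from \cite{reduced-bdw}.
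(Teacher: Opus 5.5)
Your reduction from $\reduced{\compmaxleaf}$ to reduced cutwidth is fine. The worry about subgraphs is unnecessary: a balanced separator of $G[V(H)]$ is also one of any subgraph $H$, so separators for induced subgraphs plus Dvo\v{r}\'ak--Norin suffice, with no appeal to degeneracy.

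The genuine gap is the core separator step. You leave it open, and the averaging fix you suggest cannot close it. The order witnessing cutwidth of the red graph constrains only red edges, so black edges can be arbitrarily long in it. Take $G$ to be a perfect matching $\{u_iv_i\}$, and grow parts $U=\{u_1,\dots,u_i\}$ and $V=\{v_1,\dots,v_i\}$ alternately. Each red graph is then a path on at most three vertices plus isolated vertices. At the first step with a part of size $n^{1-\delta}$, all other parts are singletons joined in pairs by black edges. Consider the order listing the remaining $u_j$, then the red path, then the remaining $v_j$. It has cutwidth $1$, yet every balanced cut position is crossed by $\Omega(n)$ black edges. So your separator has linear size, and averaging over positions gains nothing. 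These are small-to-small black edges, which your accounting does not even address. For small-to-large black edges, take a disjoint union of stars $K_{1,t}$, contract each leaf set, and merge whole stars into the big part: the red graph stays edgeless and the same failure occurs. The underlying problem is that at your stopping time most parts may still be tiny, and black edges among tiny parts are essentially the edges of $G$ itself. Choosing a cleverer order that keeps them short is the original separator problem again, and the degeneracy bound only gives $O(n)$.

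The idea you are missing is to work \emph{locally} around the big part, and only through parts of size at least $t$. By $K_{t,t}$-freeness, such a part has fewer than $t$ vertices in its black neighbourhood, so black edges cost almost nothing there. For reduced cutwidth, the paper exploits this by contradiction, without ever constructing a global cut:
\begin{enumerate}
\item Assume superpolynomial expansion. Dvo\v{r}\'ak's theorem (polynomial expansion is equivalent to polynomial $\omega$-expansion) and Lemma~\ref{lem:Kt-to-subexp} give $(\le 4d)$-subdivisions of cubic expanders as subgraphs.
\item By Lemma~\ref{lem:subdiv-expander}, these are $(\tau_{\min},\tau_{\max},f)$-expanders with $f(x)=x^{1-o(1)}$, also as induced subgraphs.
\item Lemma~\ref{lem:from-growth-2-red-diameter} grows balls in the red graph from the first part of size at least $\max(\log n,\tau_{\min}(n))$, through parts of size at least $t$, which can only be red-adjacent to one another. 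Using the expansion, it produces a connected red subgraph $R$ of unbounded diameter with $|V(R)|=\Omega(\diam(R)^{a/(1-a)})$.
\item For $a=1-1/(1+2^k)$ this contradicts $|V(R)|\le(2\diam(R)+1)^k$, which holds for connected graphs of cutwidth at most $k$ (Lemma~\ref{cor:cutw2diam}).
\end{enumerate}
For $\reduced{\compmaxleaf}$ alone, the paper's direct separator (Lemma~\ref{lem:ws-rcml}) follows the same local principle. It cuts out the red component of the big part using its at most $k$ small frontier parts plus the few black neighbours, and iterates via Lemma~\ref{lem:protrusion}. That argument relies on Lemma~\ref{lem:thenumberofredneighbor}, however, which has no analogue for cutwidth.
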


Exponential subdivisions of grids have unbounded $\compmaxleaf^\downarrow$ by~\cref{thm:in-planar-graphs}, but bounded stretch-width~\cite{BonnetDuron23}.
This gives the other direction for the incomparability of $\compmaxleaf^\downarrow$ and stretch-width.
We show that these parameters are also incomparable to \emph{mim-width}~\cite{VatshelleThesis}, another (moderate) generalization of clique-width lending itself to algorithmic meta-theorems~\cite{BuixuanTV2013,BergougnouxK2021,Bergougnoux23}.

\begin{theorem}\label{thm:incomparable}
  Reduced component max-leaf, stretch-width, and mim-width are all incomparable.
\end{theorem}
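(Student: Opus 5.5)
Incomparability of three parameters means six non-inclusions. For each ordered pair $(p,q)$ I will exhibit a class with bounded $p$ and unbounded $q$. I plan to reuse a small number of witness classes, each bounded in one parameter and unbounded in the other two, or bounded in two and unbounded in the third.

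\textbf{Bounded $\compmaxleaf^\downarrow$.} Unit interval graphs have twin-width at most~2, hence bounded $\compmaxleaf^\downarrow$. We claim they have unbounded stretch-width; this is the hardest step, discussed below. For mim-width, unit interval graphs have linear mim-width~1, so they do not separate $\compmaxleaf^\downarrow$ from mim-width. I therefore need another class with bounded $\compmaxleaf^\downarrow$ and unbounded mim-width. I would use the high-girth family with treewidth $\Theta(\sqrt n)$ and $\compmaxleaf^\downarrow\le 3$ announced before \cref{thm:weaklysparse}. These graphs are $K_{t,t}$-subgraph-free with large girth. For such graphs, bounded mim-width implies bounded treewidth, because induced matchings in $C_4$-free sparse cuts control the cut rank. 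So large treewidth forces unbounded mim-width.

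\textbf{Bounded stretch-width.} Exponential subdivisions of grids have bounded stretch-width~\cite{BonnetDuron23}. They are planar with unbounded treewidth, so by \cref{thm:in-planar-graphs} they have unbounded $\compmaxleaf^\downarrow$. For mim-width, long subdivisions of grids are known to have unbounded mim-width (Brettell et al.). I would cite that, or argue it via girth and treewidth as above.

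\textbf{Bounded mim-width.} Classes of bounded mim-width include interval graphs, or even permutation and circular-arc graphs. Interval graphs contain all unit interval graphs, so they have unbounded stretch-width by the claim above. They also contain graphs of unbounded $\compmaxleaf^\downarrow$: I would take interval graphs that realize, via large cliques glued in a tree-like pattern, configurations forcing a high-leaf red component at some step of every contraction sequence. An easier fallback is needed here. If interval graphs do not work, I would try a class of bounded mim-width containing long subdivisions of a structure with large $\compmaxleaf$ in every red graph, for instance line graphs or chordal graphs built from subdivided stars with many branches. The argument would be that the first contraction touching many branches creates a red vertex of large degree, or a red subdivided star with many leaves.

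\textbf{Main obstacle.} The hardest parts are the two lower bounds: unbounded stretch-width for unit interval graphs, and unbounded $\compmaxleaf^\downarrow$ for some bounded-mim-width class. For the first, I would exploit the conjectural picture: $n$-vertex unit interval graphs of clique-width $\Omega(\sqrt n)$ should not have stretch-width $O(1)$. Concretely, I would analyze a grid-like unit interval graph, where vertex $(i,j)$ is placed at position $i+j\varepsilon$. I would show that any stretch-width partition sequence must at some point create a part whose neighborhood ``stretch'' across the $\sqrt n$ levels is large. This would use a counting argument on the first time a part spans two far-apart columns.
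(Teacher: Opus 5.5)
Your witnesses largely match the paper's. You use subdivided grids (walls) for bounded stretch-width against both other parameters, subdivided Pohoata--Davies grids for bounded $\reduced{\compmaxleaf}$ against mim-width, and unit interval graphs for the two separations with unbounded stretch-width. For the mim-width lower bound, the reason is not ``cut rank''. In $K_{t,t}$-subgraph-free graphs, Ramsey's theorem turns a large matching across a cut into a large induced matching, so mim-width and treewidth are tied there. However, the two lower bounds that carry the real content are not established.

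\textbf{Stretch-width of unit interval graphs.} This is only asserted, and the ``conjectural picture'' cannot be invoked. Looking at the first time a part spans far-apart columns covers only the easy case. If a part $P$ contains $v$ in column $i$ and $w$ in column $i+2$, every other part meeting column $i$ is a red neighbor of $P$. That gives many red neighbors only while column $i$ is still fragmented. Once columns are consolidated into few parts it gives nothing, and your sketch never uses the ordering $\psi$, which is what must decide that case. The paper's argument has two ingredients.
(i) By the staircase adjacency, a large set of one column inside a part is non-homogeneous to all but one of the corresponding vertices of the next column. Bounded red degree then puts many of those vertices into a single part (\cref{lem:findingmixedset}).
(ii) Suppose the sequence passes through the partition into columns or pairs of consecutive columns. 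Bounded stretch there forces three columns with pairwise disjoint $\psi$-spans, the outer two far apart (\cref{lem:mixedgridtwocolumn}). At the earlier first step where some part holds many vertices of one column, (i) yields aligned sets $U_1,\dots,U_{6t+6}$. Each lies in one part of bounded size, consecutive ones are red-adjacent, and bounded stretch forces them to be close in $\psi$. This is a contradiction.
The main theorem stops at the first part that either meets non-consecutive columns (the easy red-degree case) or holds many vertices of one column. In the latter case it propagates via (i) and restricts to the aligned rows, where the current partition is exactly into columns or pairs of columns, so (ii) applies. Without some argument of this kind, two of your six separations are unsupported.

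\textbf{Bounded mim-width versus unbounded $\reduced{\compmaxleaf}$.} Here you give no argument, and your fallback classes cannot work. Subdivided stars are trees, of twin-width at most $2$. Line graphs of trees are block graphs, of bounded clique-width. So all of them have bounded $\reduced{\compmaxleaf}$. The missing observation is one line. By \cref{lem:thenumberofredneighbor} applied to a single vertex, $\Delta(H)\le\compmaxleaf(H)$ for every graph $H$, so twin-width is at most $\reduced{\compmaxleaf}$. Interval graphs have mim-width $1$ and unbounded twin-width~\cite{twin-width2}, hence unbounded $\reduced{\compmaxleaf}$.
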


We then design polynomial-time algorithms on graphs of effectively bounded $\compmaxleaf^\downarrow$ for a~family of problems containing \textsc{Maximum Independent Set}. 
Fix some finite set $\sigma$ of non-negative integers.
Let \textsc{$\sigma$-Neighborhood} be the problem, given a~graph $G$, of finding a~maximum-cardinality subset $S \subseteq V(G)$ such that every $v \in S$ has a~number of neighbors in~$S$ that is in $\sigma$.
Setting $\sigma = \{0\}$ defines \textsc{Maximum Independent Set}, while $\sigma = \{1\}$ defines \textsc{Maximum Induced Matching}.
This framework was first proposed by Telle and Proskurowski~\cite{TelleP1997}.

\begin{theorem}\label{thm:alg}
  There is an algorithm that, given an $n$-vertex graph $G$ and a~contraction sequence witnessing that $\compmaxleaf^\downarrow(G) \leqslant t$, solves \textsc{$\sigma$-Neighborhood} on~$G$ in time $n^{O(t\alpha^2)}$ with $\alpha=\max(\sigma)$. 
\end{theorem}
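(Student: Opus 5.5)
We run a dynamic program along the given contraction sequence $G = G_n, G_{n-1}, \dots, G_1$, processed backwards from the single-vertex graph to $G$. Equivalently, we run it forwards on the tree of merges, building solutions bottom-up.

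Each part $P$ of the current partition induces a subgraph $G[P]$. Since vertices of a part are contracted together, every vertex outside $P$ is either homogeneous to $P$ (black edge or non-edge) or attached via a red edge. For a connected component $C$ of the red graph $H$, consider $U = \bigcup_{P \in C} P$. The interaction between $G[U]$ and the rest of $G$ is homogeneous part-by-part, and the only inhomogeneity is internal to $C$.

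\textbf{States.} For each red component $C$ we store a table. It is indexed by a compact description of how a partial solution $S \cap U$ looks from outside and from the still-to-be-merged structure:
\begin{itemize}
\item for each part $P \in C$, the number $\min(|S\cap P|, \alpha+1)$;
\item for each part $P$, the count of its vertices whose degree requirement is still pending.
\end{itemize}
Black edges between parts contribute exactly $|S\cap P'|$ neighbors to every vertex of $P$. So only truncated counts up to $\alpha+1$ matter, except for red edges, which need finer information.

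\textbf{Exploiting the component max-leaf bound.} Here we use $\compmaxleaf(H) \le t$. A connected graph with at most $t$ leaves in every spanning tree is a subdivision of a multigraph with $O(t)$ branch vertices and $O(t)$ long paths. So $C$ is a bounded number of red paths glued at $O(t)$ branch vertices. Along a long induced red path, the structure behaves like bounded bandwidth (a path of parts), so a sliding-window DP along the path handles it. The state is then only the truncated data on the $O(t)$ branch parts and the path endpoints.

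For red edges, the relevant information is, for each $S$-vertex, how many $S$-neighbors it has across the red edge (capped at $\alpha$). This gives about $\alpha^2$ "types" per boundary part: pairs (current count, final target). Hence there are $n^{O(\alpha^2)}$ choices per boundary part, namely the number of $S$-vertices of each type. Over $O(t)$ boundary parts this gives $n^{O(t\alpha^2)}$ states, matching the claimed running time.

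\textbf{Transitions.} A contraction merges two parts. This can merge components, turn black edges red, or delete red edges. We update the tables by:
\begin{itemize}
\item combining the tables of the merged components (a product);
\item recomputing the types, since newly homogeneous adjacencies add fixed counts;
\item re-summarizing long paths when branch vertices change.
\end{itemize}
At the end, $G_1$ is a single part. We take the maximum over states in which every vertex's count lies in $\sigma$.

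\textbf{Main obstacle.} The hardest step is making the "path-like" portions of red components compressible. We must show that interior parts of long red paths can be forgotten once both neighbors are fixed, without losing correctness. Concretely, the degree contributions of a degree-2 part $P$ come only from its two red neighbors and black-homogeneous parts. The count of $S$-vertices in $P$ by type must therefore be consistent across the red edges. I would first try a path-DP argument analogous to bounded bandwidth, and check that merges inside a path only touch $O(1)$ neighboring parts. If that fails, the fallback is to index the state by all parts of degree $\ne 2$ plus the parts adjacent to the last contraction.
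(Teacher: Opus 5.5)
There is a genuine gap, located exactly at the step you call the main obstacle. Your tables are indexed by entire red components $C$, and a red component has no small interface with the rest of the computation. Every part of $C$, including each interior part of a long red path, may have black neighbours outside $C$ whose intersection with the solution is still undecided. Later contractions can also act on interior parts.

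A concrete case is the contraction sequence of the (subdivided) Pohoata--Davies grid. After the paths are merged, the red graph is a path $X_1,\dots,X_m$. Each $X_i$ at an original position is complete to its own hub $s_i$, a red-isolated singleton that is contracted into $X_i$ only later. For \textsc{Maximum Independent Set}, whether $s_i$ can be used depends on whether $S\cap X_i=\emptyset$. So a lossless summary of the red path must, in general, remember this for every interior $X_i$, which gives exponentially many states.

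Consequences for your proposal:
\begin{itemize}
\item Interior parts cannot be forgotten ``once both neighbours are fixed''.
\item A one-shot sliding-window DP does not give a table that supports future merges at interior positions.
\item The $O(t)$ ``boundary parts'' in your count do not exist: a whole red component has no red neighbours at all.
\item Your fallback (parts of degree $\neq 2$ plus those near the last contraction) fails for the same reason.
\end{itemize}
The per-part data is not the difficulty. Bottom-up, a capped count plus the set of pending internal degrees per part suffices. The difficulty is the unbounded number of parts whose data must be kept.

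The missing idea is to index the DP not by red components but by connected red-induced subtrigraphs $H$ of $G/\mathcal{P}_i$ that are \emph{exactly} the trace of a partial solution $I$. There are at most $n^t$ such $H$ (\cref{lem:numberofinducedsubgraphs}), and each has at most $t$ red neighbours (\cref{lem:thenumberofredneighbor}). The invariant has two halves:
\begin{itemize}
\item black neighbours of $H$ contain no solution vertex;
\item each red-but-not-black neighbour contains at most $\alpha$ solution vertices, guessed explicitly as a set $S$ together with the counts $\eta$ of their neighbours in $I$.
\end{itemize}
This neutralises interior parts: their external contribution is either $0$ or explicit.

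When undoing a merge $P^*=P_1\cup P_2$, all new black edges touch $P_1$ or $P_2$. Two black-adjacent parts that both meet the solution meet it in at most $\alpha$ vertices each, since every vertex on one side sees all solution vertices on the other. So the algorithm guesses a ``$\sigma$-separator'' of at most $4\alpha$ vertices on $P_1,P_2,\mathcal{Q}_1,\mathcal{Q}_2$ covering these black edges (\cref{lem:sigmaseparator}). The remainder splits into at most $t+2$ connected red pieces (\cref{lem:componentcount}), each a level-$(i+1)$ subproblem, and the $\eta$-values are distributed among them. This gives $n^{O(\alpha t)}$ states and the claimed $n^{O(t\alpha^2)}$ running time. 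Without the trace restriction and this explicit handling of black adjacencies, your component-wise summaries do not become polynomial.
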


Bonnet et al.~\cite{twin-width3} proved that \textsc{Maximum Independent Set} can be solved in polynomial time on graphs of twin-width at most $2$, provided that a~contraction sequence of width~$2$ is given as part of the input.
Since graphs of twin-width at most $2$ have reduced component max-leaf at~most~$2$, \Cref{thm:alg} generalizes this result.
Bui-Xuan, Telle, and Vatshelle~\cite{BuixuanTV2013} showed that the \textsc{$\sigma$-Neighborhood} problem is polynomial time solvable on classes of effectively bounded mim-width, for any finite or cofinite set $\sigma$.

We also consider a~connectivity problem, namely the \textsc{Induced Disjoint Paths} problem, where, given a set of pairs, the task is to find a~set of paths connecting the pairs such that no two paths share a~vertex or have adjacent vertices.
This problem is known to be polynomial-time solvable on classes of effectively bounded mim-width~\cite{JaffkeKT2020}.
We obtain a~polynomial-time algorithm for classes of effectively bounded $\compmaxleaf^\downarrow$.
Using this algorithm together with a simple reduction (see~\cite{JaffkeKT2020}), one can solve in polynomial time \textsc{$H$-Induced Subdivision Containment}, that is, the problem of finding an induced subdivision of a~fixed graph $H$.

\begin{theorem}\label{thm:alg2}
  There is an algorithm that, given an $n$-vertex graph $G$ and a~contraction sequence witnessing that $\compmaxleaf^\downarrow(G) \leqslant t$, solves \textsc{Induced Disjoint Paths} on~$G$ in time $n^{O(t)}$.  
\end{theorem}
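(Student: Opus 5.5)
We propose a dynamic program along the given contraction sequence $G = G_n, G_{n-1}, \dots, G_1$, in the style of the \textsc{$\sigma$-Neighborhood} algorithm of \cref{thm:alg}. Each vertex $u$ of a red graph $G_i$ corresponds to a part $P_u \subseteq V(G)$. Black edges mean full adjacency between the two parts, non-edges mean no adjacency, and red edges mean mixed adjacency. For a connected component $C$ of the red graph $R_i$, we record partial solutions restricted to $G[\bigcup_{u \in C} P_u]$. We use that $\compmaxleaf(C) \leqslant t$. Hence $C$ is a subdivision of a multigraph $H_C$ with $O(t)$ vertices and $O(t)$ edges. Every vertex of $C$ outside the branch set lies on a path all of whose internal vertices have red degree exactly~$2$.

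\textbf{Boundary of a component.} The interaction of a partial solution on $C$ with the rest of the graph goes only through black edges and future contractions. Because solution paths must be induced and pairwise non-adjacent, a partial solution restricted to $\bigcup_{u\in C}P_u$ is a collection of induced path segments. The only information that matters is:
\begin{itemize}
\item which parts are touched by the solution (since black edges leaving those parts create adjacencies);
\item which terminal pairs the segments are assigned to;
\item the endpoints of the segments that still need to be extended.
\end{itemize}
Recording this exactly for all of $C$ would be too much. The key claim is therefore that we only need it exactly for the $O(t)$ branch vertices and for $O(t)$ "frontier" parts along each subdivided path. Concretely, a red path of degree-$2$ vertices behaves like a bandwidth-$1$ structure, and a solution crossing it can be summarised by what happens at its two ends. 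The signature of $C$ is then a bounded-size tuple. It consists of $O(t)$ distinguished vertices of $G$ (segment endpoints and representatives of touched parts near the branch set), together with labels for pairs and connectivity. This gives $n^{O(t)}$ signatures per component, and for each signature we store whether it is realisable.

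\textbf{Update step.} When $G_{i+1}\to G_i$ contracts $u,v$ into $w$, only the components containing $u$, $v$, and their red neighbours change. They merge into one new component, which again has $\compmaxleaf\leqslant t$. We combine the stored signatures of the old components, enumerating compatible choices in $n^{O(t)}$ time. Compatibility requires three checks:
\begin{itemize}
\item black edges between parts touched by different segments are forbidden;
\item segments from different old components that are joined through a now-red edge must be glued consistently, using the recorded endpoints;
\item the result is reprojected onto the new branch set.
\end{itemize}
Since the edges between parts of distinct red components are all black or all absent, the validity of gluing depends only on which parts are touched and on the recorded endpoints.

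\textbf{Main obstacle.} The hardest step is justifying that long red paths need only boundary information. Gluing along a red edge $xy$ depends on the exact vertices in $P_x$ and $P_y$ that are used, not just on the parts. So we would try to prove an exchange lemma. A partial solution inside the subdivided path between two branch vertices can be replaced by any other partial solution with the same end-signature, where the end-signature records the used vertices in the first and last $O(1)$ parts together with the pair labels. We would argue this by induction on the contraction order: the interior of a red path only ever becomes adjacent to outside parts through black edges incident to its parts, and those are captured by the touched-part flags. We also need the number of segments traversing one component to be $O(t)$. This should follow because each segment must leave through the boundary, which is controlled by the number of leaves. At the final graph $G_1$ we read off whether a signature connects all pairs. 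The total running time is $n\cdot n^{O(t)}=n^{O(t)}$.
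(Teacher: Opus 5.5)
\textbf{There is a genuine gap, in the compression step that the whole algorithm rests on.} That step makes two claims. First, a partial solution on a red component $C$ is determined, for all future purposes, by $O(t)$ vertices near the branch set. Second, only $O(t)$ segments traverse $C$. Neither holds.

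A whole red component has no red boundary. Every edge of $G$ leaving $\bigcup_{u\in C}P_u$ comes from a black edge, and black edges can attach to \emph{any} interior part of a long subdivided red path. So the number of open segments is unbounded, and so is the set of touched interior parts; both matter for compatibility.

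Here is a concrete example. Take two paths $a_1\cdots a_m$ and $b_1\cdots b_m$, plus vertices $y_1,\dots,y_m$ with $y_i$ adjacent to exactly $a_i$ and $b_i$. Contract $a_i$ with $b_i$ for $i=1,\dots,m$ in order. Every red component then has max-leaf at most $3$, and you reach a red graph whose only non-trivial component is the path $X_1\cdots X_m$, with $X_i=\{a_i,b_i\}$. Each $\{y_i\}$ is an isolated red vertex joined to $X_i$ by a black edge.

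Now take the pairs $(y_{2k-1},y_{2k})$. Use the paths $y_{2k-1}a_{2k-1}a_{2k}y_{2k}$ for odd $k$ and $y_{2k-1}b_{2k-1}b_{2k}y_{2k}$ for even $k$; these are mutually induced. Restricted to that component, this solution consists of $m/2$ open segments with distinct pair labels. Each leaves through a black edge in the middle of a path that has no branch vertices at all. No signature built from $O(t)$ vertices represents it.

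For the same reason your exchange lemma fails. Two partial solutions with identical end-signatures can differ in which interior $X_i$ they touch, and in whether they use $a_i$ or $b_i$. That choice decides compatibility with the $y_i$'s and with the alternation between neighbouring pairs. Your justification, that interior adjacencies ``are captured by the touched-part flags'', is circular: those flags are exactly the unbounded data you discarded. Keeping them gives $2^{\Omega(n)}$ states.

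\textbf{What the paper does instead.} It needs no compression, because it changes what a subproblem is. States are indexed by connected \emph{red-induced} subtrigraphs $H$ of $G_i$, i.e.\ arbitrary connected pieces of a red component with no black edge inside. There are at most $n^t$ of them by \cref{lem:numberofinducedsubgraphs}, so a long red path is simply cut into polynomially many subpaths, each its own state.

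The invariant is that solution paths have all internal vertices in parts of $H$, and no solution vertex lies in a black neighbour of $H$. The only interface is then the at most $t$ red-but-not-black neighbour parts (\cref{lem:thenumberofredneighbor}). Each carries at most two solution vertices, recorded as the sets $S,A$ together with a pairing $\mathcal{J}$.

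When $P_1,P_2$ merge into $P^*$, every black edge in the preimage $F$ of $H$ joins some $P_j$ to a part of $\mathcal{Q}_j$. Complete adjacency means that once both $P_j$ and $\bigcup\mathcal{Q}_j$ are used, the solution has at most $3$ vertices in their union; otherwise a $C_4$ or $K_{1,3}$ appears. So one guesses an IDP separator with at most $6$ vertices (\cref{lem:sepIDP}). Removing it splits $F$ into at most $t+2$ connected red-induced subproblems already solved at level $i+1$ (\cref{lem:componentcount}).

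This choice of subproblem, together with the black-pair invariant, is the missing idea. Indexing by whole components cannot be rescued by an exchange argument.
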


In the conclusion, we raise open questions on problems parameterized by reduced $\compmaxleaf$.

A~by-product of reduced parameters is that they provide a~new tool to show that a~class~$\mathcal C$ does not first-order transduce another class~$\mathcal D$.
This is identified as a~current gap~\cite{Gajarsky25}:
\enquote{\emph{the biggest bottleneck to obtaining a fine understanding of the first-order transduction quasi-order is the lack of a robust toolbox for proving non-transducibility results that extends beyond commonly studied transduction invariant properties.}}
We show that for most reduced parameters $p^\downarrow$, the family of classes of bounded $p^\downarrow$ is closed under first-order transductions.

\begin{theorem}\label{thm:transduction-intro}
  For every monotone parameter $p$ closed under powers and blowups, the family of classes of bounded $p^\downarrow$ is closed under first-order transductions.
\end{theorem}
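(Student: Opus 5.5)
We need to show: if a class $\mathcal C$ has bounded $p^\downarrow$ and $\mathcal D$ is a first-order transduction of $\mathcal C$, then $\mathcal D$ has bounded $p^\downarrow$. A transduction decomposes into copying (blowup), coloring, a first-order interpretation, and taking induced subgraphs. The plan is to handle each of these operations separately.

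\textbf{Colorings and induced subgraphs.} Colorings do not affect the graph, so they are immediate. Induced subgraphs are handled by restricting the contraction sequence to the vertex subset: each red graph of the restricted sequence is an induced subgraph of the corresponding red graph, possibly with fewer red edges. Monotonicity of $p$ then gives the bound.

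\textbf{Blowups (copying).} Given a contraction sequence for $G$, we first contract each group of copies. Copies of a vertex are twins, so they can be merged while creating a red graph that is a subgraph of a blowup of $G$, with red edges only inside groups. We then follow the original sequence on the blown-up parts. The red graphs encountered are subgraphs of blowups of red graphs of $G$, and closure under blowups handles them.

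\textbf{Interpretations (main obstacle).} Fix a formula $\varphi(x,y)$ of quantifier rank $q$ and set $H=\varphi(G)$. We use the Gaifman or Feferman--Vaughan style locality available for contraction sequences, as in the twin-width transduction proofs. Instead of following the sequence of $G$ directly, we refine the partitions: two vertices of a part are kept together in $H$ only if they also have the same $r$-local type relative to the current partition.

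The key step is to show that if parts $X,Y$ of the refined partition at time $i$ are at distance more than some $r=r(q)$ in the red graph $R_i$ of $G$, then $X,Y$ is homogeneous in $H$. This follows since the $\varphi$-type of a pair of far-apart vertices is determined by their local types and the global type. Hence red edges of the new red graph only join parts whose $G$-parts are within distance $r$, so the new red graph is a subgraph of a blowup, by a bounded factor (the number of local types), of $R_i^r$. Closure under powers and blowups plus monotonicity then gives a bound.

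Two points need checking. First, the refinement must itself form a valid contraction sequence: parts must only merge over time. This can be achieved by interleaving the merges and accepting a bounded blowup factor. Second, the number of local types must stay bounded. Proving that the type count is bounded is the genuine difficulty, since in the red graph, local neighborhoods of bounded $p$ need not be bounded in size. I would first try to reduce $\varphi$, via Gaifman's theorem, to local formulas evaluated on the quotient structure, where red adjacency encodes non-homogeneity. Failing that, I would use the known fact that bounded $p^\downarrow$ implies bounded twin-width. This lets me import the twin-width transduction theorem's construction: its sequence for $H$ has red graphs contained in blowups of bounded powers of the red graphs of $G$.
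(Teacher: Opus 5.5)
Your decomposition matches the paper's, and so does your target containment: the red graphs of the new sequence should lie inside bounded blowups of bounded powers of the old red graphs. However, there is a genuine gap: the step that carries the proof is not justified.

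\textbf{The locality step.} Gaifman locality concerns distance in $G$, but parts that are far apart in the red graph $R_i$ may be fully adjacent in $G$ via a black edge. So ``far in $R_i$'' gives no locality at all. Recasting $\varphi$ on the quotient does not help: any structure from which $G$ can be recovered must carry the black edges, and these join parts at arbitrary red distance and have unbounded degree.

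The paper's fix works one part at a time. For a fixed $P$, let $\mathcal{B}$ be its red $r$-ball, and let $R_P$ be the vertices outside $\bigcup\mathcal{B}$. Remove, with one flip per part of $\mathcal{B}$, all edges of $G$ realizing black edges incident to $\mathcal{B}$. In this flipped graph every $x\in P$ is at distance more than $r$ from every $y\in R_P$. Gaifman applied to the flipped graph, with the flip sets added as unary predicates (\cref{lem:flip-gaifman}), then shows that $\varphi(x,y)$ depends only on labels from a bounded set.

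These labels depend on $P$, so what you get is not a global local-type refinement. It is a rank bound, $\mathrm{rk}_H(P,R_P)\le\ell$: the $r$-th power of the red graph is a bounded-rank error graph (\cref{lem:error-graph-transduction}). All of this needs $|\mathcal{B}|$ bounded. Your worry that red neighborhoods need not be bounded is removed by an easy reduction you missed. If bounded $p$ does not force bounded degree, then by monotonicity all stars have bounded $p$, so every graph has bounded $p^\downarrow$ via star-shaped contraction sequences. Hence one may assume all red graphs have degree at most some $\Delta$.

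\textbf{From a rank bound to a contraction sequence.} The paper splits each part $P$ according to the adjacency in $H$ of its vertices towards $R_P$, giving at most $2^\ell$ classes. This refinement is monotone in time: red graphs, hence their $r$-th powers, are monotone along the sequence, so $R_P$ only shrinks. The gaps are then filled at the cost of a bounded blowup factor (\cref{lem:rank-contraction-refinement}). Refining by ``local type relative to the current partition'' has no such monotonicity, which is the issue you flagged but left open.

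Your fallback of importing the twin-width construction simply assumes the needed containment without checking that that proof produces it. Establishing that containment is the whole content of the theorem.

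\textbf{Copying.} This step is wrong for the transductions used here. The $\gamma$-copy is $K_\gamma\square G$, where the copies $v_1,\dots,v_\gamma$ are not twins: for $uv\in E(G)$, $v_1$ sees $u_1$ but $v_2$ does not. Contracting the copies first therefore makes the group of $v$ red-adjacent to copies of all neighbors of $v$. This already gives unbounded red degree when $G$ is a star. The paper instead replays the contraction sequence in every copy in parallel, keeping copies in different parts, and merges the $\gamma$ final parts at the end. All intermediate red graphs then lie in $(2\gamma-1)$-blowups of red graphs of $G$ (\cref{lem:copying}).
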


Then if $\mathcal C$ has bounded $p^\downarrow$ but~$\mathcal D$ has unbounded $p^\downarrow$, this indeed proves that~$\mathcal C$ does not transduce~$\mathcal D$.
For example, we reprove the result of~\cite{Gajarsky25} with $p = \text{bandwidth}$, $\mathcal C=\{\text{planar graphs}\}$ ($\mathcal C=\{\text{graphs of genus}~\leqslant g\}$), and $\mathcal D=\{\text{3-dimensional grids}\}$ since planar graphs (graphs of bounded genus) have bounded reduced bandwidth~\cite{reduced-bdw}, but we show that 3-dimensional grids have unbounded reduced bandwidth (\Cref{thm:3d-grids-rbandw}).

\Cref{fig:hasse-diag} summarizes the context and content of the paper.
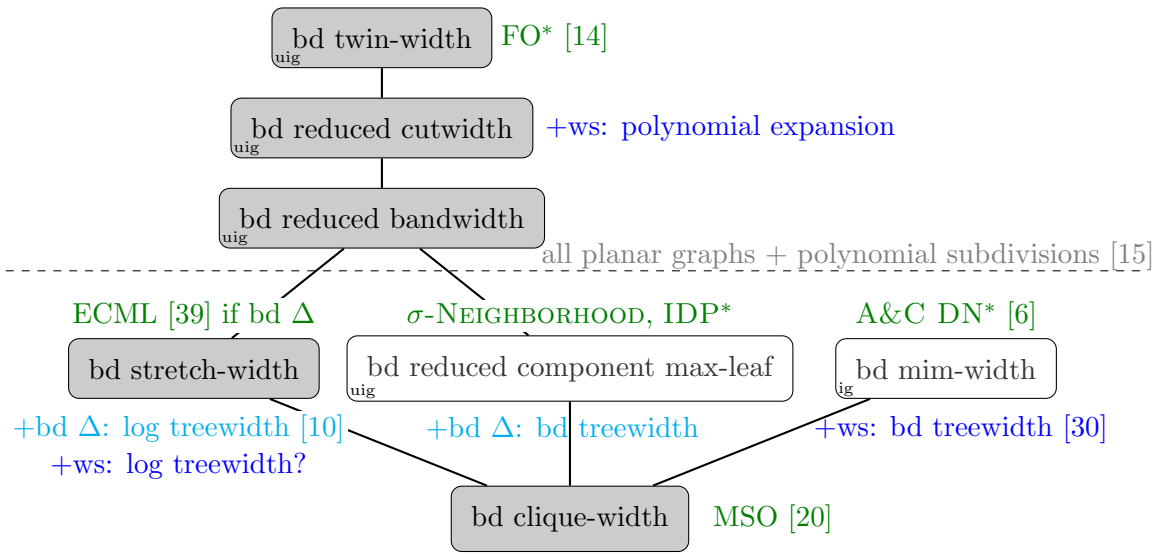
\begin{figure}[!ht]
  \centering
    \begin{tikzpicture}
\def\s{1}

\foreach \i/\j/\l/\t/\b/\x in {0/0/clique-width/cw/black/, -5/2/stretch-width/stw/black/, 0/2/{reduced component max-leaf}/rcml/white/uig, 5/2/mim-width/mimw/white/ig, -2.5/4/reduced bandwidth/rbdw/black/uig, -2.5/5.2/reduced cutwidth/rcutw/black/uig, -2.5/6.4/twin-width/tww/black/uig}{
\node (\t) at (\i * \s,\j * \s) {bd \l} ;
\node[draw, rounded corners, fill=\b, fill opacity=0.2, fit = (\t)] (B\t) {} ;
\node at (B\t.south west) [anchor=south west, inner sep=1pt] {\tiny{\x}} ;
}

\foreach \i/\j in {cw/stw, cw/rcml, cw/mimw, stw/rbdw, rcml/rbdw, rbdw/rcutw, rcutw/tww}{
\draw[thick] (B\i) -- (B\j) ;
}

\foreach \i/\j/\l in {-5.2/0.7/{+ws: log treewidth?}, 2/5.2/{+ws: polynomial expansion}}{
\node[fill=white, inner sep=1pt] at (\i * \s,\j * \s) {\textcolor{blue}{\l}} ;
}

\foreach \i/\j/\l in {5.2/1.2/{+ws: bd treewidth~\cite{Gurski00}}}{
\node at (\i * \s,\j * \s) {\textcolor{blue}{\l}} ;
}

\foreach \i/\j/\l in {-5.2/1.2/{+bd $\Delta$: log treewidth~\cite{BonnetDuron23}}}{
\node[fill=white, inner sep=1pt] at (\i * \s,\j * \s) {\textcolor{cyan}{\l}} ;
}

\foreach \i/\j/\l in {-0.1/1.2/{+bd $\Delta$: bd treewidth}}{
\node at (\i * \s,\j * \s) {\textcolor{cyan}{\l}} ;
}

\foreach \i/\j/\l in {-5/2.7/{ECML~\cite{Pilipczuk11}~if bd $\Delta$}, 0/2.7/{\textsc{$\sigma$-Neighborhood}, \textsc{IDP}$^\ast$}, 5/2.7/{\textsc{A\&C DN}$^\ast$~\cite{Bergougnoux23}}, -0.2/6.4/{FO$^\ast$~\cite{twin-width1}}, 2.7/0/{MSO~\cite{CMR00}}}{
\node[fill=white, inner sep=1pt] at (\i * \s,\j * \s) {\textcolor{green!50!black}{\l}} ;
}

\draw[thin,dashed] (-7.5,3.3) -- (7.6,3.3) ;
\node at (3.7,3.5) {\textcolor{gray}{all planar graphs + polynomial subdivisions~\cite{reduced-bdw}}} ;
\end{tikzpicture}
  \caption{Hasse diagram of the handled families of classes.
    The shaded families are those closed under first-order transductions.
    We use the following abbreviations: ``bd'' $\rightarrow$ bounded,  ``ws'' $\rightarrow$ weakly sparse, ``log treewidth'' $\rightarrow$ $n$-vertex graphs have treewidth $O(\log n)$, ``uig'' $\rightarrow$ contains the class of unit interval graphs, ``ig'' $\rightarrow$ contains the class of interval graphs (at the bottom-left corner of the boxes).
    We indicate in green the problems solvable in polynomial time, and refer the reader to the corresponding papers for the definition of the logics.
  The asterisks mean that a~witness for low parameter is assumed by the algorithm.}
  \label{fig:hasse-diag}
\end{figure}

Admittedly, the only reduced parameter that does not fulfill the preconditions of~\cref{thm:transduction-intro} (nor its conclusion) is the one we introduce in the current paper: reduced component max-leaf.
However this is also precisely what makes it suited to unify the polynomial-time algorithms for \textsc{Maximum Independent Set} and \textsc{Maximum Induced Matching} in classes of bounded clique-width and unit interval graphs.
Indeed, unit interval graphs first-order transduce the class of grid graphs (induced subgraphs of grids), on which \textsc{Maximum Induced Matching} is NP-hard~\cite{Ekim13}.

Bridging efficient first-order and monadic second-order model checking is an active line of work. 
We explore here the class-first side (similarly to~\cite{Bergougnoux23,BonnetDuron23}); for the logic-first side, we refer the reader to~\cite{PilipczukSSTV22,SchirrmacherSST24,FominGSST25,Bojanczyk25} and the references therein.

\section{Preliminaries}\label{sec:prelim}

Let $\mathbb{N}$ denote the set of positive integers.
For an integer $n$, let $[n]$ denote the set of all positive integers at~most~$n$. Let $\triangle$ denote the symmetric difference.  All graphs in this paper are finite, undirected, and simple.

Let $G$ be a graph.
We denote by $V(G)$ the vertex set and $E(G)$ the edge set of $G$.
For a~set $S \subseteq V(G)$, let $G[S]$ denote the subgraph of $G$ induced by $S$.
For $v \in V(G)$, let $N_G(v) := \{u \in V(G) : uv \in E(G)\}$ be the \emph{neighborhood} of~$v$ in~$G$.
For $S \subseteq V(G)$, let $N_G(S)$ be the set of vertices in $V(G) \setminus S$ having a neighbor in~$S$.
Let $N_G[S] = N_G(S) \cup S$. 
For graphs $G$ and $H$, the \emph{union} of~$G$ and~$H$ is the graph ${G\cup H:=(V(G)\cup V(H),E(G)\cup E(H))}$.

Given two graph parameters $p, q$, we write $p \sqsubseteq q$ if there is a~function~$f$ such that for every graph $G$, $p(G) \leqslant f(q(G))$.
Then, $p, q$ are \emph{tied} if $p \sqsubseteq q$ and $q \sqsubseteq p$, and \emph{incomparable} if both $p \sqsubseteq q$ and $q \sqsubseteq p$ fail.
We write $p \sqsubset q$ when $p \sqsubseteq q$ holds but $q \sqsubseteq p$ does not, and may say that $q$ is \emph{larger} than~$p$.

For two disjoint subsets $X$ and $Y$ of vertices of a graph $G$, 
\begin{itemize}
    \item $X$ is \emph{complete} to $Y$ in $G$ if for all $x\in X$, $y\in Y$, $x$ is adjacent to $y$,
    \item $X$ is \emph{anti-complete} to $Y$ in $G$ if for all $x\in X$, $y\in Y$, $x$ is not adjacent to $y$,
    \item $X$ and $Y$ are \emph{homogeneous} in $G$ if $X$ is complete or anti-complete to $Y$ in $G$.
\end{itemize}
A \emph{clique} of a graph~$G$ is a set of pairwise adjacent vertices of~$G$, and an \emph{independent set} of~$G$ is a set of pairwise non-adjacent vertices of~$G$.

For two vertices $v,w$ in $G$, 
a \emph{${(v,w)}$-path} in $G$ is a path from $v$ to $w$ in $G$.

Given a~graph $G$ and a~weight function $w: V(G) \to \mathbb R_+$, a~\emph{balanced separator} of weighted graph $(G,w)$ is a~set $X \subseteq V(G)$ such that for every connected component $C$ of $G-X$, $w(V(C)) \leqslant \frac{2}{3} w(V(G))$, where $w(S) := \sum_{v \in S} w(v)$ for every $S \subseteq V(G)$.
A~\emph{balanced separator} of an unweighted graph $G$ is a~balanced separator of the weighted graph where every vertex of~$G$ is assigned the same positive weight.

A \emph{separation} in~$G$ is a pair~$(A,B)$ of subsets of~$V(G)$ so that $V(G) = A \cup B$, and there is no edge between~$A \setminus B$ and $B \setminus A$.
In other words, it corresponds to a choice of any bipartition of the connected components of $G - A \cap B$ into $A \setminus B$ and $B \setminus A$.

  \begin{figure}[!t]
    \centering
        \begin{tikzpicture}
            [scale=0.6]
            \tikzset{vx/.style = {circle, draw, fill=black!0, inner sep=0pt, minimum width=4pt}}
            \tikzset{vx2/.style = {circle, draw, fill=black!50, inner sep=0pt, minimum width=4pt}}
            
            \tikzwall{7}{7}{0}{0}{vx2}{black}
            
        \end{tikzpicture}
        \caption{A $(7,7)$-wall. }
        \label{fig:wall}
    \end{figure}
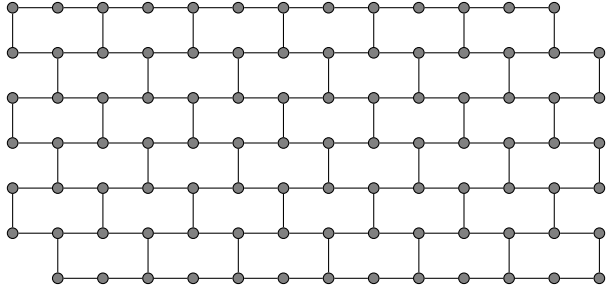

For an integer $n\ge 3$, the \emph{$n\times n$-wall} is the graph obtained from the graph on the vertex set~${[2n] \times [n]}$ whose edge set is 
\begin{align*}
    &\left\{ (i,j) (i+1,j) \, \colon \, i \in [2n-1],\, j \in [n] \right\} \\
    &\cup 
    \left\{ (i,j) (i,j+1) \, \colon \, i \in [2n],\, j \in [n-1],\, i+j \textnormal{ is odd} \right\}
\end{align*}
by deleting both degree-$1$ vertices. 
See Figure~\ref{fig:wall} for an illustration.
For a subdivision $G$ of a wall~$W$, a vertex of $G$ contained in $W$ is called a \emph{branching vertex} of~$G$.  For ${j \in [n]}$, the \emph{$n$-th row} of the $n\times n$-wall $W$ is the path~${W \big[ \big\{ (i, j)\in V(W) \, \colon \, i \in [2n] \big\} \big]}$. 
For ${i \in [n]}$, the \emph{$i$-th column}~$C_i$ of~$W$ is the path~${W \big[ \big\{ (i',j)\in V(W) \, \colon \, i' \in \{ 2i - 1, 2i \}, \, j \in [n] \big\} \big]}$. 

A linear ordering of a finite set $S$ is a bijection $\psi:S\to [|S|].$  For a graph $G$, a linear ordering of $V(G)$ is also referred to as a linear ordering of $G$.

\subsection{Contraction sequence and reduced-\texorpdfstring{$f$}{f}}

A \emph{trigraph} is a triple $G = ( V, E, R )$ where $V$ is a finite set, and $E$ and $R$ are disjoint subsets of $\binom{V}{2}$. Elements of $V$ are \emph{vertices}, and elements of $E\cup R$ are \emph{edges}. Edges in $E$ are called \emph{black}, and edges in $R$ are called \emph{red}. 

The \emph{red graph}
$\mathcal{R}(G)$ of $G$ is $(V, \emptyset, R)$.

For a vertex $v$ of $G$, let $B_G(v)$ denote the set of vertices that are adjacent to $v$ by black edges, and let $R_G(v)$ denote the set of vertices that are adjacent to $v$ by red edges. We also write $B_G[v]:=B_G(v)\cup \{v\}$ and $R_G[v]:=R_G(v)\cup \{v\}$.
We extend basic terminologies for graphs to trigraphs. 

A subtrigraph $H$ of a trigraph $G$ is called a~\emph{red-induced subtrigraph} if $H$ is an induced subtrigraph of $G$ and all edges of $H$ are contained in $\mathcal{R}(G)$.

For a graph $G$ and a vertex partition $\mathcal{P}$ of $G$, let $G/\mathcal{P}$ be the trigraph on the vertex set $\mathcal{P}$ such that for two parts $X,Y\in \mathcal{P}$,
\begin{itemize}
    \item $X$ and $Y$ are adjacent by a black edge if $X$ is complete to $Y$ in $G$, 
    \item $X$ and $Y$ are adjacent by a red edge if $X$ and $Y$ are not homogeneous, and 
    \item $X$ and $Y$ are not adjacent if $X$ is anti-complete to $Y$ in $G$.
\end{itemize}
For a set of parts $\mathcal{Q}\subseteq\mathcal{P}$, we write $\bigcup \mathcal{Q}:=\bigcup_{Z\in \mathcal{Q}}Z$.

For a graph $G$, a sequence $\mathcal{S}=\mathcal{P}_n, \ldots, \mathcal{P}_1$ of partitions of the vertex set of $G$ is a \emph{contraction sequence} of $G$ if 
\begin{itemize}
    \item $\mathcal{P}_n=\{\{v\}:v\in V(G)\}$ and $\mathcal{P}_1=\{V(G)\}$, and
    \item each $\mathcal{P}_i$ is obtained from $\mathcal{P}_{i+1}$ by merging two parts of $\mathcal{P}_{i+1}$.
\end{itemize} 

For any graph parameter $f$, let \emph{reduced-$f$} be the graph parameter~$\reduced{f}$, where for every graph~$G$,  \emph{$\reduced{f}(G)$} is the minimum $k\in\mathbb{N}$ such that there is a contraction sequence $\mathcal{P}_n, \ldots, \mathcal{P}_1$ of $G$ where $f(\mathcal{R}({G/\mathcal{P}_i}))\leq k$ for each $i\in\{1,\dots,n\}$.

\subsection{Max-leaf and component max-leaf}

The \emph{max-leaf} of a connected graph $G$, denoted by $\maxleaf(G)$, is the minimum $k$ such that every spanning tree of $G$ has at most $k$ leaves.
Although we will not use this fact directly, it follows from a result of Zickfeld \cite[Theorem~4.7]{ZickfeldThesis} that a connected graph~$G$ has bounded max-leaf if and only if~$G$ has bounded maximum degree and a bounded number of vertices with degree different from~2,
or in other words if and only if~$G$ is a subdivision of a graph of bounded size.

The \emph{component max-leaf} of $G$, denoted by $\compmaxleaf(G)$, is the minimum $k$ such that every connected component of $G$ has max-leaf at most $k$.
We consider that the 1-vertex graph counts as a tree with one leaf, so that when~$G$ is an edgeless graph, $\compmaxleaf(G) = 1$.

Our algorithms in \cref{sec:algoapplication} use dynamic programming over the collection of all \emph{connected} induced subgraphs of a graph with small component max-leaf.
The next two properties are crucial to establish the runtime of these algorithms.
\begin{lemma}\label{lem:thenumberofredneighbor}
    Let $G$ be a graph with component max-leaf $t$, and let $H$ be a connected subgraph of~$G$. Then $|N_G(V(H))|\le t$.
\end{lemma}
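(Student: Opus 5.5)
The plan is to build a spanning tree of the connected component containing $H$ in which every vertex of $N_G(V(H))$ is a leaf. Since that component has max-leaf at most $t$, this forces $|N_G(V(H))|\le t$.

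Let $C$ be the connected component of $G$ containing $H$, and set $N := N_G(V(H))$. Every vertex of $N$ has a neighbor in $V(H)\subseteq V(C)$, so $N\subseteq V(C)$. If $N=\emptyset$ there is nothing to show, since $t\ge 1$ by convention. So assume $N\neq\emptyset$.

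First, fix a spanning tree $T_0$ of $H$, which exists because $H$ is connected. Extend $T_0$ by attaching each $u\in N$ through one edge $uv_u$, where $v_u\in V(H)$ is a chosen neighbor of $u$. Call the result $T_1$. It is a tree on $V(H)\cup N$, and each $u\in N$ is a leaf of $T_1$.

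Next, extend $T_1$ to a spanning tree of $C$ without touching the vertices of $N$. Consider the components $D$ of $C - (V(H)\cup N)$. Each such $D$ must attach to $T_1$ through some vertex of $N$: it cannot be adjacent to $V(H)$, since any vertex adjacent to $V(H)$ lies in $N$. This would make some $u\in N$ lose its leaf status. I handle this by finishing the argument in a second way.

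Argue directly instead. Take any spanning tree $T$ of $C$ that contains $T_1$. It exists because $T_1$ is acyclic and $C$ is connected. Then delete the edges of $T$ that go from $N$ into the components $D$ and do not lie in $T_1$. For each $u\in N$, its component in $T - (V(H)\cup\{\text{other vertices}\})$ away from $H$ forms a subtree hanging at $u$. Each such hanging subtree that is nonempty contains at least one leaf of $T$ that is not in $V(H)$. These leaves are distinct for distinct $u$, because the hanging subtrees at different $u$ are disjoint. If the hanging part at $u$ is empty, then $u$ itself is a leaf of $T$. In either case we get an injection from $N$ into the leaves of $T$, so $|N|\le \maxleaf(C)\le t$.

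The main care point is justifying that the subtrees of $T$ hanging below different vertices of $N$ are disjoint and each contributes a leaf. This holds because $T_1$ contains all of $V(H)\cup N$. Rooting $T$ at a vertex of $H$, the descendants of $u\in N$ outside $T_1$ form $u$'s private branch. The deepest vertex of $u$'s subtree is a leaf lying in that subtree, and subtrees of distinct $u$'s are disjoint since no $u\in N$ is a descendant of another: each $u$ is adjacent in $T_1$ to $V(H)$.
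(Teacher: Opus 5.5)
Your final argument is correct and is essentially the paper's proof: take a spanning tree of $H$, attach each vertex of $N_G(V(H))$ as a pendant leaf, and extend to a spanning tree of the component. Your rooted-subtree injection also makes explicit why this extension cannot lose leaves, which the paper asserts without justification. You should cut the abandoned first attempt and the garbled sentence about deleting the edges from $N$ into the components $D$, since your last paragraph alone carries the proof.
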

\begin{proof}
    For the sake of contradiction, assume that $|N_G(V(H))| \ge t+1$. Let $T$ be a spanning tree of $H$. Let $T'$ be a subgraph obtained from $T$ by adding, for each vertex $v \in N_G(V(H))$, an edge between $v$ and its neighbor in $T$. 
    Since each vertex in $N_G(V(H))$ is attached to $T$ by a single edge, $T'$ is still a tree. Moreover, each vertex in $N_G(V(H))$ is a leaf in $T'$. Hence, $T'$ has at least $t+1$ leaves. Finally~$T'$ can be extended to a spanning tree of the connected component of $G$ containing~$H$, which must still have at least $t+1$ leaves, a contradiction.
\end{proof}

Note that when taking~$H$ to be a single vertex,
\cref{lem:thenumberofredneighbor} shows that a graph~$G$ with $\compmaxleaf(G) = t$ has maximum degree at most~$t$.

\begin{lemma}\label{lem:numberofinducedsubgraphs}
    Let $t$ be a positive integer, and let $G$ be a graph with component max-leaf $t$.
    Then the number of non-empty connected induced subgraphs of $G$ is at most $|V(G)|^{t}$.
\end{lemma}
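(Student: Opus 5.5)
We need to bound the number of non-empty connected induced subgraphs by $n^t$, where $n = |V(G)|$. The plan is to injectively encode each connected vertex set $S$ (one with $G[S]$ connected and non-empty) by a short tuple of vertices.

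The natural encoding comes from the leaves of a spanning tree. Take a connected induced subgraph $H=G[S]$ and a spanning tree $T$ of $H$. Extend $T$ to a spanning tree $T^*$ of the component $C$ of $G$ containing $S$. Since $\compmaxleaf(G)=t$, the tree $T^*$ has at most $t$ leaves. A leaf count of $T$ itself is not directly bounded, because a leaf of $T$ may be internal in $T^*$. Instead we should reuse the argument of \cref{lem:thenumberofredneighbor}: attach each vertex of $N_G(S)$ to $T$ by one edge and extend. Then the leaves of $T$ that stay leaves, together with $N_G(S)$, are leaves of the resulting spanning tree of $C$. This suggests the following cleaner encoding:
\begin{itemize}
\item[(a)] If $N_G(S)\neq\emptyset$, encode $S$ by the pair $(N_G(S), x)$, where $x\in S$ is any vertex. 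Then $S$ is the vertex set of the component of $G-N_G(S)$ containing $x$, since $S$ is connected and has no neighbors outside itself other than $N_G(S)$. By \cref{lem:thenumberofredneighbor}, $|N_G(S)|\le t$.
\item[(b)] If $N_G(S)=\emptyset$, then $S$ is a whole component, and it is determined by any one of its vertices.
\end{itemize}

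Next we count the encodings. Case (b) contributes at most $n$ sets. Case (a) naively gives up to $n\cdot\sum_{k\le t}\binom{n}{k}$, which is about $n^{t+1}$: one power too many. The main obstacle is therefore shaving the extra factor of $n$ coming from the choice of $x$. The fix is a joint budget on the leaves. In the tree built as above, the vertices of $N_G(S)$ are leaves, and the subtree on $S$ contributes at least one further leaf unless $S$ is absorbed. Concretely, root $T$ at a vertex $x\in S$ that is a leaf of $T$ (or take $x$ to be the single vertex when $|S|=1$). Attach the $N_G(S)$ vertices at neighbors in $S$, preferring vertices other than $x$ where possible. Then $x$ remains a leaf of the tree on $S\cup N_G(S)$ unless every neighbor-attachment must go through $x$.

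This yields $|N_G(S)|+1\le t$ leaves in a spanning tree of $C$ in the generic case, so the pair $(N_G(S),x)$ is a set of at most $t$ vertices with a distinguished element. Counting, there are at most $\sum_{k\le t} k\binom{n}{k}\le n^t$ such encodings, with a routine estimate for $n\ge 2$ (and $n=1$ trivial).

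The degenerate case is when $x$ is forced to be internal, i.e. $S$ is a path whose endpoints both carry all external attachments, or $|S|=1$. Here we choose $x$ instead to be a vertex of $N_G(S)$ whose component in $G-(N_G(S)\setminus\{\ldots\})$ identifies $S$. Alternatively we encode $S$ by the ordered tuple of its leaves in $T$ together with $N_G(S)$, noting that leaves of $T$ plus $N_G(S)$ form a leaf set of a tree and hence number at most $t$. I expect this careful accounting to reach exactly $n^t$ rather than $O(n^t)$, and it is the delicate step. If it resists, I would fall back on encoding by the set $L$ of leaves of $T'$ (the tree from \cref{lem:thenumberofredneighbor}, of size $\le t$) and recovering $S$ as the connected component structure determined by $L$. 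This is a subset of size at most $t$, giving at most $\sum_{k\le t}\binom nk\le n^t$.
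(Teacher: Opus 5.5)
\textbf{Verdict: there is a genuine gap.} The case $|N_G(S)|=t$ is never counted, and none of your three fallbacks handles it.

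Your encoding of $S$ by $(N_G(S),x)$ with $x\in S$ is sound and injective. It stays within budget whenever $|N_G(S)|\le t-1$. That is all a leaf inside $S$ gives you, and in that case any $x\in S$ works. The count is $\sum_{k\le t}k\binom nk=n\sum_{i<t}\binom{n-1}{i}\le n^t$.

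The uncovered case is $|N_G(S)|=t$, which you leave to a hoped-for ``careful accounting''. It is not marginal. In a cycle $C_n$, where $t=2$, every arc with at most $n-2$ vertices has two neighbours, so all but $n+1$ of the connected sets fall into it. Your three suggestions fail as follows.
\begin{enumerate}
\item The first is not a defined encoding.
\item The second rests on a false claim. A leaf of $T$ that receives an attachment becomes internal in $T'$, so the leaves of $T$ together with $N_G(S)$ need not all be leaves of one tree. In the path $abcd$ (where $t=2$) with $S=\{b,c\}$, they make up all four vertices.
\item The third is not injective. In the $4$-cycle $abcd$, both $S=\{a\}$ and $S=\{c\}$ give a tree $T'$ with leaf set $\{b,d\}$. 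The set $L$ does not tell you which of its elements lie in $S$. Even once $N_G(S)$ is known, it does not tell you which component of $G-N_G(S)$ is meant.
\end{enumerate}

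The missing idea is the one the paper uses for all $S$ at once. If $X=N_G(S)\neq\emptyset$, then $S$ is a component of $G-X$ containing a neighbour of a fixed $v\in X$. By \cref{lem:thenumberofredneighbor} applied to $H=\{v\}$, $\deg(v)\le t$, so $X$ admits at most $t$ choices of $S$. The paper then bounds the total by $t\sum_{i\le t}\binom ni$ and proves $\sum_{i\le t}\binom ni\le n^t/t$ for $3\le t\le n$. It checks $t\in\{1,2\}$ directly: these are edgeless graphs and disjoint unions of paths and cycles.

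You could also salvage your own route by using this $t$-choice bound only when $|N_G(S)|=t$. The total becomes $n\big(\sum_{i<t}\binom{n-1}i+\binom{n-1}{t-1}\big)$, which is at most $n^t$ for $t\ge 3$. For $t=2$, however, it equals $2n^2-n$, so the small cases still need a direct argument, exactly as in the paper.
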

\begin{proof}
    Firstly, when~$G$ is not connected, assuming the statement holds for each connected component $H_1,\dots,H_k$ of~$G$, the number of non-empty connected induced subgraphs of~$G$ is at most
    \[ |V(H_1)|^t + \dots + |V(H_k)|^t \le |V(G)|^t. \]
    We may thus assume that~$G$ is connected.

    Consider a connected induced subgraph~$H$ of~$G$.
    Then the set $S = N_G(V(H))$ has size at most~$t$ by \cref{lem:thenumberofredneighbor}.
    Clearly~$H$ is a~connected component of~$G-S$.
    Furthermore, for any fixed $v \in S$, $H$ must contain a neighbor of~$v$.
    By \cref{lem:thenumberofredneighbor}, the degree of~$v$ is at most~$t$, hence for this fixed choice of~$S$,
    there can be at most~$t$ valid choices of~$H$.
    As an edge case, $S$ could be empty, in which case the only non-empty connected subgraph with neighborhood exactly~$S$ is $H=G$.
    So to describe~$H$, it suffices to give~$S$ of size at most~$t$,
    and indicate which of the connected components of~$G-S$ is~$H$,
    with only~$t$ relevant choices for the second step by the previous argument.
    This immediately gives that there are at most $t \cdot |V(G)|^t$ non-empty connected induced subgraphs.

    To reach the claimed bound of $|V(G)|^t$ instead,
    denote by $S^n_t = \sum_{i=0}^t \binom{n}{i}$ the number of subsets of size at most~$t$ of $\{1,\dots,n\}$.
    The previous paragraph proved that there are at most $t S^n_t$ non-empty connected induced subgraphs in~$G$.
    When $t \ge 3$, we show the following bound, which directly implies the main statement
    \begin{equation}\label{eq:sumofbinom}
        \text{for $3 \le t \le n$,} \quad S^n_t \le \frac{n^t}{t}.
    \end{equation}
    Indeed, for $t=3$, one can directly check that $S^n_t = \frac{n^3 + 5n+6}{6} \le \frac{n^3}{3}$,
    and for $t \ge 4$, we have
    \[ S^n_t = S^n_{t-1} +\binom{n}{t} \le \frac{n^{t-1}}{t-1} + \frac{n^t}{t!} \le \frac{2\cdot n^t}{t(t-1)} \le \frac{n^t}{t}. \]

    Finally, the cases $t=1$ or $t=2$ (where \cref{eq:sumofbinom} fails) correspond respectively to edgeless graphs, and to disjoint unions of paths and cycles. It is easy to check the statement in both cases.
\end{proof}

\subsection{Bandwidth and cutwidth}

The \emph{bandwidth} of a graph $G$ is the minimum $k\in\mathbb{N}\cup \{0\}$ such that there is an ordering $v_1,\dots,v_n$ of $V(G)$ satisfying $|i-j|\leq k$ for every edge $v_iv_j\in E(G)$. 
The \emph{cutwidth} of a graph $G$ is the minimum $k\in\mathbb{N}\cup \{0\}$ such that there is an ordering $v_1,\dots,v_n$ of $V(G)$ satisfying that for each $i\in [n-1]$, there are at most $k$ edges between $\{v_j:j\in [i]\}$ and $V(G)\setminus \{v_j:j\in [i]\}$.

\subsection{Stretch-width}

We will compare reduced parameters and stretch-width in \Cref{sec:comparewidth}.

An \emph{ordered graph} is a pair $(G, \psi)$ of a graph $G$ and a linear ordering $\psi$ of $V(G)$. 
Let $(G, \psi)$ be an ordered graph, and $X \subseteq V(G)$.
The \emph{convex closure} or \emph{span} of $X$ is 
\[\conv(X) := \{v \in V(G) : \min\{\psi(w):w\in X\} \leq \psi(v) \leq \max\{\psi(w):w\in X\}\}.\] 
Two sets $X, Y \subseteq V(G)$ are \emph{in conflict}, or $X$ \emph{conflicts with} $Y$, if $\conv(X) \cap \conv(Y) \neq \emptyset$. Note that this does not imply that $X$ and $Y$ intersect.

Let $\mathcal{P}$ be a partition of $V(G)$ and $X \in \mathcal{P}$. We say that $Y \in \mathcal{P} \setminus \{X\}$ \emph{interferes} with $X$ if $Y$ conflicts with $\bigcup R_{G/\mathcal{P}}[X]$. Note that it is possible that $Y$ interferes with $X$, but not vice versa. The \emph{stretch} of the part $X \in \mathcal{P}$, denoted by $\str(X)$, is defined as the number of parts in $\mathcal{P}$ interfering with $X$. The \emph{stretch} of $\mathcal{P}$ is the maximum over all parts $Z \in \mathcal{P}$ of $\str(Z)$. The \emph{stretch-width} of the ordered graph $(G, \psi)$, denoted by $\stw(G, \psi)$, is the minimum among all contraction sequences $\mathcal{P}_n, \ldots, \mathcal{P}_1$ of $G$, of $\max_{i \in [n]}\str(\mathcal{P}_i)$. Finally the \emph{stretch-width} of $G$, denoted by $\stw(G)$, is the minimum of $\stw(G, \psi)$ taken among all linear orderings $\psi$ on $V(G)$.

\begin{lemma}\label{lem:largereddegree}
	Let $d$ be a positive integer. 
	Let $(G, \psi)$ be an ordered graph, and let $\mathcal{P}$ be a partition of $V(G)$. If the maximum red degree of $G/\mathcal{P}$ is $d$, then the stretch of $\mathcal{P}$ in $(G, \psi)$ is at least $d$.
\end{lemma}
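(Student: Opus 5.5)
Take a part $X \in \mathcal{P}$ whose red degree in $G/\mathcal{P}$ is exactly $d$, and let $Y_1,\dots,Y_d$ be its red neighbours. The aim is to show that every $Y_i$ interferes with $X$. That gives $\str(X) \ge d$, and the stretch of $\mathcal{P}$, being a maximum over parts, is then at least $d$.

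The key observation is that conflict means intersecting convex closures, and a set conflicts with any superset of itself whenever it is non-empty. Concretely, $\conv(Y_i) \supseteq Y_i \neq \emptyset$. Since $Y_i \in R_{G/\mathcal{P}}(X) \subseteq R_{G/\mathcal{P}}[X]$, we also have $Y_i \subseteq \bigcup R_{G/\mathcal{P}}[X] \subseteq \conv\bigl(\bigcup R_{G/\mathcal{P}}[X]\bigr)$. Hence $\conv(Y_i) \cap \conv\bigl(\bigcup R_{G/\mathcal{P}}[X]\bigr) \supseteq Y_i \neq \emptyset$, so $Y_i$ conflicts with $\bigcup R_{G/\mathcal{P}}[X]$.

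Each $Y_i$ is a part of $\mathcal{P}$ distinct from $X$, since there are no loops. By the definition, it therefore interferes with $X$. The $Y_i$ are $d$ pairwise distinct parts, so $\str(X) \ge d$.

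There is no real obstacle here. The only points to check are that parts of a partition are non-empty and that the definition of interference counts parts other than $X$. Both hold, and we do not even need to use that $X$ itself also lies in the closed red neighbourhood.
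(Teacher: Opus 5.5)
Your proof is correct and is essentially the paper's argument: each red neighbour $Y$ of $X$ lies in $\bigcup R_{G/\mathcal{P}}[X]$, so it interferes with $X$. This gives $\str(X) \ge |R_{G/\mathcal{P}}(X)|$, and taking $X$ of red degree $d$ yields the bound. You spell out the non-emptiness and $Y \neq X$ checks that the paper leaves implicit, which is fine.
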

\begin{proof}
    By the definition, every $Y \in R_{G/\mathcal{P}}(X)$ interferes with $X$ for each $X \in \mathcal{P}$. Hence, $\str(X) \ge |R_{G/\mathcal{P}}(X)|$ for all $X \in \mathcal{P}$, and therefore, \[\max\limits_{Z \in \mathcal{P}}\str(Z) \ge \max\limits_{Z \in \mathcal{P}}|R_{G/\mathcal{P}}(Z)| = d.\qedhere\]
\end{proof}

\section{Reduced parameters on subdivided walls and 3-dimensional grids}\label{sec:lowerbound}

In~\cref{subsec:lowercml}, we prove that planar graphs have unbounded reduced component max-leaf.
To prove it, we show that every weighted graph $(G,w)$ with $\reduced{\compmaxleaf}(G) \le k$ admits a balanced separator dominated by at most~$k(k+2)$ vertices of~$G$.
This implies that large subdivided walls or their line graphs have unbounded $\reduced{\compmaxleaf}$.

In~\cref{subsec:reducedbw3dimgrid}, we prove that $3$-dimensional grids have unbounded reduced bandwidth.

\subsection{Balanced separators and reduced component max-leaf}\label{subsec:lowercml}
In a weighted graph $(G,w)$, for $0 < a < 1$, say that a subset $S \subseteq V(G)$ is an $a$-balanced separator
if each connected component~$C$ in $G-S$ has weight $w(C) \le a \cdot w(G)$.
Thus a $\frac{2}{3}$-balanced separator is the usual notion of balanced separator.

  \begin{figure}[!t]
    \centering
\begin{tikzpicture}[scale=0.7,
  nd/.style={circle, draw=black, line width=2pt,
             minimum size=22pt, inner sep=0pt, fill=white}
]

\draw[rounded corners, line width=1.5pt] (0,1) rectangle (13,12);

\draw[line width=1.5pt] (0,7.5) -- (13,7.5);

\draw[line width=1.5pt]
  (13.4,11.7) -- (13.7,11.7) -- (13.7,7.8) -- (13.4,7.8);
\node[font=\large] at (14.35,9.75) {$B_i$};

\node[nd] (A) at ( 4.0, 11.0) {};   
\node[nd] (B) at ( 2.5,  9.5) {};   
\node[nd] (C) at ( 5.5,  9.5) {};   
\node[nd] (D) at ( 8.5, 11.0) {};   
\node[nd] (E) at (11.0, 11.0) {};   
\node[nd] (F) at (10.5,  9.2) {};

\node[nd] (G) at (2.5, 5.5) {};     
\node[nd] (H) at (4.5, 5.5) {};     
\node[nd] (I) at (2.5, 3.5) {};     
\node[nd, minimum size=26pt] (J) at (4.5, 3.5) {};  

\node[nd] (K) at ( 8.5, 5.5) {};    
\node[nd] (L) at (10.5, 5.5) {};

\draw[line width=2pt] (A) -- (B);
\draw[line width=2pt] (A) -- (C);
\draw[line width=2pt] (D) -- (E);
\draw[line width=2pt] (E) -- (F);

\draw[red, line width=1.5pt] (A) -- (F);   
\draw[red, line width=1.5pt] (C) -- (D);

\draw[red, line width=1.5pt] (B) -- (G);
\draw[red, line width=1.5pt] (C) -- (H);

\draw[red, line width=1.5pt] (F) -- (K);

\draw[red, line width=1.5pt] (G) -- (H);
\draw[red, line width=1.5pt] (G) -- (I);
\draw[red, line width=1.5pt] (H) -- (J);
\draw[red, line width=1.5pt] (I) -- (J);

\draw[red, line width=1.5pt] (K) -- (L);

\draw[dashed, line width=2pt]
  (1.3, 2.2) rectangle (5.8, 6.5);

\node[black] at (3.55, 1.45){fully red component};

\end{tikzpicture}
        \caption{A fully red component in \cref{lem:cml-dominated-sep}. }
        \label{fig:fullyred}
    \end{figure}
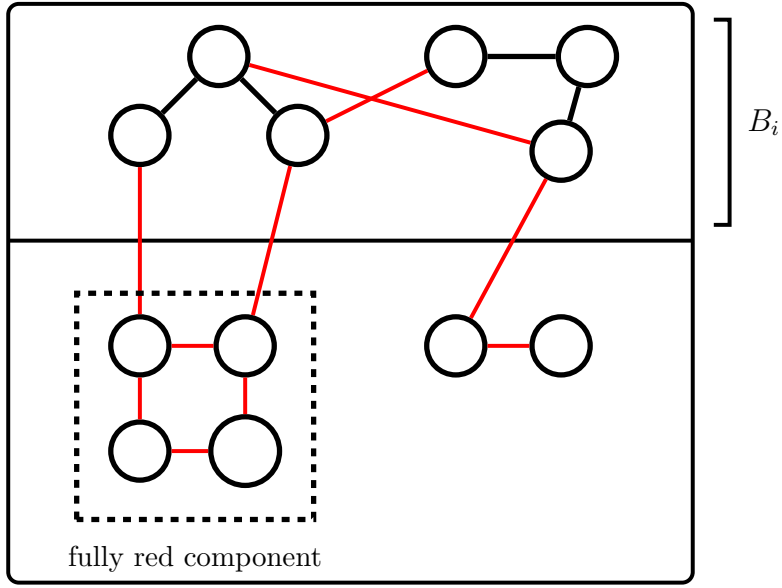

\begin{lemma}\label{lem:cml-dominated-sep}
    For any weighted graph~$(G,w)$ with $\reduced{\compmaxleaf}(G) \le k$, there is a $(1-\frac{1}{2k+3})$-balanced separator~$S$ of~$(G,w)$ such that~$S$ is contained in the open neighborhood of at most~$k$ vertices of~$G$.
\end{lemma}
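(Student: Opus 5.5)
The plan is to look at the contraction sequence $\mathcal{P}_n,\dots,\mathcal{P}_1$ witnessing $\reduced{\compmaxleaf}(G)\le k$ at a carefully chosen \emph{first critical time}. We give each part $X$ the weight $w(X)$, and each red component $C$ of $\mathcal{R}(G/\mathcal{P}_i)$ the weight $w(\bigcup C)$. At $i=n$ every red component is a single vertex. At $i=1$ the unique part carries all the weight. So there is a largest index $i$ at which some red component has weight $>(1-\frac{1}{2k+3})w(G)$. At $\mathcal{P}_{i+1}$ every red component is light. Passing to $\mathcal{P}_i$ merges two parts, so the heavy red component $C$ of $\mathcal{P}_i$ consists of the new part $X$, together with a bounded number of formerly light red components glued to it. Since $\compmaxleaf(\mathcal{R}(G/\mathcal{P}_i))\le k$, \cref{lem:thenumberofredneighbor} bounds the red degree by $k$.

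The structural observation driving the separator is about black edges. Every edge of $G$ leaving $\bigcup C$ goes to a part $D\notin C$, and $D$ must be black-adjacent to the part $Y\in C$ containing the edge's other endpoint, since red edges stay inside $C$. Black adjacency means $Y$ is complete to $D$. Hence \emph{any single vertex} $y\in Y$ has $D\subseteq N_G(y)$. More generally, whenever we want to cut $G$ along the boundary between a union $U$ of parts and the rest, and every $G$-edge crossing that boundary is carried by black trigraph edges incident to a few parts $Y_1,\dots,Y_m$ on one side, then choosing one vertex $y_j\in Y_j$ gives a separator $\bigcup_j N_G(y_j)\cap(\text{other side})$. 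This separator lies in the open neighbourhood of $m$ vertices.

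The order of steps is as follows. First, inside $\mathcal{P}_{i+1}$, consider the $\le k$ red components $C_1,\dots,C_m$ that become joined through the merge (their number is bounded by the red degree of the merged part). Second, choose among the light components and the merged part a side $U$ of weight between $\frac{1}{2k+3}w(G)$ and $(1-\frac{1}{2k+3})w(G)$. We do this by a greedy/pigeonhole argument on at most $k+1$ pieces, each of weight at most $(1-\frac{1}{2k+3})w(G)$. If one red component is already heavy but not too heavy, it alone serves as $U$. Otherwise the pieces are individually small, and accumulating them greedily lands in the window, since $2k+3$ leaves room for $k+1$ pieces plus a rest. Third, apply the black-edge observation to $U$, taking as the $Y_j$ the parts on the boundary of $U$ in the red graph. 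There should be at most $k$ of them, again using \cref{lem:thenumberofredneighbor} applied to the connected red subgraph $U$. This produces the separator $S$ dominated by at most $k$ vertices. Every component of $G-S$ then lies inside $U$ or inside its complement, so its weight is at most $(1-\frac{1}{2k+3})w(G)$.

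The main obstacle is the third step. The red boundary of $U$ must be controlled by \cref{lem:thenumberofredneighbor} so that at most $k$ dominating vertices suffice. In addition, red edges from $U$ to parts outside $U$ (within the same red component) are not complete, so they cannot be handled by the domination trick. I would first try to choose $U$ to be a full red component, or the union of full components of $\mathcal{R}(G/\mathcal{P}_{i+1})$, so that no red edges cross its boundary at all. The remaining difficulty then reduces to the weight bookkeeping. That bookkeeping is where I expect the exact constant $\frac{1}{2k+3}$ to come from. If a single part $X$ is itself too heavy, it must be handled separately: $X$ is then dominated-separated from the rest via its black neighbours, and its inside cannot be split.
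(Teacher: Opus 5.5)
\textbf{There is a genuine gap in your third step, which is the heart of the lemma: nothing bounds the number of dominating vertices your construction uses.}

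If $U$ is a red component of $\mathcal{R}(G/\mathcal{P}_{i+1})$, or a union of such, then every $G$-edge leaving $\bigcup U$ is carried by a black edge. Your trick needs one vertex per part $Y_j$ carrying such an edge. \cref{lem:thenumberofredneighbor} only bounds the number of \emph{red} neighbours of a connected red subgraph, which here is zero. It says nothing about black neighbours, so ``the parts on the red boundary of $U$'' are simply not the parts you need.

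The number of such parts really is unbounded. Take the $m\times m$ Pohoata--Davies grid with the contraction sequence described before \cref{prop:pd-grid}. Once, for every $\ell$, the $\ell$-th vertices of all paths have been merged into a part $Z_\ell$, the red graph has the single non-trivial component $Z_1Z_2\cdots Z_m$, which is a path. Yet each $Z_\ell$ is complete to its own added vertex $a_\ell$, and distinct $a_\ell$ have disjoint neighbourhoods. Put half of the weight evenly on the vertices $a_\ell$. At your critical time, every union of pieces with weight in your window has $\Omega(m)$ crossing black edges $Z_\ell a_\ell$. So your construction uses $\Omega(m)$ dominating vertices, although $k=3$.

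Conversely, suppose you let $U$ have a non-empty red boundary in order to use \cref{lem:thenumberofredneighbor}. Then, as you note yourself, the crossing red edges cannot be cut by your trick. Moreover, a red neighbour of $U$ need not lie in the neighbourhood of any single vertex.

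\textbf{The missing idea is to cut along parts that \emph{have} a black neighbour, instead of along black edges.}

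Let $B_i\subseteq\mathcal{P}_i$ be the parts with at least one black neighbour in $G/\mathcal{P}_i$. Each of them lies in $N_G(v)$ for any $v$ in that black neighbour. The paper works with the components of $\mathcal{R}(G/\mathcal{P}_i)-B_i$. Such a ``fully red'' component $C$ meets no black edge at all. Hence its neighbours in $G/\mathcal{P}_i$ are red, there are at most $k$ of them by \cref{lem:thenumberofredneighbor}, and they all lie in $B_i$ (otherwise they would belong to $C$). Therefore $N_G(\bigcup V(C))$ is covered by at most $k$ open neighbourhoods.

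The stopping rule uses the lower threshold. With $\lambda=\frac{1}{2k+3}$, take the first $i$ at which some part of $B_i$, or some fully red component, has weight at least $\lambda w(G)$.
\begin{itemize}
\item A heavy part of $B_i$ is itself a separator dominated by one vertex.
\item Otherwise $C$ splits into at most $k+1$ parts that just became fully red: the merged part, of weight at most $2\lambda w(G)$, and some of its red neighbours, each of weight less than $\lambda w(G)$. The rest of $C$ is at most $k$ pieces, each inside a fully red component of $\mathcal{P}_{i+1}$ of weight less than $\lambda w(G)$.
\end{itemize}
This gives $w(\bigcup V(C))\le(2k+2)\lambda w(G)=(1-\lambda)w(G)$, which is where the constant $2k+3$ comes from. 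The set $N_G(\bigcup V(C))$ is then the desired separator.
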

\begin{proof}
    Fix a contraction sequence $\Pc_n,\dots,\Pc_1$ witnessing that $\reduced{\compmaxleaf}(G) = k$.
    For each $i \in [n]$, let $B_i \subseteq \Pc_i$ be the set of parts that have a black neighbor in~$G/\Pc_i$,
    and $R_i = \Pc_i \setminus B_i$ the parts that only have red neighbors.
    The point is that any part $P \in B_i$ is contained in $N_G(v)$ for some vertex $v \in V(G)$, namely take~$v$ to be any vertex in the black neighbor of~$P$.
    We call \emph{fully red component} a connected component of $G/\Pc_i - B_i$. See \Cref{fig:fullyred} for an illustration.

    Let $\lambda = \frac{1}{2k+3}$, and consider the first partition~$\Pc_i$ (i.e., the maximum~$i$) such that there is
    \begin{enumerate}
        \item \label{case:big-part} either a part $P \in B_i$ with weight $w(P) \ge \lambda w(G)$,
        \item \label{case:big-fully-red-component} or a fully red component~$C$ in $G/\Pc_i - B_i$ with $w(\bigcup V(C)) \ge \lambda w(G)$.
    \end{enumerate}
    In Case~\labelcref{case:big-part}, $P$ itself trivially is a $(1-\lambda)$-balanced separator since $w(V(G) \setminus P) \le (1-\lambda)w(G)$,
    and $P$~is in the neighborhood of a single vertex since it is in~$B_i$.
    We can thus focus on Case~\labelcref{case:big-fully-red-component}.

    We first bound the weight of the fully red component~$C$.
    \begin{claim}
        We have $\lambda w(G) \le w(\bigcup V(C)) \le (1-\lambda) w(G)$.
    \end{claim}
    \begin{clproof}
        The first inequality is just the assumption on~$C$.

        To prove the second one, we need to also consider the previous step~$\Pc_{i+1}$ in the contraction sequence.
        Say that~$\Pc_i$ was obtained by merging $P_1,P_2 \in \Pc_{i+1}$ into $P := P_1 \cup P_2 \in \Pc_i$.
        We are interested in the new fully red parts in~$\Pc_i$, meaning $R_i \setminus R_{i+1}$.
        Any black edge in $G/\Pc_{i+1}$ that was not incident to~$P_1$ or~$P_2$ remains in $G/\Pc_i$.
        It follows that any part in $R_i \setminus R_{i+1}$ is either~$P$ itself, or adjacent (by a red edge) to~$P$ in~$G/\Pc_i$.
        In particular, $|R_i \setminus R_{i+1}| \le k+1$.

        Now define $X = V(C) \setminus R_{i+1} \subseteq R_i \setminus R_{i+1}$ the set of new red parts used in~$C$.
        While~$P$ is not necessarily in~$X$,
        the previous argument shows that $X \cup \{P\}$ is connected in $\redG(G/\Pc_i)$.
        Thus by \cref{lem:thenumberofredneighbor}, $X \cup \{P\}$ has at most~$k$ neighbors in~$\redG(G/\Pc_i)$.
        Let~$C_1,\dots,C_\ell$ be the connected components of $C - X$.
        Since~$C$ is connected in $\redG(G/\Pc_i)$, $X$ has a neighbor in each of~$C_1,\dots,C_\ell$,
        and thus~$\ell \le k$.
        Finally, each~$C_j$ for $j \in [\ell]$ is contained in $C \setminus X \subseteq R_{i+1}$.
        Thus~$C_j$ was already a connected subset of~$R_{i+1}$, hence is contained in a fully red component of~$G/\Pc_{i+1}$.

        To summarize, $C$ consists of (1) the at most~$k+1$ parts of~$X$, and (2) at most~$k$ components~$C_1,\dots,C_\ell$,
        each of which is contained in a fully red component of~$G/\Pc_{i+1}$.
        The choice of~$\Pc_i$ implies that each part of~$\Pc_{i+1}$, and each fully red component of~$G/\Pc_{i+1}$ has weight at most~$\lambda w(G)$.
        Thus each of $C_1,\dots,C_\ell$ has weight at most~$\lambda w(G)$, and so does each part of~$X$,
        except possibly~$P$ (the only part in $\Pc_i \setminus \Pc_{i+1}$).
        The latter is the union of two parts of~$\Pc_{i+1}$, hence $w(P) \le 2\lambda w(G)$.
        Summing up, we obtain $w(\bigcup V(C)) \le (2k+2)\lambda w(G) = (1-\lambda) w(G)$.
    \end{clproof}

    This claim directly implies that~$N_G(\bigcup V(C))$ is a $(1-\lambda)$-balanced separator in~$(G,w)$.
    To conclude, it suffices to prove that it has a small dominator:
    \begin{claim}
        The neighborhood $N_G(\bigcup V(C))$ is dominated by at most~$k$ vertices in~$G$.
    \end{claim}
    \begin{clproof}
        In~$G/\Pc_i$, since~$C$ is a fully red component, the only neighbors of~$C$ are red neighbors.
        Since~$C$ is connected, \cref{lem:thenumberofredneighbor} gives that $N_{G/\Pc_i}(V(C))$ consists of at most~$k$ parts of~$\Pc_i$.
        Each of these~$k$ parts is in~$B_i$, hence is dominated by a single vertex.
        Thus the neighborhood of~$\bigcup V(C)$ is dominated by at most~$k$ vertices.
    \end{clproof}
    Thus~$N_G(\bigcup V(C))$ is a $(1-\lambda)$-balanced separator dominated by~$k$ vertices, as desired.
\end{proof}

While \cref{lem:cml-dominated-sep} only produces a~`slightly balanced' separator, we can apply this statement multiple times to obtain a~$\frac{2}{3}$-balanced separator:
\begin{theorem}\label{thm:cml-dominated-balanced-sep}
    For any weighted graph $(G,w)$ with $\reduced{\compmaxleaf}(G) \le k$, there is a $\frac{2}{3}$-balanced separator~$S$ of $(G,w)$ such that~$S$ is contained in the open neighborhood of at most~$k(k+2)$ vertices of~$G$.
\end{theorem}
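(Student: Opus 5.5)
We apply \cref{lem:cml-dominated-sep} repeatedly, at most $k+2$ times, each time reweighting so that the next separator attacks the component that is still too heavy. The union of the separators is then dominated by at most $k(k+2)$ vertices. Since $\reduced{\compmaxleaf}$ is monotone under taking induced subgraphs (restricting the contraction sequence gives red graphs that are red-induced subgraphs, and component max-leaf does not increase under induced subgraphs), we could work in $G - S$ instead. The cleaner route, however, is to stay in $G$ and only change the weight function, so that every application of the lemma is to $G$ itself.

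\textbf{Iteration.} Set $S_0 = \emptyset$ and $w_0 = w$. Given $S_j$, suppose $G - S_j$ has a component $D_j$ with $w(D_j) > \frac{2}{3} w(G)$; such a component is unique. Define $w_{j+1}$ to agree with $w$ on $V(D_j)$ and to vanish elsewhere. Apply \cref{lem:cml-dominated-sep} to $(G, w_{j+1})$ to get a set $T_{j+1}$ dominated by at most $k$ vertices. Every component of $G - T_{j+1}$ then has $w$-weight inside $D_j$ at most $(1-\lambda) w(D_j)$, where $\lambda = \frac{1}{2k+3}$. Set $S_{j+1} = S_j \cup T_{j+1}$.

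Now take a component $D'$ of $G - S_{j+1}$. It lies inside a single component of $G - S_j$. If that component is not $D_j$, then $w(D') \le \frac{2}{3} w(G)$ already. If it is $D_j$, then $D'$ is contained in one component of $G - T_{j+1}$ restricted to $D_j$, so
\[ w(D') \le (1-\lambda)\, w(D_j) \le (1-\lambda)^{j+1} w(G). \]
The iteration stops once $(1-\lambda)^{m} \le \frac{2}{3}$. Because $\lambda = \frac{1}{2k+3}$, this needs $m \approx (2k+3)\ln\frac{3}{2}$, which is about $0.81k + 1.2$ steps. This is at most $k+2$ for all $k \ge 1$: for $k=1$, $(4/5)^2 = 0.64 \le 2/3$; in general $(1-\lambda)^{k+2} \le e^{-(k+2)/(2k+3)} \le e^{-1/2} < 2/3$, using $(k+2)/(2k+3) \ge 1/2$. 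So $m \le k+2$ applications suffice, and the final $S_m$ is dominated by at most $mk \le k(k+2)$ vertices.

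\textbf{Main obstacle.} The delicate point is that the lemma balances only with respect to $w_{j+1}$, which is supported on $D_j$. A component of $G - T_{j+1}$ may also contain vertices outside $D_j$, but those are already cut off by $S_j$. So after removing $S_{j+1}$, the pieces inside $D_j$ have $w$-weight at most $(1-\lambda) w(D_j)$. The other fact to check is that at most one component can exceed $\frac{2}{3} w(G)$, which holds because component weights sum to at most $w(G)$. We must also handle a zero-weight $w_{j+1}$, but that cannot arise, since $D_j$ is heavy. Finally, we must confirm that \cref{lem:cml-dominated-sep} allows weights that are zero on some vertices. It does: its proof uses only nonnegativity of the weights.
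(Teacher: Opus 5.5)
Your proof is correct and is essentially the paper's argument: apply \cref{lem:cml-dominated-sep} to the unique component heavier than $\frac{2}{3}w(G)$, show that this happens at most $k+2$ times, and take the union of the resulting separators. The two differences are minor and harmless. First, you stay in $G$ and set the weight to zero outside $D_j$, instead of applying the lemma to the induced subgraph $C_i$. This even avoids a degenerate case in the paper's route: when the heavy component is a single vertex $v$, the lemma applied to the one-vertex graph $C_i$ cannot return a set lying in an open neighborhood, whereas in $G$ the set $\{v\}$ is dominated by any neighbor of $v$. Second, you bound the heavy weight multiplicatively by $(1-\lambda)^j w(G)$ rather than by the paper's additive $\bigl(1-\frac{2j}{3(2k+3)}\bigr)w(G)$, and in the relevant range this is if anything slightly sharper.
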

\begin{proof}
    Define $a := 1 - \frac{1}{2k+3}$.
    We construct a sequence of balanced separators $S_1,\dots,S_\ell$.
    Suppose they have already been constructed up to~$S_i$ (possibly $i=0$).
    Let~$C_i$ be the connected component of $G - \bigcup_{j \le i} S_j$ of maximum weight.
    Clearly any connected component but~$C_i$ has weight at most $\frac{1}{2}W$.
    If $w(C_i) \le \frac{2}{3}W$, then $\bigcup_{j \le i} S_j$ is a $\frac{2}{3}$-balanced separator, and we stop the process here.
    Otherwise, let~$S_{i+1}$ be an $a$-balanced separator in $C_i$ with the restriction of~$w$ as weight
    (note that~$C_i$ also has reduced component max-leaf at most~$k$), and continue the process.

    For $i < \ell$, since~$S_{i+1}$ is $a$-balanced in~$C_i$, we have $w(C_{i+1}) \le a \cdot w(C_i)$.
    On the other hand, we have $w(C_i) > \frac{2}{3}W$ (otherwise we would stop with $\ell=i$),
    hence 
    \[ w(C_{i+1}) \le w(C_i) - \frac{2}{3}(1-a)W = w(C_i) - \frac{2}{3(2k+3)}W. \]
    It follows that for any $i \le \ell$, we have
    \[ w(C_i) \le W \left(1 - \frac{2i}{3(2k+3)} \right).\]
    For $i = k+2$, the right hand side is less than $\frac{2}{3}W$, hence the process must stop with $\ell \le k+2$.
    Thus we obtain that $\bigcup_{i \le \ell} S_i$ is a $\frac{2}{3}$-balanced separator, contained in the open neighborhood of at most $k\ell \le k(k+2)$ vertices.
\end{proof}

If the maximum degree is bounded, then \cref{thm:cml-dominated-balanced-sep} actually provides balanced separators of bounded size,
proving that treewidth and reduced component max-leaf are functionally equivalent on bounded degree graphs:
\begin{corollary}\label{cor:cml-degree-tw}
    Any graph~$G$ with maximum degree~$\Delta$ and $\reduced{\compmaxleaf}(G) = k$ has treewidth at most $4\Delta k(k+2)$.
\end{corollary}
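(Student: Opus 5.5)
The plan is to combine \cref{thm:cml-dominated-balanced-sep} with the standard fact that small weighted balanced separators yield small treewidth. The first step is a bound on separator size. In a graph of maximum degree~$\Delta$, the open neighborhood of any vertex has at most $\Delta$ vertices. So for every weight function $w$ on $V(G)$, \cref{thm:cml-dominated-balanced-sep} gives a $\frac{2}{3}$-balanced separator $S$ of $(G,w)$ contained in the union of at most $k(k+2)$ open neighborhoods. Hence $|S| \le \Delta k(k+2)$. Applying the theorem to arbitrary weights is legitimate, since the theorem is stated for every weighted graph with $\reduced{\compmaxleaf}(G)\le k$.

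The second step invokes the classical relation between balanced separators and treewidth, due to Robertson--Seymour and presented for instance in Reed's survey or the textbook of Cygan et al. The statement I would use is this: if for every weight function $w$ (equivalently, for every vertex subset $X$ with its uniform weight) $G$ has a $\frac{2}{3}$-balanced separator of size at most $s$, then $\mathrm{tw}(G) \le 4s$. Plugging in $s = \Delta k(k+2)$ gives $\mathrm{tw}(G) \le 4\Delta k(k+2)$, matching the claimed bound. The constant $4$ is exactly what appears in the standard statement, which is reassuring.

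If a self-contained argument is preferred, I would reproduce the usual recursive construction. One builds a tree decomposition of $G - $ (already placed vertices) rooted at a bag containing a boundary set $X$ with $|X|\le 3s$. One takes a balanced separator $S_X$ of $X$ (weights $1$ on $X$, $0$ elsewhere), so each component of $G - S_X$ contains at most $\frac{2}{3}|X| \le 2s$ vertices of $X$. The root bag is then $X\cup S_X$, of size $\le 4s$, and the procedure recurses on each component $D$ with new boundary $(X\cap D)\cup S_X$ of size $\le 2s+s = 3s$. The main subtlety is making sure the separator is taken with respect to weights concentrated on $X$. This is exactly why the weighted version of \cref{thm:cml-dominated-balanced-sep} is needed. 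It is also why one must check that the hypothesis $\reduced{\compmaxleaf}\le k$ and the degree bound pass to the induced subgraphs $G[D\cup X]$ encountered in the recursion. Both do: the degree bound holds trivially, and reduced parameters are closed under induced subgraphs by restricting the contraction sequence, as already used in the proof of \cref{thm:cml-dominated-balanced-sep}. I expect this bookkeeping of the recursion to be the only real obstacle; it is routine.
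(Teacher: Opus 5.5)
Your proposal is correct and is the paper's argument: apply \cref{thm:cml-dominated-balanced-sep} to every weight function, use the degree bound to get $\frac{2}{3}$-balanced separators of size at most $\Delta k(k+2)$, and conclude with the Robertson--Seymour bound $\mathrm{tw}\le 4s$. One point needs fixing in your optional self-contained recursion: with boundaries of size exactly $3s$, a call can reproduce itself (e.g.\ when $S_X\subseteq X$, $|S_X|=s$ and $G-S_X$ is connected), so pad the boundary to $3s+1$ vertices and group the components of $G-S_X$ into two sides, each containing at most $2s$ vertices of $X$; then every recursive call is on a proper induced subgraph, bags have size at most $4s+1$, and the width is still $4s$.
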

\begin{proof}
    By \cref{thm:cml-dominated-balanced-sep}, for any weight function, $G$ has a balanced separator~$S$ contained in the open neighborhood of at most~$k(k+2)$ vertices.
    The degree bound then gives $|S| \le \Delta k(k+2)$.
    Thus~$G$ has $\frac{2}{3}$-balanced separators of size $\Delta k(k+2)$ for any weight function.
    By a classical result of Robertson and Seymour~\cite{graph-minors2}, this implies that $G$ has treewidth at most $4\Delta k(k+2)$.
\end{proof}

It is known that planar graphs of arbitrarily large treewidth admit arbitrarily large subdivided walls or their line graphs as induced subgraphs, see~\cite[Theorem 1.1]{AboulkerAKST21} (where wall subdivisions are simply called walls). By~\cref{cor:cml-degree-tw}, large subdivided walls or their line graphs have unbounded $\reduced{\compmaxleaf}$.
Hence \cref{thm:in-planar-graphs} follows.

\Cref{cor:cml-degree-tw} implies that strict subdivisions of $K_n$ have unbounded $\reduced{\compmaxleaf}$,
since they contain strict subdivisions of the $\sqrt{n/2} \times \sqrt{n/2}$-wall as induced subgraph.
We give here a~simpler proof with a~better lower bound of~$n-4$.

\begin{lemma}
    \label{lem:cml-subdivided-clique}
    Let~$G$ be a strict subdivision of~$K_n$.
    Then $\reduced{\compmaxleaf}(G) \ge n-4$.
\end{lemma}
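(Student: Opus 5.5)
The plan is to fix a contraction sequence $\Pc_n,\dots,\Pc_1$ of $G$ and find a single partition $\Pc_i$ in which some part has red degree at least $n-4$. Since a red star with $d$ leaves is a subtree of its red component, this gives $\compmaxleaf(\redG(G/\Pc_i)) \ge n-4$. Write $B$ for the set of $n$ branch vertices. For $a \ne c \in B$, let $x_{ac}$ be the neighbour of $a$ on the subdivided $ac$-path.

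Because the subdivision is strict, $B$ is an independent set, every $a \in B$ has degree $n-1$, and $N(a)=\{x_{ac} : c \ne a\}$. Each subdivision vertex has degree $2$, so $x_{ac}$ is adjacent only to $a$ and to its successor on the $ac$-path (which is $c$ if the path has a single subdivision vertex).

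I would look at the first partition $\Pc_i$ in which some part contains two vertices of a well-chosen ``sensitive'' set $T$. The natural first choice is $T=B$: take the first index at which a part $P$ contains two branch vertices $a,b$. For every $c \in B\setminus\{a,b\}$, the vertex $x_{ac}$ is adjacent to $a$ but not to $b$ (its only neighbours are $a$ and a vertex of the $ac$-path). Hence the part containing $x_{ac}$ is non-homogeneous to $P$, so it is a red neighbour of $P$, unless it is $P$ itself.

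Two issues remain. First, $P$ could swallow some of the $x_{ac}$. Second, and more seriously, several $x_{ac}$ could lie in one common part $Q \ne P$, so that $P$ has few distinct red neighbours.

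To control the second issue I would refine $T$ to $B \cup N(B)$ and take the first merge producing a part $P$ with two vertices $z_1,z_2 \in T$. At that moment every other part meets $T$ in at most one vertex, so the parts containing distinct $x_{ac}$ are pairwise distinct. Moreover, only $P$ can contain two of them. I would then split into cases according to whether $z_1,z_2$ are branch vertices or lie in $N(B)$.

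If, say, $z_1 = a \in B$, the argument above applies with $z_2$ in place of $b$. The part of $x_{ac}$ is red-adjacent to $P$ unless $x_{ac} \in P$ or $z_2$ is adjacent to $x_{ac}$. The latter happens only when $z_2$ is the successor of $x_{ac}$ on its path, or $z_2=c$. These exceptions concern at most a constant number (about $3$) of indices $c$, which leaves at least $n-4$ distinct red neighbours of $P$.

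The main obstacle is the case where $z_1,z_2$ are both subdivision vertices, e.g. $z_1=x_{ac}$ and $z_2=x_{bd}$. Then $P$ itself need not have large red degree. In that case I would argue instead about the part $\{a\}$: at this stage it may have become red-adjacent to $P$, but a singleton part has only black edges to the other parts, so this gives no large star.

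I would therefore try to modify the choice of moment. One option is the first time a part containing a branch vertex $a$ becomes non-singleton; the other is the first time $P$ contains a branch vertex together with another vertex of $T$. In either case I would then redo the count of distinct red neighbours of $P$ among the parts of the $x_{ac}$. If that fails, the fallback is to use the full red component rather than a star: exhibit a red tree connecting the parts of the $x_{ac}$ through $P$ and the parts of the branch vertices, and count its leaves, losing only a constant (hence the $-4$).
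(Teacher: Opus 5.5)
\textbf{There is a genuine gap: your main strategy cannot succeed.} You aim to show that every contraction sequence of $G$ produces a red graph with a part of red degree at least $n-4$. That would mean strict subdivisions of $K_n$ have twin-width at least $n-4$. But, as recalled in the introduction, $f(n)$-subdivisions of $n$-vertex graphs with $f(n)\ge 2\log n$ have bounded twin-width. So for such $G$ there is a contraction sequence in which every red degree is bounded by a constant.

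Your case analysis for $z_1=a\in B$ is fine. The case you flag as the obstacle ($z_1,z_2$ both subdivision vertices) is not a corner case that a cleverer choice of moment can fix. It is exactly what low-twin-width sequences do, and in them $P$ can have bounded red degree at every step. The lower bound must come from leaves far from $P$ in the red component. Your fallback only gestures at this, with no mechanism guaranteeing that the leaves are distinct parts. Incidentally, a singleton part $\{a\}$ is not restricted to black edges: it is red-adjacent to any part containing both a neighbour and a non-neighbour of $a$.

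\textbf{The missing idea} is to follow each subdivided path all the way, and to use the first \emph{black} step along it as a certificate that a part is private to that path.

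Take the first partition in which some branch vertex $u$ lies in a part $P$ of size at least $2$. Then $P$ holds at most one other branch vertex, and the remaining branch vertices $v_1,\dots,v_{n-2}$ are singleton parts. Write the subdivided path as $S_i=u\,x_{i,1}\cdots x_{i,\ell_i}$ with $x_{i,\ell_i}=v_i$. Consecutive parts $P_{i,k}\ni x_{i,k}$ and $P_{i,k+1}\ni x_{i,k+1}$ contain adjacent vertices, so they are equal, red-adjacent, or black-adjacent. Let $Q_i$ be $P_{i,k}$ for the first black step $k$, or $\{v_i\}$ if there is no black step.

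The key step is privacy. If $P_{i,k}P_{i,k+1}$ is black, then $P_{i,k}\subseteq N(x_{i,k+1})$, and this set misses $S_j\setminus\{u\}$ for every $j\ne i$. Hence $Q_i$ meets $S_i\setminus\{u\}$ and no other $S_j\setminus\{u\}$. It follows that:
\begin{enumerate}
  \item the parts $Q_i$ are pairwise distinct;
  \item each $Q_i$ is joined to $P$ by a red walk through $P_{i,1},\dots,P_{i,k}$, whose internal parts avoid every $Q_j$ with $j\neq i$.
\end{enumerate}
The first step from $P$ to $P_{i,1}$ is an equality or a red edge, except when $P=\{u,x_{i,2}\}$, which happens for at most one $i$.

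Discard three indices: that exceptional one, the second branch vertex inside $P$, and a possible $Q_i=P$. You are left with at least $n-4$ distinct parts that are simultaneously leaves of a spanning tree of the red component of $P$.

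Note that the $Q_i$ are typically far from $P$. The parts containing the neighbours $x_{uc}$ of $u$ may all coincide at that moment, so counting them, as your fallback suggests, gives nothing.
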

\begin{proof}
    In a given contraction sequence, let~$\Pc$ be the first partition in which some branching vertex~$u$ is contained in some part $P \in \Pc$ of size at least~$2$.
    The part~$P$ may contain at most one other branching vertex, which we ignore.
    Thus the remaining branching vertices~$v_1,\dots,v_{n-2}$ satisfy $\{v_i\} \in \Pc$.
    
    Let~$S_i$ be the path from~$u$ to~$v_i$ resulting from the subdivision of~$uv_i$, with vertices
    \[ u = x_{i,0}, x_{i,1}, \dots, x_{i,\ell_i} = v_i. \]
    Furthermore call~$P_{i,k}$ the part of~$\Pc$ containing~$x_{i,k}$.
    Note that the parts $P = P_{i,0}$ and~$P_{i,1}$ are either equal or connected by a red edge,
    except in the very special case $P = \{u, x_{i,2}\}$.
    This special case can only happen for a single choice of~$i$ among~$1,\dots,n-2$, say $i=n-2$.
    We ignore it and proceed with $i \in \{1,\dots,n-3\}$.
    
    \begin{claim}\label{clm:big-makes-red}
        If~$P_{i,k}$ and~$P_{i,k+1}$ are neither the same part nor connected by a red edge,
        then~$P_{i,k}$ does not contain any vertex of $S_j \setminus \{u\}$ for $j \neq i$.
    \end{claim}
    \begin{clproof}
        The vertex~$x_{i,k+1}$ has no neighbor in~$S_j \setminus \{u\}$.
    \end{clproof}
    Pick~$k$ minimal such that~$P_{i,k}$ and~$P_{i,k+1}$ are neither the same part nor connected by a red edge, if such a~$k$ exists, and choose~$Q_i$ to be~$P_{i,k}$.
    Note that~$k$ cannot be~0, and thus~$Q_i$ contains a~vertex of~$S_i \setminus \{u\}$.
    When no $k$ as above exists, we pick $Q_i = \{v_i\}$.
    Either way, the following conditions hold:
    \begin{enumerate}
        \item the part~$Q_i$ contains a vertex of $S_i \setminus \{u\}$, and is disjoint from $S_j \setminus \{u\}$ for $j \neq i$, and
        \item in~$\redG(G/\Pc)$, $P$ and~$Q_i$ are connected by a path using only vertices among $P_{i,1},\dots,P_{i,\ell_i}$.
    \end{enumerate}
    In particular, the first condition gives that the parts~$Q_i$ are pairwise distinct.
    Also, for any~$k>0$, the part~$P_{i,k}$ contains a vertex of~$S_i \setminus \{u\}$.
    Thus in~$\redG(G/\Pc)$, $P$ and~$Q_i$ are connected by a path whose internal vertices are not~$Q_j$ for any~$j \neq i$.
    It may be that $P = Q_i$ for some~$i$, say $i=n-3$, which we ignore.
    For the remaining $i \in \{1,\dots,n-4\}$, we have that~$P$ and~$Q_i$ are connected in~$\redG(G/\Pc)$ by a~path disjoint from the other~$Q_j$, witnessing that the connected component of~$\redG(G/\Pc)$ containing~$P$ has max-leaf at least~$n-4$.
\end{proof}

\subsection{Reduced bandwidth of 3-dimensional grids}\label{subsec:reducedbw3dimgrid}

In this section, we relate the expansion of graphs to the structure of the red components in their contraction sequences.
For any three functions $\tau_{\min}, \tau_{\max}, f$, we say that an $n$-vertex graph $G$ is a~\emph{$(\tau_{\min}, \tau_{\max}, f)$-expander} if
\begin{equation*}
    \text{for every vertex subset $U \subseteq V(G)$ with $\tau_{\min}(n) \leqslant |U| \leqslant \tau_{\max}(n)$, we have $|N_G(U)| \geqslant f(|U|)$.}
\end{equation*}
Note that we do not require the maximum degree of $G$ to be bounded.
We may implicitly use $n$ as the argument of $\tau_{\min}, \tau_{\max}, f$.

We start with a~simple lemma on the growth of integer sequences.
\begin{lemma}\label{lem:growth_fct}
 Consider $0 < a < 1$ and $\lambda > 0$, and $(n_k)_{k \in \mathbb N}$ a sequence of positive reals such that for any $k$, $n_{k+1} \ge n_k + \lambda \cdot n_k^a$.
Then $n_k \ge c \cdot (n_0^{1-a} + k)^{\frac{1}{1-a}}$ for some constant $0 < c := c(a, \lambda) < 1$, provided that $n_0$ is sufficiently large.
\end{lemma}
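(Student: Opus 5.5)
We compare the sequence with the solution of the ODE $y' = \lambda y^a$, whose solutions grow like $(y_0^{1-a} + (1-a)\lambda t)^{1/(1-a)}$. Concretely, we track the quantity $m_k := n_k^{1-a}$ and aim to show it increases by at least a fixed positive amount $\mu$ at each step. Once $m_{k+1} \ge m_k + \mu$ is established, induction gives $m_k \ge n_0^{1-a} + \mu k$. If $\mu \ge 1$ we may replace $\mu$ by $1$; otherwise $n_0^{1-a} + \mu k \ge \mu (n_0^{1-a} + k)$. Either way we obtain $n_k \ge c\,(n_0^{1-a}+k)^{1/(1-a)}$ with $c = \min(1,\mu)^{1/(1-a)}$, and we may shrink $c$ to make it strictly less than $1$.

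The key step is the increment bound. The hypothesis gives $n_{k+1} \ge n_k(1 + \lambda n_k^{a-1})$. Writing $x := \lambda n_k^{a-1}$, we get
\[ m_{k+1} \ge m_k (1+x)^{1-a}. \]
Since $0<1-a<1$, the function $(1+x)^{1-a}$ is concave. For $x \in [0,X]$ it therefore lies above its chord, so $(1+x)^{1-a} \ge 1 + \beta x$ with $\beta := ((1+X)^{1-a}-1)/X > 0$. This yields
\[ m_{k+1} \ge m_k + \beta\, m_k x = m_k + \beta\lambda\, n_k^{1-a} n_k^{a-1} = m_k + \beta\lambda. \]
Hence $\mu := \beta\lambda$ works, provided $x$ stays bounded by a fixed $X$.

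To keep $x$ bounded we need $n_k$ bounded below, and this is exactly where ``$n_0$ sufficiently large'' enters. The sequence is nondecreasing because $\lambda n_k^a>0$. So if $n_0 \ge 1$, then $n_k \ge 1$ for all $k$, hence $x \le \lambda$, and we may take $X = \lambda$. The constant $c$ then depends only on $a$ and $\lambda$.

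The only real obstacle is this uniformity: ensuring the linearisation $(1+x)^{1-a}\ge 1+\beta x$ holds with a constant independent of $k$. Monotonicity of the sequence handles it, so the remaining work is routine bookkeeping of constants.
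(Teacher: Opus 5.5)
Your proof is correct, and it takes a genuinely different route from the paper.

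\textbf{The paper's route.} The paper inducts directly on the target bound. Set $m = n_0^{1-a}$ and assume $n_k \ge c(m+k)^{\frac{1}{1-a}}$. Tacitly using that $y \mapsto y + \lambda y^a$ is increasing, the paper reduces the inductive step to
$c(m+k+1)^{\frac{1}{1-a}} \le c(m+k)^{\frac{1}{1-a}} + \lambda c^a (m+k)^{\frac{a}{1-a}}$.
It divides by $c(m+k)^{\frac{1}{1-a}}$ and takes logarithms to get
$\frac{1}{1-a}\log\bigl(1+\frac{1}{m+k}\bigr) \le \log\bigl(1+\frac{\lambda c^{a-1}}{m+k}\bigr)$.
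It then concludes via $\log(1+x) = x + o(x)$: take $c$ small, so that $\lambda c^{a-1}$ is large, and take $n_0$ large, so that $\frac{1}{m+k}$ is near $0$. The requirement $c<1$ is what makes the base case $n_0 \ge c\, n_0$ hold.

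\textbf{Your route.} You pass to the potential $m_k = n_k^{1-a}$, which linearises the comparison ODE $y' = \lambda y^a$. You show it grows additively by $\mu = (1+\lambda)^{1-a}-1$ per step, using the chord bound for the concave map $x \mapsto (1+x)^{1-a}$ on $[0,\lambda]$. Monotonicity of $(n_k)$ keeps $x = \lambda n_k^{a-1} \le \lambda$ once $n_0 \ge 1$.

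\textbf{Comparison.} Your approach gives fully explicit data: $c = \min(1,\mu)^{\frac{1}{1-a}}$ and the threshold $n_0 \ge 1$. It avoids the asymptotic step and the monotonicity of $y \mapsto y+\lambda y^a$. It also yields the slightly sharper intermediate statement $n_k^{1-a} \ge n_0^{1-a} + \mu k$, from which the stated form follows by the trivial rescaling you describe. The paper's route works directly on the shape of the conclusion, but leaves both the choice of $c$ and the lower bound on $n_0$ implicit in an $o(\cdot)$ argument.
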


\begin{proof}
Let $m := n_0^{1-a}$.
We prove the result by induction.
For $k = 0$, the result is clear.
Assume the formula holds for some $k \ge 0$.
Note that it is sufficient to prove that 
\[
c \cdot(m + k + 1)^{\frac{1}{1-a}} \le c \cdot (m + k)^{\frac{1}{1-a}} + \lambda \cdot c^a \cdot (m+k)^{\frac{a}{1-a}}.
\]
Dividing both sides by $c \cdot (m + k)^{\frac{1}{1-a}}$ and taking the logarithm, one obtains the following equivalent inequality.
\[
\frac{1}{1-a} \log\left(1+\frac{1}{m+k}\right) \leqslant \log\left(1 + \frac{\lambda c^{a-1}}{m+k}\right).
\]
And since $\log(1 + x) = x + o(x)$ around 0, we indeed have $\log(1 + \frac{\lambda c^{a-1}}{m+k}) \ge \frac{1}{1-a} \log(1 + \frac{1}{m+k})$ when $c^{a-1}$ is sufficiently large, and so when $c$ is sufficiently small.
\end{proof}

The next lemma has a~rather technical statement, mainly because we want to use it in various settings.
However, it can be simply summarized as follows: if for any $U \subseteq V(G)$ we have $|N_G(U)| \geqslant |U|^a$ for some $0 < a < 1$, then one can find, in any contraction sequence of bounded red degree, an arbitrary large red subgraph $H$ such that $|V(H)| \geqslant \diam(H)^{\frac{a}{1-a}}$.

\begin{lemma}\label{lem:from-growth-2-red-diameter}
Let $f$ be such that $f(n) = \Omega(n^a)$ for some $0 < a < 1$, and $\tau_{\min}$ and $\tau_{\max}$ satisfy $\tau_{\max}(n)^{1-a} \ge (\log n + \tau_{\min}(n)) \log n$.
Let $r$ and $d$ be integers, and let $G$ be a $(\tau_{\min}, \tau_{\max}, f)$-expander without $K_{r, r}$ subgraph.
If $\mathcal{P}_n, \ldots, \mathcal{P}_1$ is a contraction sequence of $G$ of twin-width at most $d$,
then at least one red graph $\mathcal{R}(G/\mathcal{P}_i)$ admits a~connected subgraph $H$ with $|V(H)| = \Omega(\diam(H)^{\frac{a}{1-a}})$ and $|V(H)| = \Omega((\log n)^{a})$.
\end{lemma}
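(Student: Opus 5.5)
We follow a single well-chosen part through the contraction sequence and grow a red connected subgraph around it. The growth bound of \cref{lem:growth_fct} then turns the per-step increments into a lower bound on the size in terms of the diameter.

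\textbf{Step 1: where the red neighbourhood comes from.} Consider a partition $\mathcal{P}_i$ and a set $\mathcal{Q}\subseteq\mathcal{P}_i$ of parts that is connected in $\redG(G/\mathcal{P}_i)$. Put $U=\bigcup\mathcal{Q}$ and suppose $\tau_{\min}\le |U|\le\tau_{\max}$. Every vertex of $N_G(U)$ lies in a part that is a red or black neighbour of some part of $\mathcal{Q}$, or that belongs to $\mathcal{Q}$ itself. Black neighbours are the problem, and $K_{r,r}$-freeness controls them. If a part $X$ is black-adjacent to a part $Y$ with $|X|,|Y|\ge r$, then $G$ contains a $K_{r,r}$ subgraph, which is excluded. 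So, as long as all parts of $\mathcal{Q}$ have size at least $r$, black neighbours of $\mathcal{Q}$ have size less than $r$. Moreover each part of $\mathcal{Q}$ has at most $d$ red neighbours. Hence
\[ |N_G(U)| \le r\cdot\#\{\text{black neighbour parts}\} + \#\{\text{vertices in red neighbour parts}\}. \]
I would first arrange, by a preliminary argument, that all parts in play have size at least $r$; for instance, work from the first index at which every part has size at least $r$, or absorb the small parts into $U$. Then the expansion $|N_G(U)|\ge f(|U|)=\Omega(|U|^a)$ forces the red neighbourhood of $\mathcal{Q}$ to carry $\Omega(|U|^a)$ vertices, up to lower-order terms. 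Here I expect to need a side lemma: a part black-adjacent to a large set $U$ is small, and a bounded number of small parts cannot account for $|U|^a$ vertices once $|U|$ exceeds a constant depending on $r$ and $d$.

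\textbf{Step 2: BFS growth in one red graph.} Fix an index $i$ at which some part $P$ has size roughly in $[\tau_{\min},2\tau_{\min}]$. Such an index exists, since part sizes at most double at each merge. Let $H_j$ be the ball of radius $j$ around $P$ in $\redG(G/\mathcal{P}_i)$, and let $n_j=|\bigcup V(H_j)|$. By Step 1, as long as $n_j\le\tau_{\max}$, the next layer adds $\Omega(n_j^a)$ vertices, so $n_{j+1}\ge n_j+\lambda n_j^a$. Then \cref{lem:growth_fct} gives
\[ n_j \ge c\,(\tau_{\min}^{1-a}+j)^{1/(1-a)} . \]
We need to pass from vertices of $G$ to parts, that is, to $|V(H_j)|$. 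Parts may be huge, so instead I would choose $i$ as the last index at which all parts have size below some threshold $s$. A ball of radius $j$ then has at least $n_j/s$ parts, while its diameter is at most $2j$, which gives $|V(H_j)|\gtrsim j^{1/(1-a)}/s$.

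\textbf{Step 3: reconciling the exponents.} The target exponent is $a/(1-a)$ rather than $1/(1-a)$. This slack lets us lose a factor $n_j^{1-a}$, which is exactly a part-size bound of the form $s\approx n_j^{1-a}$. I would stop the BFS at the radius $j^\ast$ where $n_{j^\ast}$ first reaches about $\tau_{\max}$. The hypothesis
\[ \tau_{\max}^{1-a}\ge(\log n+\tau_{\min})\log n \]
guarantees two things: $j^\ast$ is at least $\log n$, so that $|V(H)|=\Omega((\log n)^a)$; and the size threshold can be taken polylogarithmic and compatible with both exponents.

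The main obstacle is Step 1 combined with the choice of $i$. Black edges of $G/\mathcal{P}_i$ can hide most of $N_G(U)$, so we must ensure that the expansion is witnessed by red neighbours. At the same time, the parts must be small enough to convert a vertex count into a part count, yet large enough for the $K_{r,r}$ argument. I would try a double-counting argument over all indices $i$: pick the index at which the maximum part size crosses $\tau_{\min}$, and then bound the contribution of small black-neighbour parts by $r\cdot d\cdot|V(H_j)|$.
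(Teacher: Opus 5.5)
There is a genuine gap, exactly at the obstacle you flag.

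\textbf{Where the argument fails.} The vertices of $N_G(U)$ outside red-neighbour parts are not a lower-order term. Even if every part of $\mathcal{Q}$ has size at least $r$, twin-width bounds only red degrees. So each part may have black neighbours carrying up to $r-1$ vertices, and the escaping mass can be of order $(r-1)|\mathcal{Q}|$, which grows with the BFS. Your side lemma about ``a bounded number of small parts'' is therefore false. Since parts have size at most $s$, we have $|\mathcal{Q}|\ge |U|/s$. Hence nothing stops this loss from exceeding $|U|^a$ once $|U|^{1-a}\gtrsim s$. So the recurrence $n_{j+1}\ge n_j+\lambda n_j^a$ cannot be guaranteed up to $n_j\approx\tau_{\max}$, and Step~3 collapses.

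It is worse than that: your red ball contains parts of size below $r$, whose black neighbourhoods are unconstrained. A single vertex may be complete to an arbitrarily large part without creating a $K_{r,r}$, so the entire expansion of $U$ can hide behind black edges. Neither repair you suggest works.
\begin{itemize}
\item A contraction sequence may keep one vertex a singleton until the final merge. Then the first index with all parts of size at least $r$ is $\mathcal{P}_1$, where you have no part-size control.
\item Absorbing small black neighbours into $U$ breaks red connectivity. It also brings in parts whose own black neighbourhoods are uncontrolled.
\end{itemize}

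\textbf{The missing idea.} Grow only through large parts, and make the failure of growth itself the conclusion. The paper proceeds as follows.
\begin{itemize}
\item Set $t=\lceil\max(\log n,\tau_{\min}(n))\rceil$, and take the first partition $\mathcal{P}_s$ with a part $P$ of size $|P|\ge t$. All parts of $\mathcal{P}_s$ then have size at most $2t$.
\item Set $Y_0=\{P\}$ and $Y_{k+1}=Y_k\cup\{Q\in N_{G/\mathcal{P}_s}(Y_k): |Q|\ge r\}$. By $K_{r,r}$-freeness, all edges inside $Y_k$ are red.
\item Let $g_k=|Y_k|$ and $n_k=|\bigcup Y_k|$. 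The part of $N_G(\bigcup Y_k)$ not absorbed into $Y_{k+1}$ has at most $r(r+d)g_k$ vertices. It consists of small red neighbours (at most $d$ per part) and black neighbours (fewer than $r$ vertices in total per part).
\item Hence $n_{k+1}\ge n_k+\lambda n_k^a$ as long as $r(r+d)g_k\le\lambda n_k^a$.
\item At the first $k$ where this condition fails, $g_k>\lambda n_k^a/(r(r+d))$. Meanwhile \cref{lem:growth_fct} gives $n_k\ge c(t^{1-a}+k)^{1/(1-a)}$. Therefore $g_k=\Omega(k^{a/(1-a)})$, with the diameter at most $2k$, and $g_k=\Omega(t^a)=\Omega((\log n)^a)$.
\end{itemize}
The hypothesis on $\tau_{\max}$ only ensures the expander bound stays applicable during growth. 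Indeed, $r(r+d)g_k\le\lambda n_k^a$ together with $n_k\le 2tg_k$ gives $n_k^{1-a}\le t\le\tau_{\max}^{1-a}$.

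Your observation that parts of size about $n^{1-a}$ turn $n$ vertices into about $n^a$ parts is exactly why this phase must end. The point is to stop there rather than aim for $\tau_{\max}$. Also, the $(\log n)^a$ bound comes from $|P|\ge\log n$, not from the radius being at least $\log n$.
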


\begin{proof}
Let $t := \lceil\max(\log n, \tau_{\min}(n))\rceil$ where $n$ is the number of vertices of~$G$, which we assume large enough.
Consider a contraction sequence $\mathcal{P}_n, \ldots, \mathcal{P}_1$ of $G$ of witnessing twin-width at~most~$d$.
Let $s$ be the largest index such that $\mathcal P_s$ contains a~part $P$ that itself contains at least $t$ vertices of $G$.
In particular, any part of $\mathcal{P}_{s+1}$ contains less than $t$ vertices, so any part of $\mathcal{P}_s$ contains at most $2t$ vertices of $G$. Let $G_s:=G/\mathcal{P}_s$.

Our goal is to show that the red component of $G_s$ containing $P$ locally has \emph{large growth} around~$P$, that is, the sizes of the red neighborhoods at a~small distance from $P$ are large.
This fact will be sufficient to exhibit the subgraph $H$ from $G_s$.

We define a sequence  $Y_0, Y_1, \ldots $ of vertex sets of $G_s$ as follows:
\begin{itemize}
    \item $Y_0 := \{P\}$ where $P$ is the part of $\mathcal{P}_s$ with $|P| \geqslant t$.
    \item  $Y_{k+1} := \{ Q \in N_{G_s}(Y_k) : |Q| \ge r\} \cup Y_k$.
\end{itemize}
That is, $Y_{k+1}$ contains all parts of size at least $r$ that are in the neighborhood of $Y_k$ in $G_s$, plus~$Y_k$.
In particular, all the parts in $Y_k$ have at least $r$ vertices.
This fact implies that the trigraph $G_s[Y_k]$ contains only red edges: indeed, a~black edge between two vertices of $Y_k$ would imply the existence of a~biclique subgraph in~$G$ between the two corresponding parts, hence a~$K_{r, r}$-subgraph.
Note that the radius of $G_s[Y_k]$ is at~most~$k$.

Our goal is now to analyze the growth of $|Y_k|$.
Let $g_k := |Y_k|$ and $n_k := |\bigcup_{Q\in Y_k}Q|$.
Recall that, by the maximality of $s$, every part in $\mathcal{P}_s$ has at most $2t$ vertices.
Thus,
\begin{equation}\label{eq:ineq_nk_gk}
    n_k \leqslant 2t g_k.
\end{equation}

Consider the set $A_k := \{ Q \in N_{G_s}(Y_k) : |Q| < r\}$.
We clearly have that $|\bigcup_{Q\in A_k}Q| < r \cdot |A_k|$.
Furthermore, the parts in $A_k$ can be split into two sets: the ones that are red neighbors of a~vertex of $Y_k$ in $G_s$, which we denote by $A_k^{\text{red}}$ and the others, which we denote by $A_k^{\text{black}}$.

Since the maximum red degree of $G_s$ is upper bounded by $d$, each vertex of $Y_k$ is adjacent by a red edge to at most $d$ vertices of $A_k^{\text{red}}$, which implies that $|A_k^{\text{red}}| \le d \cdot |Y_k|$.
In addition, the black neighborhood of a vertex of $Y_k$ in $G_s$ contains less than $r$ vertices, since otherwise $G$ would contain a $K_{r, r}$ subgraph. Thus $|A_k^{\text{black}}| \le r \cdot |Y_k|$.
Combining these bounds, one obtains the following inequality
\[
\left|\bigcup_{Q\in A_k}Q\right| \le (r+d)r|Y_k| \le r(r+d)g_k.
\]
This directly implies that
\[
n_{k+1} \ge n_k + \left|\bigcup_{Q\in N_{G_s}(Y_k)}Q \right| - r(r+d)g_k
\ge n_k + \left|N_G\left(\bigcup_{Q\in Y_k }Q\right)\right| - r(r+d)g_k.
\]
Our goal is now to use the expansion properties of $G$ to lower bound $n_k$, which will translate into large lower bounds for the integers $g_k$ by \Cref{eq:ineq_nk_gk}.

Since $G$ is a $(\tau_{\min}, \tau_{\max}, f)$-expander,
for any $k$ such that $\tau_{\min}(n) \leq n_k \leq \tau_{\max}(n)$,
we have $\left|N_G\left(\bigcup_{Q\in Y_k }Q\right)\right| \ge f(n_k)$.
This implies that $n_{k+1} \geqslant n_k + f(n_k) - r(r+d)g_k$ as long as
\begin{equation}\label{eq:nk_bound}
    n_k \leqslant \tau_{\max}(n).
\end{equation}
In addition, $f$ is at least $x \mapsto 2 \lambda x^a$ for some $\lambda > 0$.
Hence, as long as 
\begin{equation}\label{eq:f-bound}
     r(r+d)g_k \leqslant \lambda n_k^a 
\end{equation}
holds, we have 
\begin{equation}\label{eq:nk_growth}
    n_{k+1} \geqslant n_k + \lambda \cdot n_k^a.
\end{equation}

Consider $k_{\max}$ the first integer $k$ such that either \Cref{eq:nk_bound} or \Cref{eq:f-bound} does not hold (if such an integer does not exist, we set $k_{\max} = \infty$).
Let $m := t^{1-a}$.
By \Cref{lem:growth_fct}, there exists a~constant $c$ such that $n_k \ge c \cdot (m + k)^{\frac{1}{1-a}}$ for any $k \le k_{\max}$.
Note that, since $k \mapsto c \cdot (m+k)^{\frac{1}{1-a}}$ is unbounded, $k_{\max}$ is actually always finite.
Hence, either \Cref{eq:nk_bound} or \Cref{eq:f-bound} fails for~$k_{\max}$.

We claim that for any~$k$, \Cref{eq:f-bound} implies \Cref{eq:nk_bound},
meaning that it must in fact be \cref{eq:f-bound} which fails for~$k_{\max}$.
Indeed, if we have $r(r+d)g_k \leqslant \lambda n_k^a$,
then \cref{eq:ineq_nk_gk} gives $\lambda n_k^a \ge \frac{r(r+d)}{2t}n_k$, or equivalently
\[ \frac{2\lambda}{r(r+d)} t \ge n_k^{1-a}. \]
Note that $\frac{2\lambda}{r(r+d)}$ is a smaller than~1, so in particular $t \ge n_k^{1-a}$.
Further, by assumption on~$\tau_{\max}$ and choice of~$t$, we have $\tau_{\max}(n)^{1-a} \ge t$.
This gives $\tau_{\max}(n)^{1-a} \ge n_k^{1-a}$, implying \cref{eq:nk_bound}.

Hence we know that \Cref{eq:f-bound} does not hold for $k_{\max}$, which gives
\[
g_{k_{\max}} > \frac{\lambda c^a(m + k_{\max})^{\frac{a}{1-a}}}{r(r+d)}.
\]
In particular $g_{k_{\max}} \geqslant \frac{\lambda c^a}{r(r+d)} m^{\frac{a}{1-a}}$, so $|Y_{k_{\max}}| = \Omega(m^{\frac{a}{1-a}}) = \Omega(t^{a}) = \Omega((\log n)^{a})$ and $|Y_{k_{\max}}| = \Omega(k_{\max}^{\frac{a}{1-a}}) = \Omega( \diam(G_s[Y_{k_{\max}}])^{\frac{a}{1-a}})$.
\end{proof}

The following is a corollary of the isoperimetric inequality of Loomis and Whitney \cite{Loomis-Whitney}.
\begin{lemma}\label{lem:3d-grids-expand}
The 3-dimensional grid is a~$(1, x^{1/3} - 1, x^{2/3})$-expander.
That is, in the $n \times n \times n$-grid, any set of vertices $P$ of size less than
$n$ satisfies $|N(P)| \ge |P|^{2/3}$.
\end{lemma}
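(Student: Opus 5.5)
We derive the statement directly from the discrete Loomis--Whitney inequality. Let $G$ be the $n \times n \times n$-grid on $[n]^3$ and let $P \subseteq V(G)$ with $1 \le |P| < n$. For $i \in \{1,2,3\}$ let $\pi_i$ denote the projection forgetting the $i$-th coordinate. Loomis--Whitney gives $|P|^2 \le |\pi_1(P)|\,|\pi_2(P)|\,|\pi_3(P)|$. Hence some projection, say $\pi_3$, satisfies $|\pi_3(P)| \ge |P|^{2/3}$. The plan is to charge to each point $(x,y) \in \pi_3(P)$ a distinct vertex of $N(P)$.

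Fix $(x,y) \in \pi_3(P)$ and consider the line $L_{x,y} = \{(x,y,z) : z \in [n]\}$. This line has exactly $n$ vertices, and $P \cap L_{x,y}$ is nonempty with at most $|P| < n$ elements. So $L_{x,y}$ contains a vertex outside $P$. Since $L_{x,y}$ induces a path in $G$, walking along it from a vertex of $P$ to a vertex not in $P$ must cross an edge $(x,y,z)(x,y,z')$ with $(x,y,z) \in P$ and $(x,y,z') \notin P$. Then $(x,y,z') \in N(P) \cap L_{x,y}$.

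The lines $L_{x,y}$ are pairwise disjoint, so these neighbors are distinct. This yields $|N(P)| \ge |\pi_3(P)| \ge |P|^{2/3}$.

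To match the expander formulation with $N = n^3$ vertices, note that $\tau_{\max}(N) = N^{1/3} - 1 = n-1$. Thus the condition $1 \le |U| \le \tau_{\max}(N)$ is exactly $|U| < n$, and $f(x) = x^{2/3}$ is the bound just obtained.

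The only nontrivial ingredient is the Loomis--Whitney inequality itself, which we take as given from \cite{Loomis-Whitney}. The only point needing care is the size hypothesis $|P| < n$, which is what guarantees that no line $L_{x,y}$ is entirely contained in $P$.
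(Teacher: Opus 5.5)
Your proof is correct and follows the same route as the paper: Loomis--Whitney gives a projection of size at least $|P|^{2/3}$, and since $|P| < n$ each line through a point of that projection meets both $P$ and its complement, so the disjoint lines each contribute a distinct vertex of $N(P)$. The paper leaves implicit the step of walking along the line until it leaves $P$; you spell this out, and it is what guarantees the neighbor lies on the line itself, which the disjointness count needs.
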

\begin{proof}
Let $G$ be the $n \times n \times n$ 3-dimensional grid.
For a vertex $v = (v_x, v_y, v_z) \in V(G)$, we define the $z$-line of $v$ as $\{(v_x, v_y, i)~:~i \in [n]\}$.
Consider a set $P \subseteq V(G)$ of at most $n - 1$ vertices.
We define the three projections $XY(P)$, $XZ(P)$ and $YZ(P)$ of $P$, where $XY(P)$ is the set of all $(x, y)$ values such that $P$ contains a point in $\{(x, y, i)~:~i \in [n]\}$, and $XZ(P)$ and $YZ(P)$ are defined symmetrically.

By the Loomis--Whitney inequality \cite{Loomis-Whitney}, $|P| \le \sqrt{|XY(P)| \cdot |XZ(P)| \cdot |YZ(P)|}$. Hence, one of $XY(P)$, $XZ(P)$ or $YZ(P)$ has size at least $|P|^{2/3}$.
Assume up to symmetry that it is $XY(P)$.
For each $(x, y) \in XY(P)$ the $z$-line of $(x, y, 0)$ contains both a vertex of $P$ and a vertex of $V(G)\setminus P$ since $k < n$.
In particular, it contains a vertex of $N_G(P)$.
Since all the $z$-lines of points $(x, y, 0)$ are disjoint, we obtain $|N_G(P)| \ge |XY(P)| \ge k^{2/3}$.
\end{proof}

\begin{theorem}\label{thm:3d-grids-rbandw}
The reduced bandwidth of 3-dimensional grids is unbounded.
\end{theorem}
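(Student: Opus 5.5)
We argue by contradiction. Suppose the $n \times n \times n$ grids $G_n$ (with $N = n^3$ vertices) all have reduced bandwidth at most $b$, and fix for each $n$ a contraction sequence $\Pc_N,\dots,\Pc_1$ in which every red graph $\redG(G_n/\Pc_i)$ has bandwidth at most~$b$. A graph of bandwidth at most~$b$ has maximum degree at most~$2b$, so the sequence has twin-width at most $d := 2b$. The grid has no $K_{2,2}$ subgraph ($C_4$'s exist, but $K_{3,3}$ does not), so we may take $r = 3$. By \cref{lem:3d-grids-expand}, $G_n$ is a $(1, x^{1/3}-1, x^{2/3})$-expander, so $f(x) = x^{2/3}$ and $a = 2/3$. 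The hypothesis $\tau_{\max}(N)^{1-a} \ge (\log N + \tau_{\min}(N))\log N$ then reads $(N^{1/3}-1)^{1/3} \gtrsim (\log N + 1)\log N$, which holds for large $N$.

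Apply \cref{lem:from-growth-2-red-diameter}. Some red graph $\redG(G_n/\Pc_i)$ contains a connected subgraph $H$ with
\[ |V(H)| = \Omega\bigl(\diam(H)^{a/(1-a)}\bigr) = \Omega\bigl(\diam(H)^{2}\bigr) \quad\text{and}\quad |V(H)| = \Omega\bigl((\log N)^{2/3}\bigr). \]

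It remains to contradict the bandwidth bound. A connected graph of bandwidth at most~$b$ is a subgraph of a graph of bandwidth at most~$b$, and such a graph satisfies $|V(H)| \le b(\diam(H)+1)$. To see this, take the ordering $v_1,\dots,v_m$ witnessing bandwidth~$b$. A shortest path from $v_1$ to $v_m$ advances at most $b$ positions per edge, so $m - 1 \le b \cdot \diam(H)$. Hence $|V(H)| = O(\diam(H))$.

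Combining the two bounds, $\diam(H)^2 = O(\diam(H))$, so $\diam(H)$ is bounded by a constant depending only on~$b$. Therefore $|V(H)| \le b(\diam(H)+1)$ is also bounded. This contradicts $|V(H)| = \Omega((\log N)^{2/3}) \to \infty$. It follows that the reduced bandwidth of $G_n$ tends to infinity, which proves \cref{thm:3d-grids-rbandw}.

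The main obstacle is checking that the hypotheses of \cref{lem:from-growth-2-red-diameter} line up exactly. Three points need care. First, the expander range is stated in terms of the side length~$n$ (sets of size less than~$n$), and this must be expressed correctly as a function of the vertex count $N$, namely $\tau_{\max}(N) = N^{1/3}-1$. Second, the asymptotic inequality on $\tau_{\max}$ must hold with this choice. Third, the subgraph $H$ must live in a red graph of bounded bandwidth; it does, because subgraphs inherit bandwidth bounds. The rest of the argument is the elementary diameter estimate above.
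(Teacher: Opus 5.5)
Your proof is correct and follows the paper's argument: you combine \cref{lem:3d-grids-expand} with \cref{lem:from-growth-2-red-diameter} (with $a=2/3$) to obtain a red subgraph $H$ with $|V(H)|=\Omega(\diam(H)^2)$, and contradict this with the linear size--diameter bound for bounded bandwidth, and you usefully spell out the role of $|V(H)|=\Omega((\log N)^{2/3})$, which the paper leaves implicit. The only slip is the phrase ``no $K_{2,2}$ subgraph'', since $C_4=K_{2,2}$ does occur in the grid; what you actually need and use is that there is no $K_{3,3}$ (indeed no $K_{2,3}$), so $r=3$ is fine.
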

\begin{proof}
  By~\cref{lem:3d-grids-expand}, the 3-dimensional grid is a~$(1, \tau_{\max}, f)$-expander of bounded maximum degree for $f(x) = x^{2/3}$ and $\tau_{\max}(x) = x^{1/3}-1$.
  Thus \Cref{lem:from-growth-2-red-diameter} yields connected subgraphs $H$ in at least one red graph within any contraction sequence of the 3-dimensional grids (of increasing sizes) such that \[|V(H)| = \Omega\left(\diam(H)^{\frac{2/3}{1-2/3}}\right) = \Omega\left(\diam(H)^2\right).\]
  However, graphs of bounded bandwidth have linear diameter in their number of vertices.
\end{proof}

\section{Weakly sparse collapse of bounded reduced parameters}\label{sec:ws}

Within classes of bounded twin-width, it was shown that weak sparsity and bounded degeneracy are equivalent~\cite{twin-width2}.
Furthermore, the treewidth of bounded-degree $n$-vertex graphs of bounded twin-width can be as large as $\Theta(n)$, as there are classes of cubic expanders with bounded twin-width~\cite{twin-width2}.
In contrast, it was proven in~\cite{reduced-bdw} that no weakly sparse class of bounded reduced bandwidth can be an expander class.

In~\cref{sec:ws-rcutw}, we will lift this result to reduced cutwidth by showing that every weakly sparse class of bounded reduced cutwidth has polynomial expansion---as defined by Ne{\v s}et{\v r}il and Ossona de Mendez~\cite{Sparsity}.
It was shown~\cite{DvorakN16} that a subgraph-closed class has polynomial expansion if and only if it has \emph{strongly sublinear separators}, i.e., all its $n$-vertex subgraphs have treewidth $O(n^{1-\varepsilon})$ for some $\varepsilon > 0$ depending on the class only.
We first prove a~more quantitative analogous statement for bounded reduced component max-leaf.

\subsection{Weakly sparse classes of bounded reduced component max-leaf}\label{sec:ws-rcml}

We will rely on this iterative construction of balanced separators.

\begin{lemma}\label{lem:protrusion}
  Let $G$ be an $n$-vertex graph and $\alpha \in (0,1)$ be a~real (allowed to depend on~$n$) such that for every (induced) subgraph $H$ of~$G$ there is a~separator $S \subseteq V(H)$ and a~union $Y$ of connected components of~$H-S$ of size at~most~$\frac{2}{3} n$ with $|S|/|V(Y)| \leqslant \alpha$.
  Then $G$ has a~balanced separator of size at~most~$\alpha n$.
\end{lemma}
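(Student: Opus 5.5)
We build the balanced separator greedily by peeling off pieces. Start with $H_0 := G$, $S^{\ast} := \emptyset$ and an accumulated set $Z := \emptyset$ of removed vertices. At step $j$, if $H_j$ has fewer than $\frac{2}{3}n$ vertices, or more generally if every connected component of $G - S^{\ast}$ has at most $\frac{2}{3}n$ vertices, we stop. Otherwise we apply the hypothesis to the current graph $H_j$, which is an induced subgraph of $G$ (we may take $H_j$ to be the large component of $G - S^\ast$, or all of $G - S^\ast - Z$). This gives a separator $S_j \subseteq V(H_j)$ and a union $Y_j$ of components of $H_j - S_j$ with $|Y_j| \le \frac{2}{3}n$ and $|S_j| \le \alpha |V(Y_j)|$. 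We add $S_j$ to $S^{\ast}$, add $V(Y_j)$ to $Z$, and set $H_{j+1} := H_j - S_j - V(Y_j)$.

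The size bound is a charging argument. The sets $V(Y_j)$ are pairwise disjoint subsets of $V(G)$, so
\[ |S^{\ast}| = \sum_j |S_j| \le \alpha \sum_j |V(Y_j)| \le \alpha n. \]
Each separator is paid for by the disjoint chunk it splits off. We also need progress, i.e.\ $Y_j$ nonempty. If $Y_j = \emptyset$ were allowed, the ratio condition would force $S_j = \emptyset$ and nothing would happen. So we read the hypothesis as providing nonempty $Y$, which is implicit in the ratio $|S|/|V(Y)|$ being defined. Then $|V(H_j)|$ strictly decreases and the process terminates.

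The real issue is balance. Components of $G - S^{\ast}$ are of two kinds: those inside the current $H$ at termination, and those inside some chunk $Y_j$. The latter have at most $|Y_j| \le \frac{2}{3}n$ vertices. Here it matters that $Y_j$ is a union of components of $H_j - S_j$: after removing $S^\ast$, no vertex of $Y_j$ is adjacent to anything outside $Y_j$ except through earlier separators or through vertices already in $Z$. This needs care. A vertex of $Y_j$ could be adjacent in $G$ to a vertex of an earlier chunk $Y_i$ with $i<j$; but $Y_i$ was separated from all of $H_i$ by $S_i$, and $H_j \subseteq H_i - S_i - Y_i$, so no such edge exists. Hence each component of $G - S^\ast$ lies within a single $Y_j$ or within the final remainder $H_{\mathrm{end}}$. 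We stop as soon as $|V(H_j)| \le \frac{2}{3}n$, at which point every component of $G - S^{\ast}$ has size at most $\frac{2}{3}n$ and $S^{\ast}$ is a balanced separator of size at most $\alpha n$. I expect no serious obstacle. The only delicate points are verifying this no-edge invariant and handling the degenerate empty-$Y$ case. If weights were intended rather than vertex counts, the same argument would go through verbatim with $w$ in place of cardinality.
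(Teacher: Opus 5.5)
Your proposal is correct and matches the paper's proof: apply the hypothesis repeatedly to the remaining induced subgraph, delete $S_j \cup V(Y_j)$, stop once at most $\frac{2}{3}n$ vertices remain, and charge each $|S_j|$ to the pairwise disjoint sets $V(Y_j)$ to get total size at most $\alpha n$. You also make explicit two points the paper leaves as ``by construction'': there are no edges between distinct chunks $Y_j$ or between a chunk and the final remainder, and each $Y_j$ must be nonempty so that the process terminates.
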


\begin{proof}
  We define a~sequence of separators and induced subgraphs of~$G$ as follows.
  Set $G_1 := G$ and let $S_1 \subseteq V(G_1), Y_1$ satisfy the precondition of the lemma.
  Set $i \leftarrow 1$.
  While $G_i - (S_i \cup V(Y_i))$ has more than $2n/3$ vertices, set $G_{i+1} := G_i - (S_i \cup V(Y_i))$, let $S_{i+1}, Y_{i+1}$ satisfy the assumption of the lemma on $G_{i+1}$, and increment $i$ by~1.

  Let $q$ be the index of the last defined pair $G_q, S_q$.
  By construction, $\bigcup_{i \in [q]} S_i$ is a~balanced separator of~$G$.
  Note that the family $\{V(Y_i)\}_{i \in [q]}$ has pairwise disjoint sets, thus $\bigcup_{i \in [q]} |V(Y_i)| \leqslant n$.
  Therefore $\sum_{i \in [q]} |S_i| \leqslant \bigcup_{i \in [q]} \alpha |V(Y_i)| \leqslant \alpha n$.
\end{proof}

We first show a~general statement where the $\reduced{\compmaxleaf}$ upper bound and the size of the excluded biclique subgraph may be superconstant. 

\begin{lemma}\label{lem:ws-rcml}
  Let $0 < \varepsilon, \beta \leqslant \delta < \gamma < 1$ be reals, and $n, k, t$ be integers such that $k, t \leqslant n^\beta$, and $n$ is sufficiently large.
  Let $G$ be an $n$-vertex graph with $\reduced{\compmaxleaf}(G) \leqslant k$ and no $K_{t,t}$ subgraph.
  Then $G$ has a~balanced separator of size at~most~$n^z$ with $z := \max(1+\beta+\delta-\gamma+\varepsilon,~1-\delta+\beta+\varepsilon,~\gamma+\beta+\varepsilon)$.
\end{lemma}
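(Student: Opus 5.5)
The plan is to feed \cref{lem:protrusion} with $\alpha := n^{z-1}$, after first removing the high-degree vertices. The three terms in the definition of $z$ should match three sources of separator vertices: high-degree vertices (exponent $1-\delta+\beta+\varepsilon$), the at most $k$ dominating vertices of a fully red component, each of degree below $n^\delta$ (exponent $1+\beta+\delta-\gamma+\varepsilon$), and at most $k$ parts of size $O(n^\gamma)$ (exponent $\gamma+\beta+\varepsilon$).

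\emph{Step 1: removing high-degree vertices.} Let $D$ be the set of vertices of degree at least $n^\delta$. Every red graph of a contraction sequence witnessing $\reduced{\compmaxleaf}(G)\le k$ has maximum red degree at most~$k$, by \cref{lem:thenumberofredneighbor}. Hence $G$ has twin-width at most~$k$. I would then use the sparsity of $K_{t,t}$-subgraph-free graphs of bounded twin-width to bound $|E(G)|$ by $n^{1+\beta+\varepsilon}$. This needs a version of that bound polynomial in $k$ and $t$. It can presumably be proved directly along the lines of the proof of \cref{lem:from-growth-2-red-diameter}: a black edge between two parts of size at least $t$ yields a $K_{t,t}$, and red degrees are at most~$k$. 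With this, $|D| \le 2|E(G)|/n^{\delta} \le n^{1-\delta+\beta+\varepsilon}$. We put $D$ into the separator once, and work in $G' := G - D$, which has maximum degree below $n^\delta$. It still has $\reduced{\compmaxleaf}\le k$ and no $K_{t,t}$ subgraph, and so does each of its induced subgraphs.

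\emph{Step 2: local separators for \cref{lem:protrusion}.} Let $H$ be an induced subgraph of~$G'$, and fix a contraction sequence of~$H$ witnessing $\reduced{\compmaxleaf}(H)\le k$. As in \cref{lem:cml-dominated-sep}, I would take the first partition~$\Pc_i$ in which either a part of~$B_i$, or a fully red component, reaches size $n^\gamma$. I expect the same counting as in the first claim of that proof to bound its size by $O(k n^\gamma)$, which is at most $\frac23 n$ when $n$ is large.
\begin{itemize}
  \item \emph{Fully red component~$C$.} Let $Y$ be the union of the parts of~$C$. The separator $S := N_H(Y)$ is dominated by at most~$k$ vertices, exactly as in the second claim of \cref{lem:cml-dominated-sep}. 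So $|S| \le k n^\delta$ and
  \[
    |S|/|Y| \le k n^{\delta-\gamma} \le n^{\beta+\delta-\gamma}.
  \]
  \item \emph{Large part $P \in B_i$.} Here I would simply take $P$ itself (together with the at most $k$ red neighbours bounding the new fully red structure) into the separator. This has absolute size $O(k n^\gamma) \le n^{\gamma+\beta+\varepsilon/2}$, which explains the third term of~$z$. To fit the ratio formulation of \cref{lem:protrusion}, I expect to charge this cost to an absolute budget rather than to $|Y|$.
\end{itemize}

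\emph{Step 3: assembling.} Running the iteration of \cref{lem:protrusion} in $G'$ gives total cost at most $n \cdot n^{\beta+\delta-\gamma}$ from the first kind of step. The steps of the second kind should be few, since each of them should make progress on a weight of order at least $n^\gamma$. Adding $|D|$ and the polylogarithmic and constant factors absorbed by $\varepsilon$ then yields a balanced separator of size at most $3n^{z-\varepsilon/2} \le n^z$.

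\emph{Main obstacles.} The main obstacle is Step~1: obtaining an edge bound for $K_{t,t}$-free graphs with $\reduced{\compmaxleaf}\le k$ that depends only polynomially on $k$ and $t$, since $k,t$ may be as large as $n^\beta$. The second delicate point is controlling how often case ``large part'' can occur in the iteration.
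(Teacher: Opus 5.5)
\paragraph{Where the proposal breaks.} The gap is Step~1, and it is not just unproven: the edge bound you need is false in the range the lemma covers.

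Since $k,t$ may be as large as $n^\beta$ while $\varepsilon$ may be much smaller than $\beta$ (in \cref{cor:ws-rcml-1}, $\beta=\frac13-2\varepsilon$), the bound $|E(G)|\le n^{1+\beta+\varepsilon}$ asks for an average degree almost linear in $\max(k,t)$. The local argument you sketch gives only degeneracy below $(k+3)t$. Indeed, in the first partition with a part $P$ of size at least $t$, all parts have fewer than $2t$ vertices, the black neighbours of $P$ contain fewer than $t$ vertices in total, and $P$ has at most $k$ red neighbours; applying this to every induced subgraph bounds the degeneracy.

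The dependence on $k$ cannot be removed. Let $H_0$ be the point--line incidence graph of a projective plane of order $q$: it is $C_4$-free, $(q+1)$-regular, and has $2(q^2+q+1)$ vertices. Take disjoint copies of $H_0$ and blow up every vertex into an independent set of size $t-1$.
\begin{itemize}
\item The resulting $G$ is $K_{t,t}$-free: vertices from two distinct blobs have at most one blob of common neighbours.
\item $\reduced{\compmaxleaf}(G)\le 2(q^2+q+1)$: first contract the false-twin blobs, then absorb each copy vertex by vertex into one growing part. Every red graph is then a star plus isolated vertices.
\item Every degree equals $(t-1)(q+1)$.
\end{itemize}
With $t\approx n^\beta$ and $q\approx n^{\beta/2}/2$ you have $k,t\le n^\beta$, yet every degree is $\Theta(n^{3\beta/2})$. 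For the parameters of \cref{cor:ws-rcml-1} ($\delta=\frac13<\frac32\beta$), your set $D$ is all of $V(G)$. So you cannot afford to delete high-degree vertices, and the bound $|S|\le kn^\delta$ in Step~2 has nothing to rest on. (The lemma itself holds trivially for this $G$, whose components are small; it is the method that breaks.)

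\paragraph{The missing idea.} You need to use $K_{t,t}$-freeness only on large parts, never through vertex degrees. The paper proceeds as follows.
\begin{itemize}
\item Take the first partition $\Pc$ with a part $P$ of size at least $n^\gamma$.
\item Let $\mathcal F$ be the set of parts of size at most $n^\delta$ reachable from $P$ by red paths through parts of size more than $n^\delta$. By \cref{lem:thenumberofredneighbor}, $|\mathcal F|\le k$.
\item Let $C$ be the red component of $P$ after deleting $\mathcal F$. Every part $P'$ of $C$ has more than $n^\delta\ge t$ vertices, so its black neighbourhood $B(P')$ has fewer than $t$ vertices.
\item The set $X=\bigcup\mathcal F\cup\bigcup_{P'\in V(C)}B(P')$ cuts off the union $D$ of the parts of $C$, with $|X|/|V(D)|\le\max(2n^{\beta+\delta-\gamma},n^{\beta-\delta})$. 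This is what feeds \cref{lem:protrusion}.
\item When $|V(D)|\ge\frac23 n$, the paper instead grows a connected $A\ni P$ inside $C$ to weight about $\frac n2$. It separates $A$ by its at most $k$ red neighbouring parts, each of size $<n^\gamma$, together with fewer than $n^{1-\delta}t$ vertices of black neighbourhoods.
\end{itemize}
The last step is where the term $\gamma+\beta+\varepsilon$ of $z$ comes from, not from a large part of $B_i$.

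\paragraph{Side remarks.} Once all degrees are below $n^\delta$, your ``large part in $B_i$'' case cannot occur, since such a part lies in the neighbourhood of a single vertex. So your second worry was moot.

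For constant $k,t$, your Step~1 does hold (degeneracy $<(k+3)t$), and your route then goes through, giving separators of size $O(n^{1-\delta}+n^{1+\delta-\gamma})$. It is exactly the superconstant range $k,t\le n^\beta$ that defeats it.
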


\begin{proof}
  We take $n$ sufficiently large so that the following inequalities hold:
  \begin{itemize}
  \item $n^\gamma - n^\delta > 0.5 n^\gamma$, and
  \item $n^\varepsilon > 56$.
 \end{itemize}

In a~fixed contraction sequence witnessing $\reduced{\compmaxleaf}(G) \leqslant k$, let $\mathcal P$ be the first partition with a~part $P$ of size at~least~$n^\gamma$.

We build a~separator $S$ much smaller than at~least~one connected component of $G-S$.
This can be used iteratively to exhibit a~strongly sublinear separator of $G$, and of all its induced subgraphs (thus of all its subgraphs).

Let $\mathcal F$ be the set of parts $P' \in \Pc$ of size at~most~$n^\delta$ such that there is a~$(P, P')$-path in $\redG(G/\Pc)$ where every part but $P'$ has size larger than $n^\delta$.
It follows from \cref{lem:thenumberofredneighbor} that $|\mathcal F| \leqslant k \leqslant n^\beta$, and thus $|\bigcup \mathcal F| \leqslant n^{\beta+\delta}$.
Let $C$ be the connected component of $\redG(G/\Pc)-\mathcal F$ containing~$P$.
Each part in~$V(C)$ has size more than $n^\delta$, so for $q := |V(C)|$, we have $q < n^{1-\delta}$.

For any part $P' \in V(C)$, since $|P'| > n^\delta \geqslant n^\beta \geqslant t$,
the union $B(P')$ of the parts that are black neighbors of $P'$ in~$G/\mathcal P$ is of size at~most~$t-1$.
We set $X := \bigcup \mathcal F \cup \bigcup_{P' \in V(C)} B(P')$.
Let $D$ be the subgraph of $G$ induced by the union of the parts in $C$.
Note that $D$ is the union of a set of connected components of $G-X$.

We have $|X| \leqslant n^{\beta+\delta}+q(t-1)$ and $|V(D)| \geqslant (q-1)n^\delta+n^\gamma$.

Hence,
\[\frac{|X|}{|V(D)|} \leqslant \frac{n^{\beta+\delta} + q(t-1)}{n^\gamma-n^\delta+q n^\delta} \leqslant \max\left(\frac{n^{\beta+\delta}}{n^\gamma-n^\delta},\frac{t-1}{n^\delta}\right) \leqslant
\max(2 n^{\beta+\delta-\gamma}, n^{\beta-\delta}).\]
If $|V(D)|<\frac{2}{3} n$, then $X, V(D)$ satisfy the condition of~\cref{lem:protrusion} on~$G$.
We recurse on~$G-(X \cup V(D))$ and invoke~\cref{lem:protrusion} if ``$|V(D)|<\frac{2}{3} n$'' keeps happening.
We would then get a~balanced separator of size smaller than~$\frac{1}{2} n^{\max(1+\beta+\delta-\gamma+\varepsilon,~1-\delta+\beta+\varepsilon)}$ since $n^\varepsilon > 4$.

We now assume that $|V(D)| \geqslant \frac{2}{3} n$.
We will now find a~balanced separator of~$G$ of size at~most~$\frac{1}{2} n^z$ and conclude.
Inside~$C$, consider an inclusion-wise maximal connected subgraph~$A$ with $\left|\bigcup V(A)\right| \le \frac{1}{2}n$ and $P \in V(A)$.
Note that the choice of the partition~$\Pc$ ensures that every part in~$\Pc$ has size less than~$n^\gamma$, except for~$P$ which has size less than~$2n^{\gamma}$.
Since~$C$ is connected and $|\bigcup V(C)| \ge |V(D)| \ge \frac{2}{3}n$,
this inclusion-wise maximal subgraph~$A$ must satisfy $\frac{1}{2}n - n^{\gamma} < \left|\bigcup V(A)\right| \le \frac{1}{2}n$.
In particular, $\bigcup V(A)$ has size between $\frac{1}{3}n$ and $\frac{2}{3}n$.

Define now $S := \bigcup N_{\redG(G/\Pc)}(V(A)) \cup \bigcup_{P' \in V(A)} B(P')$, which is exactly the union of all parts that are neighbors (red or black) of~$A$ in~$G/\Pc$.
Thus $S$ separates~$\bigcup V(A)$ from the rest of the graph.
Since~$A$ has size between $\frac{1}{3}n$ and $\frac{2}{3}n$, $S$ is a balanced separator.

By \cref{lem:thenumberofredneighbor}, $A$ has at most~$k$ neighbors in~$\redG(G/\Pc)$.
On the other hand, since each part of $V(A) \subseteq V(C)$ has size more than~$n^\delta$, we have $|V(A)| < n^{1-\delta}$.
Using that every part in~$\Pc$ except~$P$ has size at most~$n^\gamma$, and that $|B(P')| < t$ for any $P' \in V(A) \subseteq V(C)$, we obtain
\[ |S| \le k \cdot n^\gamma + n^{1-\delta}t \le n^{\beta+\gamma} + n^{1-\delta+\beta}
    \le \frac{1}{2} n^{\max(\gamma+\beta+\varepsilon,~1-\delta+\beta+\varepsilon)}. \]

Thus we obtain a~balanced separator of~$G$ of size at~most~$\frac{1}{2} n^z$.
We add this separator to the union (whose size is also upper bounded by $\frac{1}{2} n^z$) of the separators defined when ``$|V(D)| < \frac{2}{3}n$'' and conclude.
\end{proof}

From~\cref{lem:ws-rcml} we get that weakly sparse classes of $\reduced{\compmaxleaf}$ at~most~$n^\beta$ with $\beta<1/3$ admit strongly sublinear separators. By a result of Dvo\v{r}\'{a}k and
                  Norin~\cite{DvorakN16}, such classes have polynomial expansion.

More precisely:

\begin{corollary}\label{cor:ws-rcml-1}
  For every sufficiently small $\varepsilon > 0$ and sufficiently large $n$, every $n$-vertex graph $G$ with $\reduced{\compmaxleaf}(G) \leqslant n^{\frac{1}{3}-2\varepsilon}$ and no $K_{n^{\frac{1}{3}-2\varepsilon}, n^{\frac{1}{3}-2\varepsilon}}$ subgraph admits a~balanced separator of size $n^{1-\varepsilon}$. 
\end{corollary}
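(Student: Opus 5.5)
This is a direct instantiation of \cref{lem:ws-rcml}: we pick the exponents $\beta,\delta,\gamma$ so that all three terms in the definition of $z$ are at most $1-\varepsilon$. Set $\beta := \frac13 - 2\varepsilon$, so the hypotheses $k \le n^\beta$ and $t \le n^\beta$ hold with $k = \reduced{\compmaxleaf}(G)$ and $t = n^{\frac13-2\varepsilon}$. A graph with no $K_{t,t}$ subgraph for this real $t$ has no $K_{\lceil t\rceil,\lceil t\rceil}$ subgraph. The integrality of $t$ is therefore a rounding issue only; it can be absorbed by replacing $\beta$ with $\beta + \varepsilon'$ for a tiny $\varepsilon'$, or simply noted.

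The three terms of $z$, after adding the lemma's $\varepsilon$, must satisfy:
\[ 1+\beta+\delta-\gamma, \quad 1-\delta+\beta, \quad \gamma+\beta \;\le\; 1-2\varepsilon. \]
Here I run \cref{lem:ws-rcml} with its own slack parameter equal to $\varepsilon$, so each of these three quantities plus $\varepsilon$ should be at most $1-\varepsilon$.

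The second condition forces $\delta \ge \beta + 2\varepsilon = \frac13$. The third forces $\gamma \le 1-2\varepsilon-\beta = \frac23$. The first then requires $\gamma - \delta \ge \beta + 2\varepsilon = \frac13$. Choosing $\delta = \frac13$ and $\gamma = \frac23$ makes all three quantities exactly $1-2\varepsilon$, so $z = 1-\varepsilon$. We also need the ordering constraints $0<\varepsilon,\beta \le \delta < \gamma < 1$. These hold since $\beta = \frac13-2\varepsilon < \frac13 = \delta < \frac23 = \gamma$, and $\varepsilon \le \delta$ once $\varepsilon$ is small. The requirement $\beta > 0$ is the reason $\varepsilon$ must be sufficiently small (say $\varepsilon < \frac16$).

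Applying \cref{lem:ws-rcml} with these parameters and $n$ large enough gives a balanced separator of size at most $n^{z} = n^{1-\varepsilon}$, as claimed. The only delicate point is the bookkeeping above. The three constraints are tight simultaneously, which is exactly why the exponent $\frac13$ appears: summing the first and third conditions and using the second yields the threshold $\beta < \frac13$. There is no genuine obstacle beyond checking that the "sufficiently large $n$" conditions of the lemma ($n^\gamma - n^\delta > 0.5n^\gamma$ and $n^\varepsilon > 56$) hold for fixed $\varepsilon$.
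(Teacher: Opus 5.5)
Your proposal is correct and is exactly the paper's argument: apply \cref{lem:ws-rcml} with $\beta=\frac13-2\varepsilon$, $\delta=\frac13$, $\gamma=\frac23$, so that all three exponents defining $z$ equal $1-\varepsilon$. The paper leaves implicit your extra checks, and they are sound additions. These are the ordering constraints, which force $\varepsilon<\frac16$, and the rounding of $n^{\frac13-2\varepsilon}$ to an integer; the rounding is harmless because the exponents depend only on $\beta$ plus the lemma's slack parameter, so increasing one and decreasing the other leaves $z$ unchanged.
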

\begin{proof}
  We apply~\cref{lem:ws-rcml} with $\beta=\frac{1}{3}-2\varepsilon$, $\delta=\frac{1}{3}$, $\gamma=\frac{2}{3}$.
  This gives a~balanced separator of size $n^z$ with $z = \max(1+\frac{1}{3}-2\varepsilon+\frac{1}{3}-\frac{2}{3}+\varepsilon,~1-\frac{1}{3}+\frac{1}{3}-2\varepsilon+\varepsilon,~\frac{2}{3}+\frac{1}{3}-2\varepsilon+\varepsilon)=1-\varepsilon$.
\end{proof}

Another consequence of~\cref{lem:ws-rcml} is that weakly sparse classes of bounded $\reduced{\compmaxleaf}$ have treewidth $O(n^{\frac{2}{3}+\varepsilon})$ for any $\varepsilon > 0$.

\begin{corollary}\label{cor:ws-rcml-2}
  Let $\mathcal C$ be a~weakly sparse class of bounded $\reduced{\compmaxleaf}$.
  Then for every $\varepsilon > 0$, there is a~constant $c := c_{\mathcal C, \varepsilon}$ such that every $n$-vertex graph of $\mathcal C$ has treewidth at~most~$c n^{\frac{2}{3}+\varepsilon}$. 
\end{corollary}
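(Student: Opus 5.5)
The plan is to apply \cref{lem:ws-rcml} with constant $k$ and $t$, where $\beta$ is taken arbitrarily small. The resulting separator bound will then be fed into the standard fact that if every induced subgraph has small balanced separators, then the treewidth is small.

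Fix $\mathcal C$ with $\reduced{\compmaxleaf}$ at most $k$ and no $K_{t,t}$ subgraph, and fix $\varepsilon > 0$. We choose $\beta := \varepsilon/3$, $\delta := \frac13$ and $\gamma := \frac23$. The exponent $z$ of \cref{lem:ws-rcml} is then the maximum of $1+\beta+\frac13-\frac23+\varepsilon'$, $\frac23+\beta+\varepsilon'$ and $\frac23+\beta+\varepsilon'$, where $\varepsilon'$ is the lemma's own slack parameter. Taking $\varepsilon' := \varepsilon/3$ as well gives $z = \frac23 + \frac{2\varepsilon}{3} \le \frac23+\varepsilon$. The constraints $0<\varepsilon',\beta\le\delta<\gamma<1$ hold as long as $\varepsilon \le 1$; larger $\varepsilon$ is implied by the case $\varepsilon=1$. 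For $n \ge n_0(k,t,\varepsilon)$ we have $k,t \le n^\beta$ and $n$ is large enough for the lemma. Hence every $n$-vertex graph in $\mathcal C$ with $n\ge n_0$ has a balanced separator of size at most $n^{2/3+\varepsilon}$.

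Next we need this for every subgraph, and for arbitrary weights. The first point is immediate: induced subgraphs inherit both bounded $\reduced{\compmaxleaf}$ (restrict the contraction sequence; red graphs become induced subgraphs, and $\compmaxleaf$ is monotone under induced subgraphs by \cref{lem:thenumberofredneighbor}-style reasoning) and $K_{t,t}$-freeness. Small graphs with fewer than $n_0$ vertices are absorbed into the constant. For treewidth, I would avoid weights and use a standard recursive construction. Split $G$ along a balanced separator $S$ of size $s(n) \le c' n^{2/3+\varepsilon}$, then recursively decompose each side with $S$ added to every bag. Since $s$ is a polynomial with exponent less than $1$ and the part sizes shrink geometrically, the bag sizes sum to a geometric series, giving $\mathrm{tw}(G) = O(n^{2/3+\varepsilon})$. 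Alternatively, one can invoke the result of Dvo\v{r}\'ak and Norin~\cite{DvorakN16}, which converts strongly sublinear separators of all subgraphs into treewidth with the same exponent.

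The main obstacle is the weights. The unweighted recursion needs care, because a separator balanced for the whole graph must also be placed into the bags correctly; the cleanest route is the Dvo\v{r}\'ak--Norin theorem, which needs separators only for subgraphs. If I instead aimed for a direct Robertson--Seymour argument as in \cref{cor:cml-degree-tw}, I would need weighted balanced separators, which would require checking that \cref{lem:ws-rcml} and \cref{lem:protrusion} extend to weights (replace sizes by weights in the choice of $A$). I would take the Dvo\v{r}\'ak--Norin route first.
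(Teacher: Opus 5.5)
Your proposal is correct and matches the paper's proof. Both apply \cref{lem:ws-rcml} with $\delta=\frac13$, $\gamma=\frac23$ and $\beta$ plus the slack made arbitrarily small, then turn balanced separators of all induced subgraphs into a treewidth bound. For that last step the paper cites the linear tie between separation number and treewidth (\cite{DvorakN19}; \cite{DvorakN16} is the polynomial-expansion equivalence, not this conversion), but your own recursion already suffices without weights, since the separator bound $c\,m^{2/3+\varepsilon}$ shrinks geometrically as component sizes drop to at most $\frac23 m$.
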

\begin{proof}
  Let $k, t$ be integers such that $\mathcal C$ has no $K_{t,t}$ subgraph nor graphs of $\reduced{\compmaxleaf}$ more than~$k$.
  In~\cref{lem:ws-rcml}, one can then make ``$\beta+\varepsilon$'' arbitrarily close to 0.
  Thus, for every sufficiently small~$\varepsilon > 0$, every $n$-vertex induced subgraph of a~graph of $\mathcal C$ admits a~balanced separator of size $O(n^z)$ with $z := \max(1+\delta-\gamma+\varepsilon,~1-\delta+\varepsilon,~\gamma+\varepsilon)$.
  We then choose $\delta=\frac{1}{3}$ and $\gamma=\frac{2}{3}$, so that $z = \frac{2}{3}+\varepsilon$.
  We conclude since separation number and treewidth are linearly tied~\cite{DvorakN19}.
\end{proof}

Given a~positive integer $n$, the $n \times n$ Pohoata--Davies grid~\cite{Pohoata14,Davies22} is the graph obtained from the disjoint union of $n$ paths on $n$ vertices each, by adding for each $i \in [n]$, a~vertex adjacent to the $i$-th vertex of each path; see~\cref{fig:pohoata-grid}.
\begin{figure}[!t]
	\centering
	\begin{tikzpicture}[vertex/.style={draw,circle,inner sep=0.06cm}]
		\def\k{7}
		\pgfmathtruncatemacro\km{\k - 1}
		\def\sv{0.5}
		\def\sh{0.8}

		\foreach \i in {1,...,\k}{
			\foreach \j in {1,...,\k}{
				\node[vertex] (a\i\j) at (\i * \sh, \j *\sv) {};
			}
		}
		\foreach \i in {1,...,\k}{
			\node[vertex] (s\i) at (\i * \sh, \k * \sv + \sv) {};
		}

		\foreach \i [count = \ip from 2] in {1,...,\km}{
			\foreach \j in {1,...,\k}{
				\draw (a\i\j) -- (a\ip\j) ;
			}
		}

		\foreach \i in {1,...,\k}{
			\foreach \j in {1,...,\k}{
				\draw (s\i) to [bend left = 24] (a\i\j) ;
			}
		}
	\end{tikzpicture}
	\caption{The $7 \times 7$ Pohoata--Davies grid.}
        \label{fig:pohoata-grid}
\end{figure}
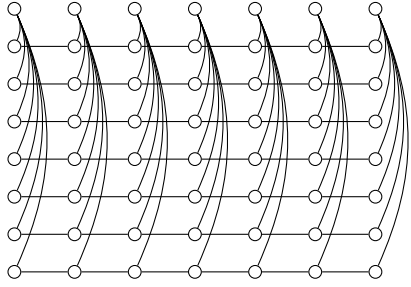
For every integer $g \geqslant 0$, let $G_{k,g}$ be the graph obtained from the $k \times k$ Pohoata--Davies by subdividing each edge of the $k$ paths $g$ times.
Note that $G_{k,g}$ has girth $2(g+1)+4$.
It is not difficult to establish that the treewidth of $G_{k,g}$ is~$k$.
Setting $n := |V(G_{k,g})|$, we have $n < k^2(1+g)$, so the treewidth of $G_{k,g}$ is larger than $\frac{1}{(1+g)^{1/2}} \sqrt n$.
The reduced component max-leaf of $G_{k,g}$ is at~most~3 due to the following contraction sequence, in which the only non-singleton connected component in each red graph is a~path, plus optionally a~pendant vertex attached to the penultimate vertex of the path.
For each $j$ going from 2 to $k$, for each $i$ going from 1 to $k+g(k-1)$, contract the $i$-th vertex of the first path with the $i$-th vertex of the $j$-th path.
Then contract (in any order) vertices outside the red path with their black neighbor on the red path.
Finally contract adjacent pairs of vertices along the red path, until it is reduced to a~single vertex.

This establishes the following fact, in contrast with~\cref{cor:ws-rcml-2}.

\begin{proposition}\label{prop:pd-grid}
  For every positive integer $g$, there is an infinite family of graphs $G$ of girth at~least~$g$ and treewidth $\Omega_g(|V(G)|^{1/2})$ with $\reduced{\compmaxleaf}(G) \leqslant 3$.
\end{proposition}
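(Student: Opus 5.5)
The family will be the graphs $G_{k,g}$ just described, for $k \geqslant 3$. Three things must be checked: girth, treewidth, and a contraction sequence certifying $\reduced{\compmaxleaf}(G_{k,g}) \leqslant 3$. I take them in that order.

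\emph{Girth.} The paths are pairwise disjoint. Every cycle therefore uses at least two of the hub vertices $s_i$ (the vertices added to the paths), and between consecutive hubs it must travel along path segments. A shortest cycle goes $s_i \to$ path $p \to s_{i'} \to$ path $p' \to s_i$. It uses two hubs, four hub edges, and two path segments, each of length at least $g+1$. Its length is therefore at least $2(g+1)+4 > g$, as claimed.

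\emph{Treewidth.} The upper bound $\leqslant k$ is easy but not needed. For the lower bound I contract each subdivided path back to the $k \times k$ Pohoata--Davies grid, which is a minor of $G_{k,g}$. In that grid I contract each hub $s_i$ together with the $i$-th vertex of the first path. The result contains the $k \times (k-1)$ grid (rows given by paths $2,\dots,k$, columns given by the contracted hubs) as a subgraph. This yields treewidth $\geqslant k-1$. Since $n < k^2(1+g)$, we get $\operatorname{tw} = \Omega_g(\sqrt n)$.

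\emph{Contraction sequence.} This is the main step, and I track the red graph phase by phase. Write $u^j_i$ for the $i$-th vertex of the $j$-th (subdivided) path, and $L := k+g(k-1)$.

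In phase 1, for $j=2,\dots,k$ and $i=1,\dots,L$, we merge $u^j_i$ into the part $U_i \ni u^1_i$. Before path $j$ is processed, part $U_i$ is the "column" $\{u^1_i,\dots,u^{j-1}_i\}$. These columns are pairwise homogeneous except for consecutive ones, since column $i$ is adjacent only to column $i\pm1$ through a perfect matching. Moreover, for $j\geqslant 3$ they are red-adjacent to consecutive columns: with two or more paths, the matching is not complete.

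A column containing a branching position has a hub. That hub is complete to the whole column (black) when it is attached to all paths. During the sweep, however, the column already contains $u^j_1,\dots$ for the current $j$, while columns further ahead do not. Hence a partially merged column $U_i$ with $i$ below the sweep point is complete to its hub, and the red edges stay among the columns. Within path $j$, the not-yet-merged vertices $u^j_{i+1},\dots$ are singletons. The vertex $u^j_{i+1}$ is adjacent to $u^j_i \in U_i$ but not to the other members of $U_i$, so it is red to $U_i$. The vertex $u^j_{i+1}$ is also red-adjacent to the old column $U_{i+1}$, since $U_{i+1}$ sees $U_i$ partially. Its hub, if any, is black-adjacent to it.

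So the red graph is a path $U_1 - U_2 - \cdots - U_L$, plus one pendant $u^j_{i+1}$ attached near position $i$ or $i+1$. The rest of path $j$ is black-connected only. The hubs are black to full columns, and a hub is red to $U_i$ exactly when $U_i$ currently holds a vertex of path $j$ not yet adjacent to it, which cannot happen here. The key verification is that every red component is a path with at most one pendant vertex, hence has max-leaf at most $3$. I expect this careful accounting of which edges are red at the boundary of the sweep, together with the hub adjacencies, to be the main obstacle.

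In phase 2, each hub is merged into its black neighbour column. The hub is adjacent to all of that column and to nothing else, so the merged part is homogeneous with every other part. The red graph remains the path.

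In phase 3, adjacent parts along the red path are merged one pair at a time. Merging two consecutive vertices of a red path yields a red path. Throughout, $\compmaxleaf \leqslant 3$, which proves \cref{prop:pd-grid}.
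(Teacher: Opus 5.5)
Your family $G_{k,g}$ and your three-phase contraction sequence are the paper's, and your girth and contraction-sequence parts are correct in substance. The genuine gap is the treewidth lower bound.

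After contracting $s_i$ into $u^1_i$, the new vertex $w_i$ is adjacent to every $u^j_i$. However, there is no edge $u^j_iu^{j+1}_i$, so your ``column $i$'' is a star centred at $w_i$, not a path. The contracted graph has only $k$ vertices of degree larger than $3$ (the $w_i$), whereas a $k\times(k-1)$ grid has $\Theta(k^2)$ vertices of degree $4$. So the claimed subgraph does not exist. No variant of this idea can give a linear bound: deleting the $k$ hubs from $G_{k,g}$ leaves a linear forest. Hence every minor of $G_{k,g}$ becomes a linear forest after deleting at most $k$ of its vertices, which forces every grid minor of $G_{k,g}$ to have side $O(\sqrt k)$.

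What works is a clique minor. Write $p_c$ for the $c$-th subdivided path. The sets $V(p_c)\cup\{s_c\}$, $c\in[k]$, are pairwise disjoint and connected. They are also pairwise adjacent, because $s_c$ has a neighbour on every path. This is a model of $K_k$, so $G_{k,g}$ has treewidth at least $k-1$. Together with $|V(G_{k,g})|=O((1+g)k^2)$, this gives the $\Omega_g(\sqrt n)$ bound.

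Three smaller errors in your red-graph bookkeeping do not affect the bound of $3$, but you should fix them.
\begin{itemize}
\item The singleton $u^j_{i+1}$ has no neighbour at all in $U_{i+1}=\{u^1_{i+1},\dots,u^{j-1}_{i+1}\}$, since vertices at the same position of distinct paths are never adjacent. Its only red neighbour is $U_i$, so it is a genuine pendant. Being red to both $U_i$ and $U_{i+1}$ would create a triangle, contradicting your own ``pendant'' description.
\item For $j=2$, the columns ahead of the sweep are singletons joined by black edges. So the red component is $U_1\cdots U_{i+1}$ with the extra leaf $u^2_{i+1}$ at $U_i$, not the full path $U_1\cdots U_L$.
\item In phase~2, the part obtained by merging a hub into its column is still red to the two neighbouring columns, so it is not homogeneous to every other part. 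The correct statement is that no new red edge is created, because the hub has no neighbour outside that column.
\end{itemize}
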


The next result, also based on the Pohoata--Davies grid, shows that there are weakly sparse classes of bounded reduced component max-leaf whose subgraph closure has unbounded reduced component max-leaf.
In contrast, although \emph{bounded twin-width} is not a~subgraph-closed property (because, for example, of cliques), it is indeed monotone within weakly sparse classes~\cite{twin-width2}.

\begin{proposition}\label{thm:non-fo-closure}
For any positive integers $g, n$, there is a~graph of girth at~least~$g$ and reduced component max-leaf at~most~3, such that $G$ has a~subgraph of reduced component max-leaf at least~$n$.
\end{proposition}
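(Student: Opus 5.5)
We take $G := G_{k,g'}$ for a suitable $k$ and $g' \geqslant g$, i.e., the Pohoata--Davies grid with every path edge subdivided $g'$ times. The discussion before \cref{prop:pd-grid} already gives what we need about $G$: its girth is $2(g'+1)+4 \geqslant g$, and $\reduced{\compmaxleaf}(G) \leqslant 3$ via the explicit contraction sequence described there. So it remains to find a subgraph $H$ of $G$ with $\reduced{\compmaxleaf}(H) \geqslant n$.

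For $H$ we keep only the degree-bounded part of $G$ and rely on \cref{cor:cml-degree-tw}. Concretely, $H$ is obtained from $G$ by deleting, for each apex vertex $s_i$ (the vertex adjacent to the $i$-th branching vertex of every path), all but a few of its incident edges. The aim is for $H$ to contain a large subdivided wall and to have maximum degree at most~$3$.

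The wall comes from pairing up the apices with the paths. In the grid whose rows are the $k$ paths and whose columns are the $k$ positions, the apex $s_i$ together with two of its edges, to the $i$-th vertices of rows $j$ and $j+1$, forms a subdivided vertical rung between consecutive rows at column~$i$. Rows are long subdivided paths, and we choose the rungs in the brick pattern: at column $i$ we keep only the rung between rows $j$ and $j+1$ with $i+j$ odd. Each apex can serve only one rung, so we give one apex per rung. To do so, we use only every other column position for rungs and cap the number of rows. Alternatively, we take $k$ large and use a $\sqrt{k}\times\sqrt{k}$ subwall with columns spaced apart, so that all distinct rungs lie in distinct columns and hence use distinct apices. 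Every other edge at an apex is deleted, together with unused apices.

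The resulting $H$ is a subdivision of an $m\times m$ wall with $m \to \infty$ as $k \to \infty$. It has maximum degree~$3$ and treewidth at least~$m$. By \cref{cor:cml-degree-tw}, $m \leqslant \operatorname{tw}(H) \leqslant 12\,\reduced{\compmaxleaf}(H)(\reduced{\compmaxleaf}(H)+2)$, so $\reduced{\compmaxleaf}(H)$ grows at least like $\sqrt{m/12}$. Choosing $k$ large enough then gives $\reduced{\compmaxleaf}(H) \geqslant n$.

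The main obstacle is the bookkeeping that realizes a genuine subdivided wall. In a wall, consecutive rungs between the same pair of rows alternate columns. We must therefore check that assigning distinct apices to distinct rungs is compatible with the brick pattern, and that the row segments between branching vertices are honest paths. Subdivided walls are allowed to have arbitrary subdivisions, so the extra subdivision vertices ($g'$ per path edge, plus the apex in the middle of each rung) cause no problem. A simpler fallback, if this bookkeeping becomes messy, is to argue directly that $H$ has large treewidth via a bramble or grid-minor argument rather than by exhibiting a wall explicitly, since \cref{cor:cml-degree-tw} needs only large treewidth together with bounded degree.
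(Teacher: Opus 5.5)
Your proof is correct, but it takes a different route from the paper.

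\textbf{The paper's route.} The paper works in $G_{(n+4)^2,g}$ and finds a strict subdivision of $K_{n+4}$ as a subgraph. The branching vertices are $n+4$ apices. Each pair of apices $s_a,s_b$ gets its own reserved path $j$ and is linked by $s_a\,a_{j,a}\cdots a_{j,b}\,s_b$. Then \cref{lem:cml-subdivided-clique} gives $\reduced{\compmaxleaf}\geqslant (n+4)-4=n$ directly.

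\textbf{Your route.} You extract a subdivided wall of maximum degree $3$ and apply \cref{cor:cml-degree-tw}. That corollary rests on the dominated balanced separators of \cref{thm:cml-dominated-balanced-sep} together with the Robertson--Seymour separator/treewidth bound. This is heavier machinery, which the paper itself deliberately avoids for subdivided cliques (see the remark before \cref{lem:cml-subdivided-clique}).

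\textbf{Comparison.} The paper's argument is more elementary and quantitatively sharper: $k=(n+4)^2$ paths suffice. You only get $\reduced{\compmaxleaf}(H)=\Omega(\sqrt{m})$ and need about $2m^2$ positions, so $k=\Theta(n^4)$. In exchange, your subgraph has maximum degree $3$. This shows the blow-up of reduced component max-leaf under taking subgraphs happens even among bounded-degree subgraphs, whereas the subdivided $K_{n+4}$ has degree $n+3$.

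\textbf{Bookkeeping.} Your first suggestion, using every other column and capping the number of rows, does not by itself stop two rungs from sharing an apex. The spaced-out version does. Explicitly, for an $m\times m$ wall, realize the rung between wall rows $j$ and $j+1$ at wall column $i$ (where $i+j$ is odd) as $a_{j,c}\,s_c\,a_{j+1,c}$ with $c=(i-1)m+j$.
\begin{itemize}
\item These positions are pairwise distinct.
\item Along row $j$, the attachment position of wall vertex $(i,j)$ is $(i-1)m+j$ or $(i-1)m+j-1$. This lies in the block $[(i-1)m+1,\,im-1]$, so the positions increase with $i$, and each wall row is an honest subpath of path $j$.
\item Hence $k\geqslant 2m^2$ suffices.
\end{itemize}

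You also do not need $H$ to be exactly a wall. Keep all path edges and at most two edges per apex. Then $H$ already has maximum degree at most $3$. Since treewidth is subgraph-monotone, $\operatorname{tw}(H)\geqslant m$, and \cref{cor:cml-degree-tw} applies to $H$ itself. No monotonicity of $\reduced{\compmaxleaf}$ is needed.
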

\begin{proof}
  Consider the subdivided Pohoata--Davies grids $G_{(n+4)^2,g}$.
  We already showed that \[\reduced{\compmaxleaf}(G_{(n+4)^2,g}) \leqslant 3,\] and observed that $G_{(n+4)^2,g}$ has girth $2(g+1)+4 \geqslant g$.
  It can be seen that $G_{(n+4)^2,g}$ admits a~strict subdivision of $K_{n+4}$ as a~subgraph: use $n+4$ high-degree vertices as branching vertices and reserve one distinct path of the $(n+4)^2$ paths to connect each pair of branching vertices. 
  By~\cref{lem:cml-subdivided-clique}, this subgraph has reduced component max-leaf at~least~$n$.
\end{proof}

\subsection{Weakly sparse classes of bounded reduced cutwidth}\label{sec:ws-rcutw}

We now need the original definition of \emph{polynomial expansion}.
This relies on the notion of \emph{shallow minors} introduced by Plotkin, Rao and Smith~\cite{PRS}.

We say that a~graph $H$ is a~$d$-shallow minor of~$G$ if there are pairwise disjoint connected subgraphs $B_1, \dots B_{|V(H)|}$ of $G$, called \emph{branch sets}, such that the radius of $B_i$ is at~most~$d$, and $V(B_i)$ and $V(B_j)$ are adjacent in $G$ if $v_iv_j \in E(H)$, with $V(H)=\{v_1, \ldots, v_{|V(H)|}\}$.
The family $\{B_1, \ldots, B_{|V(H)|}\}$ is then called a~\emph{minor model} (or simply \emph{model}) of~$H$ in~$G$.
This model is \emph{minimal} if there is no model $\{B'_1, \ldots, B'_{|V(H)|}\}$ of~$H$ where each $B'_i$ is a~subgraph of $B_i$, one of~which a~strict subgraph.

Note that if $H$ is a $d$-shallow minor of~$G$, then there is a~minimal model of~$H$ in~$G$ with all parts of radius at most $d$, by simply keeping, in each branch set, a~shortest-path tree from a~central vertex (i.e., one realizing the radius of the branch set).

For a~class of graphs $\mathcal C$, we say that $H$ is a $d$-shallow minor of $\mathcal C$ if it is a $d$-shallow minor of some $G \in \mathcal C$.
Plotkin, Rao and Smith showed that the absence of large cliques as shallow minors implies small(er) balanced separators.

\begin{theorem}[Plotkin, Rao and Smith~\cite{PRS}]\label{thm:PRS}
If $G$ is an $n$-vertex graph not containing $K_t$ as a~$d$-shallow minor, then $G$ has a~balanced separator of size $O(d t^2 + \frac{n \log n}{d})$.
\end{theorem}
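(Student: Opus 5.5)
The plan is the classical Plotkin--Rao--Smith argument: an iterative procedure that, inside the current largest component, either grows one more branch set of a clique model or cuts off a piece cheaply by ball-growing.

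We maintain three things. The first is a set $S$ of deleted vertices, which will become the separator. The second is a family $B_1,\dots,B_j$, with $j<t$, of pairwise disjoint connected subgraphs of radius at most $d$ that are pairwise adjacent; this is a partial $d$-shallow model of $K_j$. The third is the current component $G'$ of $G - S$ that has more than $\frac{2}{3}n$ vertices. If no such component exists, $S$ is already a balanced separator and we stop. Throughout we keep every $B_i$ inside $G'$, or we discard it.

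One round goes as follows. If some $B_i$ has no neighbour in $G'$, we delete the whole of $B_i$ and put it into $S$. Its contribution is one shortest-path tree of depth at most $d$ towards the clique, so we only need to delete a path, of cost $O(d)$ per branch set. Otherwise we pick a vertex $v \in V(G') \setminus \bigcup_i B_i$ and grow BFS layers $L_0=\{v\}, L_1, L_2,\dots$ in $G' - \bigcup_i B_i$. There are two cases.
\begin{itemize}
\item Within $d$ steps the ball reaches a neighbour of every $B_i$. We add a radius-$\le d$ tree from $v$ as the new branch set $B_{j+1}$. If $j+1=t$, we have a $K_t$ $d$-shallow minor, a contradiction.
\item The ball does not reach every $B_i$ in time. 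By the ball-growing (region-growing) lemma, some radius $r\le d$ satisfies $|L_{r+1}| \le \frac{c\log n}{d}\,|L_0\cup\dots\cup L_r|$; this holds because otherwise the ball would grow by a factor $(1+\frac{c\log n}{d})$ per layer and exceed $n$ within $d$ steps. We put $L_{r+1}$ into $S$, which removes the ball from $G'$.
\end{itemize}

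The accounting has two parts. Each vertex lies in at most one removed ball, so the layers deleted in the second case cost $\sum_{\text{balls}} \frac{c\log n}{d}|\text{ball}| = O(\frac{n\log n}{d})$ in total. The branch sets are charged only when we discard them. We keep $B_i$ as a tree of depth $\le d$ and, when it loses its attachment to $G'$, delete it. Each deletion costs $O(dt)$ by deleting the paths to the other branch sets, and a potential argument bounds the number of discards. Alternatively we follow PRS and delete only the $t-1$ root-to-contact paths, of size $O(dt)$ each time a new branch set is created; combined with an argument that at most $O(t)$ sets are created per ``phase'', this gives $O(dt^2)$ overall.

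The main obstacle is this last charging step: showing that the total cost of deleting branch sets is $O(dt^2)$ rather than something proportional to the number of rounds. My plan is to restrict each $B_i$ to a minimal subtree consisting of the union of $\le t-1$ paths of length $\le d$ from its centre to contacts with the other branch sets, so $|B_i| = O(dt)$. I would also argue, as PRS do, that once a branch set is removed, the surviving $B_j$'s stay valid in the new largest component, so the model is rebuilt at most $t$ times in total. If this bookkeeping fails, the fallback is a looser bound of $O(dt^2\log n)$, which still suffices for the polynomial-expansion application later in the paper.
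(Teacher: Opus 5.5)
The paper gives no proof of this statement and only cites Plotkin--Rao--Smith, so I am measuring your outline against their argument. The shape is right: grow a shallow clique model, otherwise cut by region growing. However, both accounting steps fail as written.

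\textbf{The cost of branch sets.} You pay $O(dt)$ for every discarded branch set, or for every newly created one. Nothing bounds the number of such events by a function of $t$. The fact that the surviving $B_j$ stay valid does not stop rounds from alternating between creating one set and discarding another for as long as the big component keeps shrinking. So \emph{rebuilt at most $t$ times} is unsupported, and so is the $O(dt^2\log n)$ fallback.

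The missing idea is that branch sets sit in the separator only provisionally and are released for free:
\begin{itemize}
\item Take $G'$ to be the component of $G - S - \bigcup_i B_i$ with more than $\frac{2}{3}n$ vertices.
\item If some $B_i$ has no neighbour in $G'$, drop it from the model and return its vertices to the graph without putting them into $S$. Then $G'$ is still a component, and the returned vertices can only join the rest of the graph, which has fewer than $\frac{1}{3}n$ vertices in total.
\item The output is $S \cup \bigcup_i B_i$ for the final model alone. That is at most $t-1$ sets, each a union of at most $t-1$ paths of length at most $d$, which gives $O(dt^2)$ with no charging at all.
\item Each round either strictly shrinks $G'$ or releases a set without changing $G'$. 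At most $t-1$ releases can occur consecutively, so the process terminates.
\end{itemize}
This also repairs an inconsistency in your setup. With the $B_i$ kept inside $G'$ and outside $S$, deleting $L_{r+1}$ does not cut the ball off from the rest of $G'$ in $G-S$. The ball may touch branch sets that lead back into $G'$.

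\textbf{The layer charging.} You only ensure $|L_{r+1}| \le \frac{c\log n}{d}|L_0\cup\dots\cup L_r|$, and then assume the ball is the part that gets removed. But the ball is connected and may contain most of $G'$. In that case it contains the new big component, and its vertices are charged again in later rounds. So \emph{each vertex lies in at most one removed ball} fails, while the part actually discarded may have very few vertices.

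You need two-sided region growing. Write $L_{<r}=L_0\cup\dots\cup L_{r-1}$ and let $L_{>r}$ be all vertices of $G'$ beyond layer $r$.
\begin{itemize}
\item The unreached $B_i$ touches $G'$ but has no neighbour within distance $d$ of $v$, so $L_{>d}\neq\emptyset$.
\item Grow from the inside over the first $d/2$ layers and from the outside over the last $d/2$ layers. This yields some $r\le d$ with $|L_r| \le C\frac{\log n}{d}\min(|L_{<r}|,|L_{>r}|)$.
\item After deleting $L_r$, any component with more than $\frac{2}{3}n$ vertices lies in the larger side. So the smaller side is discarded for good and can absorb the charge.
\item These discarded sets are pairwise disjoint, which gives $O(\frac{n\log n}{d})$ in total.
\item If the side kept is the ball, the unreached $B_i$ has no neighbour in the new $G'$ and is released as above.
\end{itemize}
With both fixes the bound $O(dt^2+\frac{n\log n}{d})$ follows.
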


In particular, for any constant $c>0$, if a~class $\mathcal C$ does not contain $K_{d^c}$ as $d$-shallow minors for all~$d$, \cref{thm:PRS} implies that $\mathcal C$ has strongly sublinear balanced separators, by setting $d := n^{\frac{1}{2(c+1)}}$.

A~class $\mathcal C$ has \emph{expansion (at most) $f$} if for any $H$ that is a $d$-shallow minor of $\mathcal C$, $\frac{|E(H)|}{|V(H)|} \le f(d)$, and $\mathcal C$ has \emph{polynomial expansion} if $f$ can further be taken polynomial.
A~class has \emph{$\omega$-expansion $f$} if no clique on $f(d)$ vertices is a~$d$-shallow minor of the class.
We will rely on the following result.

\begin{theorem}[Dvo\v{r}\'{a}k~\cite{Dvorak2018}]\label{thm:poly-exp-poly-clique}
A graph class has polynomial expansion if and only if it has polynomial $\omega$-expansion.
\end{theorem}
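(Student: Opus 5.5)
The direction ``polynomial expansion $\Rightarrow$ polynomial $\omega$-expansion'' is immediate, so the plan concentrates on the converse. For the easy direction: if $K_t$ is a $d$-shallow minor of the class, then its edge density $\frac{t-1}{2}$ is at most $f(d)$. Hence no clique on more than $2f(d)+1$ vertices is a $d$-shallow minor, and the $\omega$-expansion is polynomial whenever $f$ is.

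For the converse, assume no $K_{d^c}$ (say for $d \ge 2$, with $c$ a constant) is a $d$-shallow minor of the class $\mathcal C$. Fix $d$ and a $d$-shallow minor $H$ of some $G \in \mathcal C$. The goal is to bound $|E(H)|/|V(H)|$ by a polynomial in $d$. I would proceed in three steps.

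\textbf{Step 1 (composition of shallow minors).} An $r$-shallow minor of a $d$-shallow minor of $G$ is an $O(rd)$-shallow minor of $G$, roughly $(2r+1)(2d+1)$: expand each branch vertex of the outer model into its inner branch set and check the radius. Consequently, $H$ and every subgraph of $H$ exclude $K_{(C rd)^c}$ as an $r$-shallow minor, for every $r$.

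\textbf{Step 2 (separators via \cref{thm:PRS}).} Apply \cref{thm:PRS} to any $m$-vertex subgraph $H'$ of $H$ with parameter $r$. This gives a balanced separator of size
\[ O\!\left(r (Crd)^{2c} + \frac{m \log m}{r}\right). \]
Choosing $r := m^{1/(2c+2)}$ yields separators of size at most $p(d)\, m^{1-\varepsilon}$ for some fixed $\varepsilon>0$ depending only on $c$ and some polynomial $p$. The $\log m$ factor is absorbed by slightly decreasing $\varepsilon$. This bound is hereditary, since it holds for every subgraph of $H$.

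\textbf{Step 3 (separators imply sparsity).} I expect this to be the main technical point, because the degeneracy bound must stay polynomial in $d$. Suppose every subgraph on $m$ vertices has a balanced separator of size $s(m) = p(d)m^{1-\varepsilon}$. Then repeatedly splitting $H$ along separators shows that the number of edges of $H$ is at most the sum, over the recursion tree, of edges incident to separator vertices, plus edges inside small pieces.

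The naive count is not enough, since a separator vertex may have high degree. So I would argue via degeneracy instead. Take a subgraph $H'$ of minimum degree $\delta$ with $m$ vertices, and a minimal counterexample to $|E| \le D|V|$. Note that $H'$ itself has $m \ge \delta$ vertices. Using the separator, split $H'$ into two parts $A, B$ overlapping in $S$ with $|A|, |B| \le \frac{2}{3}m + |S|$, and apply minimality to both sides:
\[ |E(H')| \le D(m + |S|). \]
Iterating down to pieces of size about $(p(d)/\delta')^{1/\varepsilon}$, the total separator mass is a geometric-like sum of order $p(d)\, m / m_0^{\varepsilon}$, where $m_0$ is the stopping size. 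Pieces of size $m_0$ trivially have density at most $m_0$. Balancing these two contributions gives an edge density of $\mathrm{poly}(d)$, with exponent about $1/\varepsilon$.

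I would first try citing the standard lemma that hereditary $O(m^{1-\varepsilon})$ separators imply bounded degeneracy, with explicit tracking of the constant. It remains to verify that the dependence is polynomial in the separator constant $p(d)$. Combining the three steps, $\mathcal C$ has expansion bounded by a polynomial in $d$.
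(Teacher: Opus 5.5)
The paper does not prove this statement; it quotes it from Dvo\v{r}\'{a}k without proof, so there is no in-paper argument to match. Your proof is correct.

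The easy direction is exactly right. For the converse, Steps 1--2 are sound. An $r$-shallow minor of a $d$-shallow minor is a $(2rd+r+d)$-shallow minor, so every subgraph of a $d$-shallow minor $H$ of $\mathcal C$ excludes $K_{g(2rd+r+d)}$ as an $r$-shallow minor. Then \cref{thm:PRS} with $r\approx m^{1/(2c+2)}$ gives hereditary separators of size $a\,m^{1-\varepsilon}$, with $a=O(d^{2c})$ and $\varepsilon>0$ depending only on $c$.

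Compare this with the route one can assemble from facts the paper quotes. There, Plotkin--Rao--Smith gives strongly sublinear separators for the subgraph closure of $\mathcal C$, and Dvo\v{r}\'{a}k--Norin converts these into polynomial expansion. The hard part of that route is transferring separators from $G$ to its shallow minors. You instead apply the separator theorem directly to the shallow minor $H$, which the $\omega$-expansion hypothesis permits via composition of shallow minors. So you only need the elementary fact that hereditary sublinear separators force linearly many edges, with a polynomial constant.

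That elementary fact is the one place your write-up is not yet a proof. The minimal-counterexample inequality $|E(H')|\le D(m+|S|)$ gives no contradiction on its own, since its right-hand side exceeds $Dm$. The minimum degree also plays no role. What works is the global accounting you allude to:

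\begin{enumerate}
\item Put $m_0:=(Ca/\varepsilon)^{1/\varepsilon}$ for a large absolute constant $C$. Recursively split every piece $X$ with $|X|>m_0$ along a balanced separator $S_X$ into $A\cup B=X$ with $A\cap B=S_X$, no edges between $A\setminus S_X$ and $B\setminus S_X$, and $|A|,|B|\le\frac23|X|+|S_X|$.
\item Above $m_0$ we have $|S_X|\le|X|/12$, so children have at most $\frac34|X|$ vertices.
\item Hence pieces with sizes in a common interval $[(4/3)^k,(4/3)^{k+1})$ are pairwise incomparable in the recursion tree. Their total size is therefore at most the total leaf mass $L$.
\item We have $L=|V(H)|+\sum_X|S_X|$, and at scale $k$, $|S_X|\le a(4/3)^{-k\varepsilon}|X|$. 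Summing the geometric series over scales above $m_0$ gives $L\le|V(H)|+O(a m_0^{-\varepsilon}/\varepsilon)\,L\le|V(H)|+L/2$, so $L\le 2|V(H)|$.
\item Every edge of $H$ lies in some leaf, and leaves have at most $m_0$ vertices. So $|E(H)|\le m_0L/2\le m_0|V(H)|$.
\end{enumerate}

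Thus $H$ has density at most $m_0=\mathrm{poly}(d)$, since $\varepsilon$ is a constant. With this filled in, your argument is complete.
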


We are now ready to prove that any class of bounded reduced cutwidth has polynomial expansion.
The key idea is that classes with large expansion (or $\omega$-expansion) contain as subgraphs ``short'' subdivisions of any cubic graphs, and in particular of cubic expanders.
These graphs are not proper expanders, but are $(\tau_{\min}, \tau_{\max}, f)$-expanders for suitable functions $\tau_{\min}, \tau_{\max}$ and $f$.
Hence, by \Cref{lem:from-growth-2-red-diameter}, we know that any contraction sequence of these graphs will eventually realize red graphs with small diameter compared to their number of vertices, which is sufficient to get a~lower bound on their cutwidth.

\begin{lemma}\label{lem:Kt-to-subexp}
Let $G$ be a graph containing $K_t$ as a $d$-shallow minor.
Then $G$ contains as subgraph a~$(\le 4d)$-subdivision of any $t$-vertex cubic graph.
\end{lemma}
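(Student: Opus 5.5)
Take a minimal model $B_1,\dots,B_t$ of $K_t$ in $G$ in which each branch set $B_i$ is a tree of radius at most $d$, centred at a vertex $c_i$. Because $K_t$ is complete, for every pair $i\neq j$ we can fix one edge $e_{ij}=u_{ij}v_{ij}$ of $G$ with $u_{ij}\in V(B_i)$ and $v_{ij}\in V(B_j)$. Let $H$ be any $t$-vertex cubic graph, and identify its vertices with $[t]$. The plan is to map vertex $i$ of $H$ to the centre $c_i$. Each edge $ij\in E(H)$ is then realised by the walk that goes from $c_i$ to $u_{ij}$ inside $B_i$, crosses $e_{ij}$, and goes from $v_{ij}$ to $c_j$ inside $B_j$. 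Each of the two tree paths has length at most $d$, so every such path has length at most $2d+1$ and thus at most $2d\le 4d$ internal vertices. Paths assigned to different edges of $H$ pass through different edges $e_{ij}$, so they can only meet inside branch sets.

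The main obstacle is internal disjointness. Within a single tree $B_i$, the three paths from $c_i$ towards $u_{ij_1},u_{ij_2},u_{ij_3}$ (for the three neighbours $j_1,j_2,j_3$ of $i$ in $H$) may share an initial segment. So $c_i$ itself need not be the right branching vertex. I would fix this locally in each tree, using only that three paths are involved. The minimal subtree $T_i$ of $B_i$ spanning $u_{ij_1},u_{ij_2},u_{ij_3}$ is a subdivided star with at most three leaves; degenerate cases where some $u$'s coincide or lie on one another's paths are possible. When $T_i$ has a vertex of degree $3$, I take that vertex as the new branch vertex $b_i$, and the three legs of $T_i$ are internally disjoint paths from $b_i$ to the three attachment points. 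When $T_i$ is a path, I take $b_i$ to be one of the attachment points in the middle of this path (or an endpoint where $u$'s coincide) and use the segments of $T_i$ as the legs. Distances inside $T_i\subseteq B_i$ are at most $2d$ (through $c_i$), so each leg has length at most $2d$.

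Coinciding attachment points are the genuinely degenerate case. If, say, $u_{ij_1}=u_{ij_2}$, two of the edges $e_{ij}$ meet at the same vertex. In that case I would set $b_i:=u_{ij_1}$, so that the two edges $e_{ij_1}$ and $e_{ij_2}$ leave $b_i$ directly with legs of length $0$, and the third leg is a tree path of length at most $2d$. If all three attachment points coincide, that vertex is $b_i$ and every leg has length $0$. In every case the three legs are internally disjoint, are contained in $B_i$, and avoid $b_i$ except at their common endpoint.

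Finally I assemble the subdivision: for each edge $ij\in E(H)$, concatenate the leg of $T_i$ from $b_i$ to $u_{ij}$, the edge $e_{ij}$, and the leg of $T_j$ from $v_{ij}$ to $b_j$. Its length is at most $2d+1+2d=4d+1$, so it has at most $4d$ internal vertices. Paths belonging to distinct edges of $H$ intersect only at shared endpoints $b_i$: inside each $B_i$ only the three legs of $T_i$ are used, these are internally disjoint by construction, and the branch sets themselves are pairwise disjoint. Since each $b_i$ is a vertex of $B_i$, the branch vertices are distinct. This gives a $(\le 4d)$-subdivision of $H$ as a subgraph of $G$.
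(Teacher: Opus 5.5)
Your proof is correct and follows essentially the paper's route: your Steiner subtree $T_i$ of the three attachment points is exactly what the paper obtains by taking a minimal model of the cubic graph (each branch set becomes a tree with at most three leaves), and your $b_i$ is the paper's unique degree-$3$ vertex $c(v)$ of $G'$, with the same $2d+1+2d$ length bound. Your explicit handling of coinciding or collinear attachment points spells out what the paper leaves implicit, and your argument does not actually need minimality, only tree branch sets of radius at most $d$.
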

\begin{proof}
Let $H$ be a~cubic graph with $V(H)=[t]$. 
Since $H$ is a~subgraph of $K_t$, it is a $d$-shallow minor of $G$, and therefore, there exists a minimal model $\mathcal M = \{B_1, \ldots, B_t\}$ of~$H$ in $G$ such that the radius of each $B_i$ is at~most~$d$.
By minimality, each $B_i$ is a tree with at most three leaves.
Since $\mathcal M$ is a model of $H$, for each edge $uv \in E(H)$, there is an edge of $G$ between $B_u$ and $B_v$.
We denote this edge by $\mathcal M(uv)$ (if there are more than one such edge, choose any of them).

Let $G'$ be the subgraph of $G$ obtained by adding to $\bigcup_{i \in [t]} B_i$ the edges $\mathcal M(uv)$ for $uv \in E(H)$.
Each part $V(B_v)$ contains exactly one vertex of degree 3 in $G'$.

We denote this vertex by $c(v)$.
For each edge $uv \in E(H)$, there is a~unique path $P(uv)$ of $G'[V(B_u) \cup V(B_v)]$ going from $c(u)$ to $c(v)$.
Since the radius of any branch set is at~most~$d$, $P(uv)$ contains at~most $4d+1$ edges.
Moreover for any two edges $uv, wx \in E(H)$, the paths $P(uv)$ and $P(wx)$ are internally vertex-disjoint.
Hence the vertices $\{c(u):u\in V(H)\}$ and the paths $\{P(uv):uv \in E(H)\}$ form a~$(\le 4d)$-subdivision of~$H$.
\end{proof}

A~function $g$ from $\mathbb R_+$ to $\mathbb R_+$ is \emph{sub-homogeneous} if for every $x \geqslant 1, y > 0$, $g(xy) \leqslant x g(y)$.
Note that this implies that for every $x \leqslant 1, y > 0$, it holds that $g(xy) \geqslant x g(y)$.

\begin{lemma}\label{lem:subdiv-expander}
If $G$ is a $(\tau_{\min}, \tau_{\max}, f)$-expander of maximum degree $\Delta$, such that $\tau_{\min}, \tau_{\max}$ and $f$ are non-decreasing and sub-homogeneous, then any $(\leqslant t)$-subdivision of $G$ is a $(c \cdot \tau_{\min}, \frac{\tau_{max}}{c}, \frac{f}{c \Delta^2})$-expander with $c := \Delta t + 1$.
\end{lemma}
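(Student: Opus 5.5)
Let $G'$ be a $(\leqslant t)$-subdivision of $G$, with $n = |V(G)|$ and $n' = |V(G')|$. Since $G$ has at most $\Delta n/2$ edges, each subdivided at most $t$ times, we have $n \le n' \le n + \Delta t n/2 \le c\,n$. The plan is to project a vertex set of $G'$ onto $G$, use the expansion of $G$ there, and lift the resulting boundary back to $G'$.

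\textbf{Projection.} Take $U' \subseteq V(G')$ with $c\,\tau_{\min}(n') \le |U'| \le \tau_{\max}(n')/c$. Map each vertex of $U'$ to $V(G)$: a branching vertex maps to itself, and a subdivision vertex on the path of an edge $uv$ maps to both $u$ and $v$. Let $U \subseteq V(G)$ be the image, and restrict to the branching vertices $U_0 := U' \cap V(G)$ together with the endpoints of edges whose subdivision path meets $U'$.

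\textbf{Case split.} Let $k$ be the number of edge-paths that meet $U'$ but are not entirely contained in it.
\begin{enumerate}
\item If $k$ is large, say $k \ge |U'|/(c\Delta^2)$, then each such path contributes a distinct vertex of $N_{G'}(U')$, namely the first vertex along the path outside $U'$, counted once per path or shared with a branching vertex. Up to a factor $\Delta$ this gives $|N_{G'}(U')| \gtrsim k/\Delta$, and then sub-homogeneity lets us compare with $f(|U'|)/(c\Delta^2)$. Here we need $f(x) \le x$, or else handle the comparison directly.
\item Otherwise, most of $U'$ consists of full paths together with their branching endpoints. Set $U := U' \cap V(G)$. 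Each branching vertex carries at most $\Delta$ incident paths of length at most $t$, so $|U'| \le c\,|U| + tk$. This gives $|U| \ge |U'|/c - (\text{small})$, and also $|U| \le |U'|$.
\end{enumerate}

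\textbf{Applying expansion in $G$.} Monotonicity and sub-homogeneity of $\tau_{\min}$ and $\tau_{\max}$, together with $n \le n' \le cn$, give $\tau_{\min}(n) \le |U| \le \tau_{\max}(n)$. Expansion of $G$ then yields $|N_G(U)| \ge f(|U|) \ge f(|U'|/c) \ge f(|U'|)/c$, the last step by sub-homogeneity.

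\textbf{Lifting back.} Each $w \in N_G(U)$ is joined to some $u \in U$ by a subdivided path. Either $w \in N_{G'}(U')$, or that path is cut by a vertex of $N_{G'}(U')$ lying on an edge-path incident to $w$. Charging each vertex of $N_{G'}(U')$ to at most $\Delta$ vertices $w$, which is at most $\Delta^2$ in the mixed situations, gives $|N_{G'}(U')| \ge |N_G(U)|/\Delta^2$. This yields the claimed $f/(c\Delta^2)$ bound.

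\textbf{Main obstacle.} The hard part is the bookkeeping in the mixed case: partially covered paths, and the fact that $U'$ may contain subdivision vertices whose branching endpoints are absent. There are two tools here. First, any partially covered path already contributes a boundary vertex to $N_{G'}(U')$. Second, one may WLOG enlarge $U'$ by adding the branching endpoints of fully covered paths, which changes $|U'|$ and $N_{G'}(U')$ by at most a factor $\Delta$. If that becomes too fiddly, I would first reduce to the case where $U'$ is closed under taking full paths, paying a factor $c$ in the size window.
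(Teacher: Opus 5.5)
There is a genuine gap, and it lies in the projection step. In Case~2 you project with $U := U'\cap V(G)$. This set can be much smaller than $|U'|/c$; it is empty when $U'$ contains only subdivision vertices. Your bookkeeping only yields $|U|\ge (|U'|-tk)/c$. The paragraph ``Applying expansion in $G$'' nevertheless uses $\tau_{\min}(n)\le |U|$ and $f(|U|)\ge f(|U'|/c)$, and both need $|U|\ge |U'|/c$. At the bottom of the window, $|U'| = c\,\tau_{\min}(n')$, a single partially covered path already breaks them.

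Case~1 has its own hole. From $|N_{G'}(U')|\ge k/\Delta$ and $k\ge |U'|/(c\Delta^2)$ you only get $|U'|/(c\Delta^3)$. Comparing this with $f(|U'|)/(c\Delta^2)$ needs $f(x)\le x/\Delta$, which is not a hypothesis and does not follow from the others: sub-homogeneity only makes $f(x)/x$ non-increasing, and maximum degree $\Delta$ only forces $f(x)\le \Delta x$ on the window. Moving the threshold so that Case~1 works just enlarges the additive loss $tk$ in Case~2. Your other candidate projection, sending a subdivision vertex to \emph{both} endpoints, fails in the opposite direction: its image can have up to $2|U'|$ vertices, so the upper end $|U|\le\tau_{\max}(n)$ of the window is lost. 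Finally, ``paying a factor $c$ in the size window'' would prove a weaker statement than the one claimed.

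The missing idea is a \emph{single-valued} projection with small connected fibres, which makes the case split unnecessary. Assign every vertex of $G'$ to a nearest branching vertex; a subdivision vertex of the $uv$-path then goes to $u$ or $v$. The classes $P_u$, $u\in V(G)$, partition $V(G')$. Each contains $u$, induces a connected subgraph, and has at most $1+\Delta t=c$ vertices. For $Q:=\{u : P_u\cap U'\neq\emptyset\}$ this gives exactly $|U'|/c\le |Q|\le |U'|$. Hence, by monotonicity and sub-homogeneity, $\tau_{\min}(n)\le\tau_{\min}(n')\le |U'|/c\le |Q|\le |U'|\le \tau_{\max}(n')/c\le \tau_{\max}(cn)/c\le\tau_{\max}(n)$, and $|N_G(Q)|\ge f(|Q|)\ge f(|U'|/c)\ge f(|U'|)/c$.

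To lift, for each $v\in N_G(Q)$ pick a neighbour $u\in Q$. The graph $G'[P_u\cup P_v]$ is connected, meets $U'$ inside $P_u$, and $P_v$ avoids $U'$ since $v\notin Q$. So it contains a vertex of $N_{G'}(U')$. A vertex of $P_x$ can be selected only for $v\in N_G[x]$, i.e.\ at most $\Delta+1\le \Delta^2$ times (for $\Delta\ge 2$). This yields $|N_{G'}(U')|\ge f(|U'|)/(c\Delta^2)$, and it is the paper's proof.
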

\begin{proof}
Let $G'$ be a $(\leqslant t)$-subdivision of~$G$. Let $n=|V(G)|$.
Note that $|V(G')| \le c n$.
We consider the partition $\mathcal{P} = \{P_u~:~u \in V(G)\}$ of $V(G')$ where each vertex $v$ of $G'$ belongs to the part $P_u$ with $u$ the vertex of $G$ with minimum distance to~$v$, breaking ties arbitrarily.
In particular, for any $u$, $|P_u| \le c$.

Let $S \subseteq V(G')$ be such that $c \cdot \tau_{\min}(|V(G')|) \leqslant |S| \leqslant \tau_{\max}(|V(G')|)/c$.
In particular, $c \cdot \tau_{\min}(n) \le |S| \le \tau_{\max}(cn)/c$.
Let $Q$ be the set of vertices $u \in V(G)$ such that $P_u$ contains a vertex of $S$.
Clearly, since $|S|/c \le |Q| \le |S|$, we have by the monotonicity and sub-homogeneity of $\tau_{\min}$ and $\tau_{\max}$ that 
\[
\tau_{\min}(n) \le 
\frac{c \cdot \tau_{\min}(n )}{c} \le
|Q| \le
\frac{\tau_{\max}(c n)}{c} \le
\tau_{\max} (n).
\]
Hence, because $G$ is a $(\tau_{\min}, \tau_{\max}, f)$-expander, we have $|N_G(Q)| \ge f(|Q|) \ge f(|S|/c) \ge f(|S|)/c$ (the last inequality coming from sub-homogeneity).
Since none of the parts $P_u, u \in N_G(Q)$ contains a vertex of $S$, each pair $P_u, P_v$ with $u \in Q$ and $v \in N_G(u) \setminus Q$ contains a vertex in $N_{G'}(S)$.
These vertices are disjoint whenever the parts involved are disjoint.
Hence, since the maximum degree of $G$ is $\Delta$, we have that a vertex of $w$ can participate in at most  $\Delta^2$ such pairs.
This implies that $N_{G'}(S) \ge f(|S|)/(c \Delta^2)$.
\end{proof}

\newcommand{\cst}{\mathcal{\mu}}

\begin{corollary}\label{cor:super-poly-exp-2-exp}
If $\mathcal C$ is of superpolynomial expansion, then there exist functions $\tau_{\min}(n) = n^{o(1)}$, $\tau_{\max}(n) = n^{1 - o (1)}$ and $f(n) = n^{1 - o(1)}$ such that $\mathcal C$ contains an infinite family of $(\tau_{\min}, \tau_{\max}, f)$-expanders as subgraphs.
\end{corollary}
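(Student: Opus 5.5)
The plan is to combine \cref{thm:poly-exp-poly-clique}, \cref{lem:Kt-to-subexp} and \cref{lem:subdiv-expander}, applied to a family of bounded-degree expanders. Since $\mathcal C$ does not have polynomial expansion, \cref{thm:poly-exp-poly-clique} says its $\omega$-expansion is not polynomial. Hence for every constant $c>0$ there are arbitrarily large $d$ such that $K_{t}$ is a $d$-shallow minor of some $G\in\mathcal C$ with $t \geqslant d^{c}$. Applying this with $c=1,2,3,\dots$ gives a sequence $(G_j,d_j,t_j)$ with $d_j\to\infty$ and $t_j \geqslant d_j^{\,j}$. In other words, $d_j = t_j^{o(1)}$ as $j\to\infty$. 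We may take $t_j$ even (replacing $t_j$ by $t_j-1$ if necessary), so that cubic graphs on $t_j$ vertices exist.

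Next fix a family $(H_t)$ of cubic expanders on $t$ vertices. Concretely, for every $U$ with $|U|\leqslant t/2$ we require $|N(U)|\geqslant \eta |U|$ for a fixed $\eta>0$. This makes $H_t$ a $(1, t/2, \eta x)$-expander, and all three functions are non-decreasing and sub-homogeneous. By \cref{lem:Kt-to-subexp}, $G_j$ contains as a subgraph a $(\leqslant 4d_j)$-subdivision $H'_j$ of $H_{t_j}$. By \cref{lem:subdiv-expander} with $\Delta=3$ and $c_j=12d_j+1$, the graph $H'_j$ is a $(c_j,\ \tfrac{t_j}{2c_j},\ \tfrac{\eta x}{9c_j})$-expander.

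The remaining step, and the one requiring care, is to rewrite these bounds in terms of $n_j:=|V(H'_j)|$. We have $t_j\leqslant n_j\leqslant c_j t_j$, so $c_j = t_j^{o(1)} = n_j^{o(1)}$ and $t_j = n_j^{1-o(1)}$. This gives $\tau_{\min}(n_j)=c_j=n_j^{o(1)}$ and $\tau_{\max}(n_j)=t_j/(2c_j)=n_j^{1-o(1)}$. For the third function the statement needs $f(x)=x^{1-o(1)}$, which I read with the $o(1)$ measured in $n$ (or for $x\geqslant \tau_{\min}$). On the relevant range $x\leqslant n_j$ we have $\eta x/(9c_j) \geqslant x\cdot n_j^{-o(1)}$. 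This is consistent with how \cref{lem:from-growth-2-red-diameter} uses the bound $f(n)=\Omega(n^a)$ for any fixed $a<1$ along the family.

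The main obstacle is bookkeeping. The functions $\tau_{\min}$, $\tau_{\max}$ and $f$ must be defined on all $n$, not only on the sizes $n_j$. I would define them on the sequence $(n_j)$ and extend monotonically, passing to a subsequence with $n_j$ strictly increasing and growing fast enough that the extension keeps the $o(1)$ exponents. I would also check that the loss from \cref{lem:subdiv-expander} (dividing by $c_j$, and the change from $t_j$ to $n_j$ vertices) only costs an $n^{o(1)}$ factor, which holds because $d_j$ is subpolynomial in $t_j$.
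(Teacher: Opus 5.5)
Your plan follows the paper's route: Dvo\v{r}\'ak's theorem, then \cref{lem:Kt-to-subexp} and \cref{lem:subdiv-expander} applied to cubic expanders, then rewriting the bounds in terms of $n_j$. Your bounds $\tau_{\min}(n_j)=c_j=n_j^{o(1)}$ and $\tau_{\max}(n_j)=t_j/(2c_j)=n_j^{1-o(1)}$ are fine.

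\textbf{The gap is the third function.} In the definition of a $(\tau_{\min},\tau_{\max},f)$-expander, $f$ is applied to $|U|$ alone. Your range starts at $|U|=c_j=12d_j+1$, and there your only guarantee $|N(U)|\geqslant \eta|U|/(9c_j)$ equals the constant $\eta/9$. Hence any valid $f$ must satisfy $f(c_j)\leqslant \eta/9$ for infinitely many $c_j\to\infty$. So no $f(x)=x^{1-o(1)}$ exists for your choice of $\tau_{\min}$.

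Reading the $o(1)$ ``in $n$'' does not rescue this. The bound $x\cdot n_j^{-o(1)}$ is not $x^{1-o(1)}$ when $x$ is itself $n_j^{o(1)}$: at $x=c_j$ the first is $O(1)$ while the second tends to infinity. Your claim that this is consistent with \cref{lem:from-growth-2-red-diameter} is also false. That lemma needs $f(x)\geqslant 2\lambda x^a$ for every $x$ in its range, starting from $n_0\geqslant\max(\log n,\tau_{\min}(n))$. In your construction this starting value can be of order $d_j$, where $\eta x/(9c_j)=O(1)\ll x^a$.

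\textbf{The missing idea is to raise the lower threshold.} Take $\tau_{\min}(n_j)$ superpolynomial in $d_j$ but still $t_j^{o(1)}$, for instance $\tau_{\min}(n_j):=d_j^{\sqrt{j}}$. This satisfies $d_j^{\sqrt{j}}\leqslant t_j^{1/\sqrt{j}}$ because $t_j\geqslant d_j^{\,j}$.
\begin{itemize}
\item Shrinking the range keeps the expander property.
\item For $x\geqslant d_j^{\sqrt j}$ you have $d_j\leqslant x^{1/\sqrt j}$, hence $\eta x/(9c_j)\geqslant \frac{\eta}{117}\,x^{1-1/\sqrt{j}}$.
\item Index $j$ only concerns $x\leqslant \tau_{\max}(n_j)$, so only large $j$ matter for large $x$. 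A single function $f(x)=x^{1-o(1)}$ therefore works for the whole family.
\end{itemize}

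This is exactly the adjustment in the paper. It writes the clique size as $g(d)=h(d)^{\omega(1)}$ with $h$ superpolynomial. It then replaces $\tau_{\min}=100\mu d$ by $\tau'_{\min}:=h(100\mu d)=d^{\omega(1)}$. Then $x/(100\mu d)\geqslant x^{1-o(1)}$ for all $x\geqslant\tau'_{\min}$, while $\tau'_{\min}$ is still $n^{o(1)}$.
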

\begin{proof}
Since $\mathcal C$ is of superpolynomial expansion, \Cref{thm:poly-exp-poly-clique} implies there exist a function $g : d \mapsto d^{\omega(1)}$ and a family of graphs $(G_d)_{d \in \Nn}$ such that $G_d$ contains $K_{g(d)}$ as a $d$-shallow minor.
Since $g(d) = d^{\omega(1)}$, there exists a superpolynomial function $h(d) = d^{\omega(1)}$ such that $g(d) = h(d)^{\omega(1)}$.

For each $d$, we fix some cubic $(1, n/\cst, \cst \cdot n)$-expander $\Gamma_d$ on $f(d)=d^{1-o(1)}$ vertices, with $\mu > 0$ a~universal constant~(see for instance~\cite{Hoory06}).
By applying \Cref{lem:Kt-to-subexp} to each $G_d$, we obtain a family of graphs $(H_d)_{d \in \Nn}$, where $H_d$ is a~subgraph of~$G_d$ and a~$(\le 4d)$-subdivision of $\Gamma_d$.
By~\Cref{lem:subdiv-expander}, $H_d$ is in particular a~$(\tau_{\min}, \tau_{\max}, f)$-expander where
$\tau_{\min} (|V(H_d)|) = c'$,
$\tau_{\max} (|V(H_d)|) = |V(H_d)|/c'$ and
$f(x) = x/c'$
with 
\[
c' := 100 \mu \cdot d\ge
9\cst \cdot (3d + 1) = 
\cst \cdot (\Delta(\Gamma_d)d + 1)\cdot \Delta(\Gamma_d)^2.
\]
where we weakened the outcome of~\Cref{lem:subdiv-expander} to ease the computations.
It remains to adjust a~bit these functions: indeed, as is, $\tau_{\min}$ is not sufficient large to allow the required expansion.

Consider now the function $\tau_{\min}' := h \circ \tau_{\min}$.
For any $x \ge \tau'_{\min}(|V(H_d)|) \ge d^{\omega(1)}$, we have that
\[
f(x) \ge x/(100 \cst \cdot d) \ge x^{1 - 1/\omega(1)} \ge x^{1 - o(1)},
\]
which allows us to consider a function $f'$ of the form $x \mapsto x^{1 - o(1)}$ such that, when $d$ tends to infinity, $x \ge \tau'_{\min}(|V(H_d)|)$ implies $f(x) \ge f'(x)$.
Finally, we have $\tau'_{\min}(|V(H_d)|)^{\omega(1)} \le \tau_{\max}(|V(H_d)|)$.
Hence, the family $(H_d)_{d \in \Nn}$ is a family of~$(\tau'_{\min}, \tau_{\max}, f')$-expanders that satisfy the lemma.
\end{proof}

The next two lemmas give an upper bound on the number of vertices of a~graph of bounded diameter by a~polynomial function of its cutwidth.
We first show that for trees.

\begin{lemma}[Folklore]\label{lem:cutw2diamtree}
Let $T$ be a~tree with cutwidth at~most~$c$ and diameter at~most~$p$.
Then, $|V(T)| \leqslant (p+1)^c$.
\end{lemma}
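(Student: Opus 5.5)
We argue by induction on the cutwidth~$c$, using the classical fact that a tree of large cutwidth must contain a vertex whose removal leaves at least three components of large cutwidth. Since $p$ and $c$ are fixed only through the statement, we prove the slightly more convenient form: a tree $T$ of cutwidth at most~$c$ in which every vertex lies within distance~$p$ of the endpoints of a longest path satisfies $|V(T)| \le (p+1)^c$. The base case $c=0$ gives a single vertex, and $c=1$ forces $T$ to be a path (a vertex of degree~$3$ already forces cut $2$ in any ordering), so $|V(T)| \le p+1$.

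For the induction step, fix a longest path $Q$ in~$T$; it has at most $p+1$ vertices. Every vertex of $T$ not on $Q$ lies in a subtree $T_x$ hanging off some vertex $x \in V(Q)$ (the component of $T - E(Q)$ containing~$x$). The key claim is that each component of $T_x - x$, and more strongly each subtree hanging off~$Q$, has cutwidth at most $c-1$. Indeed, if some tree $T'$ hanging off $Q$ at $x$ (attached to $x$ by an edge) had cutwidth $c$, then the three trees formed by $T'$ together with the two subpaths of $Q$ on either side of~$x$ would violate the standard lemma. Here we would argue directly: in an optimal ordering of $T$, consider the interval spanned by $V(T')$; along it, a cut must separate $T'$ with width $c$, and the path $Q$ has to cross that interval or pass from one side to the other. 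If $x$ is not an endpoint of~$Q$, then $Q$ has vertices on both sides of~$x$, and the path $Q \cup \{x\text{-}T' \text{ edge}\}$ contributes an extra crossing edge to the relevant cut, giving width $c+1$. If $x$ is an endpoint of~$Q$, then $T'$ cannot have height exceeding zero by maximality of~$Q$.

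Granting the claim, each vertex $x$ of $Q$ carries at most $\deg(x)-2$ hanging subtrees of cutwidth at most $c-1$ and diameter at most~$p$. One then needs to control their total size, not just each one. To do this, we merge all hanging subtrees at $x$ into one tree $T_x$ rooted at~$x$ and bound $|V(T_x)|$ by $(p+1)^{c-1}$ through the induction hypothesis, applied to $T_x$, which has cutwidth at most $c-1$ by the same crossing argument applied to the whole of $T_x \setminus \{x\}$. Then
\[ |V(T)| \le \sum_{x \in V(Q)} |V(T_x)| \le (p+1)\cdot (p+1)^{c-1} = (p+1)^c. \]

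The main obstacle is the crossing argument showing that $T_x$, with $x$ an internal vertex of~$Q$, has cutwidth at most $c-1$. This argument is delicate because $Q$ might lie entirely to one side of the critical cut of $T_x$. The cleaner route I would try first is to take the \emph{centre} of $T$ (radius at most $\lceil p/2\rceil$) instead of an arbitrary path, and use the lemma that if three branches at a vertex each have cutwidth $\ge k$, then the tree has cutwidth $\ge k+1$. This gives that all but at most two branches at every vertex have cutwidth at most $c-1$, and the two exceptional branches form a path-like spine of length at most~$p$, recovering the decomposition above.
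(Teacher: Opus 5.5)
\textbf{There is a genuine gap.} Your induction step rests on the claim that the parts hanging off a longest path $Q$, and in particular the merged tree $T_x$, have cutwidth at most $c-1$. This claim is false. A longest path has no relation to the ordering, and it can lie entirely on one side of every critical cut of a hanging part. That is exactly the worry you raise, and it cannot be argued away.

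Here is a concrete example. Let $Q=q_{-4}\cdots q_0\cdots q_4$ and $x=q_0$. Attach a star $K_{1,4}$ with centre $z$ and leaves $\ell_1,\dots,\ell_4$ through the edge $\ell_4x$.
\begin{itemize}
\item $Q$ is the unique longest path: it has length $8$, against $7$ for $\ell_1z\ell_4xq_1\cdots q_4$.
\item The ordering $\ell_1,\ell_2,z,\ell_3,\ell_4,x,q_1,\dots,q_4,q_{-1},\dots,q_{-4}$ has every cut of size at most $2$, so $c=2$.
\item The hanging star, and hence $T_x$, has cutwidth $2=c$, so the induction hypothesis cannot be applied to it.
\end{itemize}

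Your fallback does not repair this. The three-branch lemma does hold for cutwidth in the direction ``three branches of cutwidth $\ge k$ at a vertex force cutwidth $\ge k+1$''. The converse you invoke at the start fails, e.g.\ for $K_{1,2c}$ with $c\ge 2$. More importantly, the lemma only controls the branches at $x$ individually, not $x$ together with all its non-spine branches. In $K_{1,2c}$, no branch at the centre has cutwidth $c$, yet the centre together with its branches is the whole tree. If you bound the up to $2c-2$ non-spine branches per spine vertex separately, you only get roughly $(p+1)\bigl(1+(2c-2)(p+1)^{c-1}\bigr)$, not $(p+1)^c$. The fallback also never specifies a path with the required property.

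The missing idea is to let the ordering choose the path.
\begin{itemize}
\item Fix an ordering $\prec$ of $V(T)$ of width at most $c$. Let $P$ be the path of $T$ between the $\prec$-first and the $\prec$-last vertex.
\item Every cut of $\prec$ separates these two vertices, so it contains an edge of $P$.
\item For each $v\in V(P)$, let $C_v$ be the component of $T-E(P)$ containing $v$. The edges of $T[C_v]$ crossing a cut of the restricted ordering all cross a corresponding cut of $\prec$. That cut also contains an edge of $P$, and no edge of $P$ lies in $T[C_v]$. So $C_v$ has cutwidth at most $c-1$ under the restriction of $\prec$.
\item Since $T[C_v]$ is a subtree, it has diameter at most $p$, and $|V(P)|\le p+1$.
\end{itemize}
Induction then gives $|V(T)|=\sum_{v\in V(P)}|C_v|\le(p+1)\cdot(p+1)^{c-1}$. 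This is the paper's argument. The point is that $P$ crosses every cut by construction, which is precisely the guarantee a longest path lacks.
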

\begin{proof}
We prove the result by induction of the cutwidth of~$T$.
Since a connected graph has cutwidth~1 if and only if it is a path, and a path of diameter~$p$ has~$p+1$ vertices, the lemma holds for~$c = 1$.

Assume the lemma holds for graphs of cutwidth at most~$c$, and let~$T$ be a tree of cutwidth~$c+1$.
Let~$\prec$ be a total order of~$V(T)$ witnessing the cutwidth of~$T$.
Let~$P = v_1 \cdots v_k$ be a shortest path of~$T$ connecting the smallest vertex~$v_1$ to the largest vertex~$v_k$ along $\prec$.
Since~$P$ is a shortest path, we have~$k \le p + 1$.
Now, for any~$i \in [k]$, let $C_i$ be the vertex set of the connected component of~$T-E(P)$ containing~$v_i$.

The diameter of~$T[C_i]$ is at most~$p$ since~$T[C_i]$ is a~subtree of~$T$.
Furthermore, the cutwidth of~$T[C_i]$ is at most~$c-1$ along the order~$\prec$:
indeed, any cut of~$G$ along~$\prec$ contains an edge from~$P$.
Hence~$|C_i| \le (p+1)^c$ by the induction hypothesis.
Since~$|V(T)| = \sum_{i \in [k]}|C_i|$, we have~$|V(T)| \le k \cdot (p+1)^c \le (p+1)^{c+1}$.
\end{proof}

\begin{lemma}\label{cor:cutw2diam}
Let $G$ be a~connected graph with cutwidth at~most~$c$ and diameter at~most~$p$.
Then, $|V(G)| \leqslant (2p+1)^c$.
\end{lemma}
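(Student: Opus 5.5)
The plan is to reduce to \cref{lem:cutw2diamtree} by passing to a suitable spanning tree of~$G$.

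First, fix a vertex $r \in V(G)$ and let $T$ be a breadth-first-search tree of~$G$ rooted at~$r$. Since $G$ is connected with diameter at most~$p$, every vertex is at distance at most~$p$ from~$r$ in~$G$. A BFS tree preserves distances from the root, so every vertex is at distance at most~$p$ from~$r$ in~$T$. Hence any two vertices of~$T$ are at distance at most~$2p$ in~$T$, i.e., $\diam(T) \le 2p$.

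Second, cutwidth is monotone under taking subgraphs on the same vertex set. The ordering of $V(G)$ witnessing cutwidth at most~$c$ also orders $V(T)$, and each cut of~$T$ along it consists of a subset of the edges of the corresponding cut of~$G$. So $T$ has cutwidth at most~$c$.

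Applying \cref{lem:cutw2diamtree} to~$T$ with diameter bound $2p$ gives $|V(G)| = |V(T)| \le (2p+1)^c$, as desired.

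There is no real obstacle. The only point needing care is that an arbitrary spanning tree may have much larger diameter than~$G$. This is exactly why I use a BFS tree and accept the factor-$2$ loss in the diameter, which accounts for the $2p+1$ in the statement.
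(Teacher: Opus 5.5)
Your proposal is correct and matches the paper's proof essentially step for step. Both take a BFS spanning tree rooted at an arbitrary vertex, bound its diameter by $2p$ via its radius, use that cutwidth does not increase when passing to the spanning subgraph $T$, and then apply \cref{lem:cutw2diamtree}.
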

\begin{proof}
Let~$v$ be a vertex of~$G$, and let~$T$ be a~breadth-first search (spanning) tree rooted at~$v$.
As the diameter of~$G$ is at most~$p$, the radius of~$T$ is at most~$p$, thus its diameter is at~most~$2p$.
Since~$T$ is a subgraph of~$G$, the cutwidth of~$T$ is at most~$c$, too.
Hence by \Cref{lem:cutw2diamtree}, we have that  $|V(G)|=|V(T)|\le (2p+1)^c$.
\end{proof}

\begin{theorem}
Weakly sparse graphs of bounded reduced cutwidth have polynomial expansion.
\end{theorem}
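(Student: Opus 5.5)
We argue by contradiction. Let $\mathcal C$ be a weakly sparse class (no $K_{r,r}$ subgraph) of reduced cutwidth at most~$c$, and suppose $\mathcal C$ does not have polynomial expansion. Our plan is to find, inside graphs of $\mathcal C$, expanders strong enough for \cref{lem:from-growth-2-red-diameter} to produce a red component whose size is large compared to its diameter, and then to contradict \cref{cor:cutw2diam}.

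\textbf{Step 1: extracting expanders.} By \cref{cor:super-poly-exp-2-exp}, there are functions $\tau_{\min}(n)=n^{o(1)}$, $\tau_{\max}(n)=n^{1-o(1)}$ and $f(n)=n^{1-o(1)}$, and an infinite family of graphs $H$ that are $(\tau_{\min},\tau_{\max},f)$-expanders and subgraphs of graphs $G\in\mathcal C$. These subgraphs need not lie in $\mathcal C$, so we must either pass the reduced-cutwidth bound to them or work directly in~$G$. Working in $G$ seems safer; alternatively, one could use that bounded reduced cutwidth implies bounded twin-width (red degree at most~$2c$) and invoke the fact that weakly sparse bounded twin-width is subgraph-closed~\cite{twin-width2}. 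That fact only gives a contraction sequence of bounded twin-width for~$H$, though, not one of bounded reduced cutwidth.

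\textbf{Step 2: applying \cref{lem:from-growth-2-red-diameter} in the right graph.} This is the main obstacle. My first attempt is to restrict a contraction sequence of~$G$ witnessing reduced cutwidth at most~$c$ to~$V(H)$. The restricted partitions form a contraction sequence of the induced subgraph $G[V(H)]$, and its red graphs are subgraphs of red graphs of~$G$, because restricting parts can only turn red edges black or non-edges. Since cutwidth is monotone under subgraphs, $G[V(H)]$ has reduced cutwidth at most~$c$. We would then need $G[V(H)]$ itself to be an expander. It is a supergraph of~$H$ on the same vertex set, so every neighbourhood only grows: $N_{G[V(H)]}(U)\supseteq N_H(U)$. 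Hence $G[V(H)]$ is still a $(\tau_{\min},\tau_{\max},f)$-expander, and it has no $K_{r,r}$ subgraph. This settles the obstacle.

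\textbf{Step 3: the exponents and the contradiction.} Fix any $a<1$ close to~$1$. For large $n$ we have $f(x)\ge x^{1-o(1)}\ge x^a$, and $\tau_{\max}(n)^{1-a}=n^{(1-a)(1-o(1))}$ dominates $(\log n+n^{o(1)})\log n$, so the hypotheses of \cref{lem:from-growth-2-red-diameter} hold with twin-width bound $d=2c$. The lemma then yields, in some red graph, a connected subgraph~$R$ with $|V(R)|=\Omega(\diam(R)^{a/(1-a)})$ and $|V(R)|=\Omega((\log n)^a)$, so $|V(R)|\to\infty$. On the other hand $R$ is a subgraph of a red graph, so it has cutwidth at most~$c$, and \cref{cor:cutw2diam} gives $|V(R)|\le(2\diam(R)+1)^c$. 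Choosing $a$ with $a/(1-a)>c$ makes these two bounds incompatible as $|V(R)|\to\infty$: the diameter must grow unboundedly, and then $\diam(R)^{a/(1-a)}$ eventually exceeds $(2\diam(R)+1)^c$ by more than any constant factor. Hence $\mathcal C$ has polynomial expansion.
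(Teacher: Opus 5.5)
Your proposal is correct and matches the paper's proof step for step. Both arguments extract $(\tau_{\min},\tau_{\max},f)$-expanders via \cref{cor:super-poly-exp-2-exp}, pass to $G[V(H)]$, apply \cref{lem:from-growth-2-red-diameter} with $a/(1-a)>c$ (the paper takes $a/(1-a)=2^k$), and contradict \cref{cor:cutw2diam}; your explicit check that restricting the contraction sequence to $V(H)$ keeps every red graph a subgraph of an original red graph is exactly the step the paper leaves implicit.
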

\begin{proof}
We prove the result by contradiction.
Assume that $\mathcal C$ is a class of $K_{t, t}$-subgraph-free graphs of superpolynomial expansion with reduced cutwidth at most $k$.
By \Cref{cor:super-poly-exp-2-exp}, there exist $\tau_{\min}:n \mapsto n^{o(1)}$, $\tau_{\max}: n \mapsto n^{1 - o(1)}$ and $f:n \mapsto n^{1 - o(1)}$
such that $\mathcal{C}$ contains an infinite family of graphs $(G_d)_d$ such that $G_d$ contains a $(\tau_{\min}, \tau_{\max}, f)$-expander $H_d$ on at least $d$ vertices as a~subgraph.
Note that the graphs $G_d[V(H_d)]$ are also $(\tau_{\min}, \tau_{\max}, f)$-expanders on at least $d$ vertices.

Let $a := 1 - 1/(1+2^k)$. We have $f(n) = \Omega(n^a)$, and $\tau_{\max}(n)^{1 - a} \ge (\log n + \tau_{\min}(n)) \log n$. Hence, by \Cref{lem:from-growth-2-red-diameter} we know that any contraction sequence of reduced cutwidth at~most~$k$ of $G_d[V(H_d)]$ eventually creates a~red connected graph $R_d$ of cutwidth at most $k$ with $|V(R_d)| = \Omega(\diam(R_d)^{\frac{a}{1-a}})$ and $|V(R_d)| = \Omega((\log d)^{a})$.
In particular, since the maximum degree of $R_d$ is at~most~$2k$, the diameter of $R_d$  

is an unbounded function of~$d$.

However, \Cref{cor:cutw2diam} implies that $|V(R_d)| \le (2 \diam(R_d) + 1)^k$.
Hence, $(2 \diam(R_d) + 1)^k = \Omega(\diam(R_d)^{\frac{a}{1-a}}) = \Omega(\diam(R_d)^{2^{k}})$; a~contradiction.
\end{proof}

\section{Stability under transductions}\label{sec:stability}
For a graph~$G = (V,E)$ and a parameter~$k \in \Nn$, we define the following two operations:
\begin{description}
  \item[power] The \emph{$k$th power} of~$G$, denoted~$\pow[k](G)$ is the graph with vertices~$V$, and an edge~$xy$ for all pairs of vertices at distance at most~$k$ in~$G$.
  \item[blowup] The \emph{$k$th blowup} of~$G$, denoted~$\blow[k](G)$, is obtained by duplicating each~$x \in V$ into a clique $\{x_1,\dots,x_k\}$, and connecting every~$x_i$ and~$y_j$ whenever~$xy$ is an edge of~$G$.
\end{description}

A graph parameter~$p$ is said to be stable under powers if there is some function~$f$ such that for every graph~$G$ and integer~$k$, $p(\pow[k](G)) \le f(p(G),k)$.
Being stable under blowups is defined similarly.
The goal of this section is to prove the following criteria for stability under first-order transductions of reduced parameters (see \cref{sec:FOprelim} for definitions).
\begin{restatable}[\cref{thm:transduction-intro} reformulated]{theorem}{transduction}
  \label{thm:transduction}
  Let~$p$ be a monotone parameter stable under powers and blowups.
  Then reduced-$p$ is stable under first-order transductions, that is, for any first-order transduction $\Phi$, there is a function~$f_\Phi$ such that for any graph~$G$ and $H \in \Phi(G)$,
  $\reduced{p}(H) \le f_\Phi(\reduced{p}(G))$.
\end{restatable}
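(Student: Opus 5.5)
We decompose a first-order transduction into its elementary pieces and handle each separately. A transduction is a composition of: (i) copying the vertex set a bounded number of times (with the copies of each vertex made adjacent, or distinguished by unary predicates); (ii) coloring with a bounded number of unary predicates; (iii) a first-order interpretation redefining the edge relation by a formula $\varphi(x,y)$; and (iv) taking an induced subgraph. By Gaifman's locality theorem, together with the standard normal form for transductions (every FO transduction is subsumed by a composition of copying, coloring, an \emph{immersive} or local interpretation, and \emph{perturbations}, i.e.\ flipping edges between a bounded number of color classes), it suffices to prove stability under four operations: taking induced subgraphs, $k$-copying (blowup-like), flips between boundedly many color classes, and replacing $G$ by a graph whose edges are defined by an $r$-local formula. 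The last operation produces a subgraph of the $r$-th power with edges determined by local types.

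For induced subgraphs, restrict the contraction sequence of $G$ to $X$: the resulting red graphs are subgraphs of the original red graphs, and monotonicity of $p$ handles this. For copying, contract first all copies of each vertex (they are twins up to the copy edges, so at most red edges appear inside a bounded blowup), then follow the sequence of $G$ on groups of copies; each red graph is a subgraph of $\blow[k]$ of the original red graph, so stability under blowups applies. For a flip between $c$ color classes, refine the contraction sequence so that we first merge within color classes only. The standard trick is to run the original sequence on the partition refined by colors, so that each original part splits into at most $c$ parts. The red graph of the refined trigraph is then contained in $\blow[c]$ of the original red graph, since the flip preserves homogeneity of pairs (flipping complements a homogeneous pair) and two parts coming from distinct colors of the same original part contribute at most a clique. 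At the end we merge $c$ parts, which creates only a bounded number of vertices.

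The key step is the local interpretation. Let $H$ have edges $xy$ iff $\mathrm{dist}_G(x,y)\le r$ and the local $r$-type of $(x,y)$ satisfies $\varphi$. Using the same contraction sequence, we claim each red edge of $H/\Pc$ joins two parts at distance at most $2r+1$, or some such bound, in the trigraph $G/\Pc$ via paths that must use a red edge. If two parts $X,Y$ are at distance more than $r$ using only black edges, and their $r$-neighborhoods in the trigraph are entirely black, then all pairs $(x,y)\in X\times Y$ have the same $r$-local type, since the black neighborhood structure is uniform, so $XY$ is homogeneous in $H$. Hence red edges of $H/\Pc$ lie within distance $O(r)$ of red edges in $G/\Pc$, so $\redG(H/\Pc)$ is a subgraph of $\pow[O(r)]$ of $\redG(G/\Pc)$, up to parts whose neighborhoods are all black.

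The main obstacle is making this precise. Unbounded black degree means the $r$-type of a pair is not determined only by the red structure. A purely black neighborhood is uniform, but a mixture of black and red is not, and parts reachable from both $X$ and $Y$ only via black edges may also matter. I would first try to enlarge the vocabulary by colors recording the types of parts, to reduce to a quantifier-free, or distance-bounded, statement. I would then argue by induction on $r$ that the $r$-type of a vertex of $X$ depends only on $X$ together with red parts within distance $r$, and conclude via stability under powers, with monotonicity of $p$ absorbing the subgraph relations.
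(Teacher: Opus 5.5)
Your reductions for induced subgraphs, copying and flips are sound. The normal form you invoke (copying, then an immersive transduction, then a perturbation) is a genuine theorem of Ne\v{s}et\v{r}il, Ossona de Mendez and Siebertz. The gap is in the key step, and it is not a matter of precision: the claim that, \emph{for the same contraction sequence}, $\redG(H/\Pc)$ lies in a bounded power of $\redG(G/\Pc)$ is false, even for an uncolored immersive formula.

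Take $G$ to be a star with center $w$ and leaves $x_1,y_1,\dots,y_m$, plus an isolated vertex $x_2$. Let the first contraction merge $x_1,x_2$ into $X$, and let $\phi(x,y)=\exists z\,(E(x,z)\wedge E(z,y))$. Then:
\begin{itemize}
\item In $G/\Pc$, the only red edge joins $X$ and $\{w\}$; each $\{y_j\}$ is joined to $\{w\}$ by a black edge. This sequence can be continued so that every red graph has at most one edge.
\item In $\phi(G)$, $x_1$ is adjacent to every $y_j$ and $x_2$ to none, so $X$ has red degree $m$ in $\phi(G)/\Pc$.
\end{itemize}
The paths from $X$ to the $y_j$ are a red edge followed by a black edge. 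Since black degree is unbounded, being ``within distance $O(r)$ of a red edge in the trigraph'' gives no degree control. The parts with all-black neighborhoods that you set aside are exactly what breaks the bound. Adding colors fixed in advance does not repair this, because how a part must be split depends on the current partition and has to stay consistent along the sequence.

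What survives is a \emph{rank} statement, plus a way to turn bounded rank back into homogeneity.
\begin{enumerate}
\item \emph{Bounded rank.} For each part $P$, the adjacency in $\phi(G)$ between $P$ and the union of all parts outside the red $r$-ball of $P$ has bounded rank (rank $1$ in the example). The paper first reduces to the case where bounded $p$ forces bounded degree, so the red $r$-ball contains boundedly many parts. Flipping away the black edges incident to these parts puts $P$ at distance more than $r$ from everything outside the ball. Gaifman locality relative to this bounded flip then shows that $\phi(x,y)$ depends only on boundedly many labels of $x$ and of $y$. Hence $\pow[r](\redG(G/\Pc_i))$ is a bounded-rank error graph for $\phi(G)$.
\item \emph{Nested refinement.} Split each part $P\in\Pc_i$ by its neighborhood type into the complement of its error-neighborhood, giving at most $2^k$ classes. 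Because the error graphs are monotone along the sequence, these refinements are nested. They can therefore be completed into a genuine contraction sequence whose red graphs are subgraphs of $2^{2k+1}$-blowups of the error graphs.
\end{enumerate}
Your proposal has neither the weakening from homogeneity to bounded rank nor this nested refinement, and an argument that keeps the contraction sequence fixed cannot succeed without them.
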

For simplicity, we present this result only in the case of graphs, but it is easy to extend it to binary relational structures,
i.e.\ relational structures consisting only of unary and binary relations
(see e.g.\ the note at the end of \cite[Section~7]{twin-width1} on how to extend twin-width to binary structures; the same works for reduced parameters).

\subsection{First-order logic, transductions, and Gaifman locality}\label{sec:FOprelim}
Let us first recall the definition of first-order transductions, and introduce the main necessary logical tool: Gaifman locality.

First-order formulas on graphs are formulas constructed from quantifiers $\forall x, \exists y$ on vertices,
boolean operations, and predicates $E(x,y)$ testing adjacency between vertices, and $x=y$ testing equality.
When a formula~$\phi$ has free variables, they are written as parameters $\phi(x_1,\dots,x_n)$.
For a graph~$G$ and vertices $v_1,\dots,v_n \in V(G)$, one denotes by $G \models \phi(v_1,\dots,v_n)$ the fact that~$G$ satisfies~$\phi(x_1,\dots,x_n)$ when~$x_i$ is interpreted as~$v_i$, defined in the obvious way.
The \emph{quantifier rank} of a formula is the maximum nesting depth of quantifiers.

More generally, we will work with \emph{colored graphs}, meaning a graph~$G$ augmented by unary predicates $U_1,\dots,U_\eta \subseteq V(G)$.
A first-order formula on the colored graph $G^+ = (G,U_1,\dots,U_\eta)$ can additionally use predicates to test if a vertex~$x$ belongs to~$U_i$, denoted $U_i(x)$.

Let $G^+ = (G,U_1,\dots,U_\eta)$ be a colored graph, and~$\phi(x,y)$ a formula with two free variables, which may use the unary predicates $U_1,\dots,U_\eta$.
The \emph{first-order interpretation} $\phi(G^+)$ is defined as the graph with the same vertex set~$V(G)$, and where~$uv$ is an edge if and only if $G^+ \models \phi(u,v)$.\footnote{Some definitions allow a first-order interpretation to restrict the vertex set to some subset specified by a second formula~$\psi(x)$. We add the step of removing vertices in the definition of transductions instead.}
In general, $\phi(G^+)$ is a directed graph. We may require~$\phi(x,y)$ to be equivalent to~$\phi(y,x)$, or replace it by $\phi(x,y) \lor \phi(y,x)$ to ensure $\phi(G^+)$ is in fact a graph.

For $\gamma,\eta \in \Nn$ and~$\phi(x,y)$ a first-order formula with two free-variables on colored graphs with unary predicates~$U_1,\dots,U_\eta$,
the \emph{first-order transduction} $\Phi = (\gamma,\eta,\phi)$ is a one-to-many map from graphs to graphs defined as follows:
\begin{enumerate}
    \item Given a graph~$G$, first define its $\gamma$-copy~$G_\gamma$ by taking~$\gamma$ disjoint copies of~$G$, and for each vertex $v \in V(G)$, adding a clique of size~$\gamma$ on the~$\gamma$ copies of~$v$.
    Equivalently, $G_\gamma = K_\gamma \square G$ is the graph cartesian product of~$G$ with a clique of size~$\gamma$.\footnote{
        It is common to also add unary predicates to distinguish the~$\gamma$ copies, and to use a different binary relation to link the different copies of the same vertex~$v$.
        However, these can both be simulated by adding appropriate unary predicates in the next step.
    }
    \item Next, consider $G_\gamma^+ = (G_k,U_1,\dots,U_\eta)$ a colored graph obtained by augmenting~$G_k$ with \emph{any} subsets~$U_1,\dots,U_\eta$ as unary predicates.
    \item Then, apply the first-order interpretation~$\phi(G_k^+)$.
    \item Finally, take any induced subgraph~$H$ of~$\phi(G_k^+)$.
\end{enumerate}
Then $\Phi(G)$ is defined as the set of all possible graphs~$H$ obtained through the previous steps.

The following is a well-known consequence of Gaifman's locality theorem.
\begin{theorem}[Gaifman's locality~\cite{gaifman}]\label{thm:gaifman}
    For any first-order formula~$\phi(x,y)$, there are parameters $r,\ell \in \Nn$,
    where~$r$ only depends on the quantifier rank of~$\phi$, satisfying the following.
    For any (colored) graph~$G$, there is a map $\lambda : V(G) \to [\ell]$ such that
    for any vertices~$x,y$ at distance more than~$r$ in~$G$,
    whether or not~$\phi(x,y)$ holds depends only on $\lambda(x),\lambda(y)$.
\end{theorem}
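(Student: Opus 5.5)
The plan is to derive the statement from Gaifman's normal form theorem together with a composition argument for disjoint unions. Gaifman's theorem~\cite{gaifman} says that every first-order formula $\phi(x,y)$ of quantifier rank~$q$ is equivalent, on all colored graphs with predicates $U_1,\dots,U_\eta$, to a boolean combination of two kinds of formulas. The first kind are \emph{basic local sentences}, which have no free variables. The second kind are formulas $\psi^{(r')}(x,y)$ that are $r'$-local around $\{x,y\}$: all their quantifiers are relativized to the ball of radius $r'$ around $\{x,y\}$. Here $r'$ depends only on~$q$ (something like $7^q$). I will set $r := 2r'+1$, so that $r$ depends only on the quantifier rank, as the statement requires.

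Now fix a colored graph $G$. Every basic local sentence in the normal form has a fixed truth value in~$G$, so for this $G$ the truth of $\phi(x,y)$ is a fixed boolean function of the truth values of the finitely many local formulas $\psi^{(r')}_1(x,y),\dots,\psi^{(r')}_m(x,y)$. This boolean function may depend on~$G$, which is harmless: the statement only asks that, within a given $G$, the truth of $\phi(x,y)$ be determined by $\lambda(x)$ and $\lambda(y)$.

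Suppose $x,y$ are at distance more than $r = 2r'+1$. Then the induced subgraph on $\Ball_{r'}(\{x,y\})$ is the disjoint union of $\Ball_{r'}(x)$ and $\Ball_{r'}(y)$, with no edges between them. Let $q'$ be the maximum quantifier rank of the $\psi_i$, a constant depending only on $\phi$. I will define $\lambda(v)$ to be the rank-$q'$ type of the pointed colored graph $(G[\Ball_{r'}(v)], v)$. Up to logical equivalence there are only finitely many such types over the fixed signature, which gives a bound~$\ell$ depending only on $\phi$ and~$\eta$.

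The key step, and the main obstacle, is to show that the rank-$q'$ type of the pointed disjoint union $(A \sqcup B, a, b)$ is determined by the rank-$q'$ types of $(A,a)$ and $(B,b)$. This is a Feferman--Vaught style composition result. I would prove it with Ehrenfeucht--Fra\"iss\'e games: Duplicator combines her winning strategies on the two components, answering each Spoiler move inside the component where it was played. Partial isomorphisms are preserved because there are no edges between the components and the vertices of different components are distinct.

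Granting this composition result, each $\psi_i^{(r')}(x,y)$, evaluated on the $r'$-ball around $\{x,y\}$, is determined by $(\lambda(x),\lambda(y))$ whenever $\dist(x,y) > r$. Hence so is $\phi(x,y)$, which concludes the plan.
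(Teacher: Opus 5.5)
Your argument is correct. The paper gives no proof of this statement and simply quotes it as a well-known consequence of Gaifman's theorem. Your derivation is the standard way that consequence is obtained: the Gaifman normal form with a locality radius $r'$ depending only on the quantifier rank; the basic local sentences frozen to their truth values in the fixed $G$; $r=2r'+1$, so that the $r'$-ball around $\{x,y\}$ is a disjoint union with no edges across; $\lambda(v)$ the rank-$q'$ type of the pointed $r'$-ball; and the Ehrenfeucht--Fra\"iss\'e composition lemma for disjoint unions.
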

The parameter~$r$ is called \emph{locality radius} or \emph{Gaifman radius} of~$\phi$.

For a graph~$G$ and two subsets~$A,B$ of vertices, the \emph{flip} $G \oplus (A,B)$
is the graph obtained by complementing the adjacency between~$A$ and~$B$, i.e.
\[ E(G \oplus (A,B)) = E(G) \triangle \{ ab : a \in A, b \in B\}. \]

A graph~$H$ is called a \emph{$t$-flip} of~$G$ if there are $(A_i,B_i)_{i \in [t]}$ such that
$H = \big(G \oplus (A_1,B_1) \big) \dots \oplus (A_t,B_t)$.
We will use the following variant of Gaifman's locality,
in which distances are measured not in the given graph~$G$, but in a bounded flip of~$G$.
This result is implicit in e.g.\ the proof of \cite[Lemma~5.3]{dreier2023flipflatness}.

\begin{lemma}[Gaifman's locality with flips]\label{lem:flip-gaifman}
    For any first-order formula~$\phi(x,y)$ and $t \in \Nn$, there are parameters $r,\ell \in \Nn$,
    where~$r$ only depends on the quantifier rank of~$\phi$ (and in particular not on~$t$), satisfying the following.
    Given~$G$ a (colored) graph and~$H$ a $t$-flip of~$G$, 
    there is a map $\lambda : V(G) \to [\ell]$ such that for any vertices~$x,y$ at distance more than~$r$ in~$H$,
    whether or not~$G \models \phi(x,y)$ depends only on $\lambda(x),\lambda(y)$.
\end{lemma}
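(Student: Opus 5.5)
The plan is to reduce the flipped version to the classical Gaifman theorem (\cref{thm:gaifman}). A $t$-flip $H$ of $G$ can be encoded by adding $2t$ unary predicates $A_1,B_1,\dots,A_t,B_t$ to $G$. Adjacency in $H$ is then a quantifier-free formula $\psi_E(x,y)$ over this enriched colored graph $G'$, namely $E(x,y)$ XORed with $\bigvee$-combinations of $A_i(x)\wedge B_i(y)$ and $A_i(y)\wedge B_i(x)$ for each $i$. Symmetrically, since flips are involutions, $G$ is a $t$-flip of $H$ with the same pairs. So $E_G(x,y)$ is a quantifier-free formula $\chi(x,y)$ over the colored graph $H' = (H, U_1,\dots,U_\eta, A_1,\dots,B_t)$.

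Given $\phi(x,y)$ over $G$, I will form $\phi'(x,y)$ over $H'$ by replacing every atom $E(u,v)$ with $\chi(u,v)$. The quantifier rank is unchanged, and for all vertices we have $G\models\phi(x,y)$ iff $H'\models\phi'(x,y)$. Applying \cref{thm:gaifman} to $\phi'$ on the colored graph $H'$ gives a radius $r$ depending only on the quantifier rank of $\phi'$, which equals that of $\phi$. It also gives $\ell$, which may depend on $t$ and on the number of colors, since there are more unary predicates. Finally it gives a map $\lambda$ such that for $x,y$ at distance more than $r$ in the Gaifman graph of $H'$, the truth of $\phi'(x,y)$ depends only on $\lambda(x),\lambda(y)$. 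The Gaifman graph of $H'$ is just $H$, because unary predicates add no edges. This is exactly the claim.

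The one point needing care is the claim that $r$ depends only on the quantifier rank, with $\ell$ allowed to grow with the signature. I will check this against the standard form of Gaifman's theorem. That form expresses $\phi'$ as a boolean combination of local formulas of radius bounded by a function of quantifier rank, say $7^{q}$, together with basic local sentences. This dependence is indeed signature-independent. The number of local types, which bounds $\ell$, is finite for a fixed finite signature, but it depends on $\eta+2t$. Beyond this check, the translation step is routine.
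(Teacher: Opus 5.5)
Your proposal is correct and follows essentially the same route as the paper. It encodes the $t$ flips by $2t$ new unary predicates and replaces each edge atom by the quantifier-free XOR formula; evaluated in $H$, this formula recovers the edges of $G$ since flips are involutions. It then applies \cref{thm:gaifman} to the resulting formula on $H$ with the extra colors, which has the same quantifier rank, so $r$ is unaffected while $\ell$ may depend on $t$. As a cosmetic point shared with the paper's proof, you should conjoin $u \neq v$ to the substituted atom, so that a vertex lying in some $A_i \cap B_i$ does not acquire a spurious self-loop.
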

\begin{proof}
    Consider~$2t$ additional unary predicates $A_1,B_1,\dots,A_t,B_t$,
    and let~$\phi'(x,y)$ be the formula obtained from~$\phi(x,y)$ by replacing any instance of the edge predicate~$E(u,v)$ by
    \begin{equation}\label{eq:flipped-edges-formula}
        E(u,v) \oplus \bigoplus_{i=1}^t (A(u) \land B(v)) \lor (B(u) \land A(v)),
    \end{equation}
    where~$\oplus$ denotes exclusive `or'.
    For any $t$-flip~$H$ of~$G$, when $(A_i,B_i)_{i \in [t]}$ are interpreted as the subsets of~$G$ that are flipped to obtain~$H$,
    then formula~\eqref{eq:flipped-edges-formula} describes the edges of~$H$, and thus
    \begin{equation}\label{eq:flip-interpertation}
        G \models \phi(x,y) \quad \iff \quad H \models \phi'(x,y).
    \end{equation}
    Let~$r,\ell \in \Nn$ be given by \cref{thm:gaifman} for~$\phi'$.
    Note that~$\phi,\phi'$ have the same quantifier rank, hence~$r$ only depends on the quantifier rank of~$\phi$, while~$\ell$ only depends on~$\phi$ and~$t$.
    Now \cref{thm:gaifman} applied to~$H$ gives a map $\lambda : V(H) \to [\ell]$
    such that for~$x,y$ at distance more than~$r$ in~$H$,
    whether or not $H \models \phi'(x,y)$ depends only on $\lambda(x),\lambda(y)$.
    Combined with~\eqref{eq:flip-interpertation}, this proves the statement.
\end{proof}

\subsection{Rank and contraction sequences}
To prove \cref{thm:transduction}, we will use a variant of contraction sequences defined using rank.
For a~graph~$G$, let $A(G)$ denote the \emph{adjacency matrix} of~$G$.
For two disjoint subsets $X$ and~$Y$ of vertices in a graph $G$, the \emph{rank} between $X$ and $Y$ in $G$, denoted by $\rk_G (X,Y)$, is the binary rank of $A(G)[X,Y]$.

Consider a graph~$G$, a partition~$\Pc$ of $V(G)$, and a second graph~$G_\Pc$ with vertex set~$\Pc$.
We say that~$G_\Pc$ is a \emph{rank-$k$ error graph} for~$(G,\Pc)$ if for any part~$P \in \Pc$, we have
\[ \textstyle \rk_G\left(P, \bigcup_{Q \not\in N_{G_\Pc}[P]} Q\right) \le k. \]

That is, the graph~$G_\Pc$ specifies a number of `bad parts' which are neighbors of the part~$P$, and after removing these bad parts, the rank between~$P$ and the remainder of $G$ is at~most~$k$.

The \emph{reduced-rank-$p$} of~$G$ is the smallest~$k$ such that the following holds:
There is a contraction sequence $\Pc_n,\dots,\Pc_1$ and a sequence of graphs $G_n,\dots,G_1$, such that
\begin{itemize}
    \item $G_i$ is a rank-$k$ error graph for~$(G,\Pc_i)$,
    \item $p(G_i) \le k$, and
    \item $G_n,\dots,G_1$ is monotone, i.e., for $i > j$, if~$PQ$ is an edge in~$G_i$
    and~$P',Q'$ are the parts of~$G_j$ containing~$P$ and~$Q$ respectively,
    then~$P'Q'$ is also an edge in~$G_j$.
\end{itemize}

The goal of this subsection is the following:
\begin{lemma}
  If~$p$ is a monotone parameter stable under blowups,
  then bounded reduced-rank-$p$ is equivalent to bounded reduced-$p$.
  \label{lem:reduced-rank-blowup}
\end{lemma}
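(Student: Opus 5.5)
We prove both directions separately: reduced-$p$ bounds reduced-rank-$p$ by taking the red graphs themselves as error graphs, and the converse by refining each part according to its (few) distinct rows towards the far region.

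\textbf{Bounded reduced-$p$ implies bounded reduced-rank-$p$.} Take a contraction sequence $\Pc_n,\dots,\Pc_1$ with $p(\redG(G/\Pc_i)) \le k$, and set $G_i := \redG(G/\Pc_i)$.

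\emph{Rank condition.} Every part $Q \notin N_{G_i}[P]$ is homogeneous to~$P$. Hence all rows of $A(G)[P, \bigcup_{Q \notin N_{G_i}[P]} Q]$ are equal, so the rank is at most $1 \le k$.

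\emph{Monotonicity.} Suppose $PQ$ is red in $G_i$ and $P \subseteq P'$, $Q \subseteq Q'$ with $P' \neq Q'$ in $\Pc_j$. A non-homogeneous pair stays non-homogeneous after taking supersets, so $P'Q'$ is red in $G_j$. If $P'=Q'$ there is no edge to require.

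Thus reduced-rank-$p$ is at most $\max(k,1)$.

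\textbf{Bounded reduced-rank-$p$ implies bounded reduced-$p$.} Fix $\Pc_i$ and $G_i$ witnessing reduced-rank-$p$ at most~$k$. For $P \in \Pc_i$, let $U_i(P) := \bigcup_{Q \notin N_{G_i}[P]} Q$.

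\emph{Refined partitions.} A $0/1$ matrix of binary rank at most~$k$ has at most $2^k$ distinct rows. So we can split $P$ into at most $2^k$ classes, grouping vertices with identical neighborhoods in $U_i(P)$. Let $\mathcal Q_i$ be the resulting partition of $V(G)$, which refines $\Pc_i$.

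\emph{The $\mathcal Q_i$ coarsen as $i$ decreases.} Suppose $P \subseteq P' \in \Pc_{j}$ with $j<i$, and let $Q'$ be a part of $\Pc_j$ that is not in $N_{G_j}[P']$. Monotonicity of the $G_i$ forces every part of $\Pc_i$ inside~$Q'$ to be a non-neighbor of~$P$ in~$G_i$ and distinct from~$P$. Hence $U_j(P') \subseteq U_i(P)$. Vertices of $P$ with equal rows on $U_i(P)$ therefore have equal rows on $U_j(P')$, so $\mathcal Q_i$ refines $\mathcal Q_j$.

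\emph{Interpolation.} Between $\mathcal Q_i$ and $\mathcal Q_{i-1}$ we insert single-pair merges in any order, obtaining a genuine contraction sequence. Every intermediate partition $\mathcal I$ is coarser than $\mathcal Q_i$ and finer than $\mathcal Q_{i-1}$.

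\emph{Bounding the red graphs.} Let $R \in \Pc_{i-1}$. Then $R$ is either a part of $\Pc_i$ or the union of two parts of $\Pc_i$, so it contains at most $2^{k+1}$ classes of $\mathcal Q_i$. Hence $R$ contains at most $2^{k+1}$ pieces of~$\mathcal I$. Each piece $X \subseteq R$ lies inside a single class of $\mathcal Q_{i-1}$, so all vertices of $X$ have identical rows on $U_{i-1}(R)$.

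Now let $Y$ be a piece inside a part $S \notin N_{G_{i-1}}[R]$. Then $Y \subseteq U_{i-1}(R)$, so $X$ and $Y$ are homogeneous. Red edges of $G/\mathcal I$ therefore only join pieces lying in the same part of $\Pc_{i-1}$ or in $G_{i-1}$-adjacent parts. Mapping each piece to its part of $\Pc_{i-1}$ and enumerating the pieces of each part shows that $\redG(G/\mathcal I)$ is a subgraph of $\blow[2^{k+1}](G_{i-1})$. By monotonicity and blowup-stability, $p(\redG(G/\mathcal I)) \le p(\blow[2^{k+1}](G_{i-1})) \le f(k,2^{k+1})$.

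The same bound covers the partitions $\mathcal Q_i$ themselves, since $\mathcal I = \mathcal Q_i$ is an admissible intermediate partition. The endpoints also match: $\mathcal Q_n$ consists of singletons and $\mathcal Q_1 = \{V(G)\}$.

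\textbf{Main obstacle.} The key step is checking that the refined partitions $\mathcal Q_i$ are nested, i.e.\ that $U$ only shrinks along the sequence. This is exactly where the monotonicity requirement on the error graphs $G_n,\dots,G_1$ is used. A secondary check is that intermediate partitions, not just the $\mathcal Q_i$, have bounded red $p$; this is handled by bounding them against $G_{i-1}$ with blowup factor $2^{k+1}$ rather than against $G_i$.
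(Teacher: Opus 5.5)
Your proof is correct and, for the substantive direction, follows the paper's route. You refine each part of $\Pc_i$ by neighbourhood type towards the far region, use monotonicity of the error graphs to get nestedness, interpolate with single merges, and dominate every red graph by a bounded blowup of an error graph. The paper writes out only this direction, via its refinement lemma, and leaves your easy direction (red graphs as rank-$1$ error graphs) implicit. Your direct count of $\mathcal{Q}_i$-classes inside a part of $\Pc_{i-1}$ gives blowup factor $2^{k+1}$, slightly better than the paper's $2^{2k+1}$.

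One inference needs repair. The step ``$Y \subseteq U_{i-1}(R)$, so $X$ and $Y$ are homogeneous'' uses only that all vertices of $X$ have identical rows towards $Y$. That alone does not give homogeneity: take $X=\{x\}$ with $x$ adjacent to exactly one of two vertices of $Y$. You also need the symmetric fact, which you already have available. Since $G_{i-1}$ is undirected, $R \notin N_{G_{i-1}}[S]$, so $X \subseteq U_{i-1}(S)$. Moreover $Y$ lies in a single class of $\mathcal{Q}_{i-1}$, so all vertices of $Y$ have the same neighbours in $X$.

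With constancy on both sides the conclusion follows. If some $y\in Y$ is adjacent to some $x \in X$, then $y$ is adjacent to all of $X$, hence every vertex of $Y$ is, and $X$ is complete to $Y$. Otherwise $X$ is anticomplete to $Y$. This two-sided argument is exactly how the paper's homogeneity claim proceeds.
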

We prove the following more general statement, which will be useful for stretch-width:
\begin{lemma}
    \label{lem:rank-contraction-refinement}
    Let $\Pc_n,\dots,\Pc_1$ be a contraction sequence for~$G$, and let~$G_i$ be a rank-$k$ error graph for~$(G,\Pc_i)$, with $G_n,\dots,G_1$ monotone.
    Then there is a second contraction sequence $\Pc'_n,\dots,\Pc'_1$ for~$G$ satisfying the following:
    for each~$i \in [n]$, there is a $j \in [n]$ such that
    \begin{enumerate}
        \item \label{item:refine-monotone} $\Pc'_i$ refines~$\Pc_j$ (i.e.\ each part of~$\Pc'_i$ is contained in some part of~$\Pc_j$);
        \item \label{item:refine-bounded} Each part of~$\Pc_j$ is the union of at most~$2^{2k+1}$ parts of~$\Pc'_i$; and
        \item \label{item:refine-homogeneous} Let $P',Q' \in \Pc'_i$ be contained in $P,Q \in \Pc_j$ respectively. If~$P',Q'$ are non-homogeneous in~$G$, then~$P,Q$ are equal or adjacent in~$G_j$.
    \end{enumerate}
\end{lemma}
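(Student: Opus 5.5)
The plan is to refine each partition $\Pc_j$ by grouping vertices according to their adjacency towards the parts that are \emph{far} from their own part in $G_j$. Then I will check that these refinements coarsen as $j$ decreases, and interpolate between consecutive ones by single merges.

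\textbf{Step 1: the refinement $\mathcal F_j$ of $\Pc_j$.} For $P \in \Pc_j$, let $R_j(P) := \bigcup_{Q \notin N_{G_j}[P]} Q$. Call $u,v \in P$ equivalent if $N_G(u) \cap R_j(P) = N_G(v) \cap R_j(P)$. Since $\rk_G(P, R_j(P)) \le k$, the matrix $A(G)[P,R_j(P)]$ has at most $2^k$ distinct rows, so $P$ splits into at most $2^k$ classes. Let $\mathcal F_j$ be the partition of $V(G)$ into all these classes. Note that $\mathcal F_n$ consists of singletons, and $\mathcal F_1 = \{V(G)\}$.

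\textbf{Step 2: homogeneity.} Let $A \subseteq P$ and $B \subseteq Q$ be classes of $\mathcal F_j$ with $P \ne Q$ and $PQ \notin E(G_j)$. Then $Q \subseteq R_j(P)$, so all vertices of $A$ have the same neighbors in $B$. Symmetrically, $P \subseteq R_j(Q)$, so all vertices of $B$ have the same neighbors in $A$. This forces $A,B$ to be homogeneous. The property passes to arbitrary subsets $P' \subseteq A$ and $Q' \subseteq B$.

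\textbf{Step 3: $\mathcal F_{j+1}$ refines $\mathcal F_j$.} This is the step I expect to be the crux, and it is where the monotonicity of $G_n,\dots,G_1$ is used. Take $P_0 \in \Pc_{j+1}$ contained in $P \in \Pc_j$, and suppose $Q \in \Pc_j$ is not in $N_{G_j}[P]$. Every $Q_0 \in \Pc_{j+1}$ with $Q_0 \subseteq Q$ is then not in $N_{G_{j+1}}[P_0]$: otherwise monotonicity would make $P$ and $Q$ adjacent or equal in $G_j$. Hence $R_j(P) \subseteq R_{j+1}(P_0)$. So two vertices of $P_0$ with equal neighborhoods on $R_{j+1}(P_0)$ also have equal neighborhoods on $R_j(P)$. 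Thus every class of $\mathcal F_{j+1}$ lies inside a class of $\mathcal F_j$. This handles both the merged part and the unchanged parts whose far sets may have shrunk.

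\textbf{Step 4: interpolation.} Between $\mathcal F_{j+1}$ and $\mathcal F_j$, insert intermediate partitions obtained by merging two classes at a time. Concatenating over $j = n-1,\dots,1$ gives a sequence from singletons to $\{V(G)\}$ in which each step merges two parts. Since it has exactly $n-1$ merges, it is a contraction sequence $\Pc'_n,\dots,\Pc'_1$. For $\Pc'_i$ lying between $\mathcal F_{j+1}$ and $\mathcal F_j$, take this $j$.
\begin{itemize}
\item Item~\ref{item:refine-monotone} holds because $\Pc'_i$ refines $\mathcal F_j$, which refines $\Pc_j$.
\item Item~\ref{item:refine-homogeneous} follows from Step~2, since each part of $\Pc'_i$ lies inside a class of $\mathcal F_j$.
\item Item~\ref{item:refine-bounded} holds because $\Pc'_i$ is coarser than $\mathcal F_{j+1}$. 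A part of $\Pc_j$ is a union of at most two parts of $\Pc_{j+1}$, hence contains at most $2\cdot 2^k \le 2^{2k+1}$ classes of $\mathcal F_{j+1}$, and therefore at most that many parts of $\Pc'_i$.
\end{itemize}

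Only routine index bookkeeping remains.
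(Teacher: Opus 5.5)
Your proof is correct and follows essentially the same route as the paper: refine each part of $\Pc_j$ by neighborhood type on the union of the parts not in its closed $G_j$-neighborhood, use monotonicity of the error graphs to show these refinements are nested, and interpolate between consecutive refinements by single merges. The only difference is in item~2, where you directly count the at most $2\cdot 2^k$ classes of $\mathcal F_{j+1}$ inside a part of $\Pc_j$, which gives the sharper bound $2^{k+1}$ instead of the paper's two-stage estimate $2^k\cdot 2^{k+1}=2^{2k+1}$.
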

\begin{proof}[Proof of \cref{lem:reduced-rank-blowup}]
  Consider a contraction sequence $\Pc_n,\dots,\Pc_1$ of a~graph~$G$, and the associated rank-$k$ error graph~$G_i$ with $p(G_i) \le k$.
  \Cref{lem:rank-contraction-refinement} gives a second contraction sequence $\Pc'_n,\dots,\Pc'_1$
  such that each~$\Pc'_i$ is obtained from some~$\Pc_j$ by splitting each part $P \in \Pc_j$ into at most~$2^{2k+1}$ parts.
  Further, two parts of~$\Pc'_i$ can only be non-homogeneous if they come from equal or adjacent (in~$G_j$) parts of~$\Pc_j$.
  This exactly means that the red graph $\redG(G/\Pc'_i)$ is a subgraph of the $2^{2k+1}$-blowup of~$G_j$.
  Since $p(G_j) \le k$, and~$p$ is stable under blowups, we obtain that $p(\redG(G/\Pc'_i))$ is bounded by a function of~$k$,
  hence~$G$ has bounded reduced-$p$.
\end{proof}

Let us now focus on the more technical statement about refining contraction sequences.
\begin{proof}[Proof of \cref{lem:rank-contraction-refinement}]
  Consider a contraction sequence $\Pc_n,\dots,\Pc_1$ of~$G$, with a monotone sequence of rank-$k$ error graphs~$G_n,\dots,G_1$.
  We refine the partition~$\Pc_i$ into~$\Pc''_i$ as follows.
  Consider some~$P \in \Pc_i$, and denote $R_P = \bigcup_{Q \not\in N_{G_i}[P]} Q$, so that by assumption $\rk_G(P,R_P) \le k$.
  We then partition~$P$ by neighborhood type in~$R_P$, i.e., $x,y \in P$ remain in the same part if and only if they have exactly the same neighbors in~$R_P$.
  By the rank condition, this splits~$P$ into at most~$2^k$ classes.
  Finally, $\Pc''_i$ is obtained by applying the above to all parts~$P \in \Pc_i$ independently.

  \begin{claim}\label{claim:homogeneous}
    Let $P''_1,P''_2 \in \Pc''_i$ be parts resulting from the blowup of $P_1,P_2 \in \Pc_i$ respectively.
    If~$P''_1$ and~$P''_2$ are not homogeneous in~$G$, then~$P_1$ and $P_2$ are equal or adjacent in~$G_i$.
  \end{claim}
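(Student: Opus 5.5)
We argue by contraposition: assume $P_1 \neq P_2$ and that $P_1,P_2$ are not adjacent in $G_i$, and show that $P''_1$ and $P''_2$ are homogeneous in~$G$. The only ingredient is the definition of the refinement $\Pc''_i$, namely that each part of $\Pc''_i$ inside $P_1$ is a class of vertices of~$P_1$ having identical neighborhoods in $R_{P_1} = \bigcup_{Q \notin N_{G_i}[P_1]} Q$.

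First, since $P_2 \neq P_1$ and $P_2 \notin N_{G_i}(P_1)$, we have $P_2 \notin N_{G_i}[P_1]$. Hence $P_2 \subseteq R_{P_1}$, and in particular $P''_2 \subseteq P_2 \subseteq R_{P_1}$. Second, all vertices $x \in P''_1$ share the same neighborhood $N_G(x) \cap R_{P_1}$, by construction of $P''_1$ as a class of the partition of $P_1$ by neighborhood type in $R_{P_1}$. Intersecting with $P''_2$, every $x \in P''_1$ has the same set $N_G(x) \cap P''_2$; call it~$S$. Since $G_i$ is undirected, the symmetric argument (using $P_1 \subseteq R_{P_2}$) gives that every $y \in P''_2$ has the same set $T := N_G(y) \cap P''_1$.

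It remains to combine these two facts. If $S = \emptyset$, then $P''_1$ is anti-complete to $P''_2$. Otherwise, pick $y \in S$; then $y$ is adjacent to every vertex of $P''_1$, so $T = P''_1$. By the second fact, every $y' \in P''_2$ is then adjacent to all of $P''_1$, so $P''_1$ is complete to $P''_2$. Either way the two parts are homogeneous, which proves the contrapositive.

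There is no real obstacle here. The only points requiring care are checking that non-adjacency in $G_i$, together with $P_1 \neq P_2$, places $P_2$ inside $R_{P_1}$, and noticing that the one-sided uniformity of neighborhoods must be applied from both sides to conclude homogeneity.
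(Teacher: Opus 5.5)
Your proof is correct and follows the same argument as the paper. You argue by contraposition, note that $P_2 \subseteq R_{P_1}$ so all vertices of $P''_1$ have the same neighbors in $P''_2$, and apply the symmetric argument. You also spell out why two-sided uniformity yields homogeneity (the case split on whether $S$ is empty), a step the paper leaves implicit.
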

  \begin{clproof}
    Suppose that~$P_1,P_2$ are distinct and non-adjacent in~$G_i$.
    Then~$\Pc''_i$ is defined so that the vertices of~$P''_1$ all have the same neighbors in~$P_2$, and a fortiori in~$P''_2$.
    Symmetrically, all vertices of~$P''_2$ have the same neighbors in~$P''_1$.
    It follows that~$P''_1$ and $P''_2$ are homogeneous.
  \end{clproof}
  Thus the sequence of partitions $\Pc''_n,\dots,\Pc''_1$ satisfies the three conditions of the statement.
  The only issue is that it is not a contraction sequence:
  when going from~$\Pc''_i$ to~$\Pc''_{i-1}$, we might merge several parts at once, or none at all.
  Let us now show how to fill these gaps.

  \begin{claim}
    \label{clm:neigh-refine-monotone}
    For any~$i > j$, $\Pc''_i$ refines~$\Pc''_j$.
  \end{claim}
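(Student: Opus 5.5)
The claim to prove is that for $i > j$, $\Pc''_i$ refines $\Pc''_j$. Take a part $P''_i \in \Pc''_i$; it lies in some $P_i \in \Pc_i$. Since $\Pc_i$ refines $\Pc_j$, $P_i$ is contained in a unique $P_j \in \Pc_j$. It then suffices to show that any two vertices $x,y \in P''_i$ lie in the same part of $\Pc''_j$. That is, $x$ and $y$ should have the same neighbors in $R_{P_j} = \bigcup_{Q \notin N_{G_j}[P_j]} Q$.

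By construction of $\Pc''_i$, $x$ and $y$ have the same neighbors in $R_{P_i} = \bigcup_{Q \notin N_{G_i}[P_i]} Q$. The plan is therefore to prove the inclusion $R_{P_j} \subseteq R_{P_i}$, since agreement on a larger set implies agreement on a subset.

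To prove the inclusion, take $v \in R_{P_j}$, and let $Q_i \in \Pc_i$ and $Q_j \in \Pc_j$ be the parts containing $v$. Then $Q_i \subseteq Q_j$, and $Q_j \notin N_{G_j}[P_j]$, so $Q_j \neq P_j$ and $Q_j$ is not adjacent to $P_j$ in $G_j$. Suppose for contradiction that $v \notin R_{P_i}$, so $Q_i \in N_{G_i}[P_i]$. There are two cases.
\begin{itemize}
\item If $Q_i = P_i$, then $Q_j = P_j$, which is a contradiction.
\item If $Q_i P_i$ is an edge of $G_i$, then monotonicity of $G_n,\dots,G_1$ makes $Q_j P_j$ an edge of $G_j$, since $Q_j$ and $P_j$ are the parts of $\Pc_j$ containing $Q_i$ and $P_i$. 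This is again a contradiction.
\end{itemize}
Hence $v \in R_{P_i}$, which establishes $R_{P_j} \subseteq R_{P_i}$ and proves the claim. Note that the monotonicity of the error graphs is exactly what makes this step work.

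After the claim, the proof of the lemma fills the gaps: between $\Pc''_{i}$ and $\Pc''_{i-1}$, merge the required parts one pair at a time. Each intermediate partition refines $\Pc''_{i-1} \le \Pc_{i-1}$ and is refined by $\Pc''_i$, so I would pair it with index $j = i$ or $j = i-1$. The main obstacle is condition~\ref{item:refine-bounded}.

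\begin{itemize}
\item Pairing with $j = i-1$: inside one part of $\Pc_{i-1}$, the number of parts is at most the number of $\Pc''_i$-parts it contains. Only the merged part $P = P_a \cup P_b$ can contain more than $2^k$ parts, and it contains at most $2^k + 2^k = 2^{k+1} \le 2^{2k+1}$.
\item Pairing with $j = i$: every other part of $\Pc_{i-1}$ is already a part of $\Pc_i$ and is split coarser than in $\Pc''_i$, hence into at most $2^k$ parts.
\end{itemize}
Condition~\ref{item:refine-homogeneous} passes to coarser partitions relative to $\Pc''_i$. To check it, I would pair with $j=i-1$ and use Claim~\ref{claim:homogeneous} for $\Pc''_{i-1}$. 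Intermediate parts are unions of $\Pc''_i$-parts, and a non-homogeneous pair there contains a non-homogeneous pair of $\Pc''_i$-parts, whose $\Pc_i$-parts are equal or adjacent in $G_i$. By monotonicity, their $\Pc_{i-1}$-parts are then equal or adjacent in $G_{i-1}$.
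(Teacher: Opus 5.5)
Your proof of the claim is correct and follows the paper's argument: monotonicity of $G_n,\dots,G_1$ gives $R_{P_j}\subseteq R_{P_i}$, so agreement of neighborhoods on $R_{P_i}$ implies agreement on $R_{P_j}$. As an aside on your sketch beyond the claim, the step for condition~\ref{item:refine-homogeneous} is wrong as written. Homogeneity is inherited by finer partitions, not coarser ones, and a non-homogeneous pair of unions need not contain a non-homogeneous pair of $\Pc''_i$-parts (e.g.\ $x_1y\in E(G)$ but $x_2y\notin E(G)$); argue instead that the intermediate partitions refine $\Pc''_{i-1}$, to which Claim~\ref{claim:homogeneous} applies.
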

  \begin{clproof}
    By definition, $\Pc_i$ refines $\Pc_j$.
    In particular, each part $P$ of $\Pc_i$ is included in a single part $\Tilde{P}$ of $\Pc_j$.
    By monotonicity of $G_n, \dots, G_1$, we have \[\bigcup_{Q \in N_{G_i}[P]} Q \subseteq \bigcup_{Q \in N_{G_j}[\Tilde{P}]} Q.\]
    The complements of these sets are~$R_P$ and~$R_{\Tilde{P}}$, and thus satisfy $R_{\Tilde{P}} \subseteq R_P$.
    Therefore, if $x,y \in P$ are in the same part of~$\Pc''_i$, meaning that they have the same neighbors in~$R_P$,
    then~$x,y$ are also both in~$\Tilde{P}$ and have the same neighbors in~$R_{\Tilde{P}}$, hence are in the same part of~$\Pc''_j$.
  \end{clproof}

  \begin{claim}
    \label{clm:bounded-refinment}
    Any~$P'' \in \Pc''_{i-1}$ is the union of at most~$2^{k+1}$ parts of~$\Pc''_i$.
  \end{claim}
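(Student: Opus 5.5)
We need to show: any part $P'' \in \Pc''_{i-1}$ is the union of at most $2^{k+1}$ parts of $\Pc''_i$.

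Setup. Since $\Pc_{i-1}$ is obtained from $\Pc_i$ by merging two parts $P_1,P_2$ into $P=P_1\cup P_2$, every part of $\Pc_{i-1}$ other than $P$ is a part of $\Pc_i$. By \cref{clm:neigh-refine-monotone}, $\Pc''_i$ refines $\Pc''_{i-1}$, so each $P''\in\Pc''_{i-1}$ is a union of parts of $\Pc''_i$. Fix $P''\in\Pc''_{i-1}$ and let $\tilde P\in\Pc_{i-1}$ be the part of $\Pc_{i-1}$ containing it. The parts of $\Pc''_i$ inside $P''$ are exactly the classes of the refinement of $\Pc_i$-parts, intersected with $P''$. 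There are two cases.

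\textbf{Case $\tilde P\neq P$.} Then $\tilde P\in\Pc_i$, so $P''$ is contained in a single part $\tilde P$ of $\Pc_i$. Inside $\tilde P$, the partition $\Pc''_i$ groups vertices by their neighborhood in $R^{(i)}_{\tilde P}$, the set $R_{\tilde P}$ computed in $G_i$. By monotonicity, $R^{(i-1)}_{\tilde P}\subseteq R^{(i)}_{\tilde P}$. All vertices of $P''$ share the same neighborhood in $R^{(i-1)}_{\tilde P}$, so they are split only by their neighborhood on $R^{(i)}_{\tilde P}\setminus R^{(i-1)}_{\tilde P}$. Trying to bound this difference directly seems awkward. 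It is simpler to use the cruder bound that $\Pc''_i$ splits $\tilde P$ into at most $2^k$ classes at all, by the rank condition $\rk_G(\tilde P,R^{(i)}_{\tilde P})\le k$: distinct neighborhood vectors restricted to a rank-$k$ binary matrix number at most $2^k$, since the rows lie in a $k$-dimensional $\mathbb F_2$-space. So $P''$ is a union of at most $2^k\le 2^{k+1}$ parts of $\Pc''_i$.

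\textbf{Case $\tilde P=P=P_1\cup P_2$.} Then $P''\subseteq P_1\cup P_2$. By the same rank argument in $G_i$, $\Pc''_i$ splits $P_1$ into at most $2^k$ classes and $P_2$ into at most $2^k$ classes. Hence $P''$ meets at most $2^k+2^k=2^{k+1}$ parts of $\Pc''_i$, and since $\Pc''_i$ refines $\Pc''_{i-1}$, it is the union of those parts.

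Main obstacle. The only delicate point is justifying the $2^k$ bound on classes. This is routine: a $\{0,1\}$ matrix of $\mathbb F_2$-rank $k$ has at most $2^k$ distinct rows, since every row is an $\mathbb F_2$-combination of $k$ basis rows. This is the binary rank used in the definition, so it already underlies the construction of $\Pc''_i$. With it, both cases are immediate, and the claim follows.
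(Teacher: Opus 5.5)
Your proof is correct and follows the paper's argument. The part of $\mathcal{P}_{i-1}$ containing $P''$ is either a part of $\mathcal{P}_i$ or the union $P_1\cup P_2$ of two parts of $\mathcal{P}_i$; the rank-$k$ condition splits each part of $\mathcal{P}_i$ into at most $2^k$ classes of $\mathcal{P}''_i$, giving at most $2\cdot 2^k$ pieces. The only difference is that the paper treats just the merged case as the worse one, while you handle both cases and justify the $2^k$ bound on neighborhood types.
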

  \begin{clproof}
    Pick~$P'' \in \Pc''_{i-1}$ contained in some~$P \in \Pc_{i-1}$.
    Either~$P \in \Pc_i$, or~$P$ is the union of two parts~$P_1,P_2$ of~$\Pc_i$.
    We will only consider the second case, which is worse.
    
    Consider now some~$Q'' \in \Pc''_i$ contained in~$P''$,
    and denote by~$Q$ the part of~$\Pc_i$ containing~$Q''$.
    Then~$Q$ must be either~$P_1$ or~$P_2$.
    We know that~$Q$ is split into at most~$2^k$ parts of~$\Pc''_i$.
    Adding the choice of either~$Q = P_1$ or~$Q = P_2$, this gives~$2 \cdot 2^k$ choices for~$Q''$.
  \end{clproof}

  Since the sequence $\Pc''_n,\dots,\Pc''_1$ is monotone under refinement by \cref{clm:neigh-refine-monotone},
  it can be completed into a contraction sequence by arbitrarily adding intermediate steps so that only two parts are merged at a time,
  and skipping steps when the same partition is seen several times in a row.
  Consider any partition~$\Pc''$ inserted between~$\Pc''_i$ and~$\Pc''_{i-1}$.
  It follows from \cref{clm:bounded-refinment} that each part of~$\Pc''_{i-1}$ is the union of at most~$2^{k+1}$ parts of~$\Pc''$.
  In turn, each part of~$\Pc_{i-1}$ is the union of at most~$2^k$ parts of~$\Pc''_{i-1}$.
  Thus~$\Pc''$ refines~$\Pc_{i-1}$, and each part of~$\Pc_{i-1}$ is the union of at most~$2^{2k+1}$ parts of~$\Pc''$,
  ensuring conditions~\labelcref{item:refine-monotone,item:refine-bounded} of the statement.
  Also, by \cref{claim:homogeneous}, if $P,Q \in \Pc_{i-1}$ are distinct and non-adjacent in~$G_{i-1}$,
  then any resulting parts $P'' \subseteq P$ and $Q'' \subseteq Q$ with $P'',Q'' \in \Pc''_{i-1}$ must be homogeneous.
  The same is a fortiory true when~$P'',Q''$ belong to~$\Pc''$ which refines~$\Pc''_{i-1}$, ensuring condition~\labelcref{item:refine-homogeneous}.

  It only remains to reindex as $\Pc'_n,\dots,\Pc'_1$ this contraction sequence obtained from $\Pc''_n,\dots,\Pc''_1$
  by adding arbitrary intermediate steps and skipping duplicates.
  After reindexing, there is no clear link between~$\Pc'_i$ and~$\Pc_i$,
  but it still holds that for any~$\Pc'_i$ there is a corresponding step~$\Pc_j$ (for a different index~$j$)
  satisfying conditions~\labelcref{item:refine-monotone,item:refine-bounded,item:refine-homogeneous}.
\end{proof}

\subsection{Stability under transductions}
Using Gaifman's locality, we can prove the following technical lemma, showing that when applying a first-order interpretation~$\phi$ to~$G$,
one can obtain bounded-rank error graphs for~$\phi(G)$ as bounded powers of red graphs for~$G$.
\begin{lemma}\label{lem:error-graph-transduction}
    For any first-order formula~$\phi(x,y)$ with unary predicates~$U_1,\dots,U_\eta$, and $\Delta \in \Nn$,
    there are parameters $r,\ell \in \Nn$, where~$r$ only depends on the quantifier rank of~$\phi$, satisfying the following.
    Let $G^+ = (G,U_1,\dots,U_\eta)$ be any colored graph, and~$\Pc$ be a~partition of~$V(G)$ with red graph $G_\Pc := \redG(G/\Pc)$.
    Assume that~$G_\Pc$ has maximum degree at most~$\Delta$. Then the $r$-th power of~$G_\Pc$ is a rank-$\ell$ error graph for~$(\phi(G^+),\Pc)$.
\end{lemma}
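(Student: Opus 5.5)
The plan is to apply \cref{lem:flip-gaifman} separately to each part~$P$, with a flip of~$G$ chosen depending on~$P$. Let~$r$ be the locality radius from \cref{lem:flip-gaifman}; it depends only on the quantifier rank of~$\phi$. We must show that for every $P \in \Pc$, the rank in~$\phi(G^+)$ between~$P$ and $R_P := \bigcup\{Q : \dist_{G_\Pc}(P,Q) > r\}$ is bounded. The idea is to flip away all black edges near~$P$. In the resulting graph~$H$, every edge leaving a part close to~$P$ is then an edge along a red edge of~$G_\Pc$. Consequently, $H$-distance from~$P$ dominates red distance, at least up to radius~$r$.

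\textbf{Construction of the flip.} Fix~$P$ and let~$\mathcal{B}$ be the set of parts at red distance at most~$r$ from~$P$ in~$G_\Pc$. Since~$G_\Pc$ has maximum degree at most~$\Delta$, we have $|\mathcal{B}| \le \Delta^{r+1}$ (say). We perform two kinds of flips.
\begin{enumerate}
\item For each $X \in \mathcal{B}$, let~$Z_X$ be the union of all parts outside~$\mathcal{B}$ that are black neighbors of~$X$ in~$G/\Pc$. We flip $(X, Z_X)$.
\item For each pair $X, Y \in \mathcal{B}$ that is black-adjacent, we flip $(X,Y)$.
\end{enumerate}
The total number~$t$ of flips is bounded by a function of~$\Delta$ and~$r$, hence of~$\Delta$ and~$\phi$. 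Since black adjacency means complete adjacency, each flip exactly deletes all edges between the two sides. Therefore, in the resulting $t$-flip~$H$, an edge between a vertex of some $X \in \mathcal{B}$ and a vertex of a different part~$Y$ exists only if $XY$ is a red edge of~$G_\Pc$. Edges inside a part are untouched, and flips within a single part never occur, since we only flip between distinct parts. This is the only part where care is needed: the flips must remove every black edge incident to~$\mathcal{B}$, both internal to~$\mathcal{B}$ and outgoing, while keeping the number of flips bounded. This is why the ball has bounded radius and is chosen per~$P$.

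\textbf{Distance bound.} Consider a path in~$H$ from $v \in P$ and let~$w$ be its first vertex lying in a part at red distance more than~$r$ from~$P$. Up to that point, every vertex lies in a part of~$\mathcal{B}$. Each step either stays in the same part or moves along a red edge of~$G_\Pc$, so the red distance from~$P$ increases by at most one per step. Hence the path has length more than~$r$, and $\dist_H(v,w) > r$ for every $v \in P$ and $w \in R_P$.

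\textbf{Conclusion.} Apply \cref{lem:flip-gaifman} to~$\phi$ (or to $\phi(x,y)\lor\phi(y,x)$), to~$G^+$, and to the $t$-flip~$H$. This yields~$\ell$, depending only on~$\phi$ and~$t$ and hence on~$\phi$ and~$\Delta$, together with a labeling $\lambda : V(G) \to [\ell]$. For $x \in P$ and $y \in R_P$, whether~$xy$ is an edge of~$\phi(G^+)$ depends only on $(\lambda(x),\lambda(y))$. Thus the adjacency matrix $A(\phi(G^+))[P, R_P]$ has at most~$\ell$ distinct rows, so its rank is at most~$\ell$. Since $R_P = \bigcup_{Q \notin N_{\pow[r](G_\Pc)}[P]} Q$, the graph $\pow[r](G_\Pc)$ is a rank-$\ell$ error graph for $(\phi(G^+),\Pc)$. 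The radius~$r$ depends only on the quantifier rank of~$\phi$, as required.
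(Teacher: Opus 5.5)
Your proof is correct and follows the paper's argument. Both flip away the black edges incident to the radius-$r$ red ball $\mathcal{B}$ around $P$, deduce that $P$ is at distance more than $r$ from $R_P$ in the flipped graph, and then apply Gaifman locality with flips to bound the rank between $P$ and $R_P$ by $\ell$. The only difference is bookkeeping: the paper uses one flip per $X\in\mathcal{B}$ (between $X$ and all its black neighbors), whereas your separate flips for black pairs inside $\mathcal{B}$ make explicit that such pairs are not complemented twice, at the cost of $O(|\mathcal{B}|^2)$ rather than $|\mathcal{B}|$ flips. (Your bound $|\mathcal{B}|\le\Delta^{r+1}$ fails for $\Delta\le 1$, but any bound in terms of $\Delta$ and $r$ suffices, e.g.\ $\Delta^r+1$, provided $\ell$ is then taken for that uniform number of flips.)
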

\begin{proof}
    Depending only on the quantifier rank of~$\phi$, \cref{lem:flip-gaifman} gives
    a locality radius~$r \in \Nn$.
    Choose the number of flips $t = \Delta^r+1$, which bounds the number of vertices in a ball of radius~$r$ in a graph with maximum degree~$\Delta$.
    Now, depending on~$\phi$ and~$t$, \cref{lem:flip-gaifman} gives a second parameter~$\ell$.

    Fix now the colored graph $G^+ = (G,U_1,\dots,U_\eta)$, and the partition~$\Pc$ of~$G$ whose red graph~$G_\Pc$ has maximum degree at most~$\Delta$, as well as a part $P \in \Pc$.
    Let $\Bc := \Ball_{G_\Pc}^r(P)$ be all the parts of~$\Pc$ within distance~$r$ of~$P$ in the red graph~$G_\Pc$, and note that $|\Bc| \le t$.
    Consider the graph~$G'$ obtained from~$G$ as follows:
    for each pair of parts $X,Y \in \Pc$, if~$X,Y$ are fully adjacent and either~$X$ or~$Y$ is in~$\Bc$, then remove all edges between~$X$ and~$Y$.
    For any fixed~$X$, one can with a 1-flip remove all edges between~$X$ and all the parts~$Y$ that are fully adjacent to~$Y$.
    The graph~$G'$ is obtained by doing such a 1-flip for each $X \in \Bc$, hence it is a $t$-flip of~$G$.
    
    Let $R_P = V(G) \setminus \bigcup_{Q \in \Bc} Q$ denote the vertices outside~$\Bc$. The graph~$G'$ is chosen to ensure that
    \begin{equation}\label{eq:dist-flip}
        \text{for any $x \in P$ and $y \in R_P$, $\dist_{G'}(x,y) > r$.}
    \end{equation}
    We now apply \cref{lem:flip-gaifman} to the colored graph~$G^+$ and its flip~$G'$,
    yielding a map $\lambda : V(G) \to [\ell]$ such that if~$x,y$ are at distance more than~$r$ in~$G'$,
    then whether or not $G^+ \models \phi(x,y)$ (or equivalently $xy \in E(\phi(G^+))$)
    only depends on~$\lambda(x),\lambda(y)$.
    By~\eqref{eq:dist-flip}, this is in particular the case if $x \in P$ and $y \in R_P$.
    It immediately follows that $\rk_{\phi(G^+)}(P,R_P) \le \ell$.

    To summarize, for any part $P \in \Pc$, if~$\Bc$ denotes all the parts within distance~$r$ of~$P$ in~$G_\Pc$,
    and~$R_P$ the set of vertices outside~$\Bc$, then $\rk_{\phi(G^+)}(P,R_P) \le \ell$.
    In the $r$th power of~$G_\Pc$, the neighborhood of~$P$ is exactly~$\Bc$,
    hence this shows that the $r$th power of~$G_\Pc$ is a rank-$\ell$ error graph for~$\phi(G^+)$, as desired.
\end{proof}

We need one last lemma for the copying operation. It is a simple variant of \cite[Lemma~8.2]{twin-width1}.
\begin{lemma}\label{lem:copying}
    Let~$p$ be a monotone parameter stable under blowups. Then for any $\gamma \in \Nn$, reduced-$p$ is stable under the $\gamma$-copy operation.
\end{lemma}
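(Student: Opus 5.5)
Starting from a contraction sequence $\Pc_n,\dots,\Pc_1$ of~$G$ that witnesses $\reduced{p}(G)\le k$, we build a contraction sequence of the $\gamma$-copy $G_\gamma = K_\gamma \square G$. Write the vertices of $G_\gamma$ as pairs $(v,a)$ with $v \in V(G)$ and $a \in [\gamma]$. For a partition~$\Pc$ of~$V(G)$ we use the ``layered'' partition
\[ \Pc^{(\gamma)} := \{ P \times \{a\} : P \in \Pc,\ a \in [\gamma] \}. \]
The sequence $\Pc_n^{(\gamma)},\dots,\Pc_1^{(\gamma)}$ is not a contraction sequence, since each step merges~$\gamma$ pairs at once. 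We fill the gaps by performing these~$\gamma$ merges one at a time. At the end we reach $\{V(G)\times\{a\} : a\in[\gamma]\}$, and we then merge these~$\gamma$ parts arbitrarily.

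First we bound the red graph of a layered partition. Take $P\times\{a\}$ and $Q\times\{b\}$.
\begin{itemize}
\item If $a \neq b$, the edges between them are exactly the pairs $(v,a)(v,b)$ with $v \in P\cap Q$. So the two parts are anticomplete when $P\neq Q$, and when $P=Q$ they form a perfect matching, which is red unless $|P|=1$.
\item If $a=b$, the adjacency is the one between~$P$ and~$Q$ in~$G$.
\end{itemize}
Hence $\redG(G_\gamma/\Pc^{(\gamma)})$ is a subgraph of $\blow[\gamma](\redG(G/\Pc))$. Since~$p$ is monotone and stable under blowups, its $p$-value is at most $f(k,\gamma)$.

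Next come the intermediate partitions. Suppose $\Pc_{i-1}$ is obtained from $\Pc_i$ by merging $P_1$ and $P_2$ into~$P$. Every partition inserted between $\Pc_i^{(\gamma)}$ and $\Pc_{i-1}^{(\gamma)}$ has, for each layer~$a$, either both $P_1\times\{a\}$ and $P_2\times\{a\}$, or only $P\times\{a\}$; all other parts are as before. We claim that any two non-homogeneous parts of such a partition project to equal or red-adjacent parts of $\Pc_{i-1}$ or of $\Pc_i$. The check is the same as above: parts in different layers interact only through matchings on common projections, and parts in the same layer behave as in~$G$ under the coarser or finer partition. Since every part of $\Pc_i$ lies in a part of $\Pc_{i-1}$, and any non-homogeneity between refined parts forces non-homogeneity of the coarse parts (or equality), the red graph is a subgraph of the $2\gamma$-blowup of $\redG(G/\Pc_{i-1})$. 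Its $p$-value is therefore again bounded by a function of $k$ and~$\gamma$.

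For the final phase we merge the~$\gamma$ layer parts $V(G)\times\{a\}$. The red graph then has at most~$\gamma$ vertices, so it is a subgraph of $\blow[\gamma](K_1)$. Any red graph $\redG(G/\Pc_1)$ contains~$K_1$, so monotonicity and blowup stability bound its $p$-value as well. (Alternatively, treat this phase as the same refinement argument applied to the last step.)

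The main obstacle is the bookkeeping in the intermediate steps. We must show that a part split across two refinement levels produces red edges only toward parts that are near it in the coarse red graph. That comparison is the key claim above, and it is routine once done carefully.
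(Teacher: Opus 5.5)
Your proposal is correct and follows essentially the same route as the paper. You replicate each partition layer by layer, split each original step into $\gamma$ single merges, and bound every intermediate red graph as a subgraph of a bounded blowup of the coarser red graph of the original sequence (the paper uses $2\gamma-1$, you use $2\gamma$, which is immaterial); the final merges of the $\gamma$ layers give a subgraph of $K_\gamma$, and your explicit check that cross-layer parts interact only through matchings on intersecting projections is the verification the paper leaves as ``simple to check.''
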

\begin{proof}
    For a graph~$G$, consider the $\gamma$-copy $G_\gamma = K_\gamma \square G$ of~$G$.
    For each vertex $v \in V(G)$, let us denote by $v_1,\dots,v_\gamma$ the copies of~$v$ in~$G_\gamma$.
    Now if~$\Pc$ is a partition of~$V(G)$, consider the partition~$\Pc'$ of~$G_\gamma$ in which $u_i,v_j$ are in the same part if and only if $i=j$ and $u,v$ are in the same part of~$\Pc$.
    That is, we replicate~$\Pc$ for each of the copies of~$G$, and keep vertices from different copies in different parts.
    Then the red graph $\redG(G_\gamma / \Pc)$ is a subgraph of the $\gamma$-copy of $\redG(G/\Pc)$, which itself is a subgraph of the $\gamma$-blowup of $\redG(G/\Pc)$.

    Now consider a contraction sequence $\Pc_n,\dots,\Pc_1$ witnessing that $\reduced{p}(G) \le k$,
    and the corresponding partitions $\Pc'_n,\dots,\Pc'_1$ of $V(G_\gamma)$ as described above.
    If~$\Pc_i$ is obtained from~$\Pc_{i+1}$ by merging parts~$X,Y$, then~$\Pc'_i$ is obtained from~$\Pc'_{i+1}$ by merging each of the~$\gamma$ copies of~$X$ with the corresponding copy of~$Y$.
    Thus $\Pc'_n,\dots,\Pc'_1$ is not quite a contraction sequence,
    but it can be turned into one by adding additional steps to (1) merge the copies of~$X$ and~$Y$ one by one between~$\Pc'_{i+1}$ and~$\Pc'_i$ for each~$i$, and (2) merge the~$\gamma$ final parts in~$\Pc_1$.
    It is simple to check that in all these intermediate steps, the corresponding quotient red graph is a subgraph of the $(2\gamma-1)$-blowup of $\redG(G/\Pc_i)$ for case (1), and a subgraph of~$K_\gamma$ (i.e.\ the $\gamma$-blowup of a single vertex) in case (2).
    Since~$p$ is stable under blowup, these red graphs have bounded~$p$ (function of~$k$), proving the result.
\end{proof}

We can now prove the main result of this section.
\transduction*
\begin{proof}
    Firstly, we may assume that
    \begin{equation}\label{eq:assume-bounded-degree}
        \text{bounded~$p$ implies bounded degree.}
    \end{equation}
    Indeed, observe that any graph has a contraction sequence in which every red graph is a star; this can be obtained by repeatedly identifying the part corresponding to the center of the star with any other part.
    If bounded~$p$ does not imply bounded degree but~$p$ is monotone, then stars must have bounded~$p$, hence all graphs have bounded reduced-$p$, and the result is trivial.

    Consider $\Phi = (\gamma,\eta,\phi)$ a first-order transduction, and~$G$ a graph with $\reduced{p}(G) = k$.
    \Cref{lem:copying} gives that its $\gamma$-copy~$G_\gamma$ satisfies $\reduced{p}(G_\gamma) \le k_1$ for some~$k_1$ depending only on~$k,\gamma$.
    Fix $\Pc_n,\dots,\Pc_1$ a contraction sequence for~$G_\gamma$ witnessing $\reduced{p}(G_\gamma) \le k_1$.    
    By \eqref{eq:assume-bounded-degree}, there is some bound $\Delta \in \Nn$ such that any graph~$H$ with $p(H) \le k_1$ has maximum degree at most~$\Delta$. In particular the red graphs $\redG(G/\Pc_i)$ have maximum degree at most~$\Delta$.

    Now consider any colored graph $G_\gamma^+ = (G_\gamma,U_1,\dots,U_\eta)$ obtained by adding unary predicates to~$G_\gamma$.
    By \cref{lem:error-graph-transduction}, there are parameters $r,\ell \in \Nn$ depending only on~$\phi,\Delta$ such that
    the $r$-th power $G_i = \pow[r](\redG(G/\Pc_i))$ is a rank-$\ell$ error graph for $(\phi(G_\gamma^+),\Pc_i)$.
    Since~$p$ is stable under powers, there is some bound~$k_2$ depending only on~$k_1,r$ such that $p(G_i) \le k_2$.
    Thus $\Pc_n,\dots,\Pc_1$ with the error graphs $G_n,\dots,G_1$ witnesses that~$\phi(G_\gamma^+)$ has reduced-rank-$p$ at most $\max(k_2,\ell)$.
    By \cref{lem:reduced-rank-blowup}, this implies that $\reduced{p}(\phi(G_\gamma^+)) \le k_3$ for some bound~$k_3$ depending only on~$\max(k_2,\ell)$.
    Finally, taking any induced subgraph of~$\phi(G_\gamma^+)$ does not increase its reduced-$p$.

    Thus, any graph in~$\Phi(G)$ has reduced-$p$ bounded by~$k_3$, which is a function of~$k,\gamma,\phi$ only, proving the result.
\end{proof}
One can observe that bandwidth, cutwidth, $\text{pathwidth}+\Delta$, $\text{treewidth}+\Delta$, and $\Delta$ are all monotone and stable under powers and blowups.
Therefore, \cref{thm:transduction} implies the following.

\begin{corollary}\label{cor:fo-transductions}
The families of classes of bounded reduced bandwidth, reduced cutwidth, reduced $\text{pathwidth}+\Delta$, reduced $\text{treewidth}+\Delta$, and twin-width respectively,
are each closed under applying first-order transductions.
\end{corollary}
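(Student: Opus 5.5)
The plan is to apply \cref{thm:transduction} to each parameter in turn. For each $p$ in the list we check three things: $p$ is monotone (non-increasing under taking subgraphs), $p$ is stable under powers, and $p$ is stable under blowups. Twin-width is handled by taking $p=\Delta$, since reduced-$\Delta$ is exactly twin-width. Once the three properties hold, \cref{thm:transduction} gives a function $f_\Phi$ for every transduction $\Phi$. Hence if $\mathcal C$ has $\reduced{p}$ bounded by $k$, then $\Phi(\mathcal C)$ has $\reduced{p}$ bounded by $f_\Phi(k)$, which is the claimed closure.

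Monotonicity is routine for all five parameters, since restricting an ordering or a decomposition to a subgraph does not increase the width. For blowups we use a fixed construction. Given an ordering $v_1,\dots,v_n$ of $G$, order $\blow[k](G)$ by listing the $k$ copies of $v_1$, then those of $v_2$, and so on. Under this ordering bandwidth is at most $k(b+1)$, and cutwidth is at most $k^2 c+O(k^2\Delta)$. (Cutwidth already bounds $\Delta$ by $2c$, and a cut inside a block of copies crosses at most $k^2$ times the edges of the corresponding cut in $G$, plus edges incident to that block.) For path/tree decompositions we replace each vertex in a bag by its $k$ copies, so width scales by a factor of $k$, and $\Delta(\blow[k](G))\le k(\Delta+1)$.

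For powers we use the bound $\Delta(\pow[k](G))\le \Delta^k+1$, together with the following arguments. Bandwidth: $\pow[k](G)$ has bandwidth at most $kb$ under the same ordering. Cutwidth: every edge of $\pow[k]$ corresponds to a path of length at most $k$ in $G$, and any such path spanning a cut must use an edge of that cut. Each edge of $G$ lies on at most $k\Delta^{k}$ such paths, so the cutwidth of $\pow[k](G)$ is bounded by a function of $c$ and $k$. Pathwidth or treewidth plus $\Delta$: replace each bag $B$ by the ball of radius $\lceil k/2\rceil$ around it. This remains a valid decomposition because any two vertices at distance at most $k$ have a common vertex within distance $\lceil k/2\rceil$ on a shortest path between them. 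The new bags have size at most $|B|\Delta^{k}$, and the contiguity (subtree) property holds because the set of bags containing a given vertex is the union of connected subtrees along a connected ball, so it is again connected.

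The step that needs the most care is the power stability of cutwidth and treewidth, specifically checking connectivity of the enlarged decomposition. The cleanest check is this: for each vertex $v$, the bags containing $v$ after enlargement are exactly the bags meeting the ball of radius $\lceil k/2\rceil$ around $v$. That ball induces a connected subgraph, and the bags meeting a connected vertex set form a connected subtree. Everything else reduces to direct applications of \cref{thm:transduction}.
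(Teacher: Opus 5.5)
Your proposal is correct and takes the paper's route: the paper only observes that bandwidth, cutwidth, pathwidth$+\Delta$, treewidth$+\Delta$ and $\Delta$ are monotone and stable under powers and blowups, then invokes \cref{thm:transduction}, and your checks (block orderings for blowups, charging each power edge to a cut edge on a shortest path, enlarging bags by balls of radius $\lceil k/2\rceil$) fill in the details it leaves implicit. The one slip is cosmetic: an enlarged bag has size at most $|B|(\Delta+1)^{\lceil k/2\rceil}$, not $|B|\Delta^k$ (your bound fails, e.g., for $k=1$ or $\Delta\le 1$), but any bound in terms of $\Delta$ and $k$ suffices, so the argument stands.
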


This fact was shown for twin-width in~\cite{twin-width1}.
Notably, the family of classes of bounded reduced component max-leaf is \emph{not} closed under first-order transductions since unit interval graphs (of bounded $\compmaxleaf^\downarrow$) first-order transduce the planar grids (of unbounded $\compmaxleaf^\downarrow$).

\subsection{Stretch-width}
\Cref{thm:transduction} does not directly apply to stretch-width, as it is not defined as a reduced-$p$ parameter.
Nonetheless, \cref{lem:stretch-blowup,lem:error-graph-transduction} can be used to show that it is stable under transductions.

Consider an ordered graph~$(G,\psi)$.
We first extend the notion of stretch to a partition~$\Pc$ of~$V(G)$ equipped with any graph~$G_\Pc$ with vertex set~$\Pc$.
Recall that two sets $X,Y \in V(G)$ are said to be in conflict if their spans (relative to~$\psi$) intersect.
For parts $X,Y \in \Pc$, we say that~$Y$ \emph{$G_\Pc$-interferes} with~$X$ if~$Y$ conflicts with $\bigcup N_{G_\Pc}[X]$.
The \emph{$G_\Pc$-stretch} of~$X$, denoted by  $\str_{G_\Pc}(X)$, is the number of parts in~$\Pc$ which $G_\Pc$-interfere with~$X$.
The stretch of~$(\Pc,G_\Pc)$, denoted by  $\str(\Pc,G_\Pc)$, is the maximum of~$\str_{G_\Pc}(X)$ over all choices of $X \in \Pc$.
When~$G_\Pc$ is the red graph $\redG(G/\Pc)$, we recover the usual notion of stretch.

This notion of stretch relative to a graph~$G_\Pc$ on~$\Pc$ is clearly monotone: removing edges in~$G_\Pc$ can only decrease its stretch.
The next lemmas show that it is stable under powers and blowups.
\begin{lemma}\label{lem:stretch-power}
    Let~$(G,\psi)$ be an ordered graph, $\Pc$ a partition of~$V(G)$, and~$G_\Pc$ a graph with vertex set~$\Pc$.
    Then for any $r \in \Nn$, the $r$-th power of~$G_\Pc$ satisfies
    \[ \str(\Pc,\pow[r](G_\Pc)) \le \left(\str(\Pc,G_\Pc)+1\right)^r. \]
\end{lemma}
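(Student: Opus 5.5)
We argue by induction on~$r$, with $r=1$ trivial since $\pow[1](G_\Pc)=G_\Pc$. Write $s:=\str(\Pc,G_\Pc)$. The key observation is that for every part $X$, each neighbor $Y\in N_{G_\Pc}(X)$ conflicts with $\bigcup N_{G_\Pc}[X]$, because $Y$ lies inside that union. Hence every neighbor of $X$ in $G_\Pc$ $G_\Pc$-interferes with~$X$, and so the closed degree satisfies $|N_{G_\Pc}[X]|\le s+1$. More generally, the set of parts $G_\Pc$-interfering with~$X$ together with $X$ itself has size at most $s+1$.

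For the inductive step we aim to show that any $Y$ which $\pow[r](G_\Pc)$-interferes with~$X$ must $G_\Pc$-interfere with some part in $N_{\pow[r-1](G_\Pc)}[X]$, or with some part in a suitable ball. The span of $\bigcup N_{\pow[r]}[X]$ is the union of the spans of the parts $Z$ at distance at most~$r$ from~$X$, provided this union is an interval, which holds when the union of spans along connected sets is connected. We can sidestep this by bounding more crudely: $\conv$ of a union is contained in the union of $\conv(\bigcup N_{G_\Pc}[Z])$ over $Z$ at distance at most $r-1$ from $X$, together with the intervals between them. Here is the key point. The ball $N_{\pow[r]}[X]$ is connected in $G_\Pc$, so the sets $\conv(\bigcup N_{G_\Pc}[Z])$ for $Z$ in the $(r-1)$-ball overlap in a chain: adjacent $Z,Z'$ share $Z'\subseteq N_{G_\Pc}[Z]$. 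Their union is therefore an interval, and it equals $\conv(\bigcup N_{\pow[r]}[X])$. Consequently, if $Y$ conflicts with this span, then $Y$ conflicts with $\bigcup N_{G_\Pc}[Z]$ for some $Z$ at distance at most $r-1$, that is, $Y$ $G_\Pc$-interferes with $Z$ or $Y=Z$.

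It remains to count. The number of $Z$ at distance at most $r-1$ is at most $(s+1)^{r-1}$, by the closed-degree bound iterated along BFS layers. A slightly sharper count works layerwise: each vertex of the $(r-1)$-ball has its closed neighbourhood among at most $s+1$ parts. For each such $Z$, the number of $Y$ that interfere with $Z$ or equal $Z$ is at most $s+1$. This gives at most $(s+1)^{r-1}(s+1)=(s+1)^r$ parts, as claimed. Equivalently, one can organize the argument as an induction showing that the set of parts $\pow[r]$-interfering with $X$ is contained in the union over $Z\in N_{\pow[r-1]}[X]$ of the $G_\Pc$-interferers of~$Z$ together with $Z$ itself.

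The main obstacle is the interval/connectivity step, namely that the span of a union of sets whose pairwise spans chain-overlap is the union of those spans. This needs care in the case where $\bigcup N_{G_\Pc}[Z]$ is nonconvex, but spans are intervals by definition, and overlapping intervals along a connected structure have an interval union. So the step reduces to checking that the overlap graph on these spans is connected, which follows from the connectivity of the $(r-1)$-ball in $G_\Pc$.
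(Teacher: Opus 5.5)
Your proof is correct, but it is organized differently from the paper's. The paper introduces the interference digraph $D$ on $\Pc$, with $X\to Y$ whenever $Y$ $G_\Pc$-interferes with $X$, so that $D$ has out-degree at most $s$. It then proves by induction on $r$ that every part conflicting with $S^r=\conv(\bigcup\Ball^r_{G_\Pc}(X))$ lies in the directed $r$-ball of $D$ around $X$. Its inductive step only looks at the two extreme vertices $a,b$ of $S^r$. It picks parts $A',B'$ of the $(r-1)$-ball adjacent to the parts containing $a$ and $b$, and uses $S^r\subseteq\conv(\bigcup N_{G_\Pc}[A'])\cup S^{r-1}\cup\conv(\bigcup N_{G_\Pc}[B'])$, so a new interferer must be an out-neighbour of $A'$ or $B'$.

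You instead cover the whole span at once. The intervals $\conv(\bigcup N_{G_\Pc}[Z])$ for $Z\in\Ball^{r-1}_{G_\Pc}(X)$ meet along the edges of the connected graph $G_\Pc[\Ball^{r-1}_{G_\Pc}(X)]$, so their union is an interval. It equals $S^r$ because $\Ball^r_{G_\Pc}(X)=\bigcup_{Z\in\Ball^{r-1}_{G_\Pc}(X)}N_{G_\Pc}[Z]$.

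This route needs no induction, so your opening appeal to induction on $r$ is vestigial. It also yields a more explicit set of candidate interferers: an undirected walk of length at most $r-1$ in $G_\Pc$ followed by one interference step. That set is a subset of the paper's directed $r$-ball, and your count is $(s+1)^{r-1}\cdot(s+1)$ instead of $1+s+\dots+s^r$.

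The paper's version, in exchange, only ever uses the two endpoint parts, and it reduces the counting to an out-degree bound in $D$, which is a cleaner and more portable formulation.

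One small slip: the connectivity you need is that of the $(r-1)$-ball, not of $N_{\pow[r](G_\Pc)}[X]$ as written in your middle paragraph. You state it correctly at the end.
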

\begin{proof}
    Consider the directed graph~$D$ on vertex set~$\Pc$ where an edge $X \to Y$ indicates that~$Y$ $G_\Pc$-interferes with~$X$.
    By definition, $\str(\Pc,G_\Pc)$ is the maximum out-degree of~$D$.
    Denote by $\vec{\Ball}^r_D(X)$ the directed ball of radius~$r$,
    i.e.\ the set of parts~$Y$ such that there is a directed path from~$X$ to~$Y$ of length at most~$r$.
    We claim that this directed $r$-ball in~$D$ captures the parts interfering with~$X$ in the $r$th-power:
    \begin{equation}\label{eq:span-rball}
        \text{if~$Y$ conflicts with } \conv\left(\bigcup \Ball^r_{G_\Pc}(X)\right) \text{, then $Y \in \vec{\Ball}^r_D(X)$}.
    \end{equation}
    Since~$D$ has out-degree at most~$\str(\Pc,G_\Pc)$, this immediately implies the statement.

    To prove~\eqref{eq:span-rball}, call $S^r = \conv\left(\bigcup \Ball^r_{G_\Pc}(X)\right)$ the span of the $r$-ball under consideration,
    and consider a part~$Y$ in conflict with~$S^r$.
    We proceed by induction on~$r$, the base case $r=1$ being immediate by choice of~$D$.
    If~$Y$ is also in conflict with the span~$S^{r-1}$ (defined similarly for the $(r-1)$-ball), then the result holds by induction.
    Thus we can assume that~$S^{r-1}$ and~$\conv(Y)$ are disjoint.

    Call~$a,b$ the minimum and maximum of~$S^r$ respectively.
    The corresponding parts $A \ni a, B \ni b$ of~$\Pc$ must belong to~$\Ball^r_{G_\Pc}(X)$,
    hence there are neighbors~$A',B' \in \Ball^{r-1}_{G_\Pc}$ of~$A,B$ respectively in~$G_\Pc$.
    Note that $\bigcup N_{G_\Pc}[A']$ contains~$a$ (since~$A$ is a neighbor of~$A'$), and intersects~$S^{r-1}$ (since $A' \subseteq S^{r-1}$).
    Similarly, $\bigcup N_{G_\Pc}[B']$ contains~$b$ and intersects~$S^{r-1}$.
    It follows that
    \begin{equation}
        S^r = [a,b] \subseteq \conv\left(\bigcup N_{G_\Pc}[A']\right) \cup S^{r-1} \cup \conv\left(\bigcup N_{G_\Pc}[B']\right).
    \end{equation}
    Now, since~$\conv(Y)$ intersects~$S^r$ but not~$S^{r-1}$, it must intersect either $\conv\left(\bigcup N_{G_\Pc}[A']\right)$ or $\conv\left(\bigcup N_{G_\Pc}[B']\right)$.
    By definition, this means~$Y$ is an out-neighbor of~$A'$ or~$B'$ in~$D$.
    Finally, any edge $PQ \in G_\Pc$ directly yields an edge $P \to Q$ (and $Q \to P$) in~$D$,
    hence there also are directed paths of length at most~$r-1$ from~$X$ to~$A'$ and to~$B'$.
    Thus there is a directed path of length at most~$r$ from~$X$ to~$Y$ in~$D$, proving~\eqref{eq:span-rball} and the lemma.
\end{proof}

\begin{lemma}\label{lem:stretch-blowup}
    Let~$(G,\psi)$ be an ordered graph, $\Pc$ a partition of~$V(G)$, and~$H$ a graph with vertex set~$\Pc$.
    Consider a second partition~$\Pc'$ obtained by splitting each part of~$\Pc$ into at most~$k$ new parts,
    and call~$H'$ the corresponding blowup of~$H$, i.e.\ the graph with vertex set~$\Pc'$
    where $P',Q' \in \Pc'$ are adjacent if and only if the parts $P,Q \in \Pc$ containing them are equal or adjacent.
    Then
    \[ \str(\Pc',H') \le k \cdot (\str(\Pc,H)+1). \]
\end{lemma}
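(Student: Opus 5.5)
The plan is a direct counting argument that compares spans in $H'$ with spans in $H$. Fix a part $X' \in \Pc'$, and let $X \in \Pc$ be the part containing it. The first step is to relate the closed neighborhood of $X'$ in $H'$ to that of $X$ in $H$. By definition of the blowup, $P' \in N_{H'}[X']$ exactly when the part $P \in \Pc$ containing $P'$ is equal or adjacent to $X$ in $H$. Hence
\[ \textstyle\bigcup N_{H'}[X'] = \bigcup N_H[X], \]
because each part of $\Pc$ is the disjoint union of its subparts in $\Pc'$. In particular, the two sets have the same span $S := \conv\left(\bigcup N_H[X]\right)$.

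The second step is to bound the number of parts of $\Pc'$ that conflict with $S$. Suppose $Y' \in \Pc'$ $H'$-interferes with $X'$, so $\conv(Y') \cap S \neq \emptyset$, and let $Y \in \Pc$ contain $Y'$. Since $Y' \subseteq Y$, we have $\conv(Y') \subseteq \conv(Y)$, so $Y$ also conflicts with $S$. Two cases arise.
\begin{itemize}
\item If $Y \neq X$, then $Y$ $H$-interferes with $X$. There are at most $\str(\Pc,H)$ such $Y$.
\item If $Y = X$, this accounts for the extra $+1$.
\end{itemize}
So $Y$ ranges over at most $\str(\Pc,H)+1$ parts of $\Pc$. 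Each of these splits into at most $k$ parts of $\Pc'$, which gives at most $k(\str(\Pc,H)+1)$ candidates for $Y'$. Taking the maximum over $X'$ yields the bound.

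The only point needing care is the bookkeeping of whether the interfering parts must differ from $X$, which the definition does not require. Counting $X$ itself among the possible $Y$ absorbs this, so I do not expect any genuine obstacle. The argument uses only that conflicts are inherited by supersets, since spans are monotone under inclusion.
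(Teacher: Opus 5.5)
Your proposal is correct and follows the paper's proof step for step. Both proofs rest on the identity $\bigcup N_{H'}[X'] = \bigcup N_H[X]$, pass the conflict from $Y'$ up to its containing part $Y$ (with the case $Y = X$ supplying the $+1$), and multiply by $k$. Your explicit treatment of whether $Y = X$ counts as interfering is just a more careful version of the paper's ``or alternatively $Y=X$'' remark.
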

\begin{proof}
    For any $X' \in \Pc'$ contained in $X \in \Pc$, it is immediate to check that
    \begin{equation}
        \bigcup N_{H'}[X'] = \bigcup N_{H}[X].
    \end{equation}
    Thus, if a part $Y' \in \Pc'$ $H'$-interferes with~$X'$, then the part~$Y \in \Pc$ containing~$Y'$ also $H$-interferes with~$X$, or alternatively~$Y=X$.
    Since the number of parts~$Y$ interfering with~$X$ is at most $\str(\Pc,H)$, and each of them is split into at most~$k$ parts of~$\Pc'$, this proves the lemma.
\end{proof}

Now the \emph{rank-stretch-width} of the ordered graph~$(G,\psi)$ is the smallest~$k$ such that the following holds:
There is a contraction sequence $\Pc_n,\dots,\Pc_1$ and a sequence of graphs $G_n,\dots,G_1$, such that
\begin{itemize}
    \item $G_i$ is a rank-$k$ error graph for~$(G,\Pc_i)$,
    \item $\str(\Pc_i,G_i) \le k$, and
    \item $G_n,\dots,G_1$ is monotone: for $i > j$, if~$PQ$ is an edge in~$G_i$
    and~$P',Q'$ are the parts of~$G_j$ containing~$P$ and~$Q$ respectively,
    then~$P'Q'$ is also an edge in~$G_j$.
\end{itemize}

\begin{lemma}\label{lem:rank-stretch-width}
    An ordered graph~$(G,\psi)$ with rank-stretch-width~$k$ has stretch-width at most~$k \cdot 2^{2k+1}$.
\end{lemma}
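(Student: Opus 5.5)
The plan is to combine \cref{lem:rank-contraction-refinement} with \cref{lem:stretch-blowup}, using the monotonicity of the relative stretch in the error graph. The only delicate point is a counting convention that removes the ``$+1$'' from \cref{lem:stretch-blowup}.

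\textbf{Step 1: refine the witnessing sequence.} Fix a contraction sequence $\Pc_n,\dots,\Pc_1$ of $G$ and monotone rank-$k$ error graphs $G_n,\dots,G_1$ with $\str(\Pc_i,G_i)\le k$ for all $i$. Apply \cref{lem:rank-contraction-refinement} to obtain a genuine contraction sequence $\Pc'_n,\dots,\Pc'_1$. For each $i$ it provides an index $j$ such that $\Pc'_i$ refines $\Pc_j$ and each part of $\Pc_j$ splits into at most $K:=2^{2k+1}$ parts of $\Pc'_i$. Let $H'$ be the blowup of $G_j$ to $\Pc'_i$, as in \cref{lem:stretch-blowup}: two parts are adjacent if their parts in $\Pc_j$ are equal or adjacent in $G_j$. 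By condition~\labelcref{item:refine-homogeneous}, every red edge of $G/\Pc'_i$ joins parts that are adjacent in $H'$. Hence $\redG(G/\Pc'_i)$ is a subgraph of $H'$.

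\textbf{Step 2: compare with the relative stretch.} Since removing edges can only shrink closed neighbourhoods, $\bigcup R_{G/\Pc'_i}[X']\subseteq\bigcup N_{H'}[X']$. Every part interfering with $X'$ in the usual sense therefore $H'$-interferes with $X'$. Consequently the ordinary stretch of $\Pc'_i$ is at most $\str(\Pc'_i,H')$, and it remains to show $\str(\Pc'_i,H')\le kK$.

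\textbf{Step 3: the blowup count without the $+1$.} We redo the counting of \cref{lem:stretch-blowup}. Take $X'\subseteq X\in\Pc_j$. As noted there, $\bigcup N_{H'}[X']=\bigcup N_{G_j}[X]$. So if $Y'\subseteq Y$ $H'$-interferes with $X'$, then $Y$ conflicts with $\bigcup N_{G_j}[X]$, that is, $Y$ $G_j$-interferes with $X$. This holds including the case $Y=X$, because $X$ trivially conflicts with a set containing it.

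Under the extended definition, $\str_{G_j}(X)$ counts \emph{all} parts of $\Pc_j$ that $G_j$-interfere with $X$, and $X$ itself is among them. So $X$ is already included in $\str_{G_j}(X)\le k$, and no separate ``$Y=X$'' term is needed. The parts $Y'$ therefore lie inside at most $k$ parts $Y$ of $\Pc_j$, each split into at most $K$ parts of $\Pc'_i$. This gives $\str_{H'}(X')\le kK=k\cdot 2^{2k+1}$.

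If one instead insists on the convention that excludes $X$, as in the original stretch definition ($Y\in\Pc\setminus\{X\}$), then the bound $\str(\Pc_j,G_j)\le k$ must be read as including $X$. The main thing to check is that the two conventions are used consistently: the rank-stretch hypothesis should be read with the inclusive count, while the target stretch-width uses the exclusive count, which is only smaller. I expect this bookkeeping to be the only real obstacle; the rest is the direct composition of the two lemmas above. Taking the maximum over all $i$ yields $\stw(G,\psi)\le k\cdot 2^{2k+1}$.
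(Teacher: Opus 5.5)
Your proposal is correct and follows the paper's proof: refine the sequence via \cref{lem:rank-contraction-refinement}, observe that $\redG(G/\Pc'_i)$ is a subgraph of the blowup of $G_j$, and bound the stretch of that blowup. Your recount in Step~3 matters for the exact constant: it uses the fact that $\str_{G_\Pc}$, as literally defined, counts $X$ itself, and this is what gives $k\cdot 2^{2k+1}$. The paper instead cites \cref{lem:stretch-blowup} directly, whose stated bound would give $(k+1)\cdot 2^{2k+1}$; that weaker constant is still enough for the qualitative equivalence.
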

\begin{proof}
    Let $\Pc_n,\dots,\Pc_1$ be a contraction sequence for~$(G,\psi)$,
    and $G_n,\dots,G_1$ a corresponding monotone sequence of rank-$k$ error graphs, such that $\str(\Pc_i,G_i) \le k$.
    By \cref{lem:rank-contraction-refinement}, one can obtain from this a new contraction sequence $\Pc'_n,\dots,\Pc'_1$,
    such that each~$\Pc'_i$ is obtained from some~$\Pc_j$ by splitting each part into~$2^{2k+1}$ smaller parts,
    and the red graph $\redG(G/\Pc'_i)$ is a subgraph of the corresponding $2^{2k+1}$-blowup of~$G_i$.
    By \cref{lem:stretch-blowup}, this implies that
    \begin{equation}\label{eq:stretch-blowup}
        \str(\Pc'_i,\redG(G/\Pc'_i)) \le k \cdot 2^{2k+1}.
    \end{equation}
    Since \eqref{eq:stretch-blowup} refers to the `usual' notion of stretch, for the red graph of~$\Pc'_i$,
    this shows that~$\Pc'_n,\dots,\Pc'_1$ is a contraction sequence witnessing $\stw(G,\psi) \le k \cdot 2^{2k+1}$.
\end{proof}

Next, let us observe that bounded stretch-width is stable under copying.
\begin{lemma}\label{lem:stw-copying}
    For any graph~$G$ and $\gamma \in \Nn$, the $\gamma$-copy $G_\gamma := K_\gamma \square G$ satisfies 
    \[ \stw(G_\gamma) \le 2\gamma (\stw(G)+1). \]
\end{lemma}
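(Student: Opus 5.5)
We mimic the proof of \cref{lem:copying}, adapted to the ordering and to the stretch measure. Fix an ordering~$\psi$ of~$V(G)$ and a contraction sequence $\Pc_n,\dots,\Pc_1$ witnessing $\stw(G,\psi) = s$. For each $v \in V(G)$, let $v_1,\dots,v_\gamma$ be its copies. We order $V(G_\gamma)$ by $\psi'$, which places the copies of each vertex consecutively: $u_i$ precedes $v_j$ whenever $\psi(u)<\psi(v)$, and $v_1,\dots,v_\gamma$ appear in this order. The key property of~$\psi'$ is that for any $X \subseteq V(G)$ and any set $X'$ of copies of vertices of~$X$ containing copies of the $\psi$-extremal elements of $X$, $\conv_{\psi'}(X')$ is essentially the union of the blocks of $\conv_\psi(X)$. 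Hence conflicts in $(G_\gamma,\psi')$ between such copy-sets imply conflicts in $(G,\psi)$ between the underlying sets.

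As in \cref{lem:copying}, for a partition $\Pc$ of $V(G)$ let $\Pc'$ consist of the parts $X^{(i)} = \{x_i : x\in X\}$ for $X\in\Pc$ and $i\in[\gamma]$. Two parts $X^{(i)},Y^{(j)}$ are non-homogeneous in $G_\gamma$ only if $X=Y$ (with $i\neq j$), or $X,Y$ are red-adjacent in $G/\Pc$, or $X$ and $Y$ are adjacent (black) and $i=j$ gives homogeneity anyway. The within-copy edges only link $x_i$ to $x_j$, so $X^{(i)}$ versus $X^{(j)}$ is non-homogeneous as soon as $|X|\ge2$. Thus $R_{G_\gamma/\Pc'}[X^{(i)}] \subseteq \{Z^{(j)} : Z \in R_{G/\Pc}[X], j\in[\gamma]\}$, and its union projects into $\bigcup R_{G/\Pc}[X]$. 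If $Y^{(j)}$ interferes with $X^{(i)}$, then by the key property $Y$ conflicts with $\bigcup R_{G/\Pc}[X]$. So either $Y$ interferes with $X$ or $Y = X$, which gives at most $\gamma(s+1)$ interfering parts.

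Next we interpolate. Between $\Pc'_{i+1}$ and $\Pc'_i$, where $\Pc_i$ merges $X,Y$, we merge $X^{(1)}\cup Y^{(1)}$, then $X^{(2)}\cup Y^{(2)}$, and so on. In an intermediate step, each old part $Z$ of $\Pc_{i+1}$ or new part of $\Pc_i$ is represented by at most $2\gamma$ pieces, and the red neighbourhood of any piece projects into $R_{G/\Pc_{i+1}}[\cdot]$ or $R_{G/\Pc_i}[\cdot]$. Counting interfering pieces against the stretch of the coarser or finer partition, as in \cref{lem:stretch-blowup}, gives at most $2\gamma(s+1)$. Finally, $\Pc'_1$ has $\gamma$ parts $V^{(1)},\dots,V^{(\gamma)}$, which we merge one at a time. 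At this stage there are at most $\gamma$ parts in total, so the stretch is at most $\gamma$.

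The main obstacle is the interpolation step. A half-merged configuration mixes $\Pc_{i+1}$ and $\Pc_i$, and a piece $X^{(j)}$ of the not-yet-merged copy may be red-adjacent to the merged piece $(X\cup Y)^{(j')}$. Its red closure therefore projects into $R_{G/\Pc_i}[X\cup Y]$ rather than into $R_{G/\Pc_{i+1}}[X]$. I would handle this by measuring every interference against $\Pc_i$, which is coarser: each part of $\Pc_i$ splits into at most $2\gamma$ pieces, and every piece's red closure projects into the red closure in $G/\Pc_i$ of its containing part. This is exactly because red edges only get coarsened along a contraction sequence. \cref{lem:stretch-blowup} with $k=2\gamma$ then yields the bound $2\gamma(s+1)$.
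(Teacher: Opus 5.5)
Your proposal is correct and follows the paper's proof: the same consecutive-copies ordering $\psi'$, the same replicated partitions, and the same interpolation of the $\gamma$ merges followed by $\gamma-1$ final merges. Your bound for the intermediate steps, obtained by comparing each intermediate partition directly against $\Pc_i$ (each part split into at most $2\gamma$ pieces), justifies the paper's terse ``at worst double the stretch'' claim more cleanly. Two things to tighten. First, the projection fact you actually need, namely that the projection of $\conv_{\psi'}(X')$ lies in $\conv_\psi$ of the projection of $X'$, holds for every $X' \subseteq V(G_\gamma)$; it needs no hypothesis about containing copies of the extremal elements. This matters because $\bigcup R[A]$ need not contain copies of the extremal elements of $\bigcup R_{G/\Pc_i}[Q]$, so your key property as stated would not cover your own application. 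Second, \cref{lem:stretch-blowup} cannot be applied literally, since $\Pc''$ and $\Pc_i$ partition different vertex sets; you are using its counting argument combined with this projection.
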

\begin{proof}
    Say $\stw(G) = k$, and consider~$\psi$ a vertex ordering of~$G$ witnessing this stretch-width.
    We extend~$\psi$ to a vertex ordering~$\psi'$ of~$G_\gamma$, keeping the copies $v_1,\dots,v_\gamma$ of each vertex $v \in V(G)$ consecutive in~$\psi'$.
    Given a contraction sequence $\Pc_n,\dots,\Pc_1$ for~$G$ witnessing $\stw(G,\psi) = k$,
    define the partition~$\Pc'_i$ of~$V(G_\gamma)$ by replicating~$\Pc_i$ in each copy of~$G$, while keeping vertices from distinct copies in distinct parts (cf.\ the proof of \cref{lem:copying}).
    Thus, each part $X \in \Pc_i$ becomes~$\gamma$ parts $X_1,\dots,X_\gamma \in V(G)$, with~$X_i$ being in the $i$-th copy of~$G$.
    Now one may check that if~$X_i$ interferes with~$Y_j$ in~$(G_\gamma,\psi')$, then~$X$ interferes with~$Y$ in~$(G,\psi')$, or $X=Y$.
    This implies that~$\Pc'_i$ has stretch at most~$(k+1)\gamma$.

    To go from~$\Pc'_{i+1}$ to~$\Pc'_i$, one merges~$\gamma$ disjoint pairs of parts simultaneously.
    Thus $\Pc'_n,\dots,\Pc'_1$ is not quite a contraction sequence, but it can, as in \cref{lem:copying}, be turned into one:
    for each~$i$, add intermediate steps to perform the~$\gamma$ merges between~$\Pc'_{i+1}$ and~$\Pc'_i$ in any order, as well as~$\gamma-1$ steps at the end to merge the~$\gamma$ parts of~$\Pc'_1$ into a single one.
    These additional steps between~$\Pc'_{i+1}$ and~$\Pc'_i$ have at worse double the stretch compared to~$\Pc'_i$, and the ones at the end have stretch trivially bounded by~$\gamma$ since there are at most~$\gamma$.
    This yields a contraction sequence witnessing $\stw(G_\gamma,\psi') \le 2\gamma (k+1)$.
\end{proof}

We can now prove that stretch-width is stable under transductions:
\cref{lem:stw-copying} handles the initial copy operation;
\cref{lem:error-graph-transduction,lem:stretch-power} together imply that for a fixed interpretation~$\phi$, if~$G$ has bounded stretch-width, then~$\phi(G)$ has bounded rank-stretch-width;
and finally \cref{lem:rank-stretch-width} shows that bounded rank-stretch-width and bounded stretch-width are equivalent.
We omit the details, as they are largely the same as in the proof of \cref{thm:transduction}
\begin{theorem}\label{thm:stw-transduction}
    For any first-order transduction~$\Phi$, there is a function $f:\Nn \to \Nn$ such that any graph~$G$ and any output $H \in \Phi(G)$ satisfy $\stw(H) \le f(\stw(G))$.
\end{theorem}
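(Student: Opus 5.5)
The plan mirrors the proof of \cref{thm:transduction}, with stretch relative to an auxiliary graph playing the role of~$p$. Write $\Phi = (\gamma,\eta,\phi)$ and let $k = \stw(G)$. By \cref{lem:stw-copying}, the $\gamma$-copy satisfies $\stw(G_\gamma) \le k_1 := 2\gamma(k+1)$. Fix an ordering~$\psi$ of~$V(G_\gamma)$ and a contraction sequence $\Pc_n,\dots,\Pc_1$ witnessing this bound. By \cref{lem:largereddegree}, every red graph $\redG(G_\gamma/\Pc_i)$ has maximum degree at most~$\Delta := k_1$.

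Now fix any coloring $G_\gamma^+ = (G_\gamma,U_1,\dots,U_\eta)$. \Cref{lem:error-graph-transduction} provides $r,\ell$, depending only on~$\phi$ and~$k_1$, such that $G_i := \pow[r](\redG(G_\gamma/\Pc_i))$ is a rank-$\ell$ error graph for $(\phi(G_\gamma^+),\Pc_i)$. We then check three properties.
\begin{enumerate}
\item The sequence $G_n,\dots,G_1$ is monotone. If two parts are at red distance at most~$r$ in~$\Pc_i$, then the parts containing them in~$\Pc_j$ ($j<i$) are equal or at red distance at most~$r$. This holds because a red edge between $P,Q\in\Pc_i$ forces the parts $P'\supseteq P$, $Q'\supseteq Q$ to be equal or red-adjacent, since a non-homogeneous pair stays non-homogeneous after enlarging. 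Equal parts simply shorten the path. Note that~$G_j$ is a graph on~$\Pc_j$ without loops, which is consistent with error graphs using closed neighborhoods.
\item The stretch is bounded. By \cref{lem:stretch-power}, $\str(\Pc_i,G_i) \le (k_1+1)^r$, since~$\str(\Pc_i,\redG(G_\gamma/\Pc_i)) \le k_1$ and the stretch is computed in the same ordered graph~$(G_\gamma,\psi)$. The relevant ordered graph for~$\phi(G_\gamma^+)$ has the same vertex set and the same ordering~$\psi$. Stretch relative to~$G_\Pc$ depends only on~$\psi$,~$\Pc$, and~$G_\Pc$, not on the edges of the underlying graph, so the same bound holds for~$(\phi(G_\gamma^+),\psi)$.
\item These two facts show that $(\phi(G_\gamma^+),\psi)$ has rank-stretch-width at most $k_2 := \max(\ell,(k_1+1)^r)$.
\end{enumerate}
\Cref{lem:rank-stretch-width} then gives $\stw(\phi(G_\gamma^+)) \le k_2 2^{2k_2+1}$.

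It remains to pass to an induced subgraph~$H$. Restrict~$\psi$ to~$V(H)$ and intersect each partition with~$V(H)$, deleting empty parts and skipping repeated partitions. Two observations are needed here. First, red edges only disappear under this restriction. Second, conflicts among restricted parts only shrink, because the span of a subset lies inside the span of the set. Hence the stretch does not increase, and $\stw(H) \le \stw(\phi(G_\gamma^+))$. Since all the bounds depend only on $k,\gamma,\phi$, this yields the desired function~$f$.

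The main obstacle is the monotonicity verification, together with making sure that the stretch of the error graphs is measured in the right ordered graph. Since \cref{lem:rank-stretch-width} uses only the ordering and the rank condition, this should go through. A minor subtlety is that the max-degree bound on red graphs must come from \cref{lem:largereddegree}, in place of assumption~\eqref{eq:assume-bounded-degree}.
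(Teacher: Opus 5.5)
Your proposal is correct and follows essentially the same route as the paper's sketch: \cref{lem:stw-copying} for the copying step, then \cref{lem:error-graph-transduction} together with \cref{lem:stretch-power} to obtain bounded rank-stretch-width, then \cref{lem:rank-stretch-width} to return to stretch-width. You also correctly supply the details the paper omits, namely the monotonicity of the powered red graphs (read as ``equal or adjacent''), the degree bound via \cref{lem:largereddegree}, and the fact that stretch-width does not increase under taking induced subgraphs.
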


\section{Algorithmic applications of reduced component max-leaf}\label{sec:algoapplication}

In this section, we show that \textsc{Maximum Independent Set}, \textsc{Maximum Induced Matching},  and \textsc{Maximum Induced $d$-Regular Subgraph} problems can be solved in polynomial time on classes of bounded reduced component max-leaf, when a~contraction sequence witnessing low $\compmaxleaf^\downarrow$ is given. Instead of dealing with each problem separately, we consider a general problem, called \textsc{$\sigma$-Neighborhood} for $\sigma \subseteq \mathbb{N}$, which asks for a maximum-size vertex subset $S$ of a graph $G$ such that, for each $v \in S$, the number of neighbors of $v$ in $S$ lies in $\sigma$. In Subsection~\ref{subsec:sigmaneighbor}, we prove that if $\sigma$ is finite, then \textsc{$\sigma$-Neighborhood} can be solved in polynomial time. As reduced component max-leaf does not change when taking the complement of a graph, \textsc{Maximum Clique} is also polynomial time solvable on these classes.

In Subsection~\ref{subsec:IDP}, we show that  \textsc{Induced Disjoint Paths} can be solved in polynomial time on classes of bounded reduced component max-leaf, when a~contraction sequence witnessing low $\compmaxleaf^\downarrow$ is given.

We describe a general strategy for our algorithms. Let $G$ be a graph and $\mathcal{P}_n, \ldots, \mathcal{P}_1$ be a given contraction sequence of $G$, where each trigraph $G/\mathcal{P}_i$ has component max-leaf~$t$. For each connected red-induced subtrigraph $H$ of $G/\mathcal{P}_i$, we recursively store partial solutions on $\bigcup_{P \in V(H)} P$ with respect to certain conditions on the red neighbors of $H$ in $G/\mathcal{P}_i$.
By~\Cref{lem:numberofinducedsubgraphs}, the number of connected red-induced subtrigraphs of $G/\mathcal{P}_i$ is bounded by $n^{t}$. It will imply that the number of tables to consider is polynomial in~$n$, when $t$ is fixed.

The following lemmas are commonly used to transfer the information from $G/\mathcal{P}_i$ to $G/\mathcal{P}_{i+1}$. See \Cref{fig:PstarinH,fig:PstaroutH} for illustrations of the setting of the lemmas.

\begin{lemma}\label{lem:contraction}
Let $G$ be a graph and let $\mathcal{P}$ be a partition of $V(G)$.
Let $P_1$ and $P_2$ be two parts of $\mathcal{P}$, and 
let $\mathcal{P}'$ be the partition obtained from $\mathcal{P}$ by merging $P_1$ and $P_2$ into $P^*$. 

Let $H$ be a non-empty connected red-induced subtrigraph of $G/\mathcal{P}'$ where $P^* \in V(H) \cup \big(R_{G/\mathcal{P}'}(V(H)) \setminus B_{G/\mathcal{P}'}(V(H))\big)$.  
For each $j \in [2]$, let $\mathcal{Q}_j = B_{G/\mathcal{P}}(P_j) \cap (V(H)\setminus \{P^*\})$. Then the following hold. 
\begin{enumerate}[(1)]
    \item For each $j\in [2]$, $\mathcal{Q}_j\cap B_{G/\mathcal{P}}(P_{3-j})=\emptyset$.
    \item  $\mathcal{Q}_1$ and $\mathcal{Q}_2$ are disjoint.
    \item 
    Every connected subtrigraph of $H - \{P^*\}$ is a connected red-induced subtrigraph of $\mathcal{R}(G/\mathcal{P})$.
\end{enumerate}
    
\end{lemma}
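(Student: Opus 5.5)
The plan is to compare, part by part, the relationship of $P_1,P_2$ with a part $Q \neq P^*$ in $G/\mathcal{P}$ against the relationship of $P^*$ with $Q$ in $G/\mathcal{P}'$. The basic fact I will use throughout is this: for $Q \in \mathcal{P}\setminus\{P_1,P_2\}$ (which is also a part of $\mathcal{P}'$), the edge $P^*Q$ in $G/\mathcal{P}'$ is black if and only if both $P_1Q$ and $P_2Q$ are black in $G/\mathcal{P}$. It is absent if and only if both are absent, and it is red otherwise. Moreover, every pair of parts $Q,Q'$ distinct from $P^*$ has the same type in $G/\mathcal{P}$ and $G/\mathcal{P}'$, since their vertex sets are unchanged.

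For (1), fix $j$ and $Q \in \mathcal{Q}_j$, so $Q \in V(H)\setminus\{P^*\}$ and $Q$ is complete to $P_j$. Suppose $Q$ were also complete to $P_{3-j}$. Then $Q$ is complete to $P^*$, so $P^*Q$ is black in $G/\mathcal{P}'$. I then split according to the hypothesis on $P^*$. If $P^* \in V(H)$, this contradicts that $H$ is red-induced: all edges of $H$ between its vertices $P^*$ and $Q$ are red. If instead $P^* \in R_{G/\mathcal{P}'}(V(H)) \setminus B_{G/\mathcal{P}'}(V(H))$, then $P^*$ would be a black neighbour of $Q \in V(H)$, contradicting $P^*\notin B_{G/\mathcal{P}'}(V(H))$.

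Item (2) is then immediate. Any $Q \in \mathcal{Q}_1 \cap \mathcal{Q}_2$ lies in $\mathcal{Q}_1 \cap B_{G/\mathcal{P}}(P_2)$, which is empty by (1).

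For (3), let $H'$ be a connected subtrigraph of $H - \{P^*\}$. Its vertices are parts of $\mathcal{P}'$ other than $P^*$, hence also parts of $\mathcal{P}$. By the basic fact, adjacency and colours among them agree in $G/\mathcal{P}$ and $G/\mathcal{P}'$. So the induced subtrigraph on $V(H')$ has only red edges in $G/\mathcal{P}$, because $H$ is red-induced in $G/\mathcal{P}'$. Also, $H'$ keeps its connectivity, which only uses these red edges. Therefore $H'$ is a connected red-induced subtrigraph, living in $\mathcal{R}(G/\mathcal{P})$.

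I expect no real obstacle here. The only care needed is in (1): I must handle the two alternatives for the position of $P^*$ separately, and use red-inducedness of $H$ in one case and the exclusion $P^* \notin B(V(H))$ in the other.
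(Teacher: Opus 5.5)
Your proof is correct and follows the paper's argument: in (1), a common black neighbour of $P_1$ and $P_2$ in $V(H)\setminus\{P^*\}$ would be a black neighbour of $P^*$ in $G/\mathcal{P}'$. This is ruled out either because $H$ is red-induced (when $P^*\in V(H)$) or because $P^*\notin B_{G/\mathcal{P}'}(V(H))$; (2) then follows from (1), and (3) follows from $G/\mathcal{P}-\{P_1,P_2\}=G/\mathcal{P}'-\{P^*\}$.
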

\begin{proof}
    (1) 
    We claim that there are no black edges between $P^*$ and $V(H) \setminus \{P^*\}$ in $G/\mathcal{P}'$. 
    If $P^*\in V(H)$, then this is true because $H$ has no black edges.
    If $P^*\in (R_{G/\mathcal{P}'}(V(H)) \setminus B_{G/\mathcal{P}'}(V(H)))$, then $P^*$ has no black neighbors on $H$ in $G/\mathcal{P}'$.
    Thus, the claim holds.

    If there is a part $Q$ in $\mathcal{Q}_j\cap B_{G/\mathcal{P}}(P_{3-j})$, then $Q$ is a black common neighbor of $P_1$ and $P_2$.
    Thus, 
    $Q$ and $P^*$ are linked by a black edge in $G/\mathcal{P}'$. As $Q\in V(H)\setminus \{P^*\}$, this is a contradiction. 
    
    \medskip
    (2) This follows from (1).

    \medskip
    (3) This holds because 
    $G/\mathcal{P} - \{P_1,P_2\} = G/\mathcal{P}' - \{P^*\}$.
 \end{proof}

\begin{lemma}\label{lem:componentcount}
    Let $t$ be a positive integer. Let $G$ be a graph and let $\mathcal{P}$ be a partition of $V(G)$. Let $P_1$ and $P_2$ be two parts of $\mathcal{P}$, and let $\mathcal{P}'$ be the partition obtained from $\mathcal{P}$ by merging $P_1$ and $P_2$ into $P^*$. Let $H$ be a non-empty connected red-induced subtrigraph of $G/\mathcal{P}'$ where $P^* \in V(H) \cup \big(R_{G/\mathcal{P}'}(V(H)) \setminus B_{G/\mathcal{P}'}(V(H))\big)$ and $(G/\mathcal{P}')[V(H)\cup \{P^*\}]$ has max-leaf at most $t$. For each $j \in [2]$, let $\mathcal{Q}_j = B_{G/\mathcal{P}}(P_j) \cap (V(H)\setminus \{P^*\})$. Let $F$ be the subtrigraph of $G/\mathcal{P}$ induced by $(V(H)\setminus\{P^*\})\cup \{P_1, P_2\}$. Let $\mathcal{O} \subseteq \{P_1, P_2\} \cup \mathcal{Q}_1 \cup \mathcal{Q}_2$ such that for each $j \in [2]$ the following hold:
    \begin{itemize}
        \item If $\mathcal{Q}_j \cap \mathcal{O} \neq \emptyset$, then either $\mathcal{Q}_j \subseteq \mathcal{O}$ or $\{P_j\}\cup \mathcal{Q}_j \subseteq \mathcal{O}$.
        \item If $\mathcal{Q}_j\setminus \mathcal{O} \neq \emptyset$, then $P_j \in \mathcal{O}$.
    \end{itemize} 
    Then $F - \mathcal{O}$ has at most $t+2$ connected components.
\end{lemma}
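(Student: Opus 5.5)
The plan is to compare $F-\mathcal{O}$ with a graph living entirely inside $G/\mathcal{P}'$, where the max-leaf hypothesis applies. Let $K := (G/\mathcal{P}')[V(H)\cup\{P^*\}]$; by assumption $K$ has max-leaf at most $t$. Since $H$ is connected and $P^*$ is either in $V(H)$ or a red neighbor of $V(H)$, the graph $K$ is connected.

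Set $X := \{P^*\} \cup \big(\mathcal{O} \cap (\mathcal{Q}_1 \cup \mathcal{Q}_2)\big) \subseteq V(K)$. I would split the components of $F-\mathcal{O}$ into two kinds:
\begin{itemize}
\item those containing $P_1$ or $P_2$, of which there are at most $2$;
\item all the others, each contained in $V(H)\setminus(\{P^*\}\cup\mathcal{O})$.
\end{itemize}
The goal is to bound the second kind by $t$, via the components of $K - X$.

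First, $X$ induces a connected subgraph of $K$. Every $Q \in \mathcal{Q}_j$ is a black neighbor of $P_j$ in $G/\mathcal{P}$, so $Q$ is adjacent to $P^* = P_1\cup P_2$ in $G/\mathcal{P}'$. By \cref{lem:contraction}(1), $Q$ is not a black neighbor of $P_{3-j}$, so this edge is red. Either way, $X$ contains a spanning star centred at $P^*$.

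Next, by \cref{lem:contraction}(3) (indeed $G/\mathcal{P}-\{P_1,P_2\} = G/\mathcal{P}'-\{P^*\}$), the adjacencies among parts of $V(H)\setminus\{P^*\}$ are identical in $F$ and in $K$. Hence a component $D$ of $F-\mathcal{O}$ avoiding $P_1,P_2$ is a connected subgraph of $K - X$. It is in fact a full component of $K - X$: any $K$-neighbor of $D$ outside $X$ lies in $V(H)\setminus(\{P^*\}\cup\mathcal{O})$ and is adjacent to $D$ in $F$ as well, contradicting maximality of $D$. So distinct such $D$ are distinct components of $K-X$.

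It remains to show that $K - X$ has at most $t$ components, by the argument of \cref{lem:thenumberofredneighbor}. Let $C_1,\dots,C_m$ be the components of $K-X$. Since $K$ is connected, each $C_i$ has a vertex $c_i$ adjacent to some $x_i \in X$. Start from the star on $X$ and attach each $c_i$ to $x_i$ by one edge; this is a tree in which every $c_i$ is a leaf. Extend it to a spanning tree of $K$. Each leaf $c_i$ either stays a leaf or its subtree inside $C_i$ ends in a leaf of $C_i$. The $C_i$ are disjoint, so this gives at least $m$ distinct leaves, hence $m \le t$. Altogether $F-\mathcal{O}$ has at most $t+2$ components.

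The main obstacle I expect is the edge-case bookkeeping in this leaf count. If $m=1$ the count is trivial, since $t \ge 1$. If $X=\{P^*\}$ and $m\ge 2$, the centre is not a leaf and the count still works. I must also check that a component of $F-\mathcal{O}$ containing $P_j$ is never double counted against the components $C_i$; it is not, since such a component is put in the separate budget of $2$.

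The structural conditions on $\mathcal{O}$ appear not to be needed for this upper bound. They seem rather to guarantee that $P_j$'s black neighbors are handled consistently later in the algorithm. If the argument above turns out to miss an edge case, the first place I would look for their use is whether $P_j \notin \mathcal{O}$ is possible with some $\mathcal{Q}_j$-part in $\mathcal{O}$.
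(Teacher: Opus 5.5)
Your proof is correct and is essentially the paper's argument. Your $X$ is exactly the paper's $\mathcal{O}'=(\mathcal{O}\setminus\{P_1,P_2\})\cup\{P^*\}$, the paper likewise bounds the components of $(G/\mathcal{P}')[V(H)\cup\{P^*\}]-\mathcal{O}'$ by $t$ via \cref{lem:thenumberofredneighbor} (whose leaf count you redo inline), and then allows at most two extra components for $P_1,P_2$. Your uniform count only replaces the paper's case split on whether $\{P_1,P_2\}\subseteq\mathcal{O}$, and you are right that the structural conditions on $\mathcal{O}$ are not actually needed for this bound.
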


\begin{proof}
    Since $H$ is connected, $F$ consists of at most two connected components. 
    If $\mathcal{O} = \emptyset$, then $F - \mathcal{O}=F$ has at most two components and we are done. So, we may assume that $\mathcal{O} \neq \emptyset$.

    We first consider the case that $\{P_1, P_2\} \subseteq \mathcal{O}$. As $\mathcal{P} \setminus \{P_1, P_2\} = \mathcal{P}' \setminus \{P^*\}$, we have 
    \[F - \mathcal{O} = (G/\mathcal{P}')[V(H)\cup \{P^*\}] - \big((\mathcal{O}\setminus \{P_1, P_2\}) \cup \{P^*\}\big).\] 
    By the definition of $\mathcal{Q}_j$, the set $(\mathcal{O}\setminus \{P_1, P_2\}) \cup \{P^*\}$ induces a connected red-induced subtrigraph of $G\setminus \mathcal{P}'$. As $(G/\mathcal{P}')[V(H)\cup \{P^*\}]$ has max-leaf at most $t$, by~\Cref{lem:thenumberofredneighbor}, $F - \mathcal{O}$ has at most $t$ connected components.

    So, we may assume that $P_j \notin \mathcal{O}$ for some $j \in [2]$. Let $X = \{P_j \in \{P_1, P_2\} : P_j \notin \mathcal{O}\}$. For every $P_j \in X$, since $P_j \notin \mathcal{O}$, we have either $\mathcal{Q}_j = \emptyset$ or $\mathcal{Q}_j \subseteq \mathcal{O}$. Let $\mathcal{O}' = (\mathcal{O} \setminus \{P_1, P_2\})\cup \{P^*\}$. Then 
    \[F - \mathcal{O}=(G/\mathcal{P})[(V(H)\setminus \mathcal{O}')\cup X\}].\]
    By~\Cref{lem:thenumberofredneighbor}, $H - \mathcal{O}'=(G/\mathcal{P}')[V(H)\cup \{P^*\}]-\mathcal{O}'$ has at most $t$ connected components. Since $|X| \le 2$, it follows that $F - \mathcal{O}$ has at most $t+2$ connected components.

\end{proof}

We use the following notations.
Let $G$ be a graph and $\mathcal{P}$ be a vertex partition of $G$. For a subgraph $H$ of $G$, let $\trace_{G,\mathcal{P}}(H)$ denote the set of all parts in $\mathcal{P}$ that contain a vertex of $H$. 
If $G$ is clear from the context, we simply write it as $\trace_{\mathcal{P}}(H)$.
  For a family $\mathcal{Q}$ of vertex sets in $G$ and $I\subseteq V(G)$, we write $I_{\mathcal{Q}}=I\cap \left(\bigcup_{X\in \mathcal{Q}}X\right)$.

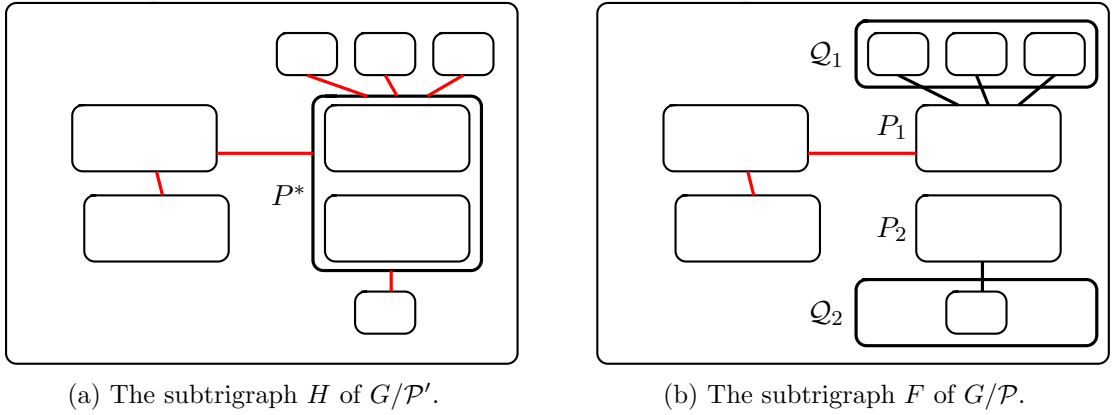
\begin{figure}
    \centering
    \begin{subfigure}{0.4\linewidth}
        \centering
           \begin{tikzpicture}[scale=0.8]
        \tikzstyle{v}=[circle, draw, solid, fill=black, inner sep=0pt, minimum width=3pt]
        \tikzset{c1/.style={purple, line width=6pt,opacity=0.5,line cap=round,shorten >=-2pt, shorten <=-2pt}}

        \draw[rounded corners, thick] (0.5,5.5)--(-2,5.5)--(-2,-0.5)--(6.5,-0.5)--(6.5,5.5)--(0.4,5.5); 

        \draw[rounded corners, thick] (3.5,2.3)--(3.3,2.3)--(3.3,1.2)--(5.7,1.2)--(5.7,2.3)--(3.4,2.3); 
        \draw[rounded corners, thick] (3.5,3.8)--(3.3,3.8)--(3.3,2.7)--(5.7,2.7)--(5.7,3.8)--(3.4,3.8); 

        \draw[rounded corners, very thick] 
        (3.5,4.1-0.15)--(3.1,4.1-0.15)--(3.1,0.9+0.15)--(5.9,0.9+0.15)--(5.9,4.1-0.15)--(3.3,4.1-0.15);

    \draw[rounded corners, thick] (2.7,5)--(2.5,5)--(2.5,4.3)--(3.5,4.3)--(3.5,5)--(2.6,5);
    \draw[rounded corners, thick] (2.7+1.3,5)--(2.5+1.3,5)--(2.5+1.3,4.3)--(3.5+1.3,4.3)--(3.5+1.3,5)--(2.6+1.3,5);
    \draw[rounded corners, thick] (2.7+2.6,5)--(2.5+2.6,5)--(2.5+2.6,4.3)--(3.5+2.6,4.3)--(3.5+2.6,5)--(2.6+2.6,5); 
    \draw[rounded corners, thick] (2.7+1.3,.7)--(2.5+1.3,.7)--(2.5+1.3,0)--(3.5+1.3,0)--(3.5+1.3,.7)--(2.6+1.3,.7); 

     \draw[rounded corners, thick] (3.5-4,2.3)--(3.3-4,2.3)--(3.3-4,1.2)--(5.7-4,1.2)--(5.7-4,2.3)--(3.4-4,2.3); 
        \draw[rounded corners, thick] (3.5-4.2,3.8)--(3.3-4.2,3.8)--(3.3-4.2,2.7)--(5.7-4.2,2.7)--(5.7-4.2,3.8)--(3.4-4.2,3.8); 

   \draw[red,very thick] (3,4.3)--(4.5-0.5,4.1-0.15);
   \draw[red,very thick] (3+1.3,4.3)--(4.5,4.1-0.15);
   \draw[red,very thick] (3+2.6,4.3)--(4.5+0.5,4.1-0.15);

   \draw[red,very thick] (3+1.4,0.9+0.15)--(4.4,0.7);

   \draw[red, very thick] (3.1,3)--(1.5,3);
   \draw[red, very thick] (4.6-4,2.3)--(4.5-4,2.7);

   \node [label=$P^*$] (v) at (2.7, 1.8){};

    \end{tikzpicture}
    \caption{The subtrigraph $H$ of $G/\mathcal{P}'$.}
    \end{subfigure}
    \hspace{1cm}
    \begin{subfigure}{0.4\linewidth}
        \centering
        \begin{tikzpicture}[scale=0.8]
        \tikzstyle{v}=[circle, draw, solid, fill=black, inner sep=0pt, minimum width=3pt]
        \tikzset{c1/.style={purple, line width=6pt,opacity=0.5,line cap=round,shorten >=-2pt, shorten <=-2pt}}

        \draw[rounded corners, thick] (0.5,5.5)--(-2,5.5)--(-2,-0.5)--(6.5,-0.5)--(6.5,5.5)--(0.4,5.5); 

        \draw[rounded corners, thick] (3.5,2.3)--(3.3,2.3)--(3.3,1.2)--(5.7,1.2)--(5.7,2.3)--(3.4,2.3); 
        \draw[rounded corners, thick] (3.5,3.8)--(3.3,3.8)--(3.3,2.7)--(5.7,2.7)--(5.7,3.8)--(3.4,3.8); 
   
        \draw[rounded corners, very thick] (2.5,5.2)--(2.3,5.2)--(2.3,4.1)--(6.3,4.1)--(6.3,5.2)--(2.4,5.2);
        \draw[rounded corners, very thick] (2.5,.9)--(2.3,.9)--(2.3,-.2)--(6.3,-.2)--(6.3,.9)--(2.4,.9); 
   
    \draw[rounded corners, thick] (2.7,5)--(2.5,5)--(2.5,4.3)--(3.5,4.3)--(3.5,5)--(2.6,5);
    \draw[rounded corners, thick] (2.7+1.3,5)--(2.5+1.3,5)--(2.5+1.3,4.3)--(3.5+1.3,4.3)--(3.5+1.3,5)--(2.6+1.3,5);
    \draw[rounded corners, thick] (2.7+2.6,5)--(2.5+2.6,5)--(2.5+2.6,4.3)--(3.5+2.6,4.3)--(3.5+2.6,5)--(2.6+2.6,5); 
    \draw[rounded corners, thick] (2.7+1.3,.7)--(2.5+1.3,.7)--(2.5+1.3,0)--(3.5+1.3,0)--(3.5+1.3,.7)--(2.6+1.3,.7); 

     \draw[rounded corners, thick] (3.5-4,2.3)--(3.3-4,2.3)--(3.3-4,1.2)--(5.7-4,1.2)--(5.7-4,2.3)--(3.4-4,2.3); 
        \draw[rounded corners, thick] (3.5-4.2,3.8)--(3.3-4.2,3.8)--(3.3-4.2,2.7)--(5.7-4.2,2.7)--(5.7-4.2,3.8)--(3.4-4.2,3.8); 

   \draw[very thick] (3,4.3)--(4.5-0.5,3.8);
   \draw[very thick] (3+1.3,4.3)--(4.5,3.8);
   \draw[very thick] (3+2.6,4.3)--(4.5+0.5,3.8);

   \draw[very thick] (3+1.4,1.2)--(4.4,0.7);

   \draw[red, very thick] (3.3,3)--(1.5,3);
   \draw[red, very thick] (4.6-4,2.3)--(4.5-4,2.7);

    \node [label=$P_2$] (v) at (2.9, 1.2){};
   \node [label=$P_1$] (v) at (2.9, 2.9){};
   \node [label=$\mathcal{Q}_1$] (v) at (1.8, 4.1){};
   \node [label=$\mathcal{Q}_2$] (v) at (1.8, -.2){};
   
    \end{tikzpicture}      
    \caption{The subtrigraph $F$ of $G/\mathcal{P}$.}
    \end{subfigure}

    \caption{The partition $\mathcal{P}'$ is obtained from a partition $\mathcal{P}$ by merging $P_1$ and $P_2$ into $P^*$, and $H$ is a connected red-induced subtrigraph of $G/\mathcal{P}'$ containing $P^*$.  The trigraph $F$ is the subtrigraph of $G/\mathcal{P}$ induced by $(V(H)\setminus \{P^*\})\cup \{P_1, P_2\}$. Each part in $\mathcal{Q}_1\cup \mathcal{Q}_2$ is adjacent to one of  $P_1$ and $P_2$ by a black edge, but become adjacent to $P^*$ by a red edge. }
    \label{fig:PstarinH}
\end{figure}
\begin{figure}
    \centering
    \begin{subfigure}{0.4\linewidth}
        \centering
           \begin{tikzpicture}[scale=0.8]
        \tikzstyle{v}=[circle, draw, solid, fill=black, inner sep=0pt, minimum width=3pt]
        \tikzset{c1/.style={purple, line width=6pt,opacity=0.5,line cap=round,shorten >=-2pt, shorten <=-2pt}}

        \draw[rounded corners, thick] (0.5,5.5)--(-2,5.5)--(-2,-0.5)--(4.3,-0.5)--(4.3,5.5)--(0.4,5.5); 

        \draw[rounded corners, thick] (3.5+1.5,2.3)--(3.3+1.5,2.3)--(3.3+1.5,1.2)--(5.7+1.5,1.2)--(5.7+1.5,2.3)--(3.4+1.5,2.3); 
        \draw[rounded corners, thick] (3.5+1.5,3.8)--(3.3+1.5,3.8)--(3.3+1.5,2.7)--(5.7+1.5,2.7)--(5.7+1.5,3.8)--(3.4+1.5,3.8); 

        \draw[rounded corners, very thick] 
        (3.5+1.5,4.1-0.15)--(3.1+1.5,4.1-0.15)--(3.1+1.5,0.9+0.15)--(5.9+1.5,0.9+0.15)--(5.9+1.5,4.1-0.15)--(3.3+1.5,4.1-0.15);

    \draw[rounded corners, thick] (2.7+1.3-2,5)--(2.5+1.3-2,5)--(2.5+1.3-2,4.3)--(3.5+1.3-2,4.3)--(3.5+1.3-2,5)--(2.6+1.3-2,5);
    \draw[rounded corners, thick] (2.7+2.6-2,5)--(2.5+2.6-2,5)--(2.5+2.6-2,4.3)--(3.5+2.6-2,4.3)--(3.5+2.6-2,5)--(2.6+2.6-2,5); 
    \draw[rounded corners, thick] (2.7+1.3-2,.7)--(2.5+1.3-2,.7)--(2.5+1.3-2,0)--(3.5+1.3-2,0)--(3.5+1.3-2,.7)--(2.6+1.3-2,.7); 

     \draw[rounded corners, thick] (3.5-4,2.3)--(3.3-4,2.3)--(3.3-4,1.2)--(5.7-4,1.2)--(5.7-4,2.3)--(3.4-4,2.3); 
        \draw[rounded corners, thick] (3.5-4.2,3.8)--(3.3-4.2,3.8)--(3.3-4.2,2.7)--(5.7-4.2,2.7)--(5.7-4.2,3.8)--(3.4-4.2,3.8);

   \draw[red,very thick] (3+1.3-2,4.3)--(4.6,4.1-0.15-0.6);
   \draw[red,very thick] (3+2.6-2,4.3)--(4.6,4.1-0.15-0.3);

   \draw[red,very thick] (3+1.3-2,0.7)--(4.6,1.5);

   \draw[red,very thick] (4.6-4,2.3)--(4.5-4,2.7);

   \node [label=$P^*$] (v) at (6.1, 3.9){};

    \end{tikzpicture}
    \caption{The subtrigraph $H$ of $G/\mathcal{P}'$.}
    \end{subfigure}
    \quad
    \hspace{1cm}
    \begin{subfigure}{0.4\linewidth}
        \centering
                  \begin{tikzpicture}[scale=0.8]
        \tikzstyle{v}=[circle, draw, solid, fill=black, inner sep=0pt, minimum width=3pt]
        \tikzset{c1/.style={purple, line width=6pt,opacity=0.5,line cap=round,shorten >=-2pt, shorten <=-2pt}}

        \draw[rounded corners, thick] (0.5,5.5)--(-2,5.5)--(-2,-0.5)--(4.4,-0.5)--(4.4,5.5)--(0.4,5.5); 

        \draw[rounded corners, thick] (3.5+1.5,2.3)--(3.3+1.5,2.3)--(3.3+1.5,1.2)--(5.7+1.5,1.2)--(5.7+1.5,2.3)--(3.4+1.5,2.3); 
        \draw[rounded corners, thick] (3.5+1.5,3.8)--(3.3+1.5,3.8)--(3.3+1.5,2.7)--(5.7+1.5,2.7)--(5.7+1.5,3.8)--(3.4+1.5,3.8);

        \draw[rounded corners, very thick] (2.5,5.2)--(1.6,5.2)--(1.6,4.1)--(4.25,4.1)--(4.25,5.2)--(2.4,5.2);
        \draw[rounded corners, very thick] (2.5,.9)--(1.6,.9)--(1.6,-.2)--(4.25,-.2)--(4.25,.9)--(2.4,.9); 
   
    \draw[rounded corners, thick] (2.7+1.3-2,5)--(2.5+1.3-2,5)--(2.5+1.3-2,4.3)--(3.5+1.3-2,4.3)--(3.5+1.3-2,5)--(2.6+1.3-2,5);
    \draw[rounded corners, thick] (2.7+2.6-2,5)--(2.5+2.6-2,5)--(2.5+2.6-2,4.3)--(3.5+2.6-2,4.3)--(3.5+2.6-2,5)--(2.6+2.6-2,5); 
    \draw[rounded corners, thick] (2.7+1.3-2,.7)--(2.5+1.3-2,.7)--(2.5+1.3-2,0)--(3.5+1.3-2,0)--(3.5+1.3-2,.7)--(2.6+1.3-2,.7); 

     \draw[rounded corners, thick] (3.5-4,2.3)--(3.3-4,2.3)--(3.3-4,1.2)--(5.7-4,1.2)--(5.7-4,2.3)--(3.4-4,2.3); 
        \draw[rounded corners, thick] (3.5-4.2,3.8)--(3.3-4.2,3.8)--(3.3-4.2,2.7)--(5.7-4.2,2.7)--(5.7-4.2,3.8)--(3.4-4.2,3.8);

   \draw[very thick] (3+1.3-2,4.3)--(4.8,4.1-0.15-0.6);
   \draw[very thick] (3+2.6-2,4.3)--(4.8,4.1-0.15-0.3);

   \draw[very thick] (3+1.3-2,0.7)--(4.8,1.5);

  \draw[red,very thick] (4.6-4,2.3)--(4.5-4,2.7);

   \node [label=$P_1$] (v) at (6.1, 3.7){};
   \node [label=$P_2$] (v) at (6.1, .2){};
  \node [label=$\mathcal{Q}_1$] (v) at (1.2, 4.1){};
   \node [label=$\mathcal{Q}_2$] (v) at (1.2, -.2){};
   
    \end{tikzpicture}
    \caption{$H$ in $G/\mathcal{P}$.}
    \end{subfigure}
    \caption{The partition $\mathcal{P}'$ is obtained from a partition $\mathcal{P}$ by merging $P_1$ and $P_2$ into $P^*$, and $H$ is a connected red-induced subtrigraph of $G/\mathcal{P}'$ not containing $P^*$.  It is possible that in $G/\mathcal{P}$, $P_1$ and $P_2$ have no red edges to $H$, but $P^*$ is adjacent to $H$ by red edges in $G/\mathcal{P}'$. }
    \label{fig:PstaroutH}
\end{figure}
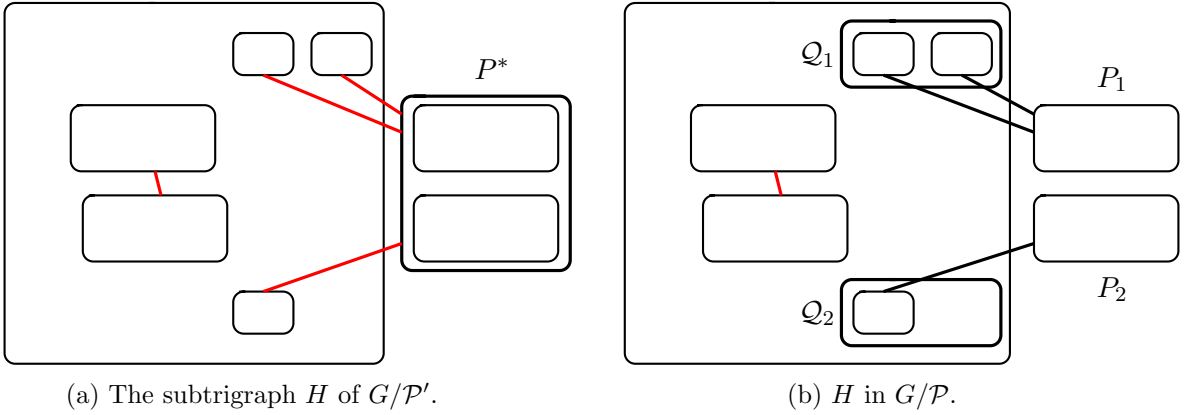

\subsection{\textsc{$\sigma$-Neighborhood} problem}\label{subsec:sigmaneighbor}

We introduce the \textsc{$\sigma$-Neighborhood} problem.

\noindent
\fbox{\parbox{0.97\textwidth}{
	\textsc{$\sigma$-Neighborhood} 
    ($\sigma$ is a finite subset of $\mathbb{N}\cup \{0\}$ with $\alpha = \max(\sigma)$)\\
	\textbf{Input:} A graph $G$.
	\\
	\textbf{Output:} A subset $S \subseteq V(G)$ of maximum size such that $|N_G(v) \cap S| \in \sigma$ for every $v \in S$.}}
\medskip
  
  Let $P_1$ and $P_2$ be two parts of a~partition $\mathcal{P}$ of a vertex set of a graph $G$, and for each $i\in [2]$, let
  $\mathcal{Q}_i\subseteq B_{G/\mathcal{P}}(P_i)\setminus B_{G/\mathcal{P}}(P_{3-i}).$
  We say that a pair $(\mathcal{O}, U)$ of a subset $\mathcal{O}\subseteq \{P_1, P_2\}\cup \mathcal{Q}_1\cup \mathcal{Q}_2$ and a vertex subset $U\subseteq \bigcup_{Q\in \mathcal{O}}Q$ is a~\emph{$\sigma$-separator} with respect to $(P_1, P_2, \mathcal{Q}_1, \mathcal{Q}_2)$ if the following conditions hold:
\begin{itemize}
    \item $U$ contains at most $\alpha$ vertices from each part in $\mathcal{O}$,
    \item if $\mathcal{O} \cap \mathcal{Q}_i \neq \emptyset$ for some $i$, then $\{P_i\}\cup \mathcal{Q}_i\subseteq \mathcal{O}$, 
    $|U_{\mathcal{Q}_i}|\le \alpha$, and
    $|U\cap Q|\ge 1$ for every $Q\in \mathcal{Q}_i$,
    \item $G[U]$ has maximum degree $\alpha$, and 
    
    \item for every part $Q\in \{P_1, P_2\}\cup \mathcal{Q}_1\cup \mathcal{Q}_2$ that is not in $\mathcal{O}$ and every part $Q'\in \{P_1, P_2\}\cup \mathcal{Q}_1\cup \mathcal{Q}_2$ where $Q$ and $Q'$ are linked by a black edge in $G/\mathcal{P}$, it holds that $Q'\in \mathcal{O}$ and $U \cap Q'=\emptyset$.
\end{itemize}
The last condition will be referred to as the `black pair condition'. 

\begin{lemma}\label{lem:numberofseparators}
    Let $G$ be a graph and let $\mathcal{P}$ be a partition of $V(G)$.
    Let $P_1$ and $P_2$ be two parts of $\mathcal{P}$, and for each $i\in [2]$, let $\mathcal{Q}_i\subseteq B_{G/\mathcal{P}}(P_i)\setminus B_{G/\mathcal{P}}(P_{3-i}).$ Then the number of $\sigma$-separators with respect to $(P_1, P_2, \mathcal{Q}_1, \mathcal{Q}_2)$ is at most $4\cdot|V(G)|^{4\alpha}$.
\end{lemma}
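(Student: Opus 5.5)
The plan is to split a $\sigma$-separator $(\mathcal{O},U)$ into its ``shape'' $\mathcal{O}$ and its vertex content $U$. I will show that there are at most $4$ possible choices of $\mathcal{O}$, and that for each fixed $\mathcal{O}$ there are at most $|V(G)|^{4\alpha}$ possible choices of $U$. Write $n=|V(G)|$.

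\emph{Step 1: at most four shapes.} By the definition of the $\mathcal{Q}_i$, each part of $\mathcal{Q}_i$ is a black neighbor of $P_i$. Since $\mathcal{Q}_i\cap B_{G/\mathcal{P}}(P_{3-i})=\emptyset$, the sets $\mathcal{Q}_1,\mathcal{Q}_2$ are disjoint and do not contain $P_1,P_2$. I determine the trace $\mathcal{O}_i:=\mathcal{O}\cap(\{P_i\}\cup\mathcal{Q}_i)$ separately for each $i\in[2]$.
\begin{itemize}
\item If $\mathcal{O}\cap\mathcal{Q}_i\neq\emptyset$, the second condition forces $\mathcal{O}_i=\{P_i\}\cup\mathcal{Q}_i$.
\item Otherwise $\mathcal{O}_i\subseteq\{P_i\}$. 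If moreover $\mathcal{Q}_i\neq\emptyset$, a part $Q\in\mathcal{Q}_i\setminus\mathcal{O}$ is black-adjacent to $P_i$, so the black pair condition forces $P_i\in\mathcal{O}$, i.e.\ $\mathcal{O}_i=\{P_i\}$.
\end{itemize}
Hence $\mathcal{O}_i\in\{\{P_i\},\{P_i\}\cup\mathcal{Q}_i\}$ when $\mathcal{Q}_i\neq\emptyset$, and $\mathcal{O}_i\in\{\emptyset,\{P_i\}\}$ when $\mathcal{Q}_i=\emptyset$. In either case there are at most $2$ options per $i$, giving at most $4$ choices of $\mathcal{O}=\mathcal{O}_1\cup\mathcal{O}_2$. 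I will not use the remaining conditions (maximum degree of $G[U]$, the rest of the black pair condition); they only shrink the count.

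\emph{Step 2: bounding $U$ for a fixed $\mathcal{O}$.} Decompose $U$ into the four pieces
\[
U\cap P_1,\qquad U\cap P_2,\qquad U_{\mathcal{Q}_1},\qquad U_{\mathcal{Q}_2}.
\]
The first condition gives $|U\cap P_i|\le\alpha$. The second condition gives $|U_{\mathcal{Q}_i}|\le\alpha$ whenever $\mathcal{O}\cap\mathcal{Q}_i\neq\emptyset$; otherwise $U_{\mathcal{Q}_i}=\emptyset$, since $U\subseteq\bigcup\mathcal{O}$. So each piece is a subset of $V(G)$ of size at most $\alpha$, and $U$ is determined by the four pieces. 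The number of subsets of size at most $\alpha$ of an $n$-set is $\sum_{j\le\alpha}\binom{n}{j}\le n^{\alpha}$ when $n\ge 2$ (routine; for $\alpha=0$ it equals $1$). Hence there are at most $n^{4\alpha}$ choices of $U$.

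Multiplying gives at most $4\cdot n^{4\alpha}$ $\sigma$-separators. The only step needing care is the case analysis in Step 1, namely that the black pair condition rules out $P_i\notin\mathcal{O}$ whenever $\mathcal{Q}_i$ is nonempty. The remaining points are degenerate edge cases: $n=1$, which is trivial, and the convention $0^0=1$ when $\alpha=0$.
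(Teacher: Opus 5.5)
There is a genuine gap in Step 2. Your Step 1 is correct, and it justifies the factor $4$ more explicitly than the paper does. The paper counts $U$ first and then observes that $\mathcal{O}$ is determined by $U$ except for whether $P_i\in\mathcal{O}$ when $U\cap P_i=\emptyset$. You instead bound the possible shapes $\mathcal{O}$ independently of $U$ via the black pair condition, which is fine.

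The problem is the claim that $\sum_{j\le\alpha}\binom{n}{j}\le n^{\alpha}$ for $n\ge 2$. It is false for $\alpha=1$, where the left side is $n+1$. So for $\alpha=1$ your piece-by-piece count gives only $(n+1)^4$ choices of $U$ and $4(n+1)^4$ separators in total, which does not prove the claimed bound $4n^{4}$. This is not a degenerate case: $\alpha=1$ covers $\sigma=\{1\}$, i.e.\ \textsc{Maximum Induced Matching}. The inequality does hold for $\alpha=0$ and for every $\alpha\ge 2$ when $n\ge 2$, so $\alpha=1$ is the only failure.

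The fix is to stop counting the four pieces separately. Your bounds already give $|U|\le 4\alpha$. The number of subsets of an $n$-set of size at most $m$ is at most $n^{m}$ whenever $m\ge 2$ and $n\ge 2$. For $m=2$ this reduces to $(n-2)(n+1)\ge 0$. For $m\ge 3$, induction gives $\sum_{j\le m}\binom{n}{j}\le n^{m-1}+n^m/6\le n^m$. Taking $m=4\alpha\ge 4$ for $\alpha\ge 1$ yields at most $n^{4\alpha}$ choices of $U$, which is exactly the paper's count. Combined with your Step 1, this completes the proof.

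A minor side remark: you say $\mathcal{Q}_1,\mathcal{Q}_2$ do not contain $P_1,P_2$, but the hypothesis $\mathcal{Q}_i\subseteq B_{G/\mathcal{P}}(P_i)\setminus B_{G/\mathcal{P}}(P_{3-i})$ does not by itself exclude $P_{3-i}\in\mathcal{Q}_i$ when $P_1P_2$ is a black edge. It does hold in the paper's application, where $\mathcal{Q}_j\subseteq V(H)\setminus\{P^*\}$, and your counting argument never actually uses it.
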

\begin{proof}
    Let $(\mathcal{O}, U)$ be a $\sigma$-separator with respect to $(P_1, P_2, \mathcal{Q}_1, \mathcal{Q}_2)$. By the first and second conditions, $|U \cap P_i|, |U_{\mathcal{Q}_i}| \le \alpha$ for each $i \in [2]$. It follows that $|U| \le 4\alpha$, and hence there are at most $|V(G)|^{4\alpha}$ choices for $U$. If $U \cap P_i = \emptyset$ for some $i \in [2]$, then both cases $P_i \in \mathcal{O}$ and $P_i \notin \mathcal{O}$ are possible. Therefore, the number of $\sigma$-separators is at most $4\cdot|V(G)|^{4\alpha}$.
\end{proof}

Briefly speaking, $\sigma$-separators capture the possible positions that contain a bounded number of vertices from a partial solution.
We explain how this concept will be used and give an outline of the proof. Let $\mathcal{P}$ be a partition of $V(G)$ and $\mathcal{P}'$ be the partition of $V(G)$ obtained from $\mathcal{P}$ by merging two parts $P_1$ and $P_2$ into $P^*$. In the algorithm we consider all connected red-induced subtrigraphs $H$ of $G/\mathcal{P}'$ satisfying certain conditions on red, but not black, neighbors and store maximum partial solutions on $\bigcup_{P\in V(H)}P$ with respect to them. When we separate using a $\sigma$-separator, the black pair condition ensures that a similar condition holds again, since if a part has a black neighbor, then it does not contain a partial solution.

We distinguish two cases depending on whether $P^*\in V(H)$. 

\begin{itemize}
    \item ($P^*\in V(H)$): In this case, we consider the subtrigraph $F$ of $G/\mathcal{P}$ on $(V(H)\setminus \{P^*\})\cup \{P_1, P_2\}$. See \Cref{fig:PstarinH} for an illustration. Note that $F$ may have black edges, and these edges appear between $P_j$ and its neighbors that are not adjacent to $P_{3-j}$ by a black edge. Also, $P_1$ and $P_2$ may be adjacent by a black edge. We consider 
    \[\mathcal{Q}_j = B_{G/\mathcal{P}}(P_j) \cap (V(H)\setminus \{P^*\})\]
    Now, we want to guess some partial solutions on $\{P_1, P_2\}\cup \mathcal{Q}_1\cup \mathcal{Q}_2$ and then divide $F$ into smaller red-induced subtrigraphs. The $\sigma$-separators with respect to $(P_1, P_2, \mathcal{Q}_1, \mathcal{Q}_2)$ are possible candidates of subsets of $\{P_1, P_2\}\cup \mathcal{Q}_1\cup \mathcal{Q}_2$ that contain at most $4\alpha$ vertices from a solution and meet all black edges. We enumerate all possible partial solutions and divide $F$ into small red-induced subtrigraphs, from which we can obtain the information from $\mathcal{P}$.   
    \item ($P^*\notin V(H)$): Here it is natural to consider the same $H$ for $G/\mathcal{P}$. See \Cref{fig:PstaroutH} for an illustration. As in the figure, $P^*$ may be a new red neighbor of $H$. In this situation, we consider the sets $\mathcal{Q}_1$ and $\mathcal{Q}_2$ as defined above. We will consider all $\sigma$-separators and guess partial solutions whose restriction on $P_1\cup P_2$ matches the given condition on $P^*$, and then do the similar arguments as in the previous case.
\end{itemize}
The following lemma guarantees the existence of a $\sigma$-separator, when there is a partial solution for~$G/\mathcal{P}'$.
\begin{lemma}\label{lem:sigmaseparator}
    Let $G$ be a graph and let $\mathcal{P}$ be a partition of $V(G)$.
    Let $P_1$ and $P_2$ be two parts of $\mathcal{P}$, and for each $i\in [2]$, let $\mathcal{Q}_i\subseteq B_{G/\mathcal{P}}(P_i)\setminus B_{G/\mathcal{P}}(P_{3-i}).$ 
    Let $\alpha$ be a non-negative integer, and let $I\subseteq V(G)$ such that $G[I]$ has maximum degree at most $\alpha$, and $I\cap X\neq\emptyset$ for each $X\in \{P_1\cup P_2\}\cup \mathcal{Q}_1\cup \mathcal{Q}_2$. Then the following hold.
    \begin{enumerate}[(1)]
        \item\label{case:sigmaseparator1} There exists a subset $\mathcal{O}\subseteq \{P_1, P_2\}\cup \mathcal{Q}_1\cup \mathcal{Q}_2$ such that $(\mathcal{O}, I_{\mathcal{O}})$ is a $\sigma$-separator with respect to $(P_1, P_2, \mathcal{Q}_1, \mathcal{Q}_2)$.
        \item\label{case:sigmaseparator2} If $|I\cap (P_1\cup P_2)|\le \alpha$, then there exists a subset $\mathcal{O}\subseteq \{P_1, P_2\}\cup \mathcal{Q}_1\cup \mathcal{Q}_2$ such that $\{P_1, P_2\}\subseteq \mathcal{O}$ and $(\mathcal{O}, I_{\mathcal{O}})$ is a $\sigma$-separator with respect to $(P_1, P_2, \mathcal{Q}_1, \mathcal{Q}_2)$.
    \end{enumerate} 
\end{lemma}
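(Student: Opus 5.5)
The plan is to build $\mathcal{O}$ explicitly from $I$, taking as many parts as possible while keeping every included part light. I read the hypothesis as saying that $I$ meets each of $P_1$, $P_2$ and every part of $\mathcal{Q}_1\cup\mathcal{Q}_2$. The whole argument rests on one observation about black edges.

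\emph{Key observation.} Suppose $X,Y$ are parts linked by a black edge in $G/\mathcal{P}$, and $I$ meets both. Since $X$ is complete to $Y$, any vertex of $I\cap X$ is adjacent to all of $I\cap Y$. Because $G[I]$ has maximum degree at most $\alpha$, this gives $|I\cap Y|\le\alpha$, and symmetrically $|I\cap X|\le\alpha$. Applying this with $X=P_i$ and each $Y\in\mathcal{Q}_i$: a single vertex of $I\cap P_i$ is adjacent to all of $I_{\mathcal{Q}_i}$. Hence, whenever $\mathcal{Q}_i\neq\emptyset$, we get $|I_{\mathcal{Q}_i}|\le\alpha$, $|I\cap P_i|\le\alpha$, and $|I\cap Q|\le\alpha$ for every $Q\in\mathcal{Q}_i$.

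\emph{Construction of $\mathcal{O}$.} For each $i\in[2]$:
\begin{itemize}
\item if $\mathcal{Q}_i\neq\emptyset$, put $\{P_i\}\cup\mathcal{Q}_i$ into $\mathcal{O}$;
\item otherwise, put $P_i$ into $\mathcal{O}$ exactly when $|I\cap P_i|\le\alpha$.
\end{itemize}
Set $U=I_{\mathcal{O}}$.

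\emph{Verifying the four conditions.}
\begin{enumerate}[(a)]
\item Each part in $\mathcal{O}$ contains at most $\alpha$ vertices of $U$. This holds by the key observation, or by the explicit threshold in the second case.
\item If $\mathcal{O}\cap\mathcal{Q}_i\neq\emptyset$, then $\mathcal{Q}_i\neq\emptyset$, so $\{P_i\}\cup\mathcal{Q}_i\subseteq\mathcal{O}$ by construction. Moreover $|U_{\mathcal{Q}_i}|=|I_{\mathcal{Q}_i}|\le\alpha$, and every $Q\in\mathcal{Q}_i$ meets $U$ because $I$ meets $Q$.
\item $G[U]$ is an induced subgraph of $G[I]$, so its maximum degree is at most $\alpha$.
\item For the black pair condition, the only parts outside $\mathcal{O}$ are parts $P_j$ with $\mathcal{Q}_j=\emptyset$ and $|I\cap P_j|>\alpha$. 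I claim such a $P_j$ has no black neighbour in $\{P_1,P_2\}\cup\mathcal{Q}_1\cup\mathcal{Q}_2$, so the condition is vacuous.
  \begin{itemize}
  \item Its black neighbours would lie in $\mathcal{Q}_j$, in $\mathcal{Q}_{3-j}$, or be $P_{3-j}$.
  \item $\mathcal{Q}_j$ is empty.
  \item $\mathcal{Q}_{3-j}\subseteq B_{G/\mathcal{P}}(P_{3-j})\setminus B_{G/\mathcal{P}}(P_j)$ by hypothesis, so none of its parts is a black neighbour of $P_j$.
  \item If $P_{3-j}$ were a black neighbour of $P_j$, the key observation (as $I$ meets both) would force $|I\cap P_j|\le\alpha$, a contradiction.
  \end{itemize}
\end{enumerate}
This proves (1).

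For (2), assume $|I\cap(P_1\cup P_2)|\le\alpha$. Then $|I\cap P_i|\le\alpha$ for both $i$, so the construction puts both $P_1$ and $P_2$ into $\mathcal{O}$, whichever case applies. The verification above is unchanged, which gives (2).

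The only step needing care is the black pair condition. It is handled by the hypothesis that $\mathcal{Q}_{3-j}$ avoids the black neighbours of $P_j$, together with the key observation applied to a possible black edge $P_1P_2$.
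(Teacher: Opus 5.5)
Your verification is correct under the reading you adopt, but that reading strengthens the hypothesis. The statement asks that $I\cap X\neq\emptyset$ for $X\in\{P_1\cup P_2\}\cup\mathcal{Q}_1\cup\mathcal{Q}_2$. Here $\{P_1\cup P_2\}$ is the singleton containing the union, so $I$ only has to meet $P_1\cup P_2$ and may miss one of $P_1,P_2$. The paper's proof opens with exactly this case. It is also the case the algorithm needs: when $P^*\in V(H)$, a valid set has trace $V(H)$, so it meets $P^*=P_1\cup P_2$ but may avoid $P_2$, say.

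When $I\cap P_j=\emptyset$ and $\mathcal{Q}_j\neq\emptyset$, your key observation gives no bound on $I_{\mathcal{Q}_j}$, since no vertex of $I$ in $P_j$ is adjacent to all of it. Yet your rule still puts $\mathcal{Q}_j$ into $\mathcal{O}$. Concretely, take $\alpha=0$, parts $P_1=\{a\}$, $P_2=\{b\}$ and $\{c\}$, with $ac$ the only edge, $\mathcal{Q}_1=\{\{c\}\}$, $\mathcal{Q}_2=\emptyset$, and $I=\{b,c\}$. Every hypothesis holds. However, your $\mathcal{O}$ contains $\{c\}$ with $|U\cap\{c\}|=1>\alpha$, which violates the first two conditions of a $\sigma$-separator.

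The missing step is to let $P_j$ itself serve as the cut when $I\cap P_j=\emptyset$: put $P_j$ into $\mathcal{O}$ but leave $\mathcal{Q}_j$ out. Since $U\cap P_j=\emptyset$, the black edges from parts of $\mathcal{Q}_j$ to $P_j$ satisfy the black pair condition. Treat $P_{3-j}$ exactly by your rule, which is valid because $I$ does meet $P_{3-j}$. This yields the paper's separators $(\{P_j\},\emptyset)$, $(\{P_1,P_2\},I_{\{P_1,P_2\}})$ and $(\{P_1,P_2\}\cup\mathcal{Q}_{3-j},I_{\mathcal{O}})$, and part (2) still follows because $P_j$ is always included.

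Leaving $\mathcal{Q}_j$ outside $\mathcal{O}$ does require that no two parts of $\mathcal{Q}_1\cup\mathcal{Q}_2$ are joined by a black edge. This holds where the lemma is applied, since there $\mathcal{Q}_1\cup\mathcal{Q}_2\subseteq V(H)$ with $H$ red-induced, but it is not in the statement. Without it the lemma fails. Take $\alpha=1$, $P_1=\{a\}$, $P_2=\{p\}$, $\mathcal{Q}_1=\{\{q\},\{q'\}\}$, $\mathcal{Q}_2=\emptyset$, edges $aq,aq',qq'$, and $I=\{p,q,q'\}$. Including either of $\{q\},\{q'\}$ in $\mathcal{O}$ forces $|U_{\mathcal{Q}_1}|=2>\alpha$, while excluding both breaks the black pair condition. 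Once $I$ meets both $P_1$ and $P_2$, your construction is in substance the second half of the paper's proof, and it has the merit of not needing that extra assumption.
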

\begin{proof}
    We prove (1) and (2) together.
    
    Assume that $I \cap P_j = \emptyset$ for some $j \in [2]$. Since $I\cap (P_1\cup P_2)\neq \emptyset$, $I \cap P_{3-j} \neq \emptyset$. 
        As $P_{3-j}$ is adjacent to each part of $\mathcal{Q}_{3-j}$ by a black edge in $G/\mathcal{P}$ and $G[I]$ has maximum degree at most $\alpha$, we have $|I_{\mathcal{Q}_{3-j}}| \le \alpha$. If $|\mathcal{Q}_{3-j}| = 0$, then 
        \begin{itemize}
            \item $(\{P_j\}, \emptyset)$ is a $\sigma$-separator, and
            \item moreover, if $|I\cap (P_1\cup P_2)|\le \alpha$, then $(\mathcal{O}=\{P_1, P_2\}, I_{\mathcal{O}})$ is a $\sigma$-separator.
        \end{itemize}  
        Otherwise $|I\cap P_{3-j}|\le \alpha$ as $I_{\mathcal{Q}_{3-j}}$ is non-empty, and 
        $|I_{\mathcal{Q}_{3-j}}|\le \alpha$ as $I\cap P_{3-j}$ is non-empty.
        Thus,
        $(\mathcal{O}=\{P_1, P_2\}\cup\mathcal{Q}_{3-j}, I_{\mathcal{O}})$ is a $\sigma$-separator.

        Now, we assume that $I \cap P_j \neq \emptyset$ for every $j \in [2]$. Note that if $\mathcal{Q}_i$ is non-empty, then by the assumption, $I_{\mathcal{Q}_i}\neq \emptyset$ and $I$ has at most $\alpha$ vertices in $P_i$. If $P_1$ and $P_2$ are connected by a black edge in $G/\mathcal{P}$, then $(\mathcal{O}=\{P_1, P_2\}\cup \mathcal{Q}_1 \cup \mathcal{Q}_2, I_{\mathcal{O}})$ is a $\sigma$-separator. Thus, we may assume that $P_1$ and $P_2$ are not connected by a black edge in $G/\mathcal{P}$. Then we can divide into three cases.
        \begin{itemize}
            \item If $|\mathcal{Q}_1|, |\mathcal{Q}_2| \ge 1$, then $(\mathcal{O}=\{P_1, P_2\}\cup \mathcal{Q}_1 \cup \mathcal{Q}_2, I_{\mathcal{O}})$ is a $\sigma$-separator.
            \item If $|\mathcal{Q}_j| \ge 1$ and $|\mathcal{Q}_{3-j}| = 0$ for some $j\in [2]$, then $(\mathcal{O}=\{P_j\}\cup \mathcal{Q}_j, I_{\mathcal{O}})$ is a $\sigma$-separator.
                Moreover, if $|I \cap (P_1\cup P_2)| \le \alpha$, then $(\mathcal{O}=\{P_1, P_2\}\cup \mathcal{Q}_j, I_{\mathcal{O}})$ is a $\sigma$-separator.
            \item If $|\mathcal{Q}_j| = |\mathcal{Q}_{3-j}| = 0$, then $(\emptyset, \emptyset)$ is a $\sigma$-separator. Moreover, if $|I \cap (P_1\cup P_2)| \le \alpha$, then $(\mathcal{O}=\{P_1, P_2\}, I_{\mathcal{O}})$ is a $\sigma$-separator.

        \end{itemize}
        
        This proves the lemma.
\end{proof}

We present the algorithm in detail.
\begin{theorem}\label{thm:sigmaneighborhood}
 There is an algorithm that, given an $n$-vertex graph $G$ and a~contraction sequence witnessing that $\compmaxleaf^\downarrow(G) \leqslant t$, solves \textsc{$\sigma$-Neighborhood} on~$G$ in time $n^{O(t\alpha^2 )}$. 
\end{theorem}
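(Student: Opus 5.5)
We would run a dynamic program backwards along the given contraction sequence, from $\Pc_1$ to $\Pc_n$. For every $i$ and every connected red-induced subtrigraph $H$ of $G/\Pc_i$, we fill a table indexed by a \emph{boundary profile} on the red neighbours of $H$. By \cref{lem:thenumberofredneighbor}, $H$ has at most $t$ red neighbours $Z_1,\dots,Z_s$. The profile records, for each $Z_j$, a set $U_j\subseteq Z_j$ of at most $\alpha$ vertices. These are the solution vertices in $Z_j$ that may interact with $\bigcup V(H)$: a vertex of a solution adjacent to a solution vertex in $\bigcup V(H)$ has at most $\alpha$ solution neighbours, so only boundedly many solution vertices cross. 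Alternatively, the profile records the flag ``$Z_j$ carries an unbounded number of solution vertices but none adjacent to the inside''.

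In addition, the profile records the required number of neighbours (a value in $\{0,\dots,\alpha\}$) that each vertex of $\bigcup_j U_j$ must receive from $\bigcup V(H)$. The entry stores the maximum size of $I\subseteq \bigcup V(H)$ such that the following holds: with the boundary choice plugged in, every vertex of $I$ has its degree in $\sigma$, and the boundary demands are met exactly. Black neighbours of $H$ are handled by the invariant that a part with a black neighbour towards a solution-carrying part contains at most $\alpha$ solution vertices. The table count is then $n^t$ subtrigraphs (\cref{lem:numberofinducedsubgraphs}) times $n^{O(t\alpha)}$ profiles times $(\alpha+1)^{O(t\alpha)}$ demand vectors. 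Together with the combination cost below, this gives $n^{O(t\alpha^2)}$ overall.

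For the base, in $\Pc_n$ all parts are singletons and red edges are absent, so $H$ is a single vertex and the entries are trivial. For the inductive step, let $\Pc'=\Pc_i$ be obtained from $\Pc=\Pc_{i+1}$ by merging $P_1,P_2$ into $P^*$. We show how to compute the table of $H$ in $G/\Pc'$ from the tables for $G/\Pc$. If $P^*$ is neither in $V(H)$ nor a red neighbour of $H$, the entry is copied unchanged. Otherwise we follow the two cases sketched before \cref{lem:sigmaseparator}. We enumerate all $\sigma$-separators $(\mathcal O,U)$ with respect to $(P_1,P_2,\mathcal Q_1,\mathcal Q_2)$; by \cref{lem:numberofseparators} there are at most $4n^{4\alpha}$ of them. \Cref{lem:sigmaseparator} guarantees that the trace of any optimal partial solution is captured by one of them; in the case $P^*\notin V(H)$ we use part (2), which matches a bounded profile on $P^*$.

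Fixing $U$ on the parts of $\mathcal O$ and deleting them from $F$ leaves at most $t+2$ components by \cref{lem:componentcount}. Each component is red-induced and connected in $G/\Pc$, by \cref{lem:contraction}(3) and the black pair condition. For each component we look up its table, with a boundary profile derived from $U$ and from the outer profile of $H$. Over all choices we maximize $|U\cap\bigcup V(H)|$ plus the sum of the component values. The demand vectors of the boundary vertices are split among the at most $t+2$ components and $U$, which costs $(\alpha+1)^{O(t\alpha)}$ per choice. We finally verify the $\sigma$-condition for the vertices of $U$, whose neighbourhoods are now fully determined. The answer is read at $\Pc_1$ with $H$ the single part and an empty boundary.

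The main obstacle is the correctness of the profile in the presence of black edges. We must show that any solution vertex adjacent to a solution vertex across a red or black edge is among boundedly many vertices, so that the recorded $U_j$ truly determine all cross-adjacencies. The key fact is that a part $Q$ black-adjacent to a part $Q'$ with $I\cap Q'\neq\emptyset$ has $|I\cap Q|\le\alpha$. This is exactly what the second and fourth conditions of $\sigma$-separators encode, and we would verify it carefully via \cref{lem:contraction}(1).
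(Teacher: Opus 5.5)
Your outline has the same architecture as the paper's proof: dynamic programming from $\Pc_n$ to $\Pc_1$ over connected red-induced subtrigraphs $H$, boundary data on the at most $t$ red neighbours, $\sigma$-separators, \cref{lem:sigmaseparator,lem:componentcount}, and the same counting. However, the two facts you rely on for correctness are not the right ones, and this is a genuine gap.

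\textbf{Why the boundary data is bounded.} Your reason is false. A solution vertex has at most $\alpha$ solution neighbours, but this does not bound how many solution vertices cross a red edge. For $\sigma=\{1\}$, two parts $A,B$ joined by a red edge can contain solution vertices $a_1,\dots,a_m$ and $b_1,\dots,b_m$ with $a_ib_i$ the only edges between them and $m$ arbitrary. This already happens for $mK_2$ with $\reduced{\compmaxleaf}\le 2$, and your flag option does not cover it. The paper never needs such a local bound. Boundedness is instead an invariant of the recursion: in every subproblem $(H,S,\eta)$ met when decomposing an optimal solution, each red-but-not-black neighbour of $H$ contains at most $\alpha$ solution vertices \emph{in total}, all recorded in $S$. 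Moreover, each black neighbour of $H$ contains \emph{none}, which is why $\mathcal{C}(i)$ forces $S\cap P=\emptyset$ outside $R_{G/\Pc_i}(V(H))\setminus B_{G/\Pc_i}(V(H))$. The invariant holds at the root and is maintained for two reasons. The only new boundary parts are those of $\mathcal{O}$, and each of these is empty, complete to a part carrying a solution vertex, or inherits $|S\cap P^*|\le\alpha$ when $P^*\notin V(H)$. In addition, the black pair condition makes every part of $F$ that is black-adjacent to a surviving part solution-free. Your black-edge invariant (``at most $\alpha$'') is too weak: $H$ may have unboundedly many black neighbours, none of which appears in your profile, so they must carry no solution vertex at all.

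\textbf{The exact-trace condition.} You drop the paper's requirement $\trace_{\Pc_i}(I)=V(H)$ and let $I$ be any subset of $\bigcup V(H)$. But \cref{lem:sigmaseparator} needs $I$ to meet $P_1\cup P_2$ and every part of $\mathcal{Q}_1\cup\mathcal{Q}_2$. Also, a $\sigma$-separator needs $|U\cap Q|\ge 1$ for all $Q\in\mathcal{Q}_i$ once $\mathcal{O}\cap\mathcal{Q}_i\neq\emptyset$. Under your semantics, consider a partial solution with more than $\alpha$ vertices in $P_1$ and none in the parts of a non-empty $\mathcal{Q}_1$; it has no $\sigma$-separator. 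Indeed, $P_1\in\mathcal{O}$ violates $|U\cap P_1|\le\alpha$. On the other hand, $P_1\notin\mathcal{O}$ forces, by the black pair condition, $\mathcal{Q}_1\subseteq\mathcal{O}$ with $U$ empty there, contradicting the second condition. So your claim that every optimal partial solution is captured fails, and the recursion can underestimate your entries.

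With exact traces, at most one of $P_1,P_2$ is missed, and it enters $\mathcal{O}$ with $U$ empty. Every surviving part then carries a solution vertex, and the zero-on-black-neighbours invariant is consistent.

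\textbf{A case missing from your planned verification.} \cref{lem:contraction}(1) only concerns black neighbours of $P_j$ inside $V(H)$. A part outside $V(H)$ that is complete to $P_1$ and anti-complete to $P_2$ is a red neighbour of $P^*$ and may carry $S$-vertices. In that case $P_1$ itself must be put into $\mathcal{O}$ (possible, since then $|I\cap P_1|\le\alpha$). Your ``careful verification'' has to cover this case as well.
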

\begin{proof}
    
    Let $\mathcal{P}_n, \ldots, \mathcal{P}_1$ be a given contraction sequence of $G$ witnessing that $\compmaxleaf^\downarrow(G) \leqslant t$.
    Let $G_i = G / \mathcal{P}_i$ for each $i$. 

    For each $i\in [n]$, let $\mathcal{C}(i)$ be the collection of all tuples $(H, S, \eta)$ such that 
    \begin{itemize}
        \item $H$ is a non-empty connected red-induced subtrigraph of $G_i$,

        \item $S$ is a subset of $V(G)$ such that 
        \begin{itemize}
            \item $|S \cap P| \le \alpha$ for every $P \in R_{G_i}(V(H))\setminus B_{G_i}(V(H))$, and
            \item $|S \cap P| = 0$ for every $P \in V(G_i) \setminus \big(R_{G_i}(V(H))\setminus B_{G_i}(V(H))\big)$,
        \end{itemize}
        \item $\eta: S\to \{0, 1, \ldots, \alpha\}$ is a function.
    \end{itemize}
   For $(H, S, \eta) \in \mathcal{C}(i)$, a vertex subset $I$ of $G$ is \emph{valid} with respect to $(i, H, S, \eta)$ if 
    \begin{itemize}
        \item $\trace_{\mathcal{P}_i}(I) = V(H)$,
        \item for every $v \in I$, $|N_G(v) \cap (I \cup S)| \in \sigma$, and 
        \item for every $v \in S$, $|N_G(v) \cap I| = \eta(v)$.
    \end{itemize}
    Note that $H$ is always non-empty, and by the first condition, any valid set is non-empty. 
    Moreover, from the second condition, $G[I]$ has maximum degree at most $\alpha$.
    Let $\zeta(i,H,S,\eta)$ be the set of all valid vertex sets with respect to $(i,H,S,\eta)$, and let $\zeta_{max}(i,H,S,\eta)$ be the set of all maximum valid vertex sets with respect to $(i,H,S,\eta)$.
    
    We recursively store a maximum valid vertex set $I$ of $G$ with respect to $(i,H,S,\eta)$.
    For $(H,S,\eta)\in \mathcal{C}(i)$, let $\Phi(i,H,S,\eta)$ be a largest valid vertex set.  
    If there is no valid vertex set, then we assign $\Phi(i,H,S,\eta)=\bot$.
    Observe that $\Phi(1, G_1, \emptyset, \eta)$ is the required solution where $\eta$ is the empty function. 
    
     In the base case when $i=n$, $G_n$ has no red edges. 
    Thus, for every $(H,S,\eta)\in \mathcal{C}(n)$, $H$ consists of a single vertex $\{v\}\in \mathcal{P}_n$ and $S = \emptyset$ because $R_{G_n}(V(H))=\emptyset$. If $0 \in \sigma$, then we set  $\Phi(n,H,S,\eta)=\{v\}$ for every $(H,S,\eta)\in \mathcal{C}(n)$. Otherwise, we record $\Phi(n,H,S,\eta) = \bot$ for every $(H,S,\eta)\in \mathcal{C}(n)$.
    
    Henceforth, we assume that $i<n$.
    We further assume that for every $(H,S,\eta)\in \mathcal{C}(i+1)$, $\Phi(i+1,H,S,\eta)$ has been recorded.
    Let $P_1$ and $P_2$ be the parts of $\mathcal{P}_{i+1}$ that are merged into $P^*$ in~$\mathcal{P}_i$.
    Let $(H,S,\eta) \in \mathcal{C}(i)$.

    \medskip
    We first deal with two simpler cases. 
     First assume that \[P^* \in V(G_i) \setminus \big(V(H) \cup (R_{G_i}(V(H)) \setminus B_{G_i}(V(H)))\big).\] 
    As $(H,S,\eta)\in \mathcal{C}(i)$, 
    we have $|S \cap P^*| = 0$ and it follows that $|S \cap P_j| = 0$ for every $j \in [2]$. Moreover, since $G_{i}- \{P^*\} = G_{i + 1}- \{P_1, P_2\}$, 
    we have \[R_{G_{i+1}}(V(H)) \setminus B_{G_{i+1}}(V(H))=R_{G_{i}}(V(H)) \setminus B_{G_{i}}(V(H)).\]

    Also $H$ is a connected red-induced subtrigraph of $G_{i+1}$, because $P_1, P_2 \notin V(H)$. Hence, we conclude that $(H,S,\eta) \in \mathcal{C}(i+1)$ and we can set $\Phi(i,H,S,\eta) = \Phi(i+1,H,S,\eta)$. 
    
    Second, we assume that $P^* \in R_{G_i}(V(H)) \setminus B_{G_i}(V(H))$ and $S \cap P^* = \emptyset$. In this case, $S\cap P_1=S\cap P_2=\emptyset$ and thus we have the properties that 
    \begin{itemize}
        \item  $|S \cap P| \le \alpha$ for every $P \in R_{G_{i+1}}(V(H)) \setminus B_{G_{i+1}}(V(H))$, and 
        \item $|S \cap P| = 0$ for every $ P \in V(G_{i+1}) \setminus \big(R_{G_{i+1}}(V(H)) \setminus B_{G_{i+1}}(V(H))\big)$.
    \end{itemize}
    So, we have $(H, S, \eta) \in \mathcal{C}(i+1)$. 
    As $S\cap P_1=S\cap P_2=\emptyset$, any valid solution with respect to $(i+1,H,S,\eta)$ is also valid with respect to $(i,H,S,\eta)$.
    Thus, we can set $\Phi(i,H,S,\eta) = \Phi(i+1,H,S,\eta)$.

\medskip
    From now on, we may assume that either 
    \begin{itemize}
        \item $P^* \in V(H)$ or 
        \item $P^* \in R_{G_i}(V(H)) \setminus B_{G_i}(V(H))$ with $P^* \cap S \neq \emptyset$.
    \end{itemize}
    For each $j \in [2]$, let $\mathcal{Q}_j = B_{G_{i+1}}(P_j) \cap (V(H)\setminus \{P^*\})$.     By~\Cref{lem:contraction}, $\mathcal{Q}_1$ and $\mathcal{Q}_2$ are disjoint.

    Using \Cref{lem:sigmaseparator}, one can observe that for a valid subset $I$ with respect to $(i, H, S, \eta)$, its restriction to the union of some subset of $\{P_1, P_2\}\cup \mathcal{Q}_1\cup \mathcal{Q}_2$ forms a $\sigma$-separator, and if $P^* \in R_{G_i}(V(H)) \setminus B_{G_i}(V(H))$, then we can take so that $\{P_1, P_2\}\subseteq \mathcal{O}$.
    We explain, for a given $\sigma$-separator, how to construct a valid subset corresponding to it.
    Let $(\mathcal{O}, U)$ be a $\sigma$-separator with respect to $(P_1, P_2, \mathcal{Q}_1, \mathcal{Q}_2)$ such that 
    \begin{itemize}
        \item if $P^* \in R_{G_i}(V(H)) \setminus B_{G_i}(V(H))$, then $\{P_1, P_2\}\subseteq \mathcal{O}$ and $U\cap P^*=S\cap P^*$.
    \end{itemize} 
    When $P^* \in R_{G_i}(V(H)) \setminus B_{G_i}(V(H))$, we require that our solution on $P^*$ is exactly $S \cap P^*$; thus, it suffices to consider $U$ with $U \cap P^* = S \cap P^*$.
    Let $F$ be the subtrigraph of $G_{i+1}$ induced by $(V(H)\setminus\{P^*\})\cup \{P_1, P_2\}$. We define $\mathcal{C}(\mathcal{O}, U)$, $\mathcal{Q}(\mathcal{O}, U)$, $\mathcal{S}(\mathcal{O}, U)$, $\mathcal{L}(\mathcal{O}, U)$ as follows:
    \begin{itemize}
        \item $\mathcal{C}(\mathcal{O}, U)=\{H_1, \ldots, H_k\}$ is the set of components of $F-\mathcal{O}$.
        \item $\mathcal{Q}(\mathcal{O}, U)=(Q_1, \ldots, Q_k)$ where for each $j\in [k]$, $Q_j$ is the union of all parts in $R_{G_{i + 1}}(V(H_j))$.
        \item $\mathcal{S}(\mathcal{O}, U)=(S_1, \ldots, S_k)$ where for each $j\in [k]$, $S_j=(S\cup U)\cap Q_j$.
        \item $\mathcal{L}(\mathcal{O}, U)$ is the collection of functions $(\Lambda_1, \ldots, \Lambda_k)$ where for each $j\in [k]$, $\Lambda_j: S\cup U \to \{0, 1, \ldots, \alpha\}$ is a function such that 
         \begin{itemize}
        \item if $v\in (S \cup U)\setminus S_{j}$, then 
        $\Lambda_{j}(v) = 0$, 
        \item for every $v \in S$, $\sum_{j=1}^k \Lambda_{j}(v) + |N_G(v) \cap (U \setminus S)| = \eta(v)$, and

        \item for every $v \in U\setminus S$, 
       $\sum_{j=1}^k \Lambda_{j}(v) + |N_G(v) \cap (S \cup U)| \in \sigma$.
       \end{itemize} 
    \end{itemize}

    By the black pair condition, each $H_j$ is a connected red-induced subtrigraph of $G_{i + 1}$.    
    So, for each~$j\in [k]$, $(H_j, S_{j}, \Lambda_{j})\in \mathcal{C}(i+1)$ because $R_{G_{i+1}}(V(H_j))\setminus \mathcal{O}\subseteq R_{G_{i}}(V(H))$ and $\Lambda_{j}(v) \in \{0, 1, \ldots, \alpha\}$ for every $v \in S_{j}$, and $0$ otherwise. Therefore, 
    $\Phi(i + 1, H_j, S_{j}, \Lambda_{j})$ has been recorded.

    Note that by the black pair condition, if $\mathcal{Q}_j\setminus \mathcal{O} \neq \emptyset$, then $P_j \in \mathcal{O}$. This implies that  $k\le t+2$ by \Cref{lem:componentcount}. 

    For $\Lambda=(\Lambda_{1}\ldots, \Lambda_{k})\in \mathcal{L}(\mathcal{O}, U)$,
    we set 
    \[I_{\mathcal{O}, U,\Lambda}=\begin{cases}\left(\bigcup\limits_{j = 1}^k \Phi(i + 1, H_j, S_{j}, \Lambda_{j})\right) \cup (U\setminus S) & \text{ if }\, \Phi(i + 1, H_j, S_{j}, \Lambda_{j}) \neq \bot \,\, \text{for all }\, j \in [k],\\
    \emptyset & \text{ otherwise}.
    \end{cases}\]

    We prove the following.

    \begin{claim}\label{claim:zetamax}
        If $\zeta(i,H,S,\eta)\neq \emptyset$, then there exist a $\sigma$-separator $(\mathcal{O}, U)$ and $\Lambda\in \mathcal{L}(\mathcal{O},U)$ for which $I_{\mathcal{O},U, \Lambda}\in \zeta_{max}(i,H,S,\eta)$.
    \end{claim}
    \begin{clproof}
        
        As $\zeta(i,H,S,\eta)\neq \emptyset$, we have $\zeta_{max}(i,H,S,\eta)\neq \emptyset$.
        Let $I \in \zeta_{max}(i,H,S,\eta)$. Then, by~\Cref{lem:sigmaseparator}, there exists a $\sigma$-separator $(\mathcal{O}, U)$ with respect to $(P_1, P_2, \mathcal{Q}_1, \mathcal{Q}_2)$ such that 
        $U = I_{\mathcal{O}}$ and if $|I\cap (P_1\cup P_2)|\le \alpha$, then $\{P_1, P_2\}\subseteq \mathcal{O}$. 
        Let $\mathcal{C}(\mathcal{O}, U)=\{H_1, \ldots, H_k\}$ and $\mathcal{S}(\mathcal{O}, U)=(S_1, \ldots, S_k)$.
        For each $j\in [k]$, let $I_j = I \cap \bigcup_{X \in V(H_j)} X$.  
        
        By the black pair condition, each $H_j$ is a connected red-induced subtrigraph of $G_{i+1}$. Moreover, $R_{G_{i+1}}(H_j) \subseteq R_{G_{i}}(H) \cup \mathcal{O}$. Since $(H, S, \eta) \in \mathcal{C}(i)$ and $(\mathcal{O}, U)$ is a $\sigma$-separator, it follows that $|(S \cup U) \cap P| \le \alpha$ for every $P \in R_{G_{i+1}}(V(H_j)) \setminus B_{G_{i+1}}(V(H_j))$.
        For each $j \in [k]$, let $\Lambda_j : S \cup U \to \{0, 1, \ldots, \alpha\}$ be a function such that 
        \begin{itemize}
            \item if $v \in S_ j$, then $\Lambda_j(v) = |N_G(v) \cap I_j|$, and
            \item if $v \notin (S \cup U) \setminus S_j$, then $\Lambda_j(v) = 0$.
        \end{itemize} 
        Then, $(H_j, S_j, \Lambda_j) \in \mathcal{C}(i+1)$. Next, we show that $I_j \in \zeta(i+1, H_j, S_j, \Lambda_j)$ for each $j \in [k]$.

        Since $(\mathcal{O}, U)$ is a $\sigma$-separator, we have $\trace_{\mathcal{P}_{i+1}}\big(\bigcup_{j \in [k]} I_j\big) = V(H) \setminus \mathcal{O}$. Thus, for each $j \in [k]$, it follows that $\trace_{\mathcal{P}_{i+1}}(I_j) = V(H_j)$. Moreover, since $R_{G_{i+1}}(H_j) \subseteq R_{G_{i}}(H) \cup \mathcal{O}$, we obtain $|N_G(v) \cap (I_j \cup S_j)| = |N_G(v) \cap (I \cup S)| \in \sigma$ for every $v \in I_j$. By the definition, for each $v \in S_j$, we have $\Lambda_j(v) = |N_G(v) \cap I_j| \le \alpha$. Hence, for each $j \in [k]$, $I_j \in \zeta(i+1, H_j, S_j, \Lambda_j)$. 

        Since $I$ is valid, for every $v \in U\setminus S$, we have
        \begin{align*}
          \sum_{j=1}^k \Lambda_j(v) + |N_G(v) \cap (U \cup S)| &= \sum_{j=1}^k |N_G(v) \cap I_j| + |N_G(v) \cap (U \cup S)| \\
          &= |N_G(v) \cap (I \cup S)| \in \sigma,  
        \end{align*}
        and for each $v \in S$, we have
        \begin{align*}
            \sum_{j=1}^k \Lambda_j(v) + |N_G(v) \cap (U\setminus S)| & = \sum_{j=1}^k |N_G(v) \cap I_j| + |N_G(v) \cap (U\setminus S)| \\
            &= |N_G(v) \cap I| = \eta(v).
        \end{align*}
        Therefore, $(\Lambda_1, \ldots, \Lambda_k) \in \mathcal{L}(\mathcal{O}, U)$. Consequently,
        \[|I| = |U\setminus S| + \sum_{j = 1}^k |I_j| \le |U \setminus S| + \sum_{j = 1}^k |\Phi(i+1, H_j, S_j, \Lambda_j)| = |I_{\mathcal{O}, U, \Lambda}|.\]
        This completes the proof.
        
    \end{clproof}

    Recall that any valid set is non-empty.
    
    We compute $I_{\mathcal{O}, U, \Lambda}$ for each $\Lambda \in \mathcal{L}(\mathcal{O}, U)$.
    
    If $I_{\mathcal{O}, U, \Lambda} = \emptyset$ for every possible choice of a $\sigma$-separator $(\mathcal{O}, U)$ and $\Lambda\in \mathcal{L}(\mathcal{O},U)$, then by \Cref{claim:zetamax}, we have $\zeta(i, H, S, \eta)=\emptyset$. In this case, we can set $\Phi(i, H, S, \eta) = \bot$.

    Otherwise, 
    there exist a $\sigma$-separator $(\mathcal{O}, U)$ and $\Lambda\in \mathcal{L}(\mathcal{O},U)$ for which $I_{\mathcal{O},U, \Lambda}\neq \emptyset$. In this case, we select a set of maximum size among them and record it as $\Phi(i, H, S, \eta)$. By \Cref{claim:zetamax}, this set must be a set in $\zeta_{max}(i,H,S,\eta)$.

    This proves the correctness of the algorithm.

\medskip    
    We analyze the running time. We present upper bounds on  $|\mathcal{C}(i)|$ and $|\mathcal{L}(\mathcal{O}, U)|$.

    \begin{claim}\label{claim:number of C(i)}
        $|\mathcal{C}(i)| \le n^{(\alpha + 1)t} \cdot (\alpha + 1)^{\alpha t}$.
    \end{claim}
    \begin{clproof}
         By~\Cref{lem:numberofinducedsubgraphs}, the number of non-empty connected red-induced subtrigraphs of $G_i$ is at most $n^{t}$. Let $H$ be a non-empty connected red-induced subtrigraph of $G_i$. By applying~\Cref{lem:thenumberofredneighbor} to the red component of $G_i$ containing $H$, we have 
        \[|R_{G_i}(V(H))\setminus B_{G_i}(V(H))| \le |R_{G_i}(V(H))| \le t.\]

         It follows that $|S| \le \alpha t$. Hence, there are at most $n^{\alpha t}$ choices for $S$, and for each such choice, there are at most $(\alpha + 1)^{\alpha t}$ possible functions $\eta$. Thus, we have $|\mathcal{C}(i)| \le n^{(\alpha + 1)t} \cdot (\alpha + 1)^{\alpha t}$.
    \end{clproof}

\begin{claim}\label{claim:numberoffunctions}
        $|\mathcal{L}(\mathcal{O}, U)| \le (\alpha + k)^{\alpha^2(t+4)}$.
    \end{claim}
    \begin{clproof}
        Consider a vertex $v \in S$. Since $\sum_{j=1}^k \Lambda_{j}(v) = \eta(v) - |N_G(v) \cap (U\setminus S)| \le \alpha$, the number of possible tuples $(\Lambda_1(v), \ldots, \Lambda_k(v))$ is at most $(\alpha + k)^\alpha$. Now consider a vertex $v \in U\setminus S$. Since $\sum_{j=1}^k \Lambda_{j}(v) + |N_G(v) \cap (U \cup S)| \in \sigma$, it follows that $\sum_{j=1}^k \Lambda_{j}(v) \in \sigma$. So, the number of possible tuples $(\Lambda_1(v), \ldots, \Lambda_k(v))$ is at most $(\alpha + k)^{\alpha}$. As $|S|\le t\alpha$ and $|U|\le 4\alpha$, there are at most 
        \[   (\alpha + k)^{\alpha |S|}\cdot (\alpha + k)^{\alpha|U|}
        \le 
        (\alpha + k)^{\alpha^2(t+4)} \]
        possible functions $(\Lambda_1, \ldots, \Lambda_k)$.
        
    \end{clproof}

    By \Cref{lem:numberofseparators}, the number of possible $\sigma$-separators with respect to a given $(P_1, P_2, \mathcal{Q}_1, \mathcal{Q}_2)$ is at most $4n^{4 \alpha}$. Thus, by \Cref{claim:numberoffunctions}, each $\Phi(i, H, S, \eta)$ can be computed in time $O(n^{4\alpha}\cdot (\alpha + t + 2)^{\alpha^2(t+4)})$.
    So, for fixed $i \in [n]$, $\Phi(i, H, S, \eta)$ for all $(H, S, \eta)\in \mathcal{C}(i)$ can be computed in time $O(|\mathcal{C}(i)|\cdot n^{4\alpha}\cdot (\alpha + t + 2)^{\alpha^2(t+4)} )$. Overall, by~\Cref{claim:number of C(i)}, this algorithm runs in time
    \[O\big(n^{(\alpha + 1)t + 4\alpha + 1} \cdot (\alpha + 1)^{\alpha t} \cdot (\alpha + t + 2)^{\alpha^2(t+4)}\big)=n^{O(\alpha^2 t)}, \]
    where $\alpha\le n$ can be assumed because  it suffices to consider $\sigma$ only up to $n$.
    \end{proof}

\subsection{Induced Disjoint Paths}\label{subsec:IDP}

   We consider the following problem.  
   Paths $P_1, \ldots, P_k$ in a graph are \emph{mutually induced}
    if each $P_i$ is induced, and for distinct $i,j\in [k]$, $P_i$ and $P_j$ have neither common vertices nor adjacent vertices.
\medskip

\noindent
\fbox{\parbox{0.97\textwidth}{
	\textsc{Induced Disjoint Paths (IDP)}\\
	\textbf{Input :} A graph $G$ with pairs of vertices $(x_1, y_1), \ldots, (x_\ell,y_\ell)$ of $G$.
	\\
	\textbf{Output :} A set of mutually induced paths $P_1, \ldots, P_\ell$ such that for each $i\in [\ell]$, $P_i$ is a $(x_i,y_i)$-path.}}

\medskip

Similar to what we did for the \textsc{$\sigma$-Neighborhood} problem, we recursively find mutually induced disjoint paths within the union of parts corresponding to connected red-induced subtrigraphs of$~G/\mathcal{P}_i$. We similarly define a separator, called an IDP separator, to divide the red-induced subtrigraph into components obtained by deleting the separator to which we can apply induction.

For a set $\mathcal{G}$ of graphs, we denote by $\bigcup \mathcal{G}$ the union of graphs in $\mathcal{G}$. For a vertex set $Q \subseteq V(G)$ and a set $\mathcal{G}$ of graphs, we define $\mathcal{G}_{Q} := \{H[V(H)\cap Q] : H \in \mathcal{G}\}$. For a family $\mathcal{Q}$ of vertex sets in $G$ and a set $\mathcal{G}$ of graphs, we define $\mathcal{G}_{\mathcal{Q}} := \{H[V(H) \cap \big(\bigcup_{X \in \mathcal{Q}} X\big)] : H \in \mathcal{G}\}$. Recall that, for a family $\mathcal{Q}$ of vertex sets in $G$ and $I\subseteq V(G)$, we write $I_{\mathcal{Q}}=I\cap \left(\bigcup_{X\in \mathcal{Q}}X\right)$. If $\mathcal{Q}=\{Q\}$ consists of one set, then we write $I_Q=I_{\{Q\}}$.

For a set $\mathcal{Z}=\{Z_1, Z_2, \ldots, Z_m\}$ of paths in $G$, let $\pairs(\mathcal{Z})$ denote the set of pairs $\{a_i,b_i\}$ where each $Z_i$ is an $(a_i, b_i)$-path. In the case $a_i=b_i$, the corresponding set is $\{a_i\}$.

Let $P_1, P_2$ be two parts of $\mathcal{P}$ and for each $i\in [2]$, let $\mathcal{Q}_i\subseteq B_{G/\mathcal{P}}(P_i)\setminus B_{G/\mathcal{P}}(P_{3-i})$. We say that a pair $(\mathcal{O}, \mathcal{U})$ of a subset $\mathcal{O}\subseteq \{P_1, P_2\}\cup \mathcal{Q}_1\cup \mathcal{Q}_2$ and a family $\mathcal{U}$ of mutually induced paths of length at least $1$ in $G[\bigcup_{Q\in \mathcal{O}}Q]$ with $U = V(\bigcup \mathcal{U})$ is an \emph{IDP separator} with respect to $(P_1, P_2, \mathcal{Q}_1, \mathcal{Q}_2)$ if the followings hold: 

\begin{itemize}
    \item $\bigcup \mathcal{U}$ contains at most $2$ vertices from each part in $\mathcal{O}$, 
    \item if $\mathcal{O} \cap \mathcal{Q}_i \neq \emptyset$ for some $i$, then $\mathcal{Q}_i \subseteq \mathcal{O}$, 
    \item if $|U_{P_i}| \ge 1$ or $|U_{\mathcal{Q}_i}| \ge 1$ for some $i \in [2]$, then $\{P_i\} \cup \mathcal{Q}_i \subseteq \mathcal{O}$, and
    \item for every part $Q\in \{P_1, P_2\}\cup \mathcal{Q}_1\cup \mathcal{Q}_2$ that is not in $\mathcal{O}$ and every part $Q'\in \{P_1, P_2\}\cup \mathcal{Q}_1\cup \mathcal{Q}_2$ where $Q$ and $Q'$ are linked by a black edge in $G/\mathcal{P}$, $Q'\in \mathcal{O}$ and $|U_{Q'}|=0$.
\end{itemize}
Similarly, the last condition will be referred to as the `black pair condition'. Note that if $|U_{P_i}| \ge 1$ or $|U_{\mathcal{Q}_i}| \ge 1$ for some $i \in [2]$, then we always have $|U_{P_i} \cup U_{\mathcal{Q}_i}| \le 3$, because otherwise, $\bigcup\mathcal{U}$ has a subgraph isomorphic to $C_4$ or $K_{1,3}$.

For a connected red-induced subtrigraph $H$ of $G/\mathcal{P}_i$, we guess a possible set of terminals on the parts that are red, but not black, neighbors of $H$.
As illustrated in \Cref{fig:inducedpath1}, we guess a partial solution at an IDP separator and consider subproblems on the components obtained by removing the IDP separator. There may be a special case where an IDP separator lies both outside and inside $H$; see \Cref{fig:inducedpath2}. In this situation, some paths in $\mathcal{U}$ may start at vertices of $S$, and special care is required when constructing the subproblems.

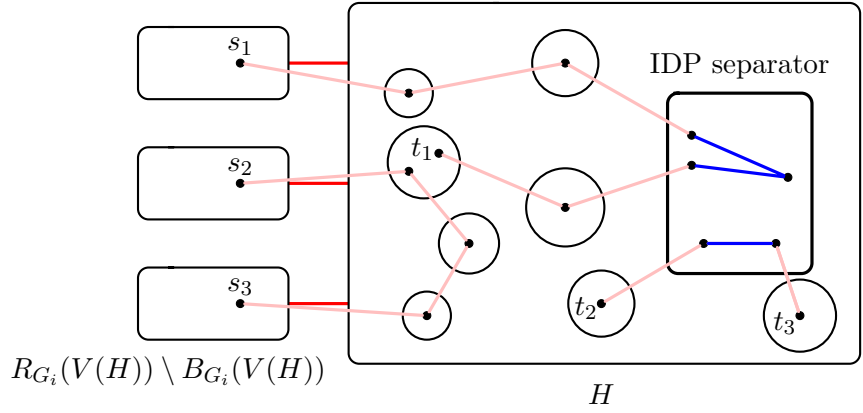
\begin{figure}
    \centering
    \begin{tikzpicture}[scale=0.8]
        \tikzstyle{v}=[circle, draw, solid, fill=black, inner sep=0pt, minimum width=3pt]
        \tikzset{c1/.style={purple, line width=6pt,opacity=0.5,line cap=round,shorten >=-2pt, shorten <=-2pt}}

        \draw[red,very thick] (-3.5, 4.5)--(-1, 4.5);
        \draw[red,very thick] (-3.5, 2.5)--(-1, 2.5);
        \draw[red,very thick] (-3.5, 0.5)--(-1, 0.5);

        \draw[rounded corners, thick, fill=white!] (0.5,5.5)--(-2,5.5)--(-2,-0.5)--(6.5,-0.5)--(6.5,5.5)--(0.4,5.5);

        \draw[rounded corners, thick, fill=white!] (-5, 5.1)--(-3, 5.1)--(-3, 3.9)--(-5.5, 3.9)--(-5.5, 5.1)--(-4.9, 5.1);  
        \draw[rounded corners, thick, fill=white!] (-5, 3.1)--(-3, 3.1)--(-3, 1.9)--(-5.5, 1.9)--(-5.5, 3.1)--(-4.9, 3.1); 
        \draw[rounded corners, thick, fill=white!] (-5, 1.1)--(-3, 1.1)--(-3, -0.1)--(-5.5, -0.1)--(-5.5, 1.1)--(-4.9, 1.1);

        \draw[rounded corners, very thick, fill=white!] (3.5, 4)--(3.3, 4)--(3.3, 1)--(5.7, 1)--(5.7, 4)--(3.4, 4);

        \draw[thick, fill = white!] (-1, 4) circle (0.4);
        \draw[thick, fill = white!] (1.6, 4.5) circle (0.55);
        \draw[thick, fill = white!] (1.6, 2.1) circle (0.65);
        \draw[thick, fill = white!] (-0.75, 2.85) circle (0.6);
        \draw[thick, fill = white!] (0, 1.5) circle (0.5);
        \draw[thick, fill = white!] (2.2, 0.5) circle (0.55);
        \draw[thick, fill = white!] (-0.7, 0.3) circle (0.4);
        \draw[thick, fill = white!] (5.5, 0.3) circle (0.6);

        \node[v] (a1) at (-3.8, 4.5) {};
        \node[v] (a2) at (-1, 4) {};
        \node[v] (a4) at (1.6, 4.5) {};
        \node[v] (a5) at (3.7, 3.3) {};
        \node[v] (a6) at (5.3, 2.6) {};
        \node[v] (a7) at (3.7, 2.8) {};
        \node[v] (a8) at (1.6, 2.1) {};
        \node[v] (a9) at (-0.5, 3) {};
                
        \draw[pink, very thick, line cap=round] (a1) -- (a2) -- (a4) -- (a5);
        \draw[blue, very thick, line cap=round] (a5) -- (a6) -- (a7);
        \draw[pink, very thick, line cap=round] (a7) -- (a8) -- (a9);

        \node[v] (b1) at (-3.8, 2.5) {};
        \node[v] (b2) at (-1, 2.7) {};
        \node[v] (b3) at (0, 1.5) {};
        \node[v] (b4) at (-0.7, 0.3) {};
        \node[v] (b5) at (-3.8, 0.5) {};

        \draw[pink, very thick, line cap=round] (b1) -- (b2) -- (b3) -- (b4) -- (b5);

        \node[v] (c1) at (2.2, 0.5) {};
        \node[v] (c2) at (3.9, 1.5) {};
        \node[v] (c3) at (5.1, 1.5) {};
        \node[v] (c4) at (5.5, 0.3) {};

        \draw[pink, very thick, line cap=round] (c1)--(c2)--(c3)--(c4);
        \draw[blue, very thick, line cap=round] (c2)--(c3);
        \draw[pink, very thick, line cap=round] (c3)--(c4);

        \node [label=$H$] (v) at (2.2, -1.5){};
        \node [label=$R_{G_i}(V(H)) \setminus B_{G_i}(V(H))$] (v) at (-5, -1.2){};
        \node [label={IDP separator}] (v) at (4.5, 3.9){};

        \node [label=$s_1$] (v) at (-3.8, 4.3){};
        \node [label=$s_2$] (v) at (-3.8, 2.3){};
        \node [label=$s_3$] (v) at (-3.8, 0.3){};
        \node [label=$t_1$] (v) at (-0.77, 2.55){};
        \node [label=$t_2$] (v) at (1.95, -0.1){};
        \node [label=$t_3$] (v) at (5.25, -0.35){};
   
    \end{tikzpicture}
    \caption{We require that the restriction of each solution to an IDP separator consists of paths of length at least $1$.} 
    \label{fig:inducedpath1}
\end{figure}

\begin{figure}
    \centering
    \begin{tikzpicture}[scale=0.8]
        \tikzstyle{v}=[circle, draw, solid, fill=black, inner sep=0pt, minimum width=3pt]
        \tikzset{c1/.style={purple, line width=6pt,opacity=0.5,line cap=round,shorten >=-2pt, shorten <=-2pt}}

        \draw[red,very thick] (-3.5, 4)--(-1, 4);
        \draw[red,very thick] (-3.5, 0.5)--(-1, 0.5);

        \draw[rounded corners, thick, fill=white!] (0.5,5.5)--(-2,5.5)--(-2,-0.5)--(6.5,-0.5)--(6.5,5.5)--(0.4,5.5);

        \draw[rounded corners, very thick, fill=white!] (-4.5, 5.1)--(-3, 5.1)--(-3, 3) -- (-0.5, 3) -- (-0.5, 1.9) --(-5, 1.9)--(-5, 5.1)--(-4.4, 5.1);

        \draw[rounded corners, thick, fill=white!] (-5, 1.1)--(-3, 1.1)--(-3, -0.1)--(-5, -0.1)--(-5, 1.1)--(-4.9, 1.1);

        \draw[thick, fill = white!] (-1.1, 4.9) circle (0.4);
        \draw[thick, fill = white!] (1.6, 4.5) circle (0.55);
        \draw[thick, fill = white!] (1.8, 2.1) circle (0.65);
        \draw[thick, fill = white!] (0, 1.5) circle (0.5);
        \draw[thick, fill = white!] (2.2, 0.6) circle (0.65);
        \draw[thick, fill = white!] (-0.7, 0.3) circle (0.4);
        \draw[thick, fill = white!] (5.5, 0.45) circle (0.65);
        \draw[thick, fill = white!] (3.5, 1.2) circle (0.6);
        \draw[thick, fill = white!] (4, 3) circle (1);
        \draw[thick, fill = white!] (-0.2, 3.9) circle (0.7);
        \draw[thick, fill = white!] (5.7, 4.6) circle (0.5);

        \node[v] (a2) at (-0.5, 3.7) {};
        \node[v] (a3) at (1.6, 2.1) {};
        \node[v] (a4) at (3.9, 3.5) {};

        \node[v] (b1) at (-3.8, 4.7) {};
        \node[v] (b2) at (-1, 5) {};
        \node[v] (b3) at (1.6, 4.5) {};
        \node[v] (b4) at (5.8, 4.8) {};
        
        \draw[pink, very thick, line cap=round] (a2) -- (a3) -- (a4);
        
        \draw[pink, very thick, line cap=round] (b1) -- (b2) -- (b3) -- (b4);

        \node[v] (c1) at (-3.5, 2.45) {};
        \node[v] (c2) at (-1.3, 2.45) {};
        \node[v] (c3) at (-0.7, 0.3) {};
        \node[v] (c4) at (0, 1.5) {};
        \node[v] (c5) at (2.2, 0.5) {};
        \node[v] (c6) at (3.4, 1.2) {};
        \node[v] (c7) at (3.9, 2.3) {};
        \node[v] (c8) at (4.7, 2.7) {};
        \node[v] (c9) at (5.5, 0.3) {};

        \draw[blue, very thick, line cap=round] (c1)--(c2);
        \draw[pink, very thick, line cap=round] (c2)--(c3)--(c4)--(c5)--(c6)--(c7)--(c8)--(c9);

        \draw[blue, very thick, line cap=round] (c1) -- (b1);

        \node[v] (d) at (-3.8, 0.5) {};

        \node [label=$H$] (v) at (2.2, -1.5){};
        \node [label=$R_{G_i}(V(H)) \setminus B_{G_i}(V(H))$] (v) at (-5, -1.2){};
        \node [label={IDP separator}] (v) at (-4.2, 5){};

        \node [label=$s_1$] (v) at (-4.2, 4.2){};
        \node [label=$s_2$] (v) at (-1, 2.2){};
        \node [label=$t_1$] (v) at (5.8, 3.95){};
        \node [label=$t_2$] (v) at (-0.2, 3.5){};
        \node [label=$t_3$] (v) at (4.3, 2.95){};
        \node [label=$t_4$] (v) at (5.7, 0.1){};
        \node [label=$a_1$] (v) at (-3.85, 2.0){};
        \node [label=$a_2$] (v) at (-4.2, 0){};

        \draw[thick, dashed] (d) -- (c3);

    \end{tikzpicture}
    \caption{The case where an IDP separator lies both outside and inside $H$. Then a path in $\mathcal{U}$ may start from a vertex in $S$, and in the subproblems, we have to exclude it from consideration. } 
    \label{fig:inducedpath2}
\end{figure}
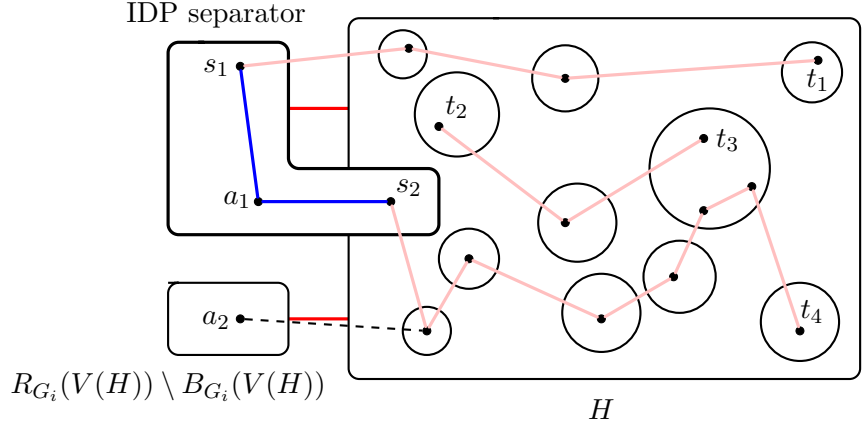

    We obtain an upper bound on the number of IDP separators.
\begin{lemma}\label{lem:numberofIDPseparators}
    Let $G$ be a graph and let $\mathcal{P}$ be a partition of $V(G)$.
    Let $P_1$ and $P_2$ be two parts of $\mathcal{P}$, and for each $i\in [2]$, let $\mathcal{Q}_i\subseteq B_{G/\mathcal{P}}(P_i)\setminus B_{G/\mathcal{P}}(P_{3-i}).$ Then the number of IDP separators with respect to $(P_1, P_2, \mathcal{Q}_1, \mathcal{Q}_2)$ is at most $4\cdot|V(G)|^{6}$.
\end{lemma}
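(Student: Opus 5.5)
The plan follows the proof of \cref{lem:numberofseparators}: bound the size of $U = V(\bigcup \mathcal{U})$, then count the choices of $U$, of the path family $\mathcal{U}$ on $U$, and of the set $\mathcal{O}$.

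\textbf{Bounding $|U|$.} First I would show that $|U| \le 6$. Fix $i \in [2]$ and suppose $U_{P_i} \cup U_{\mathcal{Q}_i}$ is non-empty. Every part of $\mathcal{Q}_i$ is black-adjacent to $P_i$, so every vertex of $U_{P_i}$ is adjacent to every vertex of $U_{\mathcal{Q}_i}$. The set $\mathcal{U}$ consists of mutually induced paths, so $G[U]$ has maximum degree at most $2$ and contains no induced $C_4$. Hence a complete bipartite pair $(U_{P_i}, U_{\mathcal{Q}_i})$ with both sides non-empty has at most $3$ vertices in total; this is the remark following the definition of IDP separators. If one side is empty, the definition only gives $|U_{P_i}| \le 2$, while $U_{\mathcal{Q}_i}$ could a priori be large. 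I expect this to be the only delicate point.

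To handle it, I would use the third condition, which gives $\{P_i\} \cup \mathcal{Q}_i \subseteq \mathcal{O}$. If $U_{P_i} = \emptyset$ while $P_i \in \mathcal{O}$, then I would try to derive a bound from the black pair condition. Failing that, I would treat the intended reading as $|U_{P_i} \cup U_{\mathcal{Q}_i}| \le 3$ for each $i$, as stated in that remark, and take it as given. Either way this yields $|U| \le 6$.

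\textbf{Counting.} With $|U| \le 6$, the number of choices of $U$ as a vertex set is at most $\sum_{j \le 6} \binom{n}{j}$. Given $U$, the family $\mathcal{U}$ is forced: mutually induced paths with vertex set $U$ must be exactly the connected components of $G[U]$. Each such component must then be an induced path of length at least $1$, so at most one $\mathcal{U}$ exists per $U$.

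For $\mathcal{O}$, whenever $U_{P_i} \cup U_{\mathcal{Q}_i} \neq \emptyset$, membership of $\{P_i\} \cup \mathcal{Q}_i$ in $\mathcal{O}$ is forced. Otherwise, membership is essentially a binary choice between nothing and all of $\{P_i\} \cup \mathcal{Q}_i$. The reason is that the second condition forces all of $\mathcal{Q}_i$ once one part of it is included, and the black pair condition forces $P_i$ in or out together with $\mathcal{Q}_i$. This gives at most $2$ options per $i$, hence at most $4$ in total.

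Combining the two counts, the number of IDP separators is at most $4 \sum_{j \le 6} \binom{n}{j} \le 4 n^6$, with small values of $n$ checked directly.
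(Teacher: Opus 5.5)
Your plan is the paper's proof: bound $|U|\le 6$ via $|U_{P_i}\cup U_{\mathcal{Q}_i}|\le 3$, count the sets $U$, and multiply by the number of choices of $\mathcal{O}$. You are also right that $\mathcal{U}$ is determined by $U$, which the paper uses implicitly.

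The point you flag is a genuine gap, and the paper shares it. The paper simply asserts $|U_{P_i}\cup U_{\mathcal{Q}_i}|\le 3$ ``by the definition'', relying on the remark after the definition. That remark's $C_4$/$K_{1,3}$ argument only works when both $U_{P_i}$ and $U_{\mathcal{Q}_i}$ are non-empty. The black pair condition cannot close the gap, because under the literal definition the statement is false. Let $G$ consist of a vertex $p$ adjacent to every vertex of a perfect matching $a_1b_1,\dots,a_mb_m$, plus an isolated vertex $s$. Take parts $P_1=\{p\}$, $P_2=\{s\}$ and $\{a_k,b_k\}$ for $k\in[m]$, with $\mathcal{Q}_1$ all matching parts and $\mathcal{Q}_2=\emptyset$. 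For every $K\subseteq[m]$, the pair $\mathcal{O}=\{P_1\}\cup\mathcal{Q}_1$, $\mathcal{U}=\{a_kb_k : k\in K\}$ satisfies all four conditions. The only part outside $\mathcal{O}$ is $P_2$, which has no black neighbour. This gives at least $2^m$ IDP separators while $|V(G)|=2m+2$, exceeding $4|V(G)|^6$ for large $m$.

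So ``taking the remark as given'' really means amending the definition. For instance, require that $U_{\mathcal{Q}_i}\neq\emptyset$ implies $U_{P_i}\neq\emptyset$, the analogue of the explicit bound $|U_{\mathcal{Q}_i}|\le\alpha$ for $\sigma$-separators. Every separator built in the proof of \cref{lem:sepIDP} satisfies this requirement, and with it your argument (and the paper's) goes through. Alternatively, in the algorithm every part of $\mathcal{Q}_1\cup\mathcal{Q}_2$ is a red neighbour of $P^*$, so $|\mathcal{Q}_1\cup\mathcal{Q}_2|\le t$. One then still gets $n^{O(t)}$ separators, though not $4n^6$.

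A smaller error is in your count of $\mathcal{O}$. The black pair condition does not force $P_i$ and $\mathcal{Q}_i$ in or out together; it only forbids both being absent when $\mathcal{Q}_i\neq\emptyset$. Both $(\{P_1,P_2\},\emptyset)$ and $(\mathcal{Q}_1\cup\mathcal{Q}_2,\emptyset)$ are used in the proof of \cref{lem:sepIDP}, and $\{P_i\}\cup\mathcal{Q}_i$ with $U_{P_i}\cup U_{\mathcal{Q}_i}=\emptyset$ is also allowed. So there can be three options per $i$, not two. This is harmless: by the second condition $\mathcal{O}$ is determined by four binary choices, and $16\sum_{j\le 6}\binom{n}{j}\le 4n^6$ for $n\ge 2$. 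The paper's factor $4$ overlooks the same freedom, and the same slack absorbs it.
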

\begin{proof}
    Let $(\mathcal{O}, \mathcal{U})$ be an IDP separator with respect to $(P_1, P_2, \mathcal{Q}_1, \mathcal{Q}_2)$. Let $U = V(\bigcup \mathcal{U})$. By the definition, $|U_{P_i} \cup U_{\mathcal{Q}_i}| \le 3$ for each $i \in [2]$. It follows that $|U| \le 6$, and hence the number of possible choice of $U$ is at most $|V(G)|^{6}$. If $U_{\{P_i\}} = \emptyset$ for some $i \in [2]$, then both cases $P_i \in \mathcal{O}$ and $P_i \notin \mathcal{O}$ are possible. Therefore, the number of IDP separators is at most $4\cdot|V(G)|^{6}$.
\end{proof}

The following lemma will be used to guarantee the existence of an IDP separator.
 \begin{lemma}\label{lem:sepIDP}
        Let $G$ be a graph and let $\mathcal{P}$ be a partition of $V(G)$. Let $P_1$ and $P_2$ be two parts of $\mathcal{P}$, and for each $i \in [2]$, let $\mathcal{Q}_i \subseteq B_{G/\mathcal{P}}(P_i) \setminus B_{G/\mathcal{P}}(P_{3-i})$. 
        Let $\mathcal{Z}$ be a set of mutually induced paths in $G$. Then there exists a subset $\mathcal{O}\subseteq \{P_1, P_2\}\cup \mathcal{Q}_1\cup \mathcal{Q}_2$ such that $(\mathcal{O}, \mathcal{Z}_{\mathcal{O}})$ is an IDP separator with respect to $(P_1, P_2, \mathcal{Q}_1, \mathcal{Q}_2)$.
    \end{lemma}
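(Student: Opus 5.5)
The plan is to mirror the proof of \cref{lem:sigmaseparator}. Write $W := V(\bigcup \mathcal{Z})$. Since the paths of $\mathcal{Z}$ are mutually induced, $G[W]$ is an induced linear forest: it has maximum degree at most~$2$ and contains no induced $K_{1,3}$ or $C_4$. I will use this in the following form. Whenever two parts $X,Y$ are joined by a black edge in $G/\mathcal{P}$, $X$ is complete to $Y$ in~$G$, so $G[W \cap (X \cup Y)]$ is a complete bipartite graph inside a linear forest. Hence if $W$ meets both $X$ and $Y$, then $|W \cap (X\cup Y)| \le 3$, the set $W\cap(X\cup Y)$ induces a path $P_2$ or $P_3$ (length at least~$1$), and each of $X,Y$ contains at most $2$ vertices of~$W$. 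Applied to $X = P_i$ and $Y$ ranging over $\mathcal{Q}_i$, this gives: if $W$ meets $P_i$ and meets $\bigcup\mathcal{Q}_i$, then $W \cap (P_i \cup \bigcup \mathcal{Q}_i)$ has at most $3$ vertices and induces a path of length $1$ or~$2$, and it is connected through~$P_i$.

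I then choose $\mathcal{O} = \mathcal{O}_1 \cup \mathcal{O}_2$, handling each $i \in [2]$ separately according to how $W$ meets $P_i$ and $\mathcal{Q}_i$.
\begin{enumerate}[(a)]
\item If $W$ meets $P_i$ and meets some part of $\mathcal{Q}_i$, set $\mathcal{O}_i = \{P_i\} \cup \mathcal{Q}_i$. By the observation above, $\mathcal{Z}_{\mathcal{O}_i}$ is a single path of length at least~$1$ with at most $2$ vertices per part. Condition~3 of the definition then holds trivially.
\item If $W$ meets $P_i$ but no part of $\mathcal{Q}_i$, set $\mathcal{O}_i = \mathcal{Q}_i$. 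Its trace of $W$ is empty, so the black pairs $(P_i,Q)$ with $P_i \notin \mathcal{O}$ satisfy the black pair condition.
\item If $W$ misses $P_i$, set $\mathcal{O}_i = \{P_i\}$ and leave $\mathcal{Q}_i$ outside. Each black pair $(Q,P_i)$ with $Q \in \mathcal{Q}_i$ then has $P_i \in \mathcal{O}$ and $U_{P_i} = \emptyset$, as required.
\end{enumerate}
In every case, condition~2 holds because $\mathcal{Q}_i$ is included either entirely or not at all. Condition~3 holds because $U$ is nonempty on $P_i\cup\bigcup\mathcal{Q}_i$ only in case~(a). Mutual inducedness of $\mathcal{Z}_{\mathcal{O}}$ is inherited from~$\mathcal{Z}$. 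Finally, every piece of $\mathcal{Z}_{\mathcal{O}}$ has length at least~$1$, since the only nonempty traces come from case~(a).

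Next I check the interaction between $i=1$ and $i=2$. By the hypothesis $\mathcal{Q}_i \cap B_{G/\mathcal{P}}(P_{3-i}) = \emptyset$, a part of $\mathcal{Q}_i$ has no black edge to $P_{3-i}$. This leaves the potential black edge $P_1P_2$. If $P_1P_2$ is black and $W$ meets both parts, I put both $P_1$ and $P_2$, together with all nonempty-trace $\mathcal{Q}_j$, into~$\mathcal{O}$. The completeness of $P_1$ to $P_2$, of $P_1$ to $\mathcal{Q}_1$, and of $P_2$ to $\mathcal{Q}_2$ then forces $W \cap \bigcup\mathcal{O}$ to be a single short connected path: a $K_{1,3}$ or $C_4$ would otherwise appear. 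I also need the at-most-$2$-per-part bound to survive; for instance, $2$ vertices of $W$ in $P_1$ together with one in $P_2$ already saturates. If $W$ meets exactly one of $P_1,P_2$ and $P_1P_2$ is black, cases (a)–(c) already put the empty one into~$\mathcal{O}$ with empty trace, so the black pair condition holds.

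The step I expect to be the main obstacle is the black pair condition for black edges between two parts of $\mathcal{Q}_1\cup\mathcal{Q}_2$, for example between $\mathcal{Q}_1$ and $\mathcal{Q}_2$, or within one $\mathcal{Q}_i$. Case~(c) leaves such parts outside~$\mathcal{O}$. I would first check whether these edges can exist in the intended use: there the $\mathcal{Q}_j$ lie in the red-induced subtrigraph~$H$, so they cannot, and the condition reduces to pairs involving $P_1$ or~$P_2$. If the general statement really needs it, I would instead include such $\mathcal{Q}_i$ in $\mathcal{O}$ with empty trace whenever $W$ avoids it. When $W$ meets two black-adjacent $\mathcal{Q}$-parts, I would again use the complete-bipartite argument to bound the trace and add $P_i$ as in~(a).
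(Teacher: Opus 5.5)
Your proposal is correct and is essentially the paper's proof. It uses the same $K_{1,3}$/$C_4$ argument for complete pairs of parts and the same choices of $\mathcal{O}$, organized per index $i$ instead of as one case tree. In one case, where $W$ meets $P_{3-j}$ but misses $P_j$ and $\bigcup\mathcal{Q}_{3-j}$, your choice $\{P_j\}\cup\mathcal{Q}_{3-j}$ with empty trace is cleaner than the paper's $\{P_1,P_2\}\cup\mathcal{Q}_{3-j}$, which can leave an isolated vertex in $\mathcal{U}$.

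Like the paper, you rely on there being no black edges between parts of $\mathcal{Q}_1\cup\mathcal{Q}_2$, which holds in the application, so just state this as a hypothesis. Your fallback for the general case cannot work, because the literal statement is false there. Take a part of $\mathcal{Q}_1$ containing three vertices of $W$ that is black-adjacent to another part of $\mathcal{Q}_1$. Condition~1 keeps the first part out of $\mathcal{O}$, condition~2 then keeps all of $\mathcal{Q}_1$ out, and this violates the black pair condition for those two parts.
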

        
    \begin{proof}
        Let $Z = V(\bigcup \mathcal{Z})$. Assume that $|Z_{P_j}| \ge 3$ for some $j \in [2]$. As $\mathcal{Z}$ is a set of mutually induced paths and $P_j$ is adjacent to each part of $\mathcal{Q}_j$ in $G/\mathcal{P}$, we have $|Z_{\mathcal{Q}_j}| = 0$. If $Z_{P_{3-j}} = \emptyset$, then $(\{P_{3-j}\} \cup \mathcal{Q}_j, \emptyset)$ is an IDP separator. Thus, we may assume that $Z_{P_{3-j}} \neq \emptyset$.
        \begin{itemize}
            \item If $|Z_{\mathcal{Q}_{3-j}}| \ge 1$, then every vertex in $Z_{P_{3-j}}\cup Z_{\mathcal{Q}_{3-j}}$ has a neighbor in $\mathcal{Z}_{\{P_{3-j}\} \cup \mathcal{Q}_{3-j}}$. 
            Therefore, $(\mathcal{O} = \{P_{3-j}\} \cup \mathcal{Q}_1 \cup \mathcal{Q}_2, \mathcal{Z}_{\mathcal{O}})$ is an IDP separator.
            \item Assume $|Z_{\mathcal{Q}_{3-j}}| = 0$. If there is a black edge between $P_1$ and $P_2$, then the vertex in $Z_{P_{3-j}}$ has three neighbors in $\mathcal{Z}$, which is not possible. So, there is no black edge between $P_1$ and $P_2$, and this deduces that $(\mathcal{Q}_1 \cup \mathcal{Q}_2, \emptyset)$ is an IDP separator.
        \end{itemize}
        Hence, we may assume that $|Z_{P_j}| \le 2$ for every $j \in [2]$. Assume that $Z_{P_j} = \emptyset$ for some $j \in [2]$. 
        \begin{itemize}
            \item If $Z_{P_{3-j}} = \emptyset$, then $(\{P_1, P_2\}, \emptyset)$ is an IDP separator.
            \item If $Z_{P_{3-j}} \neq \emptyset$, then $(\mathcal{O} = \{P_1, P_2\} \cup \mathcal{Q}_{3-j}, \mathcal{Z}_{\mathcal{O}})$ is an IDP separator.
        \end{itemize}

        Now, we assume that $Z_{P_j} \neq \emptyset$ for every $j \in [2]$. If $P_1$ and $P_2$ are connected by a black edge in $G/\mathcal{P}$, then $(\mathcal{O} = \{P_1, P_2\} \cup \mathcal{Q}_1 \cup \mathcal{Q}_2, \mathcal{Z}_{\mathcal{O}})$ is an IDP separator, because every vertex in $Z$ has a neighbor in $\mathcal{Z}_{\{P_1, P_2\}}$. Thus, we may assume that $P_1$ and $P_2$ are not connected by a black edge in $G/\mathcal{P}$. Then we can divide into three cases.
        \begin{itemize}
            \item If $|Z_{\mathcal{Q}_1}|, |Z_{\mathcal{Q}_2}| \ge 1$, then $(\mathcal{O} = \{P_1, P_2\} \cup \mathcal{Q}_1 \cup \mathcal{Q}_2, \mathcal{Z}_{\mathcal{O}})$ is an IDP separator.
            \item If $|Z_{\mathcal{Q}_j}| \ge 1$ and $|Z_{\mathcal{Q}_{3-j}}| = 0$ for some $j \in [2]$, then $(\mathcal{O} = \{P_j\} \cup \mathcal{Q}_1 \cup \mathcal{Q}_2, \mathcal{Z}_{\mathcal{O}})$ is an IDP separator. 
            \item If $|Z_{\mathcal{Q}_1}|= |Z_{\mathcal{Q}_2}| = 0$, then $(\mathcal{Q}_1 \cup \mathcal{Q}_2, \emptyset)$ is an IDP separator. 
        \end{itemize}
This proves the lemma.
    \end{proof}

    We present the algorithm in detail.
\begin{theorem}
  There is an algorithm that, given an $n$-vertex graph $G$ and a~contraction sequence witnessing that $\compmaxleaf^\downarrow(G) \leqslant t$, solves \textsc{Induced Disjoint Paths} on~$G$ in time $n^{O(t)}$.  
\end{theorem}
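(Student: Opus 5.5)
We follow the dynamic-programming template of \cref{thm:sigmaneighborhood}, with IDP separators in place of $\sigma$-separators. Fix the contraction sequence $\mathcal{P}_n,\dots,\mathcal{P}_1$ and write $G_i=G/\mathcal{P}_i$.

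\textbf{The states.} For each $i$, a state is a tuple $(H,S,\mathcal{T})$ where:
\begin{itemize}
\item $H$ is a non-empty connected red-induced subtrigraph of $G_i$;
\item $S$ is a set of at most two vertices from each part of $R_{G_i}(V(H))\setminus B_{G_i}(V(H))$, and no vertex elsewhere;
\item $\mathcal{T}$ is a bounded-size ``interface'' description. It specifies which vertices of $S$ are endpoints of path pieces that must continue into $\bigcup V(H)$. It lists the pairs of vertices of $S$ that are to be joined through $\bigcup V(H)$, and for each original terminal $x_j$ or $y_j$ in $\bigcup V(H)$, which vertex of $S$ (or which other terminal inside) it connects to.
\end{itemize}

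A set $\mathcal{Z}$ of paths is \emph{valid} for the state if four conditions hold:
\begin{itemize}
\item its internal vertices lie in $\bigcup V(H)$;
\item together with $S$ it forms mutually induced pieces realising exactly $\mathcal{T}$;
\item every terminal inside $\bigcup V(H)$ is used as prescribed;
\item no vertex of $\bigcup V(H)$ on $\mathcal{Z}$ is adjacent to a vertex of $S$ other than its designated attachment.
\end{itemize}
We store a single boolean (or a witness) per state.

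\textbf{Counting states.} By \cref{lem:numberofinducedsubgraphs} there are at most $n^t$ choices of $H$. By \cref{lem:thenumberofredneighbor} we have $|S|\le 2t$. Hence $\mathcal{T}$ is a matching-like structure on $O(t)$ items. Terminals inside $H$ are fixed by the input, so their pairing is determined; we only record which $S$-vertex each unmatched piece attaches to, giving $t^{O(t)}$ options. One point needs care: a vertex can appear in at most two parts of the interface because paths are induced. With this, the number of states is $n^{O(t)}$.

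\textbf{The transition from $i+1$ to $i$.} Let $P_1,P_2$ merge into $P^*$. The two easy cases are handled exactly as in the $\sigma$-Neighborhood proof: either $P^*\notin V(H)\cup R_{G_i}(V(H))$, or $P^*$ is a red neighbour of $H$ with $S\cap P^*=\emptyset$. In both, the state is copied from level $i+1$.

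Otherwise, define $\mathcal{Q}_j$ and $F$ as before and enumerate all IDP separators $(\mathcal{O},\mathcal{U})$. There are at most $4n^6$ of them by \cref{lem:numberofIDPseparators}. When $P^*\notin V(H)$ we additionally require $\{P_1,P_2\}\subseteq\mathcal{O}$ and $U\cap P^*=S\cap P^*$.
\begin{itemize}
\item By the black pair condition and \cref{lem:contraction}, the components $H_1,\dots,H_k$ of $F-\mathcal{O}$ are connected red-induced subtrigraphs of $G_{i+1}$.
\item By \cref{lem:componentcount}, $k\le t+2$.
\item For each $H_j$ we set $S_j=(S\cup U)\cap\bigcup R_{G_{i+1}}(V(H_j))$.
\item We enumerate all ways to split $\mathcal{T}$, together with the pieces $\mathcal{U}$, into interfaces $\mathcal{T}_j$. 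The combined pieces must glue into mutually induced paths connecting the required pairs without creating cycles. We check this with a union--find over the $O(t)$ interface endpoints, and also check adjacencies between $U$ and $S$ directly in $G$.
\item We accept if all the $(H_j,S_j,\mathcal{T}_j)$ are true at level $i+1$.
\end{itemize}

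\textbf{Correctness.} Soundness is that glued valid pieces give a valid solution. Only red-edge adjacencies cross between different $H_j$, and they are controlled because every vertex of another component adjacent to $H_j$ lies in $S_j$ or in $\mathcal{O}$, whose solution vertices are exactly $U$. Completeness uses \cref{lem:sepIDP}: any solution restricts to an IDP separator, and its restriction to each $H_j$ is valid for the induced interface.

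\textbf{Running time and the main obstacle.} The total is $n^{O(t)}\cdot n^6\cdot t^{O(t)}=n^{O(t)}$. The step I expect to be the main obstacle is the case of \cref{fig:inducedpath2}. There a path of $\mathcal{U}$ starts at a vertex of $S$, so the same path piece is partly ``outside'' and partly inside. The interface $\mathcal{T}$ must then be updated by contracting that piece, and the subproblems must not double-count its endpoint or forbid its legitimate adjacency. I would handle this by treating each maximal piece of $\mathcal{U}\cup S$ as a single super-endpoint when building the $\mathcal{T}_j$.
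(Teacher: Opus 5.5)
Your plan is essentially the paper's proof: the same dynamic programme over connected red-induced subtrigraphs $H$ with at most two recorded solution vertices per red-only neighbour part (your $S$ with the endpoint marks of $\mathcal{T}$ plays the role of the paper's $(S,A,\mathcal{J})$), the same two copying cases, and the same transition that enumerates IDP separators, recurses on the at most $t+2$ components of $F-\mathcal{O}$, and checks that the glued pairs form a disjoint union of paths realising the required pairs. Two small corrections: the interface count is $(t+\ell)^{O(t)}$ rather than $t^{O(t)}$, since an $S$-vertex may be paired with any terminal in $\bigcup V(H)$, even one whose partner also lies in $\bigcup V(H)$ (this still gives $n^{O(t)}$); and since you rightly require $\{P_1,P_2\}\subseteq\mathcal{O}$ when $P^*\notin V(H)$, completeness needs the IDP analogue of \cref{lem:sigmaseparator}(2), which \cref{lem:sepIDP} as stated does not provide (the analogue is available because the solution then has at most two vertices in $P_1\cup P_2$).
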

\begin{proof}
     Let $\mathcal{P}_n, \ldots, \mathcal{P}_1$ be a given contraction sequence of $G$ witnessing that $\compmaxleaf^\downarrow(G) \leqslant t$.
    Let $G_i = G / \mathcal{P}_i$ for each $i$.  Let $T$ be the set of all terminal vertices.

    For each $i \in [n]$, let $\mathcal{C}(i)$ be the collection of all tuples $(H, S, A, \mathcal{J})$ such that 
    \begin{itemize}
        \item $H$ is a non-empty connected red-induced subtrigraph of $G_i$, 
        \item $S$ and $A$ are disjoint subsets of $V(G)$ such that
        \begin{itemize}
            \item $|(S \cup A) \cap P| \le 2$ for every $P \in R_{G_i}(V(H)) \setminus B_{G_i}(V(H))$ and
            \item $|(S \cup A) \cap P| = 0$ for every $P \in V(G_i) \setminus \big(R_{G_i}(V(H)) \setminus B_{G_i}(V(H))\big)$, and
        \end{itemize} 
        \item $\mathcal{J}$ is a partition of $S\cup (T\cap \bigcup_{X\in V(H)} X)$ into pairs of distinct vertices.
    \end{itemize}
    For $(H,S,A, \mathcal{J})\in \mathcal{C}(i)$, a set of paths $\mathcal{Z}$ in $G$ is \emph{valid} with respect to $(i,H,S,A,\mathcal{J})$ if 
    \begin{itemize}
        \item paths in $\mathcal{Z}$ are mutually induced in $G-E(G[S])$,
        \item $G[V(\bigcup\mathcal{Z})]$ is a disjoint union of paths,
        \item for every $(x,y)$-path $Q$ in $\mathcal{Z}$,  $\trace_{\mathcal{P}_i}(Q-\{x,y\})\subseteq V(H)$ and there is no edge between $A$ and $Q-\{x, y\}$ in $G$, and
        \item $\pairs(\mathcal{Z})=\mathcal{J}$.
        
    \end{itemize} 
    Let $\zeta(i,H,S,A,\mathcal{J})=1$ if there exists a valid set with respect to $(i,H,S,A,\mathcal{J})$, and $0$ otherwise. 
    
    We recursively store a valid set of $G$ with respect to $(i,H,S,A,\mathcal{J})$ if one exists.
    For $(H, S, A, \mathcal{J})\in \mathcal{C}(i)$ with $\zeta(i,H,S,A,\mathcal{J})=1$, we assign a valid set to $\Phi(i,H,S,A,\mathcal{J})$. If $\zeta(i,H,S,A,\mathcal{J}) = 0$, then we assign $\Phi(i,H,S,A,\mathcal{J}) = \bot$. 

    Observe that $\Phi(1, G_1, \emptyset, \emptyset, \mathcal{J})$ with $\mathcal{J} = \{\{x_1, y_1\}, \ldots, \{x_\ell, y_\ell\}\}$ is a desired solution.

    In the base case when $i=n$, $G_n$ has no red edges. 
    Thus, for every $(H, S,A, \mathcal{J})\in \mathcal{C}(n)$, $H$ consists of a single vertex $\{v\}\in \mathcal{P}_n$ and $S = A = \emptyset$ because $R_{G_n}(V(H))=\emptyset$.  Also, as $H$ has only one vertex and thus, there is no pair in $\mathcal{J}$. Therefore, the empty graph is a valid set with respect to $(n, H, S, A, \mathcal{J})$.
    Thus, we record an empty graph to $\Phi(n,H,S,A,\mathcal{J})$ for every $(H, S,A,\mathcal{J})\in \mathcal{C}(n)$.

   Henceforth, we assume that $i < n$. We further assume that for every $(H, S, A, \mathcal{J})\in \mathcal{C}(i+1)$, $\zeta(i+1,H,S,A,\mathcal{J})$ and $\Phi(i + 1,H,S,A,\mathcal{J})$ have been recorded. Let $P_1$ and $P_2$ be the parts of $\mathcal{P}_{i + 1}$ that are merged to $P^*$ in $\mathcal{P}_i$. Let $(H, S,A, \mathcal{J}) \in \mathcal{C}(i)$.

    \medskip
    We first deal with two simpler cases.  First assume that \[P^* \in V(G_i) \setminus \big(V(H) \cup (R_{G_i}(V(H)) \setminus B_{G_i}(V(H)))\big).\] 
    As $(H,S,A,\mathcal{J})\in \mathcal{C}(i)$, 
    we have $|(S \cup A) \cap P^*| = 0$ and it follows that $|(S \cup A) \cap P_j| = 0$ for every $j \in [2]$. Moreover, since $G_{i}- \{P^*\} = G_{i + 1}- \{P_1, P_2\}$, 
    we have \[R_{G_{i+1}}(V(H)) \setminus B_{G_{i+1}}(V(H))=R_{G_{i}}(V(H)) \setminus B_{G_{i}}(V(H)).\]
    Also $H$ is a connected red-induced subtrigraph of $G_{i+1}$, and $\mathcal{J}$ is a partition of $S\cup (T\cap \bigcup_{X\in V(H)} X)$ into pairs. Hence, we conclude that $(H,S,A,\mathcal{J}) \in \mathcal{C}(i+1)$ and we can set $\zeta(i,H,S,A,\mathcal{J}) = \zeta(i+1,H,S,A,\mathcal{J})$ and $\Phi(i,H,S,A,\mathcal{J}) = \Phi(i+1,H,S,A,\mathcal{J})$. 
    
    Second, we assume that $P^* \in R_{G_i}(V(H)) \setminus B_{G_i}(V(H))$ and $(S \cup A) \cap P^* = \emptyset$. In this case, $(S \cup A)\cap P_1=(S\cup A)\cap P_2=\emptyset$ and thus we have the properties that 
    \begin{itemize}
        \item  $|(S \cup A) \cap P| \le 2$ for every $P \in R_{G_{i+1}}(V(H)) \setminus B_{G_{i+1}}(V(H))$, and 
        \item $|(S \cup A) \cap P| = 0$ for every $ P \in V(G_{i+1}) \setminus \big(R_{G_{i+1}}(V(H)) \setminus B_{G_{i+1}}(V(H))\big )$.
    \end{itemize}
    So, we have $(H,S,A,\mathcal{J}) \in \mathcal{C}(i+1)$ and we can set $\zeta(i,H,S,A,\mathcal{J}) = \zeta(i+1,H,S,A,\mathcal{J})$ and $\Phi(i,H,S,A,\mathcal{J}) = \Phi(i+1,H,S,A,\mathcal{J})$.

    \medskip
    From now on, we may assume that either 
    \begin{itemize}
        \item $P^* \in V(H)$ or
        \item $P^* \in R_{G_i}(V(H)) \setminus B_{G_i}(V(H))$ with $(S \cup A) \cap P^* \neq \emptyset$.
    \end{itemize}
      For each $j \in [2]$, let $\mathcal{Q}_j = B_{G_{i+1}}(P_j) \cap (V(H)\setminus \{P^*\})$.
      By~\Cref{lem:contraction},  $\mathcal{Q}_1$ and $\mathcal{Q}_2$ are disjoint.

    We now explain, for a given IDP separator, how to construct a valid subset corresponding to it.
    Let $(\mathcal{O}, \mathcal{U})$ with $U = V(\bigcup \mathcal{U})$ be an IDP separator with respect to $(P_1, P_2, \mathcal{Q}_1, \mathcal{Q}_2)$ such that 
      \begin{itemize}
        \item if $P^* \in R_{G_i}(V(H)) \setminus B_{G_i}(V(H))$, then $U\cap P^*=(S\cup A)\cap P^*$.
    \end{itemize} 
    When $P^* \in R_{G_i}(V(H)) \setminus B_{G_i}(V(H))$, we require that our solution on $P^*$ is exactly $(S\cup A) \cap P^*$; thus, it suffices to consider $U$ with $U \cap P^* = (S\cup A) \cap P^*$.

      We define that
      \begin{itemize}
        \item $\mathcal{C}(\mathcal{O}, \mathcal{U})=\{H_1, \ldots, H_k\}$ is the set of components of $F-\mathcal{O}$, and
        \item $\mathcal{Q}(\mathcal{O}, \mathcal{U})=(Q_1, \ldots, Q_k)$ where for each $j\in [k]$, $Q_j$ is the union of all parts in $R_{G_{i + 1}}(V(H_j))$.
        \end{itemize}
    Let $F$ be the subtrigraph of $G_{i+1}$ induced by $(V(H)\setminus\{P^*\})\cup \{P_1, P_2\}$. By the black pair condition, each $H_j$ is a connected red-induced subtrigraph of $G_{i + 1}$. 
   
    By the definition of an IDP separator, every vertex in $U$ has degree $1$ or $2$ in $\bigcup \mathcal{U}$. 
    Let $U_1$ be the set of all vertices $v$ in $U$ having degree $1$ in $\mathcal{U}$. Let $U_1'$ be the set of vertices in $U_1\cap S$ whose unique neighbor in $\mathcal{U}$ is in $\bigcup_{X\in V(H)} X$.
    Note that the vertex in $U_1'$ is already an endpoint of some path in~$\mathcal{U}$, which is a starting path of a desired partial solution. So we do not treat it as an endpoint of a partial solution for components $H_j$. 
    Also, every vertex in $(U_1\cap T)\cap \bigcup_{X\in V(H)}X$ is a terminal that is already an endpoint of some path, which is a starting path of our desired partial solution. Thus, we also do not treat it as an endpoint of a partial solution for components $H_j$. 
    So, the set of remaining vertices to consider is
    \[M:=(S\cup U_1)\setminus U_1'\setminus \left((U_1\cap T)\cap \bigcup_{X\in V(H)}X \right).\]

  Let $\mathcal{S}(\mathcal{O}, \mathcal{U})$ be the collection of pairs $\{(S_1, A_1), \ldots, (S_k, A_k)\}$ such that for each $j \in [k]$,
        \begin{itemize}
            \item $S_j \subseteq M\cap Q_j$,
            \item every vertex in $M$ belongs to exactly one of $S_1, \ldots, S_k$, and
            \item $A_j := \left(A \cup (U\setminus S_j)\right)\cap Q_j$.
        \end{itemize}
        For each $\mathcal{S} \in \mathcal{S}(\mathcal{O}, \mathcal{U})$, we define $\mathcal{J}(\mathcal{O}, \mathcal{U}, \mathcal{S})$ as the collection of all $(\mathcal{J}_1, \ldots, \mathcal{J}_k)$ where each $\mathcal{J}_j$ is a partition of $S_j$ into pairs.

    Let $\mathcal{J}^*=\pairs(\bigcup\mathcal{U} - A - E(G[S]))$. Note that some vertex $w$ of $S$ may remain isolated in $\bigcup\mathcal{U} - A - E(G[S])$, which gives a pair $\{w\}$.
    For $(\mathcal{J}_1, \ldots, \mathcal{J}_k)\in \mathcal{J}(\mathcal{O}, \mathcal{U}, \mathcal{S})$, we define 
    \[(\mathcal{J}_1, \ldots, \mathcal{J}_k)\otimes \mathcal{J}^*\]
    as an auxiliary graph $Y$ on the vertices that appear in the pairs in  
   $\big(\bigcup_{q\in [k]}\mathcal{J}_q\big)\cup \mathcal{J}^*$ such that for two vertices $v$ and $w$ in $Y$, $v$ is adjacent to $w$ in $Y$ if and only if $\{v,w\}$ is a pair in $\big(\bigcup_{q\in [k]}\mathcal{J}_q\big)\cup \mathcal{J}^*$.

    \begin{claim}\label{claim:mergingpaths}
        $\zeta(i,H,S,A,\mathcal{J})=1$ if and only if there exists an IDP separator $(\mathcal{O}, \mathcal{U})$ such that 
        \begin{itemize}
            \item $\prod\limits_{j \in [k]} \zeta(i + 1, H_j, S_j, A_j, \mathcal{J}_j) =1$,
            \item $(\mathcal{J}_1, \ldots, \mathcal{J}_k)\otimes \mathcal{J}^*$ is a disjoint union of paths of length at least $1$, and 
            \item $\pairs\big((\mathcal{J}_1, \ldots, \mathcal{J}_k)\otimes \mathcal{J}^*\big)=\mathcal{J}$.
        \end{itemize}
    \end{claim}
    \begin{clproof}
        Suppose that $\zeta(i,H,S,A,\mathcal{J})=1$, that is, there exists a valid set $\mathcal{Z}$  with respect to $(i,H,S,A,\mathcal{J})$. Let $\mathcal{Z}^*$ be the set of components of $G[V(\bigcup \mathcal{Z})]$. Then $\mathcal{Z}^*$ is a set of mutually induced paths in $G$ since $\mathcal{Z}$ is valid.
        By \Cref{lem:sepIDP}, there exists a subset $\mathcal{O}\subseteq \{P_1, P_2\}\cup \mathcal{Q}_1\cup \mathcal{Q}_2$ such that $(\mathcal{O}, \mathcal{Z}^*_{\mathcal{O}})$ is an IDP separator with respect to $(P_1, P_2, \mathcal{Q}_1, \mathcal{Q}_2)$. Let $\mathcal{U}=\mathcal{Z}^*_{\mathcal{O}}$.

        Let $H_j\in \mathcal{C}(\mathcal{O}, \mathcal{U})$. Let $\mathcal{M}_j$ be the set of all maximal subpaths of a path in $\mathcal{Z}$ whose internal vertices are contained in $\bigcup_{X\in V(H_j)}X$. By the maximality, for each endpoint of a path in $\mathcal{M}_j$, either it is contained in $Q_j$ or it is a terminal contained in $T\cap \big(\bigcup_{X\in V(H_j)}X\big)$. Let $S_j$ be the union of the endpoints of paths in $\mathcal{M}_j$, and $A_j$ be the union of $A\cap Q_j$ and the set of all vertices in $V(\bigcup \mathcal{Z}) \cap Q_j$ that are not in $S_j$. Lastly, let $\mathcal{J}_j=\pairs(\mathcal{M}_j)$.

        We claim that the endpoints of two distinct paths in $\mathcal{M}_j$ are distinct. To verify this, assume that two paths $R_1$ and $R_2$ in $\mathcal{M}_j$ share an endpoint, say $r$. Then $R_1\cup R_2$ is a subpath of a path in $\mathcal{M}_j$. If $r$ is in $S\cup A$, then $\mathcal{Z}$ is not a valid solution with respect to $(i,H,S,A,\mathcal{J})$. Otherwise, $r$ is a vertex in the IDP separator, and in particular, $r$ is an isolated vertex in the IDP separator. But this contradicts the fact that $(\mathcal{O}, \mathcal{U})$ is an IDP separator, because no vertex in $\mathcal{U}$ is isolated in $\bigcup\mathcal{U}$.
        This implies that $\mathcal{J}_j$ is a partition of $S_j\cup (T\cap \bigcup_{X\in V(H_j)} X)$ into pairs.

        Thus, $(H_j, S_j, A_j, \mathcal{J}_j)\in \mathcal{C}(i+1)$ and $\mathcal{M}_j$ is a valid set with respect to $(i+1, H_j, S_j, A_j, \mathcal{J}_j)$. As we construct all $\mathcal{J}_j$'s from $\mathcal{Z}$, we have that $(\mathcal{J}_1, \ldots, \mathcal{J}_k)\otimes \mathcal{J}^*$ is a disjoint union of paths of length at least $1$, and $\pairs\big((\mathcal{J}_1, \ldots, \mathcal{J}_k)\otimes \mathcal{J}^*\big)=\mathcal{J}$.
        This proves the forward direction. 

        For the converse direction, suppose that there exists an IDP separator $(\mathcal{O}, \mathcal{U})$ satisfying the three conditions. For each $j\in [k]$, let $\mathcal{Z}_j$ be a valid set with respect to $(i + 1, H_j, S_j, A_j, \mathcal{J}_j)$. Because of the assumption on $(\mathcal{J}_1, \ldots, \mathcal{J}_k)\otimes \mathcal{J}^*$, the subgraph obtained from the union of all paths in \[\left(\bigcup_{j\in [k]}\mathcal{Z}_j\right)\cup \mathcal{U}\] 
        by removing edges on $G[S]$

        consists of a set $\mathcal{Z}$ of paths of length at least $1$ connecting the pairs in $\mathcal{J}$. Moreover, we know that there are no edges of $G$ between distinct sets $\bigcup_{X\in V(H_j)}X$. Also, $A_j$'s are chosen so that there are no edges between distinct paths in $\mathcal{Z}$. Therefore, $\mathcal{Z}$ is a valid set with respect to $(i,H,S,A,\mathcal{J})$.
    \end{clproof}
   
    Note that $(H_j, S_j, A_j, \mathcal{J}_j) \in \mathcal{C}(i + 1)$ because $R_{G_{i+1}}(H_j)\setminus \mathcal{O}\subseteq R_{G_{i}}(H)$ and $\mathcal{J}_j$ is a partition of $S_j$ into pairs. Therefore, $\zeta(i+1, H_j, S_j, A_j, \mathcal{J}_j)$ has been recorded.

    For $\mathcal{S} = \{(S_1, A_1), \ldots, (S_k, A_k)\} \in \mathcal{S}(\mathcal{O}, \mathcal{U})$ and $\Lambda = (\mathcal{J}_1, \ldots, \mathcal{J}_k) \in \mathcal{J}(\mathcal{O}, \mathcal{U}, \mathcal{S})$, we set
    $\mathcal{Z}_{\mathcal{O}, \mathcal{U},\mathcal{S}, \Lambda}$ as follows:
    \begin{itemize}
        \item If $\prod\limits_{j \in [k]} \zeta(i + 1, H_j, S_j, A_j, \mathcal{J}_j) = 1$, then $\mathcal{Z}_{\mathcal{O}, \mathcal{U},\mathcal{S}, \Lambda}$ is the set of connected components of the graph obtained from the union of all paths in
    \[\left(\bigcup\limits_{j = 1}^k \Phi(i + 1, H_j, S_{j}, A_j, \mathcal{J}_j)\right) \cup \mathcal{U}\]
    by removing edges on $G[S]$.
    \item It is the emptyset otherwise.
    \end{itemize}

    Let $\Psi(i, H, S, A, \mathcal{J})$ be the collection of $\mathcal{Z}_{\mathcal{O}, \mathcal{U},\mathcal{S}, \Lambda}$ over all possible choice of $(\mathcal{O}, \mathcal{U})$, $\mathcal{S} \in \mathcal{S}(\mathcal{O}, \mathcal{U})$, and $\Lambda \in \mathcal{J}(\mathcal{O}, \mathcal{U}, \mathcal{S})$ such that $\mathcal{Z}_{\mathcal{O}, \mathcal{U},\mathcal{S}, \Lambda} \neq \emptyset$. When $(\mathcal{O}, \mathcal{U})$ is fixed, it is sufficient to check all cases of $\mathcal{S}(\mathcal{O}, \mathcal{U})$ and $\mathcal{J}(\mathcal{O}, \mathcal{U}, \mathcal{S})$. By~\Cref{lem:componentcount}, the number of components of $F - \mathcal{O}$ is at most $t+2$. 
    
    If $\pairs(\mathcal{Z})=\mathcal{J}$
    for some $\mathcal{Z} \in \Psi(i, H, S, A, \mathcal{J})$, then by \Cref{claim:mergingpaths}, $\mathcal{Z}$ is a valid set with respect to $(i, H, S, A, \mathcal{J})$.
    Thus, in this case, we can set $\Phi(i, H, S, A, \mathcal{J}) = \mathcal{Z}$.

     If $\pairs(\mathcal{Z})\neq \mathcal{J}$
    for all $\mathcal{Z} \in \Psi(i, H, S, A, \mathcal{J})$, then by \Cref{claim:mergingpaths}, then there is no valid set with respect to $(i, H, S, A, \mathcal{J})$.  In this case, we can set $\Phi(i, H, S, A, \mathcal{J}) = \bot$.

    This proves the correctness of the algorithm.

\medskip 
    We analyze the running time. We first present an upper bound of $|\mathcal{C}(i)|$.

    \begin{claim}\label{claim:IDP:number of C(i)}
        $|\mathcal{C}(i)| \le n^{3t}\cdot (2t+1)\cdot (2t + 2\ell)^{2t}$.
    \end{claim}
    \begin{clproof}
         By~\Cref{lem:numberofinducedsubgraphs}, the number of non-empty connected red-induced subtrigraphs of $G_i$ is at most $n^{t}$. Let $H$ be a non-empty connected red-induced subtrigraph of $G_i$. By the same argument as in \Cref{claim:number of C(i)}, $|R_{G_i}(V(H)) \setminus B_{G_i}(V(H))| \le t$. It follows that $|S \cup A| \le 2t$. Since each vertex in $S \cup A$ can be assigned either to $S$ or to $A$, there are at most $(2t+1) \cdot n^{2t}$ possible choices. 

         We now consider the number of choices for $\mathcal{J}$. Note that $|S\cup (T\cap \bigcup_{X\in V(H)} X)| \le 2t + 2\ell$. 
         Every vertex $v \in S$ is paired with some vertex $v' \in S \cup T$, and the remaining vertices in $T\setminus\bigcup_{v \in S} \{v, v'\}$ are determined automatically because the pairs of $T$ are given. Thus, the total number of possible partitions is at most $(2t + 2\ell)^{2t}$. Therefore, for each $i \in [n]$, the total number of tuples $(H, S, A, \mathcal{J})$ is at most $(2t+1) \cdot (2t + 2\ell)^{2t}\cdot n^{3t}$.
     \end{clproof}

    By \Cref{lem:numberofIDPseparators}, the number of possible IDP separators is at most $4n^6$. We observe that the number of possible candidates in $\mathcal{S}(\mathcal{O}, \mathcal{U})$ and $\mathcal{J}(\mathcal{O}, \mathcal{U}, \mathcal{S})$ are bounded.

    \begin{claim}\label{claim:IDP:number of collection of sets}
        $|\mathcal{S}(\mathcal{O}, \mathcal{U})| \le k^{2t + 6}$.
    \end{claim}

    \begin{clproof}
        We observe that $|M| \le |S \cup U_1| \le |S \cup U| \le 2t + 6$. Since each vertex in $M$ is contained in one of $S_1, \ldots, S_k$, the total number of possible candidates is at most $k^{2t + 6}$.
    \end{clproof}

    \begin{claim}\label{claim:IDP:number of pairs}
        Let $\mathcal{S} \in \mathcal{S}(\mathcal{O}, \mathcal{U})$. Then $|\mathcal{J}(\mathcal{O}, \mathcal{U}, \mathcal{S})| \le (2t + 2\ell +5)^6$.
    \end{claim}

    \begin{clproof}
        By definition, it suffices to consider possible pairs having at least one vertex in $U$, as other pairs are determined from $\mathcal{J}$. Since $|U| \le 6$ and $|S| \le 2t$, $|S\cup (T\cap \bigcup_{X\in V(H)} X) \cup U| \le 2t + 2\ell +6$. Therefore, the number of possible candidates is at most $(2t+2\ell + 5)^6$.
    \end{clproof}
    By \Cref{claim:IDP:number of collection of sets} and \Cref{claim:IDP:number of pairs}, $|\mathcal{S}(\mathcal{O}, \mathcal{U})|\cdot|\mathcal{J}(\mathcal{O}, \mathcal{U}, \mathcal{S})| \le (t+2)^{2t+6}\cdot(2t + 2\ell+5)^6$. By \Cref{lem:numberofIDPseparators}, the number of possible IDP separators is at most $4n^6$, and we have \[|\Psi(i, H, S, A, \mathcal{J})| \le 4\cdot(t+2)^{2t+6}\cdot(2t + 2\ell+5)^6 \cdot n^6.\]

        For each $\mathcal{Z}\in \Psi(i, H, S, A, \mathcal{J})$, we can check whether $\pairs(\mathcal{Z})=\mathcal{J}$ or not in time $O(n)$. Thus, $\Phi(i, H, S, A, \mathcal{J})$ can be computed in time $O((t+2)^{2t+6}\cdot(2t + 2\ell+5)^6 \cdot n^7)$.

    Finally, for each $i \in [n]$, all of sets $\Phi(i, H, S, A, \mathcal{J})$ are computed in time $O(|\mathcal{C}(i)|\cdot (t+2)^{2t+6}\cdot(2t + 2\ell+5)^6 \cdot n^7)$. Overall, by \Cref{claim:IDP:number of C(i)}, this algorithm runs in time
    \[O\big(n^{3t + 8}\cdot (2t+1) \cdot (t+2)^{2t+6}\cdot (2t + 2\ell)^{2t}\cdot(2t + 2\ell+5)^6\big)=n^{O(t)}.\]     
   This proves the theorem.

\end{proof}

\section{Comparing width parameters}\label{sec:comparewidth}

\begin{table}
    \begin{center}
\begin{tabular}{ ||c|c|| } 
 \hline
 bounded $\reduced{\compmaxleaf}$ and unbounded mim-width & clique-column grid \\
   & Pohoata--Davies grid \\ \hline
 bounded $\reduced{\compmaxleaf}$ and unbounded stretch-width & unit interval graphs \\ \hline
 bounded mim-width and unbounded $\reduced{\compmaxleaf}$ & interval graphs \\ \hline
 bounded mim-width and unbounded stretch-width & unit interval graphs \\  \hline
  bounded stretch-width and unbounded $\reduced{\compmaxleaf}$ & subdivisions of walls \\  \hline
  bounded stretch-width and unbounded mim-width & subdivisions of walls \\ 
 \hline
\end{tabular}
\end{center}
    \caption{Separating examples}
    \label{table:separating}
\end{table}

In this section, we prove \Cref{thm:incomparable}, namely that 
reduced component max-leaf, stretch-width, and mim-width are mutually incomparable. We list examples separating two parameters in \Cref{table:separating}. Clique-column grids are illustrated in \Cref{fig:cliquecolumn}.

Here are known facts.
\begin{itemize}
    \item Unit-interval graphs have twin-width~2 \cite[Lemma 3.6]{twin-width3}, hence $\reduced{\compmaxleaf}$~2.
    \item Interval graphs have mim-width $1$~\cite{BelmonteV2013}.
    \item Sufficiently long subdivisions of grids have bounded stretch-width~\cite{BonnetDuron23}.
    \item Clique-column grids have unbounded mim-width~\cite[Lemma 3.8]{KangKST2017}.
    \item Mim-width, sim-width~\cite{KangKST2017}, and treewidth are equivalent on $K_{t,t}$-subgraph-free graphs~\cite{BrettellMPY2025}. Thus, Pohoata--Davies grids have unbounded mim-width and unbounded sim-width.
    \item Interval graphs have unbounded $\reduced{\compmaxleaf}$, as they have unbounded twin-width~\cite{twin-width2}.
    \item Subdivisions of walls have unbounded $\reduced{\compmaxleaf}$ (\Cref{cor:cml-degree-tw}) and unbounded mim-width~\cite[Theorem 8]{BrettellHMPP2022}. 
\end{itemize}

We need to additionally show that clique-column grids have bounded $\reduced{\compmaxleaf}$, and clique-column grids and unit interval graphs have unbounded stretch-width. 

\begin{figure}[!t]
    \centering
    \begin{tikzpicture}[scale=0.8]
        \tikzstyle{v}=[circle, draw, solid, fill=black, inner sep=0pt, minimum width=3pt]

        \node[v,label={[xshift=-3pt]left:{$v_{1,1}$}}] (u1) at (0,4.5) {};
        \node[v,label={[xshift=-9pt]left:{$v_{1,2}$}}] (u2) at (0,3) {};
        \node[v,label={[xshift=-9pt]left:{$v_{1,3}$}}] (u3) at (0,1.5) {};
        \node[v,label={[xshift=-3pt]left:{$v_{1,4}$}}] (u4) at (0,0) {};
        \node[v,label=below right:{$v_{2,1}$}] (v1) at (2.5,4.5) {};
        \node[v,label=below right:{$v_{2,2}$}] (v2) at (2.5,3) {};
        \node[v,label=below right:{$v_{2,3}$}] (v3) at (2.5,1.5) {};
        \node[v,label=below right:{$v_{2,4}$}] (v4) at (2.5,0) {};

        \node[v,label=right:{$v_{3,1}$}] (w1) at (5,4.5) {};
        \node[v,label=right:{$v_{3,2}$}] (w2) at (5,3) {};
        \node[v,label=right:{$v_{3,3}$}] (w3) at (5,1.5) {};
        \node[v,label=right:{$v_{3,4}$}] (w4) at (5,0) {};

        \draw (v1)--(u1);
        \draw (v1)--(u2);
        \draw (v1)--(u3);
        \draw (v1)--(u4);
        \draw (v2)--(u2);
        \draw (v2)--(u3);
        \draw (v2)--(u4);
        \draw (v3)--(u3);
        \draw (v3)--(u4);
        \draw (v4)--(u4);

        \draw (w1)--(v1);
        \draw (w1)--(v2);
        \draw (w1)--(v3);
        \draw (w1)--(v4);
        \draw (w2)--(v2);
        \draw (w2)--(v3);
        \draw (w2)--(v4);
        \draw (w3)--(v3);
        \draw (w3)--(v4);
        \draw (w4)--(v4);

        \draw (u1)--(u4);
        \draw (v1)--(v4);
        \draw (w1)--(w4);
 
        \draw[bend right=30] (u1) to (u4);
        \draw[bend right=20] (u1) to (u3);
        \draw[bend right=20] (u2) to (u4);

        \draw[bend right=30] (v1) to (v4);
        \draw[bend right=20] (v1) to (v3);
        \draw[bend right=20] (v2) to (v4);

        \draw[bend right=30] (w1) to (w4);
        \draw[bend right=20] (w1) to (w3);
        \draw[bend right=20] (w2) to (w4);

    \end{tikzpicture}
    \caption{A~$(3,4)$-mixed grid that is a unit interval graph.}
    \label{fig:34chain}
\end{figure}
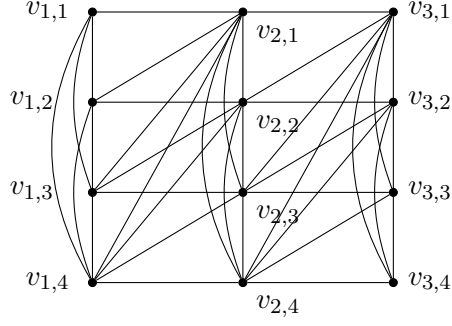

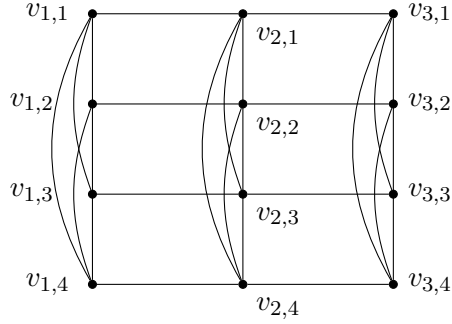
\begin{figure}[!t]
    \centering
    \begin{tikzpicture}[scale=0.8]
        \tikzstyle{v}=[circle, draw, solid, fill=black, inner sep=0pt, minimum width=3pt]

        \node[v,label={[xshift=-3pt]left:{$v_{1,1}$}}] (u1) at (0,4.5) {};
        \node[v,label={[xshift=-9pt]left:{$v_{1,2}$}}] (u2) at (0,3) {};
        \node[v,label={[xshift=-9pt]left:{$v_{1,3}$}}] (u3) at (0,1.5) {};
        \node[v,label={[xshift=-3pt]left:{$v_{1,4}$}}] (u4) at (0,0) {};
        \node[v,label=below right:{$v_{2,1}$}] (v1) at (2.5,4.5) {};
        \node[v,label=below right:{$v_{2,2}$}] (v2) at (2.5,3) {};
        \node[v,label=below right:{$v_{2,3}$}] (v3) at (2.5,1.5) {};
        \node[v,label=below right:{$v_{2,4}$}] (v4) at (2.5,0) {};

        \node[v,label=right:{$v_{3,1}$}] (w1) at (5,4.5) {};
        \node[v,label=right:{$v_{3,2}$}] (w2) at (5,3) {};
        \node[v,label=right:{$v_{3,3}$}] (w3) at (5,1.5) {};
        \node[v,label=right:{$v_{3,4}$}] (w4) at (5,0) {};

        \draw (v1)--(u1);
        \draw (v2)--(u2);
        \draw (v3)--(u3);
        \draw (v4)--(u4);

        \draw (w1)--(v1);
        \draw (w2)--(v2);
        \draw (w3)--(v3);
        \draw (w4)--(v4);

        \draw (u1)--(u4);
        \draw (v1)--(v4);
        \draw (w1)--(w4);
 
        \draw[bend right=30] (u1) to (u4);
        \draw[bend right=20] (u1) to (u3);
        \draw[bend right=20] (u2) to (u4);

        \draw[bend right=30] (v1) to (v4);
        \draw[bend right=20] (v1) to (v3);
        \draw[bend right=20] (v2) to (v4);

        \draw[bend right=30] (w1) to (w4);
        \draw[bend right=20] (w1) to (w3);
        \draw[bend right=20] (w2) to (w4);

    \end{tikzpicture}
    \caption{A~$(3,4)$-mixed grid that is a clique-column grid.}
    \label{fig:cliquecolumn}
\end{figure}

We introduce $(p,q)$-mixed grids and $(p,q)$-clique-column grids.
For integers $p, q \ge 1$, a \emph{$(p, q)$-mixed grid} is the graph on vertex set 
$V := \{v_{i,j}: i\in [p], j\in [q]\}$ where
\begin{itemize}
    \item for every $i\in [p]$, $\{v_{i,j}:j\in [q]\}$ is a clique, 

    \item for every $i\in [p-1]$ and $j_1, j_2\in [q]$ with $j_1\neq j_2$, there is at most one vertex in $\{v_{i+1, j_1}, v_{i+1, j_2}\}$ that is homogeneous to $\{v_{i, j_1}, v_{i, j_2}\}$, and
    there is at most one vertex in $\{v_{i, j_1}, v_{i, j_2}\}$ that is homogeneous to $\{v_{i+1, j_1}, v_{i+1, j_2}\}$,
    
    \item for every $i_1, i_2\in [p]$ and $j_1, j_2\in [q]$ with $|i_1-i_2|\ge 2$, $v_{i_1, j_1}$ is not adjacent to $v_{i_2, j_2}$.
\end{itemize}
Note that unit interval graphs are $(p,q)$-mixed grids; see Figure~\ref{fig:34chain} for an illustration.
For each $i\in [p]$, we call $\{v_{i,j}:j\in [q]\}$ a~column, and for each $j\in [q]$, we call $\{v_{i,j}:i\in [p]\}$ a~row of the $(p, q)$-mixed grid.  
A~mapping $\sigma$ from~$V$ to $[p q]$ is called the \emph{canonical ordering} if $\sigma(v_{i,j})=(i-1)q+j$ for every $i \in [p], j \in [q]$. 

That is, it orders the first column from the first row to the last row, then similarly orders the second column, and so on.

A $(p,q)$-mixed grid is called a \emph{$(p,q)$-clique-column grid} if it satisfies that 
\begin{itemize}
    \item for every $i\in [p-1]$ and $j_1, j_2\in [q]$, $v_{i,j_1}$ is adjacent to $v_{i,j_2}$ if and only if $j_1=j_2$.
\end{itemize}
We first show that $(p,q)$-clique-column grids have bounded $\reduced{\compmaxleaf}$.
\begin{lemma}\label{lem:cliquecolumn}
    For every positive integers $p,q$, the $(p,q)$-clique-column grid has reduced component max-leaf at most $3$.
\end{lemma}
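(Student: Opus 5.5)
We give an explicit contraction sequence in which every red graph, apart from isolated vertices, is a path with at most one pendant vertex. Such a graph has max-leaf at most~$3$, so this proves the bound. The idea is to grow, in every column $i$ simultaneously, a part $C_i$ consisting of a prefix $\{v_{i,1},\dots,v_{i,j}\}$ of that column. We sweep the rows in order and the columns within each row. Concretely: first, for each $i\in[p]$ we let $C_i:=\{v_{i,1}\}$ (no merges needed). Then, for $j=2,\dots,q$ and, inside that, for $i=1,\dots,p$, we merge the singleton $\{v_{i,j}\}$ into $C_i$. After all rows are absorbed, $C_i$ is column~$i$. Finally we contract $C_1$ with $C_2$, the result with $C_3$, and so on, until one part remains.

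The key step is computing the red graph at an arbitrary moment of the sweep. Suppose we are at row $j$, and $C_1,\dots,C_{i-1}$ contain rows $1..j$ while $C_i,\dots,C_p$ contain rows $1..j-1$. Then:
\begin{itemize}
\item A part $C_a$ and an unmerged vertex $v_{a,j'}$ of the same column are joined by a black edge, since the column is a clique.
\item A part $C_a$ and any vertex or part in a column at distance at least $2$ are non-adjacent.
\item Two consecutive parts $C_a,C_{a+1}$ are joined by a red edge as soon as they contain two rows: the edges between consecutive columns form the matching $v_{a,\ell}v_{a+1,\ell}$, so $v_{a,1}$ is adjacent to $v_{a+1,1}$ but not to $v_{a+1,2}$.
\item $C_a$ and an unmerged vertex $v_{a+1,j'}$ (or $v_{a-1,j'}$) are non-adjacent unless $C_a$ already contains row $j'$, in which case the edge is red, since only $v_{a,j'}$ sees it.
\end{itemize}
Only $j'=j$ can occur in the last case, with $a=i-1$ and the vertex $v_{i,j}$. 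Hence the red graph is the path $C_1\cdots C_p$ (red once $j\ge2$), plus possibly the pendant red vertex $v_{i,j}$ attached to $C_{i-1}$. For completeness, $v_{i,j}$ is not red to $C_i$: it is black to $C_i$ because the column is a clique. Before the path becomes fully red, the red graph is a subgraph of this picture, so the same bound holds.

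In the final phase, the merged part $C_1\cup\dots\cup C_a$ is red to $C_{a+1}$, and $C_{a+1},\dots,C_p$ still form a path. So the red graph remains a path and the max-leaf is at most~$2$.

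The main obstacle is simply to check carefully the homogeneity claims between prefix parts and singletons, in particular that no red edge appears between $v_{i,j}$ and $C_{i+1}$. This holds because $C_{i+1}$ contains only rows below $j$, which are non-adjacent to $v_{i,j}$ under the matching structure. With these checks done, every red graph has component max-leaf at most~$3$.
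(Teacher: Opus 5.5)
Your proposal is correct and uses the same construction as the paper. For $j=2,\dots,q$ you absorb row $j$ into the column parts $C_1,\dots,C_p$ one column at a time, and then contract the resulting red path from one end, so every red graph is a subgraph of a path with one pendant vertex (a subdivided $K_{1,3}$). Your explicit homogeneity checks fill in what the paper calls straightforward: the singleton $v_{i,j}$ is black to $C_i$, red only to $C_{i-1}$, and non-adjacent to $C_{i+1}$.
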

\begin{proof}
    Let $G$ be the $(p,q)$-clique-column grid. We merge as follows:
    \begin{itemize}
        \item For each $i = 1, \ldots, p$, merge $v_{i,1}$ and $v_{i,2}$. For convenience, we denote the resulting merged vertex by $v_{i,2}$ for each $i \in [p]$. Next, for each $i = 1, \ldots, p$, merge $v_{i,2}$ and $v_{i,3}$. We repeat this process until only the last row remains.
        \item We recursively merge a vertex of degree $1$ and its neighbor.
    \end{itemize}
    It is straightforward to verify that every red graph is a subdivision of $K_{1,3}$. Therefore, $G$ has reduced component max-leaf at most $3$.
    \end{proof}

In the remainder, we show that clique-column grids and unit interval graphs have unbounded stretch-width. 

We start with two technical lemmas on $(p, q)$-mixed grids and trigraphs. 

\begin{lemma}\label{lem:findingmixedset}
	Let $p, q, d, r$ be positive integers and 
	let $G$ be a $(p, q)$-mixed grid. 
	Let $\mathcal{P}$ be a partition of $V(G)$ such that every part of~$\mathcal{P}$ intersects at most two columns of $G$, and the maximum red degree of $G/\mathcal{P}$ is at most $d$.
	
	\begin{enumerate}[(1)]
	\item If there exist $i\in [p-1]$ and $J\subseteq [q]$ such that $|J|\ge (d+1)r$ and $\{v_{i,j}:j\in J\}$ is contained in the same part $P$ of $\mathcal{P}$, then there exists a subset $J'$ of $J$ of size at least $r$ such that $\{v_{i+1,j}:j\in J'\}$ is contained in the same part of $\mathcal{P}$.
	\item If there exist $i\in [p]\setminus \{1\}$ and $J\subseteq [q]$ such that $|J|\ge (d+1)r$ and $\{v_{i,j}:j\in J\}$ is contained in the same part $P$ of $\mathcal{P}$, then there exists a subset $J' \subseteq J$ of size at least $r$ such that $\{v_{i-1,j}:j\in J'\}$ is contained in the same part of $\mathcal{P}$.
	\end{enumerate}
\end{lemma}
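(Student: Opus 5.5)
By symmetry (reversing the column order turns a $(p,q)$-mixed grid into a $(p,q)$-mixed grid), it suffices to prove (1); (2) follows by applying (1) to the reversed grid. So fix $i\in[p-1]$, $J$ with $|J|\ge (d+1)r$, and the part $P\supseteq\{v_{i,j}:j\in J\}$. The plan is a pigeonhole argument: the vertices $v_{i+1,j}$, $j\in J$, are distributed among parts of $\mathcal{P}$, and we will show that at most $d+1$ parts can receive them. Then some part receives at least $|J|/(d+1)\ge r$ of them, giving $J'$.

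The key step is to show that every part $Q$ containing some $v_{i+1,j}$ with $j\in J$ is either $P$ itself or a red neighbor of $P$ in $G/\mathcal{P}$. Since the red degree of $P$ is at most $d$, this bounds the number of such parts by $d+1$. Suppose instead $Q\neq P$ is homogeneous to $P$.

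In the first case, $Q$ is anti-complete to $P$. Every $v_{i+1,j}$ has a neighbor in column $i$. Indeed, the mixed-grid condition for the pair $(j,j')$ forbids both $v_{i+1,j}$ and $v_{i+1,j'}$ from being homogeneous to $\{v_{i,j},v_{i,j'}\}$. I expect to want $v_{i+1,j}\sim v_{i,j}$, or some adjacency into $\{v_{i,j}:j\in J\}$. Taking $j'\in J\setminus\{j\}$ (recall $|J|\ge 2$ whenever $r\ge1$, $d\ge1$), at most one of $v_{i+1,j},v_{i+1,j'}$ is homogeneous to $\{v_{i,j},v_{i,j'}\}$. The other vertex is therefore adjacent to exactly one of $v_{i,j},v_{i,j'}$, which both lie in $P$. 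Hence that vertex has a neighbor in $P$.

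I must apply this carefully to get, for the specific $v_{i+1,j}\in Q$, a neighbor in $P$. I will use the other half of the condition: at most one of $v_{i,j},v_{i,j'}$ is homogeneous to $\{v_{i+1,j},v_{i+1,j'}\}$. Combined with a choice of $j'$ ranging over $J$, this should force a mixed adjacency pattern between $v_{i+1,j}$ and $P$. Such a pattern contradicts both homogeneity options at once, which also handles the second case, where $Q$ is complete to $P$.

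The main obstacle is exactly this case analysis: extracting from the two ``at most one homogeneous vertex'' conditions that the fixed $v_{i+1,j}$ is non-homogeneous to $P$, or at least that $Q$ is non-homogeneous to $P$. If working with a single vertex fails, I will instead argue at the level of $Q$. Pick $v_{i+1,j}\in Q$ and any $j'\in J\setminus\{j\}$. If $v_{i+1,j}$ is homogeneous to $\{v_{i,j},v_{i,j'}\}\subseteq P$, then $v_{i+1,j'}$ is not, so $v_{i+1,j'}$ sees $P$ non-homogeneously and its part is a red neighbor of $P$ or $P$ itself. Doing this for all $j$ and charging to at most one exceptional homogeneous vertex per pair, I would bound the exceptional indices by one. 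That yields at most $d+2$ parts, and I would then adjust the constant (or use the hypothesis that parts meet at most two columns to exclude the exception) to recover the stated $(d+1)r$ bound.
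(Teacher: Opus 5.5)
Your proposal has a genuine gap, at the point you flagged. The key step of your second paragraph is false: it is not true that every part containing some $v_{i+1,j}$ with $j\in J$ is $P$ or a red neighbor of $P$. Using the other half of the mixed-grid condition cannot rescue it. Take the $(2,q)$-mixed grid in which $v_{2,j}v_{1,j'}$ is an edge if and only if $j'\ge j$ (the unit-interval pattern of Figure~\ref{fig:34chain}). Partition it into $P$ equal to column $1$, $Q=\{v_{2,1}\}$, and $R$ equal to the rest of column $2$, with $J=[q]$ and $q \ge 2r$. Every part meets one column and the maximum red degree is $1$. Yet $Q$ is joined to $P$ by a black edge, since $v_{2,1}$ is complete to column $1$.

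Your fallback does correctly isolate at most one exceptional index $b$: two exceptional vertices $v_{i+1,b},v_{i+1,b'}$ would both be homogeneous to $\{v_{i,b},v_{i,b'}\}$. But you then count \emph{parts}, getting $d+2$ classes. Pigeonhole then only yields $\lceil (d+1)r/(d+2)\rceil$, which is smaller than $r$ in general; for example, $d=1$, $r=3$ gives $2$. Neither proposed repair works. The constant $(d+1)r$ is part of the statement. The two-column hypothesis does not exclude the exception, since the example above satisfies it; in fact the paper's proof of this lemma never uses that hypothesis.

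The missing step is to count indices rather than parts: the exceptional vertex accounts for at most the single index $b$. Discard $b$ (choose it arbitrarily if there is no exception). Each $v_{i+1,j}$ with $j\in J\setminus\{b\}$ is non-homogeneous to $\{v_{i,j'}:j'\in J\}\subseteq P$, so its part is $P$ or one of the at most $d$ red neighbors of $P$. These at least $(d+1)r-1$ indices fall into at most $d+1$ classes, so some class receives at least $r$ of them, because $(d+1)(r-1)<(d+1)r-1$.

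This is exactly the paper's argument, phrased there as a dichotomy. Either at least $r$ of the $v_{i+1,j}$ with $j\in J\setminus\{b\}$ lie in $P$. Or at least $(d+1)r-1-(r-1)=dr$ of them lie in at most $d$ red neighbors of $P$, one of which then contains at least $r$. Your reduction of (2) to (1) by reversing the column order is fine.
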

\begin{proof}
	By symmetry, it suffices to show (1). 

	As $G$ is a~$(p, q)$-mixed grid, by assumption, 
	there is at most one vertex in $\{v_{i+1, j}:j\in J\}$ that is homogeneous to $\{v_{i, j}:j\in J\}$.
	Let $b \in J$ be such that no vertex of $\{v_{i+1,j}:j\in J \setminus \{b\}\}$ is homogeneous to $\{v_{i,j}:j\in J\}$. 
   
   If $\{v_{i+1,j}:j\in J\setminus \{b\}\}$ contains at least $r$ vertices of $P$, then we are done. 
   So, we may assume that $\{v_{i+1,j}:j\in J\setminus \{b\}\}$ contains less than $r$ vertices of $P$. 
   Let $J_1 \subseteq J\setminus \{b\}$ be such that 
   $\{v_{i+1,j}:j\in~J_1\} = \{v_{i+1,j}:j\in J\setminus \{b\}\} \setminus P$.
   Note that $|J_1| \ge (d+1)r-r=dr$.
   
   By assumption, the part $P$ containing $\{v_{i,j}:j\in J\}$  has red degree at most $d$ in $G/\mathcal{P}$.
   This implies that there are at most $d$ parts of $\mathcal{P}$ intersecting $\{v_{i+1,j}:j\in J_1\}$. Thus, there is a subset $J_2\subseteq J_1$ of size at least  
    $\frac{dr}{d}\ge r$
    such that 
    $\{v_{i+1,j}:j\in J_2\}$ is contained in the same part of $\mathcal{P}$. This proves the lemma.       
\end{proof}

\begin{lemma}\label{lem:mixedgridtwocolumn}
	There is a function $f:\mathbb{N}\to \mathbb{N}$ satisfying the following. 
	Let $t$ be a positive integer and 
	let $G$ be a~$(6t+6, f(t))$-mixed grid.
	Let $\psi$ be a linear ordering of $G$, and let $\mathcal{P}^*$ be a partition of $V(G)$ such that each part of $\mathcal{P}^*$ is either a column of $G$, or the union of two consecutive columns of~$G$.
	Let $\Pc_n,\dots,\Pc_1$ be a contraction sequence of $G$ such that the sequence contains $\mathcal{P}^*$. Then there exists $i\in [n]$ such that the stretch of $\mathcal{P}_i$ in $(G, \psi)$ is more than $t$.
\end{lemma}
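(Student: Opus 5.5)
The plan is to argue by contradiction. Suppose every partition $\Pc_i$ of the sequence has stretch at most~$t$ in $(G,\psi)$. By \cref{lem:largereddegree}, every red graph $\redG(G/\Pc_i)$ then has maximum degree at most~$t$.

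Since $\mathcal{P}^*$ occurs in the sequence, every earlier partition (every $\Pc_i$ with $|\Pc_i| \ge |\mathcal{P}^*|$) refines~$\mathcal{P}^*$. Hence each of its parts meets at most two consecutive columns, so the hypotheses of \cref{lem:findingmixedset} hold with $d=t$. Fix $r$ (to be chosen, depending on~$t$) and set $f(t) := (t+1)^{6t+6} r$. Take the first index $i$ at which some part contains at least $(t+1)^{6t+5} r$ vertices of a single column $c$. Such an index exists and satisfies $|\Pc_i| \ge |\mathcal{P}^*|$, because in $\mathcal{P}^*$ whole columns lie in single parts. By minimality of~$i$, all parts of~$\Pc_i$ except the newly merged one meet each column in fewer than $(t+1)^{6t+5}r$ vertices.

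Next I would apply \cref{lem:findingmixedset}\,(1) and~(2) repeatedly, starting from column~$c$ and moving left and right one column at a time. Each step loses a factor $t+1$, so this produces, for every column $c' \in [6t+6]$, a part $P_{c'} \in \Pc_i$ and a row set $J_{c'}$ with $|J_{c'}| \ge r$, such that $\{v_{c',j} : j \in J_{c'}\} \subseteq P_{c'}$. The row sets are nested along the propagation.

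Each part meets at most two columns, so among the $P_{c'}$ there are at least $3t+3$ distinct parts, consecutive in column order. For consecutive columns $c'$ and $c'+1$ whose parts differ, the second axiom of mixed grids applied to two rows of $J_{c'+1} \subseteq J_{c'}$ shows that $P_{c'}$ and $P_{c'+1}$ are not homogeneous. They are therefore joined by a red edge. This yields a red path $Q_1,\dots,Q_m$ in $\redG(G/\Pc_i)$ with $m \ge 3t+3$, in which every part contains at least $r$ vertices from a common band of rows.

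The final step, which I expect to be the main obstacle, is to turn this long red path of ``fat'' parts into a part of stretch more than~$t$. The baseline observation is as follows. For every $k$, the set $\bigcup N[Q_k]$ contains $Q_{k-1} \cup Q_k \cup Q_{k+1}$. So any part whose span meets $\conv(Q_k \cup Q_{k+1})$ interferes with~$Q_k$. These hulls overlap consecutively and cover the span of all the $Q_j$, so every $Q_j$ interferes with some $Q_k$.

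A pure averaging argument only gives one interference per part, so I would strengthen it in two ways. First, I would use the slack in~$r$: each $Q_j$ contains $r$ vertices, and parts other than the newly merged one are bounded in their column intersections. This allows either finding many ($>t$) additional parts of the same columns spread inside the spans, or choosing $r$ so that spans of different $Q_j$ must interleave. Second, I would exploit the factor~$6$ in the number of columns: split the path into three blocks of $t+1$ consecutive parts (plus buffers) and take the $\psi$-extreme vertices of the whole union. The red subpath joining the part holding the minimum to the part holding the maximum then forces some $\conv(\bigcup N[Q_k])$ to contain the spans of all $t+1$ parts of one block. That part $Q_k$ has stretch at least $t+1 > t$, which is the desired contradiction.
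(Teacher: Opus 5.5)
Your setup is sound and matches the second half of the paper's proof: the contradiction hypothesis, the fact that partitions before $\mathcal{P}^*$ have parts meeting at most two consecutive columns, the first index $i$ at which some part has $T:=(t+1)^{6t+5}r$ vertices in one column, the propagation through all columns via Lemma~\ref{lem:findingmixedset}, and the red edges between consecutive chain parts.

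The gap is the last step. There you look for the contradiction inside $\Pc_i$ alone. You never use that $\mathcal{P}^*$ itself has stretch at most~$t$, and this is the only hypothesis tying $\psi$ to the columns. The mechanism you propose is false. Suppose the chain parts have pairwise disjoint spans, occurring along $\psi$ in chain order. Then the red path from the part holding the minimum to the part holding the maximum is the whole chain, yet each $\conv(\bigcup N[Q_k])$ meets only the spans of $Q_{k-1},Q_k,Q_{k+1}$.

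In fact no argument confined to $\Pc_i$ can succeed. Take a clique-column grid and cut every column into aligned blocks of $T-1$ consecutive rows. In one column, add to one block the first row of the next block. The red graph is then a disjoint union of paths plus two edges, with bandwidth at most~$3$. Listing the parts contiguously along a bandwidth ordering gives a $\psi$ in which this partition has stretch at most~$6$. Yet it has every property you extract at $\Pc_i$: parts inside single columns, exactly one part with $T$ vertices in a column, red degree at most~$4$, and a red chain of fat parts through all columns. So for $t\ge 6$ the local claim fails.

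The missing idea is to exploit $\mathcal{P}^*$ first. If the $\psi$-span of a column met the spans of $2t+2$ other columns, its part in $\mathcal{P}^*$ would have stretch at least $t+1$. So the intersection graph of the $6t+6$ column spans has maximum degree at most $2t+1$, and it contains three pairwise disjoint spans. This, not a splitting of the chain into three blocks, is what the factor~$6$ is for. The middle one of these three columns places at least $f(t)$ positions between the outer two.

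At $\Pc_i$, the fatness of the chain parts is irrelevant; two common rows suffice for the red edges. What matters is that every part of $\Pc_i$ has fewer than $4T$ vertices. Hence $\conv(P_c\cup P_{c+1})\subseteq\conv(\bigcup N[P_c])$ has at most $4(t+1)T$ positions; otherwise more than $t$ other parts would interfere with $P_c$. Chaining at most $6t+5$ such links puts the two outer columns within $4(t+1)(6t+5)T$ positions of each other. This contradicts the gap once $f(t)$ exceeds that bound.

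Your choice $f(t)=(t+1)T$ is too small for this, and adjusting $r$ cannot help, since both quantities scale linearly with $r$. The paper takes $r=2$ and $f(t)=4(t+1)(6t+5)\cdot 2(t+1)^{6t+5}$.
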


\begin{proof}
   We define $f_1(t)= 2(t+1)^{6t+5}$, and $f(t)=4(t+1)(6t+5)f_1(t)$.
    For each $i\in [6t+6]$, let $C_i=\{v_{i,j}:j\in [f(t)]\}$ be the $i$-th column.

    For each $i\in [6t+6]$, let $a_i=\min (\psi(C_i))$ and $b_i=\max (\psi(C_i))$.
    
    Observe that if there is an integer $i\in [6t+6]$ such that the interval $[a_i,b_i]$ intersects at least $2t+2$ other intervals $[a_j, b_j]$ for $j\in [6t+6]\setminus \{i\}$, then the stretch of $\mathcal{P}^*$ is at least $t+1$. Thus, we may assume that each interval $[a_i,b_i]$ intersects at most $2t+1$ other intervals $[a_j, b_j]$. 
    
    This implies that the intersection graph of these $6t+6$ intervals has maximum degree at most $2t+1$, and thus has maximum clique size at most $2t+2$. As interval graphs are perfect, this intersection graph has an independent set of size at least $(6t+6)/(2t+2)=3$. Let $i_1, i_2, i_3\in [6t+6]$ such that the intervals $[a_{i_1}, b_{i_1}], [a_{i_2}, b_{i_2}], [a_{i_3}, b_{i_3}]$ are pairwise disjoint. Without loss of generality, we assume that $b_{i_1}<a_{i_2}$ and $b_{i_2}<a_{i_3}$. Observe that 
    \[a_{i_3}-b_{i_1}> b_{i_2}-a_{i_2}+1\ge f(t). \] 
    
    Now, let $\alpha$ be the maximum integer such that $\Pc_{\alpha}$ has a part $P$ containing at least $f_1(t)$ vertices in the same column, say $C_z$. 
    Let $J\subseteq [f(t)]$ be such that $P\cap C_z=\{v_{z,j}:j\in J\}$. 
    We show that the stretch of $\Pc_{\alpha}$ is more than $t$. Suppose that it has stretch at most $t$.
    
    We may assume that $G/\Pc_{\alpha}$ has maximum red degree at most $t$, otherwise, the stretch of $\Pc_{\alpha}$ is more than $t$ by Lemma~\ref{lem:largereddegree}. Also, $\Pc_{\alpha}$ appears earlier than $\mathcal{P}^*$ and it implies that every part of  $\mathcal{P}_\alpha$ intersects at most two columns of $G$.
     Thus, by applying Lemma~\ref{lem:findingmixedset} $6t+5$ times, we get a subset $J'\subseteq J$ of size at least 
     $\frac{f_1(t)}{(t+1)^{6t+5}}\ge 2$
     such that 
     \begin{itemize}
     	\item for each $i\in [6t+6]$, $U_i := \{v_{i,j}:j\in J'\}$ is contained in the same part of $\mathcal{P}_{\alpha}$. 
     \end{itemize}
     For each $i\in [6t+6]$, let $c_i=\min (\psi(U_i))$ and $d_i=\max(\psi(U_i))$. Note that by the condition of the $(p,q)$-mixed grid, for each $i\in [6t+5]$, the part containing $U_i$ and the part containing $U_{i+1}$ are adjacent by a red edge in $G/\Pc_{\alpha}$ if they are distinct parts.
    
     We claim that for each $i\in [6t+5]$, $|d_{i+1}-c_i|\le 4(t+1)f_1(t)$. 
     Suppose for contradiction that for some $i\in [6t+5]$, $|d_{i+1}-c_i|> 4 (t+1)f_1(t)$. Note that every part of $\Pc_{\alpha}$ has size at most $4f_1(t)$. Thus, there are at least $t+1$ parts of $\Pc_{\alpha}$ interfering with the part containing $U_i$. Then the stretch of $\Pc_{\alpha}$ is more than $t$, a~contradiction. Thus, $|d_{i+1}-c_i|\le 4(t+1)f_1(t)$.

     It implies that for any $i, i'\in [6t+6]$ with $i'>i$, \[|d_{i'}-c_i|\le 4(t+1)(i'-i)f_1(t)\le 4(t+1)(6t+5)f_1(t)=f(t).\]
     
     If $i_3>i_1$, then $|d_{i_3}-c_{i_1}| \le f(t).$
     However, $|d_{i_3}-c_{i_1}|\ge a_{i_3}-b_{i_1}> f(t)$, a contradiction. If $i_3<i_1$, then $|d_{i_1}-c_{i_3}| \le f(t)$, and this contradicts the fact that $|d_{i_1}-c_{i_3}|\ge a_{i_3}-b_{i_1}> f(t)$. Thus, the stretch of $\Pc_{\alpha}$ is more than $t$.   
\end{proof}

We can now show that the stretch-width of $(m,m)$-mixed grids is an unbounded function of~$m$.

\begin{theorem}
 There is a function $g:\mathbb{N}\to \mathbb{N}$ such that any $(g(t), g(t))$-mixed grid has stretch-width at least $t+1$.
\end{theorem}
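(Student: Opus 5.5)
The idea is to reduce to \cref{lem:mixedgridtwocolumn}. That lemma only treats contraction sequences passing through a partition $\mathcal{P}^*$ in which every part is a column or the union of two consecutive columns. Every contraction sequence has to merge vertices of distinct columns at some point, so I will look at the first time a part meets three columns, or at the first time a part becomes large. Fix $t$, let $f$ be the function of \cref{lem:mixedgridtwocolumn}, and take $G$ to be a $(p,q)$-mixed grid with $p,q \ge g(t)$, where $g(t)$ is much larger than both $6t+6$ and $f(t)$. Fix an ordering $\psi$ and a contraction sequence $\mathcal{P}_n,\dots,\mathcal{P}_1$, and suppose for contradiction that every $\mathcal{P}_i$ has stretch at most $t$. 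By \cref{lem:largereddegree}, every red graph then has maximum degree at most $t$.

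First I restrict to a $(6t+6,f(t))$-mixed sub-grid, namely $6t+6$ consecutive columns and $f(t)$ rows. Induced subgraphs of mixed grids on consecutive columns and a subset of the rows are again mixed grids. Restricting the sequence to these vertices can only decrease the number of interfering parts, so stretch does not increase. Next I let $\alpha$ be the largest index such that some part of $\mathcal{P}_\alpha$ meets three columns, or contains many vertices of one column. In the first case, the part was obtained by merging two parts that each meet at most two columns. Before time $\alpha$, every part meets at most two columns, so the argument of \cref{lem:mixedgridtwocolumn} applies verbatim to the earlier partitions: \cref{lem:findingmixedset} only needs every part to meet at most two columns and red degree at most $t$. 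I would therefore restate the proof of \cref{lem:mixedgridtwocolumn} with the hypothesis ``the sequence contains $\mathcal{P}^*$'' replaced by ``at the chosen time $\alpha$, every part meets at most two columns''. That hypothesis is all the proof actually uses, and it is guaranteed if $\alpha$ is taken to be the first time a part contains $f_1(t)$ vertices of one column, provided no three-column part has appeared before then.

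The main obstacle is the remaining case, where a part meeting three columns appears while every part still has fewer than $f_1(t)$ vertices in each column. I plan to handle it by choosing the sub-grid cleverly. Parts meeting three columns have size below $3f_1(t)$ at that time, and each merge touches only one part. So I take many disjoint horizontal blocks of $f(t)$ rows and $6t+6$ columns, more than can be spoiled by the bounded number of merges that happen before some part grows to $f_1(t)$ vertices of one column. Counting: each merge grows a part by at most a factor of two, so a part meeting three columns with few vertices sits within a bounded number of rows. I would pick $g(t)$ so that some block of rows avoids every early three-column part. Restricted to that block, all parts meet at most two columns until time $\alpha$, and the adapted \cref{lem:mixedgridtwocolumn} gives a partition of stretch greater than $t$, contradicting our assumption. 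This pigeonhole-on-blocks step is the one I expect to need the most care.
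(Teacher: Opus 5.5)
Your reduction to \cref{lem:mixedgridtwocolumn} has a genuine gap. You claim the hypothesis ``the sequence contains $\mathcal{P}^*$'' is used in that proof only to ensure that, at the first time $\alpha$ some part has $f_1(t)$ vertices of one column, every part meets at most two columns. That is false: the first step of the proof uses the bounded stretch of $\mathcal{P}^*$ \emph{itself}. Each column lies inside a part of $\mathcal{P}^*$ and each part contains at most two columns, so bounded stretch forces each column span $[a_i,b_i]$ to meet at most $2t+1$ other column spans. This yields three columns with pairwise disjoint spans and $a_{i_3}-b_{i_1}>f(t)$, and the final contradiction is against exactly this separation of columns in $\psi$. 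The time-$\alpha$ analysis via \cref{lem:findingmixedset} only places $U_1,\dots,U_{6t+6}$ in a $\psi$-window of length at most $f(t)$, which is harmless on its own. For a block of rows fixed in advance, nothing forces the restricted sequence to pass through a partition into whole columns and pairs of consecutive columns of the block. Bounded stretch at time $\alpha$ also says nothing about how the block's columns are spread in $\psi$. So your ``restated'' lemma is unproved, and choosing a block that avoids three-column parts does not supply a proof.

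The missing idea is to choose the sub-grid \emph{after} fixing $\alpha$, aligned with the parts of $\mathcal{P}_\alpha$. Use a larger threshold $g_1(t)=f(t)(t+1)^{6t+5}$ for the big part $P$. Take the rows $J$ of $P$ in its column $C_q$ (with $q\le m/2$ by symmetry), and apply \cref{lem:findingmixedset} $6t+5$ times to get $J'\subseteq J$ with $|J'|\ge f(t)$ such that each set $\{v_{q+i-1,j}:j\in J'\}$ lies in a single part of $\mathcal{P}_\alpha$. The restriction of $\mathcal{P}_\alpha$ to this $(6t+6,f(t))$-sub-grid is then exactly a partition into columns and pairs of consecutive columns. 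So the restricted sequence contains a $\mathcal{P}^*$, and \cref{lem:mixedgridtwocolumn} applies verbatim, with its own smaller internal threshold $f_1(t)$.

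Your pigeonhole for the three-column case is also unsupported. It relies on a bounded number of merges occurring before some part reaches $f_1(t)$ vertices of one column, but nothing bounds that number (a priori it is linear in $|V(G)|$), nor the number of parts meeting three columns. In fact no pigeonhole is needed. Suppose a part $P$ contains $v$ in a column $C$ and $w$ in a column not consecutive to $C$. Then every other part meeting $C$ is a red neighbor of $P$, since a vertex $u\in C\setminus P$ is adjacent to $v$ (columns are cliques) and not to $w$. Take $\alpha$ to be the last index at which some part either meets two non-consecutive columns or has $g_1(t)$ vertices of one column. At time $\alpha$, every part other than $P$ has fewer than $g_1(t)$ vertices of $C$, and $P$ has at most $2g_1(t)$. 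With columns of size $m\ge (t+3)g_1(t)$, this gives at least $(m-2g_1(t))/g_1(t)\ge t+1$ red neighbors, contradicting \cref{lem:largereddegree}. Otherwise every part of $\mathcal{P}_\alpha$ meets at most two consecutive columns, and the adaptive sub-grid argument above finishes the proof.
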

\begin{proof}
	Let $f$ be the function in Lemma~\ref{lem:mixedgridtwocolumn}. We define
	\begin{itemize}
		\item $g_2(t)=f(t)$, 
		\item  $g_1(t)=g_2(t)(t+1)^{6t+5}$, and 
		\item $g(t)=\max ( (t+3)g_1(t), 12t+12)$.
	\end{itemize}
	Let $m=g(t)$.

    Let $G$ be a $(m, m)$-mixed grid with canonical ordering $\sigma$. 
    Suppose that $G$ has stretch-width at most $t$. Then there is a linear ordering $\psi$ on $V(G)$ such that $\stw(G, \psi)\le t$. 
    So, there is a~contraction sequence $\Pc_n,\dots,\Pc_1$ such that the stretch of each $\Pc_i$ with respect to $\psi$ is at most $t$. 

Let $\alpha$ be the maximum integer such that $\Pc_{\alpha}$ has a part $P$ satisfying one of the following:
\begin{enumerate}[(1)]
    \item\label{cond:part1} $P$ contains two vertices $v$ and $w$ that are contained in non-consecutive columns.
    \item\label{cond:part2} $P$ contains at least $g_1(t)$ vertices in the same column. 
\end{enumerate}
As $\Pc_{\alpha+1}$ has no part satisfying Condition~(\ref{cond:part1}) or (\ref{cond:part2}), $P$ is the unique part of $\Pc_{\alpha}$ satisfying Condition~(\ref{cond:part1}) or (\ref{cond:part2}). Also, $P$ contains at most $2g_1(t)$ in the same column. 

    First assume that $P$ satisfies Condition~(\ref{cond:part1}). 
    Let $v$ and $w$ be vertices that are contained in non-consecutive columns. 
    By symmetry, we may assume that $\sigma(v)< \sigma(w)$.
    
    Let $C$ be the column containing $v$. As $P$ contains at most $2g_1(t)$ vertices in $C$, $C$ intersects at least 
	\[ \frac{m-2g_1(t)}{g_1(t)}\ge t+1 \]
	parts of $\mathcal{P}_{\alpha}$ other than $P$.
As $w$ is anti-complete to $C$, these parts are red neighbors of $P$ in $G/\mathcal{P}_{\alpha}$.
	By Lemma~\ref{lem:largereddegree}, $\mathcal{P}$ has stretch at least $t+1$.  
This contradicts the assumption that the stretch of $\mathcal{P}_\alpha$ is at most $t$.

  Therefore, we may assume that $P$ does not satisfy Condition~(\ref{cond:part1}) and thus satisfies Condition~(\ref{cond:part2}). 
  It implies that every part of $\Pc_{\alpha}$ intersects at most two columns of $G$. 
   For each $i\in [m]$, let $C_i=\{v_{i,j}:j\in [m]\}$ be the $i$-th column.
   Let $q\in [m]$ be such that $P$ contains at least $g(t)$ vertices of $C_q$, and 
    let $J\subseteq [m]$ be the set of integers $j\in [m]$ such that $v_{q,j}\in P$. 
    By symmetry, we may assume that $q\le m/2$. Note that $m/2\ge 6t+6$.

       We can also assume that $G/\Pc_{\alpha}$ has maximum red degree at most $t$, otherwise, the stretch of $\Pc_{\alpha}$ is more than $t$ by Lemma~\ref{lem:largereddegree}.   
     Thus, by applying Lemma~\ref{lem:findingmixedset} $6t+5$ times, we get a subset $J'\subseteq J$ of size at least 
     $\frac{g_1(t)}{(t+1)^{6t+5}}\ge g_2(t)$
     such that 
     \begin{itemize}
     	\item for each $i\in [6t+6]$, $\{v_{q+i-1,j}:j\in J'\}$ is contained in the same part of $\mathcal{P}_{\alpha}$. 
     \end{itemize}

 For each $i\in [6t+6]$, let $C_i'=\{v_{q+i-1,j}:j\in J'\}$. 
 Let $G'=G[\bigcup_{i\in [6t+6]} C_i']$. We consider the restriction of the contraction sequence $\Pc_n,\dots,\Pc_1$ on $G'$.
 Formally, for each $z\in [n]$, let $\Pc_z'$ be the partition obtained from $\Pc_z$  by replacing every $Q\in \Pc_z$ with $Q\cap V(G')$, and then removing empty set. 
 From $\Pc_n',\dots,\Pc_1'$, if two consecutive partitions are the same, then we remove one of them. Let $\Pc_{r}'',\dots,\Pc_1''$ be the obtained contraction sequence.
 Let $\beta\in [r]$ be such that $\Pc_\beta''$ is obtained from $\Pc_{\alpha}$.
  
  Observe that $G'$ is a $(6t+6, g_2(t))$-mixed grid and $\Pc_\beta''$ is a partition of $G'$ where each part is either a column of $G'$ or the union of two consecutive columns, because
  $\Pc_\alpha$ is a partition of $G$ where each part of $G$ is contained in the union of two consecutive columns, and each $C_i'$ is contained in one part of $\Pc_\alpha$.
  Also, $\Pc_{r}'',\dots,\Pc_1''$ is the contraction sequence of $G'$ containing $\Pc_\beta''$.
  Let $\psi'$ be the restriction of $\psi$ to $V(G')$.
 By Lemma~\ref{lem:mixedgridtwocolumn}, there exists $i\in [r]$ such that the stretch of $\mathcal{P}_i''$ in $(G', \psi')$ is more than $t$. This implies that there exists $i'\in [n]$ such that 
 the stretch of $\mathcal{P}_{i'}$ in $(G, \psi)$ is more than $t$, and proves the theorem.
\end{proof}

\section{Conclusion}

We provide polynomial-time algorithms for \textsc{$\sigma$-Neighborhood} and \textsc{Induced Disjoint Paths} on classes of effectively bounded $\compmaxleaf^\downarrow$. We may consider other problems. Telle and Proskurowski~\cite{TelleP1997} considered a more general problem called the \textsc{$(\sigma,\rho)$-Problem}, in which it is further required that, for every vertex $v$ outside a desired set $S$, the number of neighbors of $v$ in $S$ belongs to~$\rho$. \textsc{Dominating Set} is such a problem, which is (Min) \textsc{$(\mathbb{N},\mathbb{N}\setminus \{0\})$-Problem}. Also, the \textsc{Feedback Vertex Set} and \textsc{$3$-Coloring} problems are natural candidates to consider. We suspect that \textsc{Dominating Set} and \textsc{$3$-Coloring} are NP-hard on some classes of effectively bounded reduced component max-leaf. On the other hand, it seems more plausible that the \textsc{Feedback Vertex Set} problem admits a polynomial-time algorithm on classes of graphs with effectively bounded reduced component max-leaf. They can be solved in polynomial time on classes of effectively bounded mim-width~\cite{BuixuanTV2013,JaffkeKT2020-2,BergougnouxPT2022}.

Gartland and Lokshtanov~\cite[Conjecture 1.4.2]{GartlandThesis} conjectured that \textsc{$3$-Coloring} and $(\operatorname{tw}\le r, \psi)$-\textsc{Max Weighted Induced Subgraph}, which generalizes \textsc{Feedback Vertex Set}, are polynomial-time solvable on classes that exclude a fixed planar graph $H$ as an induced minor. By \Cref{cor:cml-degree-tw}, every class of bounded $\reduced{\compmaxleaf}$ excludes some wall as an induced minor.  Thus, showing that \textsc{3-Coloring} is NP-hard on some class of graphs with effectively bounded reduced component max-leaf would refute the above conjecture for \textsc{3-Coloring}.

The \textsc{Disjoint Paths} problem asks for a set of vertex-disjoint paths connecting given pairs, without requiring that there are no edges between the paths.  Since it is known to be NP-hard on interval graphs~\cite{NatarajanS1996}, we cannot expect to solve it in polynomial time on classes of bounded mim-width. However, the class of interval graphs has unbounded reduced component max-leaf. Thus, it remains open whether the \textsc{Disjoint Paths} problem can be solved in polynomial time on classes of graphs with effectively bounded $\compmaxleaf^\downarrow$.

Our algorithms are all based on assuming that a contraction sequence is given.
We ask whether there is an XP-time (or even an FPT-time) approximation algorithm
that, given a graph $G$ and an integer $d$, either outputs a contraction sequence of $\compmaxleaf^\downarrow (G)\le f(d)$ for some function $f$ or correctly reports that $\compmaxleaf^\downarrow(G)>d$.  For twin-width, Berg\'e, Bonnet, and D\'epr\'es~\cite{BergeBD2022} showed that deciding whether a graph has twin-width at most $4$ is NP-complete, which rules out the possibility of an XP-time exact algorithm for testing whether the twin-width is at most $k$.

\paragraph{Acknowledgments.}
This research was initiated at the Oberwolfach Workshop on Graph Theory in January 2025.

\end{document}